\documentclass[a4paper,11pt]{article} 
\usepackage[a4paper,hmargin=2.5cm,vmargin=3cm]{geometry}
\usepackage[utf8]{inputenc} 
\usepackage[english]{babel}

\usepackage{mathtools,amsthm, amsfonts, amssymb, mathrsfs, bbm, bm}
\usepackage{authblk}
\usepackage{lineno,csquotes}
\usepackage[dvipsnames]{xcolor}
\usepackage{hyperref}
\usepackage{comment}
\usepackage{upgreek}
\usepackage{amsmath}
\usepackage{dsfont}
\usepackage{mathrsfs}
\usepackage{slashed}
\usepackage{cancel}

\definecolor{BeauBlue}{rgb}{0, 0.2, .9}
\definecolor{BeauOrange}{rgb}{.8, .1, 0}
\usepackage{hyperref}
\hypersetup{
	colorlinks = true,
	linkcolor = BeauBlue,
	urlcolor = cyan,
	citecolor = BeauOrange,
}
 \usepackage[mathcal]{euscript}

\numberwithin{equation}{section}

\theoremstyle{plain}
\newtheorem{theorem}{Theorem}[section] 
\newtheorem*{theorem*}{Theorem} 
\newtheorem{proposition}[theorem]{Proposition} 
\newtheorem{corollary}[theorem]{Corollary}
\newtheorem{lemma}[theorem]{Lemma}

\theoremstyle{definition}

\newtheorem{remark}[theorem]{Remark}

\def\a{\ensuremath\alpha}
\def\b{\ensuremath\beta}
\def\d{\ensuremath\delta}

\def\t{\ensuremath\tau}
\def\th{\ensuremath\theta}
\def\vth{\ensuremath\vartheta}

\def\s{\ensuremath\sigma}
\def\e{\ensuremath\epsilon}
\def\ve{\ensuremath\varepsilon}
\def\vphi{\ensuremath\varphi}

\def\de{\ensuremath\partial}
\def\ud{\ensuremath\underline}
\def\mc{\ensuremath\mathcal}
\def\l{\ensuremath\lambda}
\def\L{\ensuremath\Lambda}
\def\ov{\ensuremath\overline}
\def\wt{\ensuremath\widetilde}
\def\mb{\ensuremath\mathbf}

\def\dg{\ensuremath\dagger}
\def\it{\textit}
\def\mf{\ensuremath\mathfrak}
\def\wt{\ensuremath\widetilde}
\def\z{\ensuremath\zeta}
\def\mbb{\ensuremath\mathbb}

\def\eset{\ensuremath\emptyset}

\newcommand{\virg}[1]{``#1''}
\newcommand{\dist}[1]{\ensuremath |\!\!|#1|\!\!|}

\DeclareMathOperator{\val}{Val}

\usepackage{tikz}
\usetikzlibrary{shapes}
\usetikzlibrary{positioning}
\tikzset{vertex/.style={circle,fill=black,inner sep=2pt},
	ctVertex/.style={diamond,fill=black,inner sep=2pt},
    CtVertex/.style={diamond,fill=black,inner sep=3pt},
	sqVertex/.style={rectangle,fill=black,inner sep=3pt},
	bigVertex/.style={circle,fill=black,inner sep=4pt},
	E/.append style={fill=white,draw},
    F/.append style={fill=gray,draw},
	probeEP/.style={circle,fill=black,draw,inner sep=2pt,
		prefix after command= {\pgfextra{\tikzset{every pin/.style = {pin edge={decorate,decoration={snake,amplitude=2pt,segment length =4pt}}}}}}
	},
	bareProbeEP/.style={rectangle,fill=black,draw,inner sep=3pt,
		prefix after command= {\pgfextra{\tikzset{every pin/.style = {pin edge={decorate,decoration={snake,amplitude=2pt,segment length =4pt}}}}}}
	},
	nuEP/.style={circle,fill=white,draw, inner sep=2pt},
	linelabel/.style={sloped,above,very near start, inner sep=1pt,execute at begin node=$\scriptstyle,execute at end node=$},
	baseline=(current  bounding  box.center),doubled/.style={double distance= 1pt,line width=1.5pt}
}

\title{Non-perturbative renormalization for lattice massive QED$_2$: the ultraviolet problem}
\author[1]{Simone Fabbri}
\author[2]{Vieri Mastropietro}
\author[1]{Bruno Renzi}
\affil[1]{SISSA, Via Bonomea 265, 34136 Trieste, Italy}
\affil[2]{University of Rome Sapienza, P.le Aldo Moro 2, 00185 Roma, Italy}

\date{\today}

\begin{document}

\maketitle

\begin{abstract} We consider a lattice regularization, preserving Ward Identities (WI) and with a Wilson term, 
of the Massive QED$_2$, describing a fermion
with mass $m$ 
and charge $\mathsf{e}$ interacting with a vector  field
with mass $M$, in the regime $m\ll M\ll a^{-1}$ ($a$ being the lattice spacing) which is the suitable one
to mimic a realistic 4d massive gauge theory like the Electroweak sector. The presence of the lattice and of the mass $m$
breaks any solvability property. In this paper we 
prove that the effective action obtained after the integration of the ultraviolet degrees of freedom is expressed by expansions which are convergent for values of the coupling 
$|\mathsf{e}|\le \mathsf{e}_0$, with $\mathsf{e}_0$ independent on $a$ and $m$, and with cut-off-independent bare parameters. By combining this result with the analysis of the infrared part in previous papers we get a complete construction of the model
and a number of properties whose analogous are expected to hold in 4d. The analysis is done by integrating out the bosons and reducing to a fermionic theory; however, with respect to the case with
momentum regularizations (which break essential features like the
WI), the resulting effective fermionic action has not a simple
form and this requires the developments of new methods to get
the necessary bounds.
\end{abstract}

\tableofcontents

\section{Introduction and main result}

The \textit{Massive Quantum Electro-Dynamics} (QED) describes Dirac fermions interacting with a 
$U(1)$ vector boson field. Its formal Euclidean action in dimension $d\ge2$ is:
\begin{equation}
\label{xx1}
\begin{split}
\hspace{-5pt}\int_{\mbb{R}^d} d^dx \left(\frac{1}{4} \sum_{\mu,\nu=0}^{d-1} F_{\mu,\nu}F_{\mu,\nu} +   \frac{M^2}{2} \sum_{\mu=0}^{d-1} A_{\mu}^2   + \sum_{\mu=0}^{d-1}\bar\psi \gamma_{\mu} \Big(\de_{\mu} + i\mathsf{e}A_{\mu}\Big)\psi + m\bar\psi \psi \right)\hspace{-2pt},
\end{split}\end{equation} 

where: $A_\mu$ is the boson (real) field with mass $M$ and $F_{\mu,\nu}= \de_{\mu}A_{\nu}-\de_{\nu}A_{\mu}$; $\psi,\bar\psi$ is the fermionic (anti-commuting) field with mass $m$ and charge $\mathsf{e}$; $\{\gamma_{\mu}\}_{\mu=0}^{d-1}$ are a representation of the Euclidean $d-$dimensional Clifford algebra. The model is expected to be well defined
in $d=4$ only as an effective theory with a finite ultraviolet cut-off, representing the maximal energy scale at which the theory is valid; it is indeed renormalizable but not asymptotically free.
The cut-off must fulfill the following physical requirements:
\begin{itemize}
\item[\textbf{a)}] the
Ward Identities (WI) must be verified, ensuring the validity of the conservation of the current coupled to $A_{\mu}$; 
\item[\textbf{b)}] the \it{longitudinal}, non-decaying term of the boson propagator (see comments after Eq. \eqref{xx2}) must not contribute to physical observables; 
\item[\textbf{c)}]
for length scales much larger than the inverse cut-off scale the correlations  must formally reduce to the continuum ones up to sub-leading corrections; 
\item[\textbf{d)}] the cut-off has to be at least exponentially high in the inverse coupling; for small $\mathsf{e}$ this ensures that the effect of the cut-off is undetectable at the energy scale of experiments.
\end{itemize}

A natural choice for the cut-off is the lattice one, obtained 
by replacing the continuum Euclidean space-time with a discrete lattice of step $a$ and
finite length $L$, and the derivative $\partial_\mu$ appearing in 
Eq. (\ref{xx1}) with a discrete derivative\footnote{Equivalently $k_\mu$ with $\frac{1}{a}\sin (a k_\mu)$ in momentum  space.}; this however violates item  \textbf{c)}, as the lattice Dirac operator acquires spurious extra zero modes, a problem known as \it{fermion doubling}. Condition \textbf{c)} can be fulfilled adding a Wilson term to the action.  In order to fulfill condition \textbf{a)} the interaction between boson and fermion has to be defined in a non-linear way, via the exponential of $A_\mu$.
As consequence of property \textbf{a)}, property \textbf{b)} follows as well, which is crucial for ensuring the renormalizability of the theory and the validity of the property \textbf{d)}.
Note that the introduction of the Wilson term breaks the chiral symmetry  which would be present at a formal level when the electron mass $m$ is vanishing,
a fact producing the chiral anomaly.

The regularization of QED, with a Wilson term and  respecting WI is called \textit{lattice Massive QED}. Other lattice definitions could be possible, but they are expected to be equivalent as far as the above requirements are true.

Massive QED is not physically realized in nature as a theory of 
elementary particles.  However it is very much related to the Standard Model, 
the theory describing all elementary particles and forces except gravity, and in particular to the neutral Weak sector, where the force is mediated by 
an $U(1)$ massive boson, the $Z^0$ particle. Indeed both theories 
are expected to exist only as effective theories; moreover 
the
Adler-Bardeen property \cite{AB69} and the
decrease of the divergence degree, 
due to the cancellation of the contribution of the longitudinal part of the propagator, are expected to hold. 
The
lattice massive QED is therefore an ideal setting to test, in a simpler model, several 
features which are believed to be true on the basis of perturbative 
arguments, that is at the level 
of series which are not convergent.
As the ratio of the boson and the electron mass is $\sim 10^{6}$, and the energies in accelerators are of the order of $100$ times the boson mass, the range of interest of lattice massive QED is
\begin{equation}
\label{energy_regime}
m \ll M \ll a^{-1}.
\end{equation}
In addition the charge is small as its physical value is, in dimensionless units, $\mathsf{e}^2 \sim \frac{1}{137}$ (fixing the wave function renormalization to $1$).

It should be recalled that
other peculiarities of the Standard Model are instead absent in massive QED, like the fact that in the Standard Model the
interaction has a \textit{chiral} nature. This fact introduces a number of difficulties, in particular the fact that the Ward Identities are verified, even at a perturbative level, only if the chiral anomalies cancel out and if the particle charges verify certain conditions, a fact requiring the Adler-Bardeen property. 

Even if the lattice massive QED in 4d is much simpler than the Standard Model, rigorous results 
in the regime  of Eq.  (\ref{energy_regime}) are lacking, and the analysis has been limited either to the region
$m\sim 1/a$ and 
any $M\ge 0$ \cite{DH92} or with a momentum regularization with $M\sim 1/a$
and any $m\ge 0$
\cite{M23}. On the other hand 
massive QED in 2d is more tractable
and at the same time 
the analogue of a  number of features holding in 4d are true also in 2d. 

In this paper we perform the construction of 
the lattice massive QED$_2$ in the regime of Eq. (\ref{energy_regime}), and we prove a number of properties expected to be true in 4d or in the Standard Model. In particular we focus on the ultraviolet properties of the theory, obtaining results which, once combined with the infrared analysis in \cite{M22,BM,BFM07,BFM09}, provide the construction of the model and the verification at a non-perturbative level of a number of properties which are expected to hold in 4d more realistic models.

\subsection{Previous results on the Massive QED$_2$}
\label{ssect_intro1}
The Massive QED$_2$ has been extensively studied in the past, see
e.g. \cite{S} for a review with more references, and we shortly summarize some of the main results. 
\vskip.2cm
\noindent{\bf Perturbative results.}
In the continuum,  
at a classical level, the formal action of QED$_2$ is invariant under the change of variables $\psi_x\to e^{i \alpha}\psi_x$ and this, by Noether theorem, implies the conservation of the current $\sum_{\mu}\partial_\mu \jmath_{\mu}=0$, with 
$\jmath_{\mu}
=\mathsf{e}\bar\psi_x\gamma_\mu\psi_x$; in the same way if the electron mass is vanishing, $m=0$, there is invariance under $\psi_x\to e^{i \alpha\gamma_5}\psi_x$ 
which 
implies the conservation of the axial current $\sum_{\mu}\partial_\mu \jmath^5_{\mu}=0$
with $\jmath^5_{\mu,x}= \mathsf{e}\bar\psi_x\gamma_\mu\gamma_5\psi_x$. 

At a quantum level, using Feynman graphs expansion 
with dimensional regularization, the following Ward Identities (WI) are found, 
to be understood as order by order identities among formal (non-convergent) power series in the coupling:
\begin{equation}\label{xx3}
\sum_{\mu=0,1} p_\mu \langle\hat{\jmath}_{\mu,p};\hat{\jmath}_{\nu,-p}\rangle = 0, \quad\quad \sum_{\nu=0,1} p_\nu \langle\hat{\jmath}_{\mu,p};\hat{\jmath}^5_{\nu,-p}\rangle = \frac{i \mathsf{e}^2}{\pi} \sum_{\nu=0,1} \ve_{\mu,\nu} p_\nu,
\end{equation}
with $\ve_{\mu,\nu}$ being the antisymmetric tensor (namely $\ve_{0,1}=-\ve_{1,0}=1, \ve_{0,0}=\ve_{1,1}=0$). The first equation is the counterpart of the conservation of the current; the second, valid in the $m=0$ case, shows
that the axial current is not conserved at a quantum level. The coefficient $\frac{1}{\pi}$ in the 
r.h.s. is called the chiral anomaly. Such coefficient is in general a series in the coupling but the above statement says that only the lowest order contributes; this is the Adler-Bardeen's non-renormalization theorem in 2d \cite{GRcc}.
The boson propagator has the form
\begin{equation}\label{xx2}
\frac{1}{|k|^2+M^2}\left(\delta_{\mu,\nu}+ \frac{k_{\mu}k_{\nu}}{| k|^2 +M^2}\right).
\end{equation}
The term $k_{\mu}k_{\nu}/(|k|^2+M^2)$, called the \it{longitudinal} term, is not decaying for $|k|\gg M$ and the naive 
degree of divergence is $D=2-n_{\psi}/2-n^A$ ($n_\psi$ and $n_A$ the number of $\psi$ and $A$ fields), corresponding to a renormalizable theory; however the WI corresponding to the current conservation implies that the contribution from the longitudinal term is vanishing and the real degree of divergence is $D=2-n-n_{\psi}/2$, if $n$ is the perturbative order, corresponding to a super-renormalizable theory. A similar 
decrease of degree of divergence is
at the basis of the perturbative renormalizability of the 4d gauge theories;
indeed in 4d one passes from
$D=4+n- 3n_\psi/2-2 n_A$ to 
$D=4- 3n_\psi/2-n^A$, that
is there is a reduction from a non-renormalizable to a renormalizable behavior.  
\vskip.2cm
\paragraph{Exact solutions.} 
There are exact solutions for 
the massive QED$_2$ in the $m=0$ case in the continuum, which have been often used as a benchmark for perturbative analyses.
In particular a solution was found in 
\cite{Som,Bro63,Hag67}; the $M=0$ case, known as Schwinger model,  was solved in \cite{Sch62}. Solutions are based either on a combination of 
Schwinger-Dyson equations and anomalous WI, obtained by a self-consistent argument previously used in \cite{Joh61} in the Thirring model, or on bosonization. Other exact solutions were found using functional
integrals methods \cite{X1,FS76}; see also \cite{X0} for related interesting  applications.

The solutions rely on formal manipulations of diverging expressions,
and the results appear to depend on a parameter introduced by the regularization. Main properties are: 
\begin{enumerate}
\item[\textbf{1)}] a vanishing wave function renormalization  $Z=0^\eta$ with $\eta>0$, see e.g. \cite[Eq. (4.35)]{Bro63};
\item[\textbf{2)}] a value of the chiral anomaly given by 
\begin{equation}
\sum_{\nu=0,1}p_\nu \langle\hat{\jmath}_{\mu,p};\hat{\jmath}^5_{\nu,-p}\rangle = \zeta \frac{i \mathsf{e}^2}{\pi} \sum_{\nu=0,1} \ve_{\mu,\nu} p_\nu,
\end{equation}
where $\zeta$ is a parameter depending on the regularization; 
it was found
$\zeta=1/2$ in \cite{Som}, $\zeta=1$ in \cite{Bro63}
and any $\zeta$ in \cite{Hag67}. The current was found to be non-conserved.
Moreover a renormalized mass of the boson is found to be $
M^2+\zeta \frac{\mathsf{e}^2}{\pi}$.
\end{enumerate}
Note that the results found with exact solutions are somewhat different
from the ones found in perturbation theory, due to the different regularizations adopted;
in particular the conservation of the current is broken and there is no
reduction of degree of divergence, as signaled by the singular $Z$; moreover
the value of the anomaly is different.

No solution is known when $m\not=0$ or with a lattice regularization.
The issue of lattice regularization is particularly relevant as anomalies 
appear in the continuum, but in a non-perturbative construction in 4d one needs a lattice cut-off. An intense activity in understanding the role of the lattice has been devoted in recent times, mostly in the context of the Schwinger model, see e.g. 
\cite{X3,X4,K1,K2,K2a,K2b}.
\vskip.2cm
\paragraph{Renormalization Group with momentum regularization.} 
Non-perturbative Renormalization Group (RG) analysis have been performed in the continuum with momentum regularization.
In \cite{M07,BFM09} it was considered a massive QED$_2$ model in the continuum, obtained starting from the formal action in Eq. \eqref{xx1}, neglecting the longitudinal term of the propagator 
(so that the power counting is that of a super-renormalizable theory)
and with a momentum fermionic cut-off $\chi_N(k)$, given by a smooth function 
vanishing for $|k|\ge 2^N$ (the boson can be assumed with or without cut-off \cite{F10}); this model has been also called \textit{reference model}, and it was used to derive
universality relations for several systems, see e.g. 
\cite{MP,GMT}.
The analysis is done by integrating out the boson field 
and obtaining a quartic interaction in the fermions; the analysis of the ultraviolet
part is done using the non-locality of the boson propagator, see also  
\cite{L87},
while the integration of the infrared scales is done by implementing cancellations due to Ward Identities at each step of the Renormalization Group, controlling the corrections due to the presence of the cut-offs, see \cite{BM}, and proving the convergence to a line of fixed points.  
In this 
continuum massive QED$_2$ model, in which the ultraviolet limit $N\to\infty$ can be taken as well as the infinite volume and zero fermionic mass 
limit, the WI has the form:
\begin{align}
&\label{xx7} \sum_{\mu=0,1}p_\mu \langle \hat{\jmath}_{\mu,p};\hat{\jmath}_{\nu,-p}\rangle = \mathsf{e} \int \frac{d^2k}{(2\pi)^2}
\left(\slashed{k}\frac{1-\chi_N(k)}{\chi_N(k)}-(\slashed{k}+\slashed{p})\frac{1-\chi_N(k+p)}{\chi_N(k+p)} \right) \langle\hat{\bar\psi}_{k+p}\hat{\psi}_{k};\hat{\jmath}_{\nu,-p}\rangle,
\end{align}
 
with $\slashed{k}\equiv \sum_{\mu=0,1}\gamma_{\mu}k_{\mu}$. The r.h.s. of Eq. \eqref{xx7} is not small, even for large $N$, and one can prove, see 
\cite{M07,BFM09},  both the non-conservation of the current and of the axial current. The form of the limit appears to be strongly sensitive to the regularization adopted.

\subsection{The lattice massive QED$_2$}
\label{sect:model}

In this paper we study a regularization of massive QED$_2$  fulfilling all the 
physical constraints \textbf{a)}-\textbf{d)} listed few lines below Eq. \eqref{xx1}. We actually consider the lattice regularization of a generalized version of Eq. \eqref{xx1}, obtained by adding an extra term $\frac{\xi}{2} \int d^2x \big(\sum_{\mu=0,1}\de_{\mu}A_{\mu,x}\big)^2$ called \it{gauge-fixing term}, with $\xi\in[0,1]$ called \it{gauge-fixing parameter}; the original theory is recovered by setting $\xi=0$.\footnote{The addition of the gauge-fixing term does not modify the model as soon as gauge invariant observables are considered, see Lemma \ref{lemma_gauge_inv} and comments thereafter.}
The model is defined on a lattice
$\L:= a\mbb{Z}^2 \cap \big(-\frac{L}{2},\frac{L}{2}\big]^2$ with spacing $a\equiv2^{-N}\in 2^{\mbb{Z}}$. We impose periodic boundary conditions, so that $\L= (a\mbb{Z}^2)/(L\mbb{Z}^2)$ and it comes equipped with the \it{torus metric} $\dist{x-y}_{L}:= \min_{n\in\mbb{Z}^2}\sqrt{\sum_{\mu=0,1}(x_{\mu}-y_{\mu}+Ln_{\mu})^2}$. 
The generating functional for correlation functions is
\begin{equation}
\label{def_functional}
\mc{W}_{\xi}(\vphi;J;B ):= -\log \frac{\int \mc{D}\psi
\int \mbb{P}_{\xi}(\mc{D}A) e^{-I(A+J,\psi)- (\vphi,\psi)- i(B,\jmath^5(\psi))}}{\int \mc{D}\psi e^{-I(0,\psi)}}, 
\end{equation}

where:

\begin{enumerate}
\item $\psi$ is the fermion field,
identified with the Grassmann algebra generated by $\big\{\!\psi^{\pm}_{x,s}\!\big\}_{x\in\L,s\in\{1,2\}}$, where, according to the Dirac notation, letting $\psi^{\pm}_x$ be the row/column vector $(\psi^{\pm}_{x,1},\psi^{\pm}_{x,2})$, we identify $\psi^-_x\equiv \psi_x$ and $\psi^+_x\equiv \bar{\psi}_x$. The symbol $\mc{D}\psi$ denotes the Grassmann-Berezin integration over the Grassmann algebra (namely the projector onto the highest degree monomial, cf. \cite[Sect. 4.1]{GM01}).
\item $A$ is the vector-boson field, to be identified with $\big\{A_{\mu,x} \big\}_{x\in\L,\mu\in\{0,1\}}\in \mbb{R}^{\L\times\{0,1\}}$, while $\mbb{P}_{\xi}(\mc{D}A)$ is the real Gaussian integration with propagator
\begin{equation}
\hspace{-3pt} g^{A,\xi}_{\mu,\nu}(x-y)\equiv \mbb{E}_{\xi}\big(A_{\mu,x}A_{\nu,y}\big) = \hspace{-2pt} \int_{\L^*} \hspace{-5pt}dk \frac{e^{-ik\cdot(x-y)}}{|\s(k)|^2 + M^2} \left( \delta_{\mu,\nu} + (1-\xi) \frac{\s_{\mu}(k) \s_{\nu}^*(k)}{\xi|\s(k)|^2 + M^2} \right), \label{eq:boson_propag}
\end{equation}

where: $\L^*:= \frac{2\pi}{L}\mbb{Z}^2\cap\big(-\frac{\pi}{a}, \frac{\pi}{a}\big]^2$ and
\begin{equation}\int_{\L^*}dk\, \cdot\equiv \frac{1}{L^2}\sum_{k\in\L^*}\,\cdot\, ;\end{equation}

$\s_{\mu}(k)\equiv i 2^N(e^{-i2^{-N}k_{\mu}}-1)$, while $\s_{\mu}^*(k)$ denotes its complex conjugate and $|\s(k)|^2\equiv \sum_{\mu=0,1}|\s_{\mu}(k)|^2$; $M>0$ is the bare boson mass; $\xi\in[0,1]$ is the gauge-fixing parameter.

\item $J\equiv \big\{J_{\mu,x} \big\}_{x\in\L,\mu\in\{0,1\}}\in \mbb{R}^{\L\times\{0,1\}}$ is a real external field, and
\begin{equation}
\begin{split}
 &I(A,\psi) =  \frac{1}{2} 2^N \sum_{\mu=0,1}\int_{\L} dx\, \psi^+_x(\gamma_{\mu}-r)e^{i2^{-N}\mathsf{e} A_{\mu,x}} \psi^-_{x+2^{-N}\hat{e}_{\mu}}+\\
& - \frac{1}{2} 2^N \sum_{\mu=0,1}\int_{\L} dx\, \psi^+_{x+2^{-N}\hat{e}_{\mu}}(\gamma_{\mu}+r)e^{-i2^{-N}\mathsf{e} A_{\mu,x}} \psi^-_{x}
+ (m_N+ 2r2^N) \int_{\L} dx \psi^+_x\psi^-_x,
\label{eq:IA}
\end{split}\end{equation}
where:
\begin{equation}\int_{\L}dx\, \cdot\equiv a^2\sum_{x\in\L}\,\cdot\, ;\end{equation}

$\hat{e}_0= (1,0)$ and $\hat{e}_1=(0,1)$; $\mathsf{e}\in\mbb{R}$ is the electric charge; $r>0$ is the Wilson mass, to be fixed once and for all (from now on we will assume $r=1$); $m_N\in\mbb{R}$ is the bare electron mass; $\gamma_0,\gamma_1$ are a representation of the Euclidean 2D Clifford algebra, i.e. they satisfy $\{\gamma_{\mu},\gamma_{\nu}\}= 2\d_{\mu,\nu}$; we chose them as \[\gamma_0:= \left(\begin{array}{cc}
1     & 0 \\
0     & -1
\end{array}\right), \qquad \gamma_1:= \left(\begin{array}{cc}
0     & -i \\
i     & 0
\end{array}\right).\]

\item $\vphi\equiv \big\{\vphi^{\pm}_{x,s}\big\}_{x\in\L,s\in\{1,2\}}$ is an external Grassmann field and: $(\vphi,\psi)\equiv \int_{\L}dx \big(\vphi^+_x\psi^-_x+ \psi^+_x\vphi^-_x\big)$.

\item $B\equiv \big\{B_{\mu,x}\big\}_{x\in\L,\mu\in\{0,1\}}$ is an external real field coupled to the chiral current, namely $(B,\jmath^5(\psi))\equiv \int_{\L}dx B_{\mu,x} \jmath^5_{\mu,x}(\psi)$, where $\jmath^5_{\mu,x}(\psi):= \mathsf{e} Z^5_N\psi^+_x\gamma_{\mu}\gamma_5\psi^-_x$, with $\gamma_5:= i\gamma_0\gamma_1$ and $Z^5_N$ a suitable continuous function of $\mathsf{e}$, such that $Z^5_N\big|_{\mathsf{e}=0}=1$ and, say, $|Z^5_N-1|\le \frac{1}{2}$.
\end{enumerate}

\begin{remark}
The Gaussian weight of the boson measure $\mbb{P}_{\xi}$ is explicit and is given by the exponential of:
\begin{equation}
\label{eq:43}
\begin{split}
& \frac{1}{4} \sum_{\mu,\nu=0,1} \int_{\L}\!dx \Big( \nabla_{\mu}A_{\nu,x}- \nabla_{\nu}A_{\mu,x}\Big)^2 +\frac{M^2}{2}\sum_{\mu=0,1}\int_{\L}\! dx A_{\mu,x}^2 + \frac{\xi}{2}\int_{\L}\! dx \Big(\sum_{\mu=0,1} \nabla_{\mu}^TA_{\mu,x}\Big)^2
\end{split}
\end{equation}

with a minus sign, where $\nabla_{\mu}A_{\nu,x}\equiv a^{-1}\big(A_{\nu,x+a\hat{e}_{\mu}}- A_{\nu,x} \big)$ and $\nabla^T_{\mu}A_{\nu,x}\equiv a^{-1}\big(A_{\nu,x-a\hat{e}_{\mu}}- A_{\nu,x} \big)$. Eq. \eqref{eq:43} is nothing but the lattice counterpart of the free-boson action in Eq. \eqref{xx1} plus the gauge-fixing term.
\end{remark}

Physical properties can be obtained by the correlations, for instance:
\begin{align}
&\label{eq:21a}\hat{\Sigma}^{\xi}_{\mu,\nu}(p):=\int_{\L} dx \; e^{-ip\cdot x} \frac{\d^2\mc{W}_{\xi}}{\d J_{\mu,x} \d J_{\nu,0}}(0;0;0),\\
&\label{eq:21aa}\hat{\Sigma}^{\xi}_{5;\mu,\nu}(p):=\int_{\L} dx \; e^{-ip\cdot x} \frac{\d^2\mc{W}_{\xi}}{\d J_{\mu,x} \d B_{\nu,0}}(0;0;0), \\
&\label{eq:21b}\hat{\Gamma}^{\xi}_{\mu}(p;k)_{s,s'} := \int_{\L^2}dx dy\; e^{i(-p\cdot x+ k\cdot y)} \frac{\d^3\mc{W}_{\xi}}{\d J_{\mu,x} \d\vphi^-_{0,s'} \d\vphi^+_{y,s}}(0;0;0),\\
&\label{eq:21c}\hat{S}^{\xi}(k)_{s,s'} := \int_{\L}dx\; e^{ik\cdot x} \frac{\d^2 \mc{W}_{\xi}}{\d \vphi^+_{x,s}\d\vphi^-_{0,s'}}(0;0;0),
\end{align}

with $\frac{\d}{\d J_{\mu,x}}:= 2^{2N} \frac{\de}{\de J_{\mu,x}}$ and similarly for the other \it{functional derivatives}, with the understanding that derivatives w.r.t. $\vphi$ are meant as anti-commuting derivatives. 
\vskip.1cm

\begin{remark}
\label{rmk:param}
The Wilson term, that is the term proportional to $r$, is introduced in order to avoid the 
fermion doubling problem. The mass $m_N$ and $Z^5_N$ are \textit{bare parameters}
to be fixed in order to ensure the renormalization conditions. The Wilson term and the corresponding breaking of the chiral symmetry have the effect that
the massless case, $m=0$, requires a non-vanishing $m_N$, and that $Z^5_N$ is not equal to $1$.
In contrast with other related QFT in 2d, here we do not introduce in the model a bare wave function renormalization $Z_N$,
which in several cases is singular in the limit of removed cut-off, as in the Thirring model; our aim is indeed to prove that one has finite observables uniformly in $N$ by choosing $Z_N=1$.
With $L,a, M,m_N$ finite, Eq. (\ref{def_functional}) is well defined (even if the boson field is unbounded) for $\mathsf{e}$ small enough, as discussed
in the following section. 
\end{remark}

\subsection{Integration of the boson field}

Our first step towards the construction of the theory is the integration of the boson field, which yields an effective fermion interaction.
We consider the interacting part of the action,
\begin{equation}\begin{split}
V(A+J,\psi)&:= I(A+J,\psi)- I(0,\psi)\\  &=\frac{1}{2}\sum_{\mu=0,1}\int dx \psi^+_x(\gamma_{\mu}-r)\psi^-_{x+2^{-N}\hat{e}_{\mu}} G_{x,\mu,+}(A+J)+\\
&-  \frac{1}{2}\sum_{\mu=0,1}\int dx \psi^+_{x+2^{-N}\hat{e}_{\mu}}(\gamma_{\mu}+r)\psi^-_x G_{x,\mu,-}(A+J),\label{eq_V&G}
\end{split}\end{equation}
where $G_{x,\mu,\pm}(A):= 2^N\left(e^{\pm i2^{-N}\mathsf{e} A_{\mu,x}} -1 \right)$.
We use the the following conventions.
\begin{enumerate}
    \item Bosonic labels: $b\equiv (x,\mu,\e)\in \L_B\equiv \Lambda\times\{0,1\}\times\{\pm\}$. Besides, $\\ \int_{\L_B} db\equiv \sum_{\mu\in\{0,1\}}\sum_{\e\in\{\pm\}}\int_{\L} dx$. 
    \item Fermionic labels: $\eta\equiv (x,s)\in \L_F\equiv \Lambda\times\{1,2\}$. Besides, $\int_{\L_F} d\eta\equiv \sum_{s\in\{1,2\}}\int_{\L} dx$.
    \item If $\ud b\equiv(b_1,\dots, b_p)$ we let $\int_{\L_B^p}d \ud b:=\int_{\L_B} db_1\dots \int_{\L_B} db_p$, and similarly for fermionic labels. Whenever the value of $p$ and the nature of the label (bosonic or fermionic) will be clear from the context, we will simply write $\int d\ud b$ etc..
\end{enumerate}

With this understanding, we can rewrite the action as 
\begin{equation}
\begin{split}
\label{eq_Mod_action}
&V(A+J,\psi) +  (\vphi,\psi)+ i(B,\jmath^5) = \\ & \int_{\L_B} db\, O_b(\psi) G_b(A+J) + \int_{\L_F} d\eta (\vphi^+_{\eta}\psi^-_{\eta} + \psi^+_{\eta}\vphi^-_{\eta}) + \underbrace{\int_{\L_B} db B_b O_{5;b}(\psi)}_{=: (B,O_5)}, 
\end{split}
\end{equation}

where $O_b(\psi)= \int d\eta d\eta' \psi^+_{\eta} c_b(\eta,\eta') \psi^-_{\eta'}$, and, if $b=(y,\mu,\e), \eta=(x,s), \eta'=(x',s')$:

\begin{equation}
\label{def_bare_vertex}
c_b(\eta,\eta')= \left\{\begin{array}{cc}
\tfrac{1}{2}\delta(x-y)\delta(x'-y-2^{-N}\hat{e}_{\mu})(\gamma_{\mu}-r)_{s,s'}     & \e=+ \\
-\tfrac{1}{2}\delta(x'-y)\delta(x-y-2^{-N}\hat{e}_{\mu})(\gamma_{\mu}+ r)_{s,s'}     & \e=-,
\end{array}\right. 
\end{equation}

where hereafter $\d(x):= \sum_{n\in\mbb{Z}^2} 2^{2N}\d_{x_0,n_0L}\d_{x_1,n_1L}$.\footnote{For $\eta=(x,s)$ and $\eta'=(x',s')$ in $\L_F$ we also let $\d(\eta-\eta')\equiv \d_{s,s'}\d(x-x')$, while for $b=(y,\mu,\e)\in\L_B$ and $b'=(y',\mu',\e')$ in $\L_B$, $\d(b-b')\equiv\d_{\mu,\mu'}\d_{\e,\e'}\d(x-y)$.} More explicitly:
\[\left\{\begin{array}{cc}
O_{x,\mu,+}(\psi)&= \frac{1}{2}\psi^+_x(\gamma_{\mu}-r)\psi^-_{x+2^{-N}\hat{e}_{\mu}}  \\
O_{x,\mu,-}(\psi)&= -\frac{1}{2}\psi^+_{x+2^{-N}\hat{e}_{\mu}}(\gamma_{\mu}+r)\psi^-_x.
\end{array}\right. \]

Moreover, in the r.h.s. of Eq.  \eqref{eq_Mod_action}, in order for the coupling with the chiral and the vector current to possess a similar, standard form (to be also compatible with the symmetries of the model, cf. Appendix \ref{app:symmetries}), we have set $B_{x,\mu,\e}:= i\e \mathsf{e} B_{\mu,x}$ and $O_{5;x,\mu,\e}(\psi):=\int\!\! d\eta d\eta' \psi^+_{\eta} c^5_{x,\mu,\e}(\eta,\eta') \psi^-_{\eta'}$, where 
\begin{equation}
\label{def_chiral_vertex}
c^5_{y,\mu,\e}\big((x,s),(x',s')\big):= \tfrac{1}{2} \e Z^5_N \delta(x-y)\delta(x'-y) (\gamma_{\mu}\gamma_5)_{s,s'}.
\end{equation}

\medskip

The following identity holds:
\[ \int \mbb{P}_{\xi}(\mc{D}A) e^{-V(A+J,\psi)} = e^{-\mc{V}(\psi;G(J))},  \]

where $\mc{V}$ is a polynomial\footnote{Due to the finiteness of the fermionic Grassmann algebra, for any $L,N>0$ fixed the sum in the r.h.s. of Eq. \eqref{eq:2} has finitely many terms.} in $\{\psi^{\pm}_{\eta}\}_{\eta\in\L_F}$ and $\{G_b(J)\}_{b\in\L_B}$, given by
\begin{equation}
\label{eq:2}
\begin{split} 
\mc{V}(\psi;G(J))&= 2^N\left(\mu_{N,\l}-1 \right)\int db O_b(\psi)+ \mu_{N,\l} \int_{\L_B}\, db O_b(\psi) G_b(J)+ \\
&+ \sum_{n\ge 2}\sum_{0\le m\le n} 2^{N(2-n-m)} \int_{\L_B^n} d\ud b \int_{\L_B^m} d\ud b' w_{n,m}(\ud{b};\ud{b}')  O^n_{\ud b}(\psi) G^m_{\ud b'}(J)\, ,
\end{split}
\end{equation}

with $\mu_{N,\l}\equiv e^{-\frac{\l}{2}2^{-2N}g^{A,\xi}_{\mu,\mu}(0)}$, $\l\equiv\mathsf{e}^2$; given $\ud b=(b_1,\dots,b_n)$ we let $O^n_{\ud b}(\psi):=\prod_{j=1}^n O_{b_j}(\psi)$, similarly $G^m_{\ud b'}(J):=\prod_{i=1}^m G_{b'_i}(J)$. In complete analogy with \cite[Sect. 2]{M23}, as a straightforward consequence of the Battle-Brydges-Federbush formula \cite[Thm. 3.1]{Bry84}, the kernels $w_{n,m}$ can be expressed via the following formulas:
\begin{align}
&\label{eq:4a}w_{n,0}(\ud b)= \frac{1}{n!}\sum_{T\in \mc{T}(\{1,\dots,n\})} \prod_{\{i,j\}\in T} \l g^{A,\xi}_{b_i,b_j} \times \int dP_T(s) e^{-U(\ud{b};s)}\\
&\label{eq:4b}w_{n,m}(\ud b;\ud b')= \frac{1}{ m!} \sum_{\substack{i_1,\dots,i_m\in\{1,\dots,n\}\\ \text{distinct}}} \prod_{j=1}^m \d(b'_j-b_{i_j})\times w_{n,0}(\ud b), \;\; m\ge1,
\end{align}

where:
\begin{itemize}
    \item $\mc{T}(\{1,\dots,n\})$ the set of all the spanning trees on the set $\{1,\dots,n\}$ where, given $X \subset \mathbb{N}$, a \emph{spanning tree on $X$} is a set $T \subset X^2$ such that the graph  $\mc{G}=(X,T)$ is a tree (connected, acyclic). If $|X|=1$ then $\mc{T}(X)=\emptyset$ and in such case $w_{1,0}=\mu_{N,\l};$
    \item $dP_T(s)$ is a probability measure supported on $[0,1]^{\mc{P}(\{1,\dots,n\})}$, $\mc{P}(\{1,\dots,n\})$ being the set of pairs on $\{1,\dots,n\}$; 
        \item $U(\ud{b};s)\equiv \frac{1}{2}\l2^{-2N}(\sum_{j=1}^n g^{A,\xi}_{b_j,b_j} + 2\sum_{1\le i< j\le n}s_{\{i,j\}} g^{A,\xi}_{b_i,b_j})$, and this is non-negative on the support of $P_T$ for $\l\ge0$.
\end{itemize}

It is an immediate consequence of the finiteness of both $\|g^{A,\xi}_{\mu,\nu}\|_1:= \int_{\L}dx |g^{A,\xi}_{\mu,\nu}(x)|$ and $\|g^{A,\xi}_{\mu,\nu}\|_{\infty}:= \sup_{x\in\L} |g^{A,\xi}_{\mu,\nu}(x)|$ and of determinant bounds for fermionic expectations, the finiteness of the r.h.s. of Eq. 
\eqref{def_functional} for any $\xi\in[0,1]$ and $\mathsf{e}\in\mbb{R}$.

The resulting fermionic theory has an interaction involving Grassmann monomials of all possible degree, multiplied by suitable non-local kernels;
for every fixed tree $T$ in the r.h.s. of Eq. \eqref{eq:4a} and given $\ud b=(b_1,\dots,b_n)$, we can graphically represent the quantity

\[ 2^{N(2-n)} O^n_{\ud{b}}(\psi) \prod_{\{i,j\}\in T}\l g^A_{b_i,b_j}\]
 
appearing in the expression for $\mc{V}(\psi;0)$ as in Fig. \ref{fig_kernel}.(i); there  the nodes of the graph represent the points $\{1,\dots,n\}$: to each of them we assign a coordinate $b_j \in \Lambda_B$ and a 
dimensional factor $2^{N(1-d_j)}$, with $d_j$ the number of lines of the tree incident in the node. The couple of half solid lines incident in the node $j$ represents the fermionic bilinear $O_{b_j}(\psi)$; any wiggly line is a bond of the tree $T$, as in Fig. \ref{fig_tree}.(ii).

\begin{figure}[h]
    \centering
\includegraphics[scale=0.85]{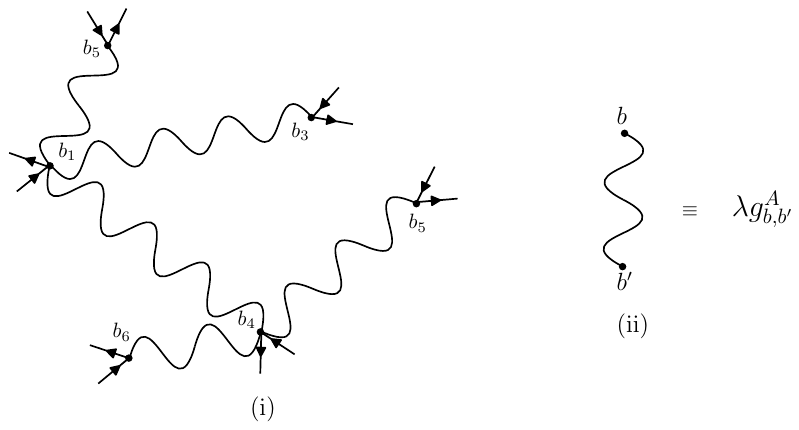}
    \caption{(i) graphical representation of the contribution to the term $\mc{O}(\psi^{12})$ of the fermionic interaction, obtained by selecting in $w_{6,0}$ the tree with bonds $\{1,2\},\{1,3\},\{1,4\},\{4,5\},\{4,6\}$; (ii) graphical representation for the boson propagator.}
    \label{fig_kernel}
\end{figure}

\subsection{Ward-Identities and $\xi$- independence}
\label{ssect_xi}

Ward identities hold, generated by the following relation (to be called \it{local phase invariance}):
\begin{equation}
\label{eq_gauge_inv}
\mc{W}_{\xi}(\vphi;J;B)= \mc{W}_{\xi}(e^{i\mathsf{e}\a}\vphi; J+ \nabla\a; B), \quad  \forall \a\in \mbb{R}^{\L},
\vspace{5pt}\end{equation}
with the understanding that $(e^{i\mathsf{e}\a}\vphi)_{x,s}^{\pm}\equiv e^{\pm i\mathsf{e}\a(x)} \vphi^{\pm}_{x,s}$,  $(A+ \nabla \a)_{\mu,x}\equiv A_{\mu,x}+ \nabla_{\mu}\a(x)$, and $\nabla_{\mu}$ the discrete-derivative operator as after Eq. \eqref{eq:43}. 
Note that $\mc{W}_{\xi}$ is well defined for finite values of the cut-off and any $\xi\in[0,1]$. Eq. \eqref{eq_gauge_inv} implies the \it{Ward identities}: 
\begin{align}
&\label{eq:WI5a}\sum_{\mu=0,1} \s_{\mu}(p) \hat{\Sigma}^{\xi}_{\mu,\nu}(p)= \sum_{\mu=0,1} \s_{\mu}(p) \hat{\Sigma}^{\xi}_{5;\mu,\nu}(p)=0, \qquad  \forall \; p\in \L^*,\\
&\label{eq:WI5b} \sum_{\mu=0,1} \s_{\mu}(p) \hat{\Gamma}^{\xi}_{\mu}(p;k)= \mathsf{e}\big(\hat{S}^{\xi}(k)- \hat{S}^{\xi}(k-p)\big), \qquad \forall p,k\in \L^*,
\end{align}

where $\hat{\Sigma}^{\xi}_{\mu,\nu},\hat{\Sigma}^{\xi}_{5;\mu,\nu},\hat{\Gamma}^{\xi}_{\mu}(p;k)$ and $\hat{S}^{\xi}(k)$ are as in Eqs. \eqref{eq:21a}-\eqref{eq:21c}. As a consequence, see App. \ref{app_gauge_inv}, one has the $\xi$-independence for the generating functional.
\begin{lemma}[$\xi-$independence]
\label{lemma_gauge_inv}
$
W_{\xi}(0;J;B)$ is constant w.r.t. $\xi\in [0,1]$.
\end{lemma}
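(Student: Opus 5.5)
The plan is to differentiate $\mc{W}_{\xi}(0;J;B)$ with respect to $\xi$ and show that the derivative vanishes, using local phase invariance, Eq. \eqref{eq_gauge_inv}, with $\vphi=0$.

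\textbf{Step 1: the $\xi$-derivative of the covariance.} Write $\hat{g}^{A,\xi}_{\mu,\nu}(k)$ for the Fourier transform in Eq. \eqref{eq:boson_propag}. Since
\[
\frac{1-\xi}{(|\s|^2+M^2)(\xi|\s|^2+M^2)}=\frac{1}{|\s|^2}\Big(\frac{1}{\xi|\s|^2+M^2}-\frac{1}{|\s|^2+M^2}\Big),
\]
we get
\[
\de_\xi \hat g^{A,\xi}_{\mu,\nu}(k)=-\frac{\s_\mu(k)\s^*_\nu(k)}{(\xi|\s(k)|^2+M^2)^2}.
\]
This is a pure-gradient kernel, and it is regular at $k=0$ since $M>0$. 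Define the real Gaussian scalar field $\a$ with covariance $\hat{c}_\xi(k)=(\xi|\s(k)|^2+M^2)^{-2}$. Then $\de_\xi g^{A,\xi}=-\,\mbb{E}(\nabla\a\,\nabla\a)$, with the sign and conjugation conventions of $\s_\mu$ matched to the discrete derivative $\nabla_\mu$. The aim is therefore to express $\de_\xi$ of the Gaussian integral through second derivatives in the direction of $\nabla\a$.

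\textbf{Step 2: the Gaussian derivative formula.} Everything is finite-dimensional in the Grassmann variables, and the integrand is a bounded smooth function of $A$ (exponentials of $iA$) multiplying Grassmann polynomials. For finite $L,N$ we may therefore use
\[
\de_\xi\int\mbb{P}_\xi(\mc{D}A)F(A)=\tfrac12\int\mbb{P}_\xi(\mc{D}A)\sum\de_\xi g^{A,\xi}_{\mu,\nu}(x-y)\,\frac{\de^2F}{\de A_{\mu,x}\de A_{\nu,y}}.
\]
Take $F(A)=\int\mc{D}\psi\, e^{-I(A+J,\psi)-i(B,\jmath^5)}$. This $F$ depends on $A$ only through $A+J$, so $\de_A=\de_J$. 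The numerator $Z_\xi(J,B)$ then satisfies
\[
\de_\xi Z_\xi=-\tfrac12\,\mbb{E}_\a\big[(\nabla\a\cdot\de_J)^2\big]Z_\xi(J,B).
\]

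\textbf{Step 3: gauge invariance kills the derivative.} By Eq. \eqref{eq_gauge_inv} at $\vphi=0$, we have $Z_\xi(J+t\nabla\a,B)=Z_\xi(J,B)$ for every $t$ and every $\a$. Differentiating twice in $t$ at $t=0$ gives $(\nabla\a\cdot\de_J)^2Z_\xi=0$ pointwise in $\a$, and hence after averaging over $\a$. Therefore $\de_\xi Z_\xi(J,B)=0$. The denominator in Eq. \eqref{def_functional} is independent of $\xi$, so $\mc{W}_\xi(0;J;B)$ is constant on $[0,1]$. Continuity of $\xi\mapsto g^{A,\xi}$ up to the endpoints is clear since $M>0$.

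The one point needing care is Eq. \eqref{eq_gauge_inv} itself, which is stated as a consequence of local phase invariance. It follows from the change of Grassmann variables $\psi^\pm_x\to e^{\pm i\mathsf{e}\a(x)}\psi^\pm_x$, which has unit Jacobian. Under this change the hopping terms in Eq. \eqref{eq:IA} acquire exactly $e^{\pm i2^{-N}\mathsf{e}\nabla_\mu\a}$, while the mass term and $\jmath^5$ are on-site and invariant. Beyond that, the only thing to check is the sign bookkeeping that identifies $\s_\mu(k)$ with the Fourier symbol of $\nabla_\mu$, so that $\de_\xi g$ really is $-\nabla\nabla^{T}$ applied to a positive kernel. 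I expect this bookkeeping to be routine rather than an obstacle.
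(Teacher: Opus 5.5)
Your proof is correct and follows essentially the same route as the paper: the Gaussian derivative formula for $\partial_\xi Z_\xi$, the observation that $\partial_\xi g^{A,\xi}$ is a pure-gradient kernel, and local phase invariance at $\varphi=0$ to kill the resulting term. The only cosmetic differences are these: you use the second directional derivative of $Z_\xi$ along $\nabla\alpha$, averaged over an auxiliary Gaussian field $\alpha$, whereas the paper contracts the $J$-Hessian in Fourier space and applies the one-slot Ward identity $\sum_{\mu}\sigma^*_{\mu}(k)\hat{K}^{\xi}_{\mu,\nu}(k,p)=0$; and you also check the phase invariance explicitly via the Grassmann change of variables, which the paper takes for granted.
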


Notice that the $\xi$-independence holds only for $\vphi=0$; more generally it holds only for correlations of \it{gauge-invariant} observables, namely quantities $F(A,\psi)$ such that $F(A+ \nabla\a, e^{i\mathsf{e}\a}\psi)= F(A,\psi)$ for every $\a\in\mbb{R}^{\L}$. Note in particular that $V(A,\psi)+ i (B,\jmath^5)$ is gauge-invariant.

Lemma \ref{lemma_gauge_inv} allows us to take $\xi=1$, so that the longitudinal term of the boson propagator is not contributing. The identity with the original model with $\xi=0$ holds only for the correlations of currents (correlations of fermions could be included if they respect the above invariance). From now on we will think of $\xi$ as fixed once and for all at the value $\xi=1$, and we will denote the boson propagator $g^{A,1}$ simply by $g^A$. The latter verifies the following bound:
\begin{equation}
\label{eq_bound_boson}
|g^A_{\mu,\nu}(x-y)|\le  \d_{\mu,\nu} \frac{K_p}{(N-h^*_M) + \big(2^{h^*_M}\dist{x-y}_L\big)^{\frac{1}{p}}} \exp\Big\{-\kappa_1 \sqrt{2^{h^*_M}\dist{x-y}_L}\, \Big\}
\end{equation}

with $h^*_M:= \lfloor \log_2 M \rfloor$, for every $1 \leq p <\infty$ and suitable constants $K_p,\kappa_1$; as a consequence  
\begin{equation}
\label{eq:bound:boson:b}
\|g^A_{\mu,\nu}\|_1=  \mc{O}(2^{-2h^*_M}), \qquad  
\|g^A_{\mu,\nu}\|_\infty=\mc{O}(N-h^*_M), 
\end{equation}

which has to be contrasted with the bounds in the $\xi=0$ case,  
$\|g^{A,0}_{\mu,\nu}\|_1= \mc{O}\big(2^{-2h^*_M}(N-h^*_M)\big)$ and
$\|g^{A,0}_{\mu,\nu}\|_\infty= \mc{O}( 2^{2N-2h^*_M})$. The boundedness of the $L^1$ norm will play a crucial role in the following analysis.
 
Note also that it is convenient to rewrite $w_{n,0}$ as a dominant term plus a remainder, the latter carrying a dimensional improvement in $N$. To do so, in Eq. \eqref{eq:4b} we rewrite $e^{-U(\ud{b};s)}= e^{-U(\ud{b};0)}+ \left(e^{-U(\ud{b};s)}-e^{-U(\ud{b};0)} \right)$, which induces the splitting 
\begin{equation}
\label{eq_16a}w_{n,0}(\ud b)= v_n(\ud b) + 2^{2h^*_M-2N}\tilde{v}_n(\ud b),  \qquad 
v_n(\ud b):= \frac{\mu_{N,\l}^n}{n!}   \sum_{T(\{1,\dots,n\})} \prod_{\{i,j\}\in T} \l g^A_{b_i,b_j},
\end{equation}

and one can easily show, using Eq. \eqref{eq_bound_boson}, that there exists a constant $C>0$ such that
\begin{equation}
\|v_n\|_1^w \le C^n \big(2^{-2h^*_M}\l\big)^{n-1}, \;\;\|\tilde{v}_n\|_1^w \le C^n \big(2^{-2h^*_M}\l\big)^n,    
\end{equation}

where, for any function $W$ (to be called \it{kernel} hereafter) over $\L_F^{2q}\times\L_B^m$ and $p>0$, we let
\begin{equation}
\label{def_Lp_norm}
\|W\|_p^{w}:= \left(\frac{1}{L^2} \int_{\L_F^{2q}} d\ud \eta\int_{\L_B^{m}} d\ud b \, \exp\Big( \tfrac{p}{2} \kappa  \sqrt{2^{h^*_M} \d^D(\ud{\eta},\ud{b})}\Big) |W(\ud{\eta};\ud{b})|^p \right)^{\frac{1}{p}} \end{equation}
with: $\d^D(\ud{\eta},\ud{b})$ the diameter distance between the $2q+m$ space coordinates of $\\ \eta_1,\dots,\eta_{2q},b_1,\dots,b_m$; $\kappa:= \min\big\{\kappa_1,\kappa_2,\kappa_3\big\}$, where $\kappa_1,\kappa_2,\kappa_3$ are suitable constants introduced in Eqs. \eqref{eq_bound_boson}, \eqref{eq_bound_propagator}, \eqref{boson:gradient} respectively.

\subsection{The ultraviolet integration}
\label{ssect_main_result}

Let
$P^{(\le N)}(\mc{D}\psi)$ be the Grassmann Gaussian integration with propagator:
\begin{equation}
\label{def_fermion_prop}
g^{(\le N)}_{s,s'}(x-y)\equiv \mc{E}_{(\le N)}\big(\psi^-_{x,s} \psi^+_{y,s'}\big)= \int_{\L^*} dk \; e^{-ik\cdot(x-y)} \Big(-i \slashed{s}(k)+ m_N + M_N(k)\Big)^{-1}_{s,s'},
\end{equation}
\begin{sloppypar} where $\slashed{s}(k):= \sum_{\mu=0,1} \gamma_{\mu} s_{\mu}(k)$, with $s_{\mu}(k):= 2^N\sin (2^{-N}k_{\mu})$,  and  $M_N(k):=  2r\sum_{\mu=0,1} 2^N \sin^2(2^{-N-1}k_{\mu})$. \end{sloppypar}

We consider $|m_N|\le M \le 2^N$ and,
recalling that $h^*_M= \lfloor\log_2M\rfloor$, we decompose the fermion propagator as $g^{(\le N)}(x-y)= g^{(\le h^*_M)}(x-y)+ g^{(h^*_M,N]}(x-y)$, with
\begin{equation} 
\label{eq:7}
\hat{g}^{(\le h^*_M)}(k):= \frac{\chi(2^{-h^*_M}|k|)}{-i\slashed{s}(k)+ m_N+ M_N(k)}, \qquad \hat{g}^{(h^*_M,N]}(k):= \frac{1-\chi(2^{-h^*_M}|k|)}{-i\slashed{s}(k)+ m_N+ M_N(k)}, 
\end{equation}

where $\chi$ is a non-negative, non-increasing Gevrey-2 function\footnote{ $f\in\mathscr{C}^{\infty}(\mbb{R})$ is said to be of Gevrey-$s$ class if $\|f^{(n)}\|_{L^{\infty}}\le C_1 C_2^n n!^s$ for every $n\ge0$ and some $C_1,C_2>0$.}, such that $\chi\equiv 1$ on $[0,\tfrac{1}{2}]$ and $\chi\equiv 0$ on $[1,\infty)$. By the addition principle for Gaussian integrations \cite[Eq. (4.21)]{GM01}, we can rewrite:
\begin{equation}\label{eq:IRinteg}
\mc{W}(\vphi,J,B)= -\log \int P^{(\le h^*_M)}(\mc{D}\psi) e^{-\mc{W}^{(u.v.)}(\psi;\vphi;J;B)}
\end{equation}
with
\begin{equation}
\label{def_functional_UV}\hspace{-5pt}\mc{W}^{(u.v.)}(\psi;\vphi;J;B):=-\log \int \hspace{-3pt}P^{(h^*_M,N]}(\mc{D}\z) \int \hspace{-3pt} \mbb{P}(\mc{D}A) e^{-V(A+J,\z + \psi) - (\vphi,\z + \psi)- i(B,\jmath^5(\z+ \psi))}.  
\end{equation}

It is an immediate consequence of the above definitions that $\mc{W}^{(u.v.)}(\psi;\vphi;J;B)$ can be written in the form:
\begin{equation}
\label{eq_main_potential}
\begin{split}
&L^2\mc{F}^{(u.v.)} + \sum_{\substack{q,q',\tilde{q},\tilde{q}',p,p'\ge0\\ q+q'+\tilde{q}+\tilde{q}'+p+p'\ge1\\ q+\tilde{q}= q'+\tilde{q}'}} \int_{\L_F^q} d\ud{\eta} \int_{\L_F^{q'}}d\ud{\eta}' \int_{\L_F^{\tilde{q}}}d\ud{\tilde{\eta}} \int_{\L_F^{\tilde{q}'}}d\ud{\tilde{\eta}}' \int_{\L_B^p} d\ud{b} \int_{\L_B^{p'}} d\ud{b}' \times\\
& \times W^{(u.v.)}_{q,q';\tilde{q},\tilde{q}';p;p'}(\ud{\eta},\ud{\eta}';\ud{\tilde{\eta}}, \ud{\tilde{\eta}}'; \ud{b};\ud{b}') \prod_{k=1}^q \psi^+_{\eta_k} \prod_{k=1}^{q'} \psi^-_{\eta'_k} \prod_{k=1}^{\tilde{q}} \vphi^+_{\tilde{\eta}_{k}} \prod_{k=1}^{\tilde{q}'}\vphi^-_{\tilde{\eta}'_{k}} \prod_{k=1}^p G_{b_k}(J) \prod_{k=1}^{p'} B_{b'_k},
\end{split}\end{equation}

for suitable complex kernels $W^{(u.v.)}_{q,q';\tilde{q},\tilde{q}';p;p'}$. Our main result is the following.

\clearpage
\begin{theorem}
\label{thm_UV}
There exist $\l_{\star},C_{\star},N_{\star}>0$ such that for any $0\le |m_N|<2^{h^*_M}$, $N\ge h^*_M+N_{\star}$, $L\ge 2^{-h^*_M}$, $\xi=1$,  and $0\le\l \le \l_{\star} 2^{2h^*_M}$, the following are true.
\begin{enumerate}
\item\label{itt:1} The kernels of $\mc{W}^{(u.v.)}$ in Eq. \eqref{eq_main_potential}  satisfy the following bounds. If $q+q'+\tilde{q}+\tilde{q}'>0$ or $p+p'>2$,
\begin{equation}
\label{eq_50}
\hspace{-11pt}\big\|W^{(u.v.)}_{q,q';\tilde{q},\tilde{q}';p;p'}\!\big\|_1^{w}\hspace{-3pt}\le 2^{h^*_M D_{sc}}\hspace{-2pt}\times \hspace{-2pt}\left\{\begin{array}{cc}
\hspace{-10pt}C_{\star}\big(2^{-2h^*_M}\l\big)^{\max\{1,\frac{2}{5}q\}}, & \hspace{-5pt}\!\!\!\!\tilde{q},\tilde{q}',p,p'=0\\
\hspace{-5pt} C_{\star}^{1+p+p'+ \tilde{q}+\tilde{q}'} \big(2^{-2h^*_M}\l\big)^{\frac{1}{8} \max\left\{q+q'+\tilde{q}+\tilde{q}'-2, 0\right\}}, & \!\!\text{otherwise},
\end{array}\right.
\end{equation}
with $D_{sc} \equiv D_{sc}(q,q';\tilde{q},\tilde{q}';p;p'):= 2- \frac{1}{2}(q+q')-\frac{3}{2}(\tilde{q}+\tilde{q}')-p-p'$. Moreover:
\begin{flalign}
&\label{eq_51a}|\mc{F}^{(u.v.)}|\le C_{\star} 2^{-2h^*_M}\l 2^{2N}; \qquad  W^{(u.v.)}_{0,0;0,0;0;1}=0; \qquad  \big\| W^{(u.v.)}_{0,0;0,0;1;0}\big\|_1^w\le C_{\star}2^N; \\
&\label{eq_51b} \big\|g^A\!*\!W^{(u.v.)}_{0,0;0,0;2;0}\big\|_1^{w}, \big\|g^A\!*\!W^{(u.v.)}_{0,0;0,0;1;1}\big\|_1^{w} \!\le C_{\star} 2^{-2h^*_M}; \quad  \big\|W^{(u.v.)}_{0,0;0,0;0;2}\big\|_1^w\! \le\! C_{\star}(\!N\!-h^*_M\!).
\end{flalign}
\item\label{itt:2} 
Letting $\hat{\Sigma}^{(u.v.)}_{\mu,\nu}$, $\hat{\Sigma}^{(u.v.)}_{5;\mu,\nu}$ and $\hat{S}^{(u.v.)}$ be defined as the r.h.s. of Eqs. \eqref{eq:21a}, \eqref{eq:21aa} and \eqref{eq:21c} respectively, with $\mc{W}_{\xi}$ in the r.h.s. replaced by $\mc{W}^{(u.v.)}$, the following relations hold:
\begin{align}
&\label{eq:23a} \big| \hat{S}^{(u.v.)}_{s,s'}(k)-  \hat{g}^{(h^*_M,N]}_{s,s'}(k) \big|\le \frac{C_{\star} 2^{-h^*_M}\l}{2^{2h^*_M}+|k|^2}, \\
&\label{eq:23b} \big| \hat{\Sigma}^{(u.v.)}_{\mu,\nu}(p)-  \l\hat{\mf{B}}^{(u.v.)}_{\mu,\nu}(p)  \big|, \, \big| \hat{\Sigma}^{(u.v.)}_{5;\mu,\nu}(p)-  \l\hat{\mf{B}}^{(u.v.)}_{5;\mu,\nu}(p)  \big|\le C_{\star} 2^{-h^*_M} \l^{\frac{3}{2}},
\end{align}
for every $p,k\in\L^*$ such that $|p|\le 2^{h^*_M}$, where $\hat{\mf{B}}^{(u.v.)}_{\mu,\nu}$ and $\hat{\mf{B}}^{(u.v.)}_{5;\mu,\nu}$ are suitable functions over $\L^*$, such that for $p \neq 0$
\begin{align}
&\lim_{M\to0^+}\lim_{m_N\to0}\lim_{L\to\infty} \hat{\mf{B}}^{(u.v.)}_{\mu,\nu}(p)= \frac{1}{\pi} \Big(\d_{\mu,\nu}- \frac{p_{\mu}p_{\nu}}{|p|^2} \Big) + R_{\mu,\nu}(p),\label{eq_24a}\\
&\lim_{M\to0^+}\lim_{m_N\to0}\lim_{L\to\infty} \hat{\mf{B}}^{(u.v.)}_{5;\mu,\nu}(p)= \frac{i}{\pi} \Big(\ve_{\nu,\mu}- \sum_{\a=0,1}\ve_{\nu,\a} \frac{p_{\mu}p_{\a}}{|p|^2} \Big) + R_{5;\mu,\nu}(p),
\label{eq_24b}
\end{align}

with $|R_{\mu,\nu}(p)|, |R_{5;\mu,\nu}(p)|\le C_{\star} (2^{-N}|p|)^{\frac{1}{2}}$.
\end{enumerate}
\end{theorem}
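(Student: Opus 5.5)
We would prove Theorem \ref{thm_UV} by a multiscale (Wilsonian) integration of the ultraviolet fermionic scales $h\in(h^*_M,N]$, after the boson field has been integrated out exactly via Eq. \eqref{eq:2}.

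\textbf{Single-scale decomposition.} We write $g^{(h^*_M,N]}=\sum_{h=h^*_M+1}^{N} g^{(h)}$ with Gevrey cut-offs. Each $g^{(h)}$ admits a Gram representation with $\|g^{(h)}\|_\infty\le C2^{h}$ and stretched-exponential decay $\|e^{\frac{\kappa}{2}\sqrt{2^h\dist{x}_L}}g^{(h)}\|_1\le C2^{-h}$; this is the bound \eqref{eq_bound_propagator}. The Wilson term gives the lattice propagator a mass of order $2^N$ at the doublers, so scale $N$ is special and is integrated first in a single step.

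\textbf{Tree expansion.} We integrate scale by scale using a Gallavotti--Nicolò tree expansion combined with the Battle--Brydges--Federbush/Gram--Hadamard determinant bounds. The new feature is that the initial interaction $\mc{V}$ contains monomials of arbitrary degree $2n$ with kernels $2^{N(2-n)}v_n$ and $2^{2h^*_M-2N}2^{N(2-n)}\tilde v_n$. Their $\|\cdot\|_1^w$ norms are controlled by $C^n(2^{-2h^*_M}\l)^{n-1}$. The key observation is that with $\xi=1$ the $L^1$ norm of $g^A$ is $\mc{O}(2^{-2h^*_M})$, independent of $N$. Each boson bond therefore carries the small factor $2^{-2h^*_M}\l$, and no factor $2^{2N}$ arises. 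The naive scaling factor $2^{N(2-n)}$ of an endpoint with $2n$ fermion legs matches the dimension $2-n$ of a $2n$-leg kernel, so endpoints are renormalizable at worst. This is where we expect the \textbf{main obstacle}. Contracting fermion lines inside an endpoint produces local quadratic and quartic terms whose dimension is nonnegative, and super-renormalizability must be recovered. We plan to exploit the following gain: every contraction either uses a boson propagator, whose $L^1$ norm is scale-independent, or the factor $2^{2h^*_M-2N}$ in $\tilde v_n$. Distributing this gain by assigning to each vertex a fraction of $2^{-2h^*_M}\l$ is what produces the exponents $\max\{1,\tfrac25 q\}$ and $\tfrac18(q+q'+\dots)$. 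Concretely, we first try to bound each tree by $2^{h^*_M D_{sc}}\prod_v 2^{(h_v-h_{v'})(D_v-\theta)}$ with $\theta>0$. To do so, we use the boson gradient bound \eqref{boson:gradient} to give an extra $2^{-h}$ wherever two fermions at the same point meet a boson line, a Taylor step on the dimension-$1$ bilinear terms, and symmetries (App. \ref{app:symmetries}) that kill the local part of the two-point function apart from a mass shift. That mass shift is absorbed in $m_N$, and the wave function renormalization is shown to be finite because its flow is summable against $2^{-2h^*_M}\l\, 2^{-(h-h^*_M)}$.

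\textbf{Remaining items.} The external-field kernels are handled the same way. The special bounds \eqref{eq_51a}--\eqref{eq_51b} follow from explicit lowest-order computations. $W_{0,0;0,0;0;1}=0$ holds by parity/trace. The bound on $W_{0,0;0,0;1;0}$ follows from the factor $2^N$ in $\mu_{N,\l}$, and $W_{0,0;0,0;0;2}$ is the bubble, whose logarithm produces $N-h^*_M$. Convolving with $g^A$ then uses $\|g^A\|_1$. For item \ref{itt:2}, the first-order terms give $\hat g^{(h^*_M,N]}$ and $\l\hat{\mf B}$, with errors bounded by the tree sums beyond first order. For the bubble we take $M\to0$, $m_N\to0$, $L\to\infty$ and compute the lattice integral explicitly. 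The transverse structure is enforced by the Ward identity \eqref{eq:WI5a}. The coefficients $1/\pi$ are fixed by the continuum limit of the lattice bubble (with the choice of $Z^5_N$ for the axial one), and the Wilson-term corrections are $\mc{O}((2^{-N}|p|)^{1/2})$ by interpolation of smooth-lattice estimates.
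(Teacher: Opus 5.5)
The gap is exactly at the step you flag as the main obstacle, and the tools you list do not close it.

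A factor $2^{-2h^*_M}\lambda$ per boson bond makes each vertex small, but it gives no decay in $h_v-h_{v'}$. Only the scale-independent norm $\|g^A\|_1$ enters, so the bare quartic and quadratic terms, and all the $v_n$-monomials with their source insertions, are exactly marginal or relevant in the weighted $L^1$ norm. In the Gallavotti--Nicol\`o sum, chains of such vertices then produce factors like $(\lambda(N-h))^k$; the naive bound on the second-order correction to the $\psi^4$ kernel is already $\lambda^2(N-h)$. Relevant vertices produce $2^{N-h}$. Smallness of $\lambda$ uniform in $N$ cannot repair this.

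None of your remedies supplies the missing scale gain:
\begin{itemize}
\item The gradient bound \eqref{boson:gradient} does not give a gain for a generic vertex with external legs. In the paper it enters only through Lemma \ref{lemma_bubble}, for the closed loop $\Pi^{(h,N]}$, whose zero-momentum value is finite.
\item A Taylor/localization step plus symmetries cannot remove the local part, because with the Wilson term no symmetry forbids a local mass.
\item Absorbing a mass shift into $m_N$ and running a $Z$ proves a different statement. Theorem \ref{thm_UV} holds for every fixed $|m_N|<2^{h^*_M}$ with $Z_N=1$, and \eqref{eq:23a} compares with $\hat g^{(h^*_M,N]}$ at the same bare mass.
\end{itemize}
In fact no counterterm is needed. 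The self-energy itself obeys $\|W^{(h)|1,1}_{\emptyset;\emptyset;\emptyset}\|_1^w\le R\lambda 2^{-\vartheta h}$: the boson tadpole vanishes by charge conjugation, and $2^N(\mu_{N,\lambda}-1)=O(\lambda N2^{-N})$. But proving this requires the mechanism described next.

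What is missing is a way to exploit the non-locality $\hat g^A(k)\sim|k|^{-2}$ for the non-irrelevant kernels without losing the determinant combinatorics. Once kernels are bounded in $L^1$ inside the tree expansion, this information is gone. Keeping boson lines explicit in the trees fails combinatorially, because of the unbounded-degree monomials $O^n_{\underline b}$ with spanning-tree kernels.

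The paper separates the two issues:
\begin{itemize}
\item Proposition \ref{prop:SDB} is the purely fermionic tree expansion. It gives dimensional bounds, assuming the improved bounds \eqref{eq_IB_1}--\eqref{eq_IB_6} at higher scales.
\item Theorem \ref{thm:IB} proves those improved bounds by induction on the scale. The induction uses exact identities (Lemmas \ref{lemma_id_A}, \ref{lemma_id_B}) obtained from Grassmann integration by parts and the tree property \eqref{tree_property} of the spanning-tree sum.
\end{itemize}
These identities extract one boson line running parallel to a fermion line, so H\"older's inequality gives $\lambda\|g^A\|_p\|g^{(h,N]}\|_{p'}\lesssim\lambda 2^{-\vartheta h}$. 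Bubble insertions are controlled by $\|g^A*\Pi^{(h,N]}\|_1^w\le C'_\Pi$. Superficially log-divergent diagrams are cured by a second extraction, as in \eqref{eq_18}.

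To make the identities close without $s!$ factorials, one must add the unbounded family of auxiliary sources $G^{(n)},\dot G^{(n)},\ddot G^{(n)}$ with weights $n^3$ in \eqref{def_potential}. All their marginal kernels must then be bounded in the same induction. This is the core of the proof and is absent from your plan.

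Two minor corrections: $W_{0,0;0,0;0;1}=0$ comes from charge conjugation, not parity or a trace argument. The $2^N$ in the $G(J)$-tadpole bound is its dimensional factor, not a property of $\mu_{N,\lambda}\le 1$.
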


\vskip.1cm
\begin{remark}$\,$
\begin{itemize} 
\item[\textbf{1)}]
The finiteness of the kernels obtained after the ultraviolet integration is established with
a wave function renormalization independent of the lattice cut-off (cf. Remark \ref{rmk:param}). 
This is due to the $\xi$-independence implied by our lattice regularization, 
which is not true with other regularizations.
\item[\textbf{2)}] The result of the integration of the ultraviolet part is an 
effective
interaction  expressed by monomials which are bounded 
by the correct dimensional factor $2^{h^*_M D_{sc}}$ times constants independent on $N$, see Eq.  \eqref{eq_51a}. Our techniques can also be used for showing the existence of the limit $N\to\infty$ of the Fourier transform of the kernels  (see Appendix \ref{app:limit}).
\item[\textbf{3)}] The behavior of the two-point Schwinger function $\hat{S}^{(u.v.)}(k)$ for large momenta is the same as the non-interacting one, up to sub-leading corrections; this is a difference with the Thirring model, where the behavior in the same regime is different from the non-interacting case, modified by the presence of an anomalous critical exponent.

For small momenta instead a critical exponent for $\hat{S}^{(u.v.)}(k)$ is found, see Theorem \ref{thm:anomaly}.
\item[\textbf{4)}] The current and axial current correlation functions, $\hat\Sigma^{(u.v.)}_{\mu,\nu}$ and $\hat\Sigma^{(u.v.)}_{5;\mu,\nu}$ respectively, are close to their non-interacting counterparts up to corrections. Note that the lattice regularization ensures the same expressions
found with dimensional regularizations; with momentum regularizations
different values would be obtained, for instance the non-interacting current correlation function would be given by $\frac{1}{2\pi} \big(\d_{\mu,\nu}- 2\frac{p_{\mu}p_{\nu}}{|p|^2}\big)$,
violating the conservation of the current.
\end{itemize}
\end{remark}

\subsection{Sketch of the proof}
\label{subsec:proofsketch}

As the bosons are integrated out, the power counting is the one of a fermionic theory in 2d, with scaling exponent $D_{sc}$ as in Theorem \ref{thm_UV}.
In the resulting effective fermionic theory, Eq. \eqref{eq:2}, there are marginal terms (i.e. with $D_{sc}=0$) either quartic in $\psi$ or quadratic in $\psi$ and linear in $G(J)$, 
and relevant (i.e. with $D_{sc}>0$) if quadratic in $\psi$; all the others are irrelevant, namely with $D_{sc}<0$.

By decomposing the fermionic integration (cf.  Eq. \eqref{def_functional_UV}) in a multiscale fashion  we obtain a series expansion for the kernels  in terms of the relevant and marginal ones only. The expansion turns out to be convergent as long as the relevant and marginal kernels verify suitable bounds that are uniform w.r.t. $N$: this is the content of Proposition \ref{prop:SDB}. The main problem is to show that the relevant and marginal kernels are indeed bounded (Theorem \ref{thm:IB}). To this end the idea is to perform, at each scale, a number of manipulations within the fermionic expectations in order to \emph{extract} some fermionic or bosonic internal lines (i.e. propagators); this is done in order to gain from the non-locality of the boson propagator, $\hat{g}^A(k)\sim |k|^{-2}$, which decays faster than the fermionic one, $\hat{g}^{(\le N)}(k)\sim |k|^{-1}$, for large $k$. In doing that one has to be careful and avoid losing the good combinatorial properties of the fermionic determinants. These manipulations lead to kernels identities for the relevant and marginal ones, 
which can be seen by differentiating a suitable generalized action with respect to new auxiliary sources.

However, while in the case with momentum regularization one introduces only one source, which is coupled to the current \cite{L87,BFM09,F10}, in the present case the lattice interaction produces complicate kernels (cf. Eq. \eqref{eq:2} and Fig. \ref{fig_kernel}) so that one is forced to add source terms as in Eq. \eqref{def_potential} below, by introducing an unbounded number of auxiliary variables coupled to new, complicated fermionic terms which hardly resemble standard observables.
\begin{equation}\label{def_potential}
\hspace{-30pt}\begin{split} & \mc{ V}(\psi;G;\dot{G};\ddot{G}):= \\ & 2^N\left(\mu_{N,\l} -1 \right)\int_{\L_B} db\, O_b(\psi)+\sum_{n\ge2} 2^{N(2-n)} \int_{\L_B^n} d \ud b\, w_{n,0}(\ud b)  O^n_{\ud b}(\psi) \\&+ \sum_{n\ge1} n^3 2^{N(1-n)} \int_{\L_B^n} d\ud b\, v_n(\ud b) G^{(n)}_{b_1}  O^n_{\ud b}(\psi)+ \sum_{n\ge2} n^3 2^{N(2-n)} \int_{\L_B^n} d \ud b\, v_n(\ud b) \dot{G}^{(n)}_{b_1} O^{n-1}_{\ud b_*}(\psi)\\&+  \sum_{n\ge2} n 2^{N(1-n)} \int_{\L_B^n} d\ud b\, \tilde{v}_n(\ud b) \ddot{G}^{(n)}_{b_1} O^{n-1}_{\ud b_*}(\psi),
\end{split}\end{equation}

where $G\equiv \{G^{(n)}\}_{n\ge1}$, $\dot{G}\equiv \{\dot{G}^{(n)}\}_{n\ge2}$ and $\ddot{G}\equiv \{\ddot{G}^{(n)}\}_{n\ge2}$ ($n$ being called \it{degree} index) are the aforementioned auxiliary, complex fields\footnote{$G\equiv \{G^{(n)}\}_{n\ge1}$ should not be confused with the function $G(J)$ in Eq.  \eqref{eq:2}.}; moreover given $\ud b=(b_1,\dots,b_n)$, we let $\ud b_*:=(b_2,\dots, b_n)$, $O^{n}_{\ud b}:=\prod_{j=1}^n O_{b_j}$ and $O^{n-1}_{\ud b_*}:=\prod_{j=2}^n O_{b_j}$.
To motivate such definition (see Remark \ref{rmk_dotG} below for details), we notice that by selecting one of the fermionic lines from the kernel $v_n$ (see Fig. \ref{fig:extraction}), there are two distinct situations: when the vertex of the extracted fermionic line is a leaf of the tree or not. \begin{figure}[h]   \centering \includegraphics[width=0.92\textwidth]{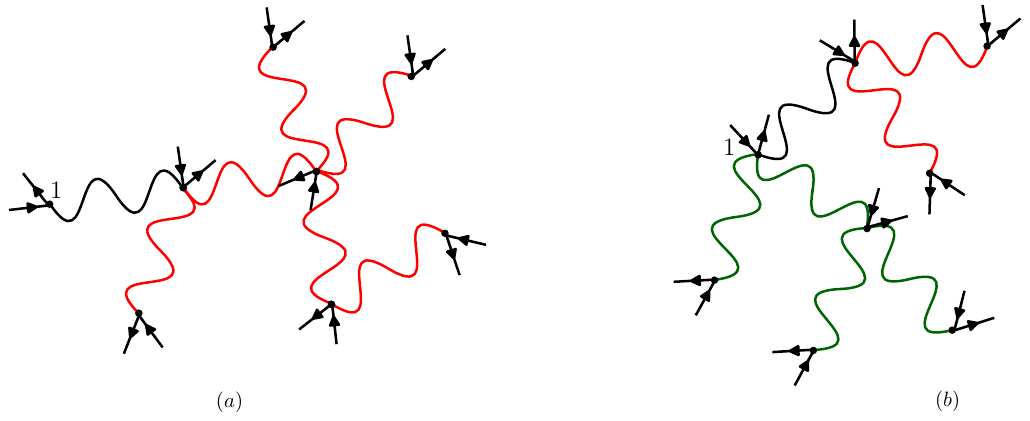} \caption{The two possibilities arising from the extraction of a fermionic line (black solid line attached to the vertex \virg{1}), depending on whether the vertex it is attached to, \virg{1}, is a leaf (a) or a not-leaf (b) of the spanning tree. The red sub-tree corresponds to a source term of type $G^{(n)}$ (with $n=7$ for (a) and $n=3$ for (b)) and the green one to a source term of type $\dot{G}^{(n)}$ (with $n=5$).} \label{fig:extraction} \end{figure} 
In the first case (Fig. \ref{fig:extraction}.(a)) the contribution from the remaining part of the kernel can be written as the expectation of the monomial in the r.h.s. of Eq. \eqref{def_potential} multiplying $G^{(n-1)}$; in the second case (Fig. \ref{fig:extraction}.(b)) it is the expectation of two monomials, the one multiplying $G^{(n_1)}$ and the one multiplying $\dot{G^{(n_2)}}$, with $n_1+n_2=n$. All such new source fields $G^{(n)}$ and $\dot{G}^{(n)}$, with $n\ge2$, have all scaling dimension  +1.\footnote{The dimension of $G^{(n)}$ and $\dot{G}^{(n)}$ is the same as in the last line of Eq. \eqref{def_potential} the number of fermionic bilinears is $n-1$.} In fact the dimensional factors $2^{N(1-n)}, 2^{N(2-n)}, 2^{N(1-n)}$ in the r.h.s of Eq. \eqref{def_potential} should be regarded as $2^{N D_{sc}}$, where the scaling dimension $D_{sc}$ of a monomial $(\psi)^{2q} (G)^{p} (\dot{G})^{\dot{p}} (\ddot{G})^{\ddot{p}}$ is defined as $ D_{sc}(2q;p;\dot{p}; \ddot{p}):= 2- q- p- \dot{p}- 2\ddot{p}$. The dimensionless factors $n^3$ and $n$ in front of the source terms are present for technical convenience, as part of the inductive structure of our proof (see Subsection \ref{ssection_improved_5}, in particular the comments after Eq. \eqref{eq_37}).  

One has therefore to add in the multiscale expansion such an unbounded number of new source terms, and prove self-consistently that \it{all} the dimensionally \it{marginal} and \it{relevant} terms, namely those with $D_{sc}=0$ and $D_{sc}>0$ respectively, remain uniformly bounded at all the steps of the multiscale integration. The main challenge is to show that this whole amount of new terms can be still controlled via the same mechanism based on the non-locality of the boson propagator. This is the main content of Theorem \ref{thm:IB}.

\subsection{The infrared problem and construction of the lattice massive QED$_2$}

Once that the ultraviolet integration is performed and 
$\mc{W}^{(u.v.)}(\psi;\vphi;J;B)$ in Eq.  
\eqref{eq_main_potential}
has been obtained, as expressed 
by Theorem \ref{thm_UV}, the full construction of the model requires the infrared integration, see Eq. \eqref{eq:IRinteg}. The  
analysis is done via a multiscale integration which is 
essentially identical, up to trivial adaptation, to the one performed (for instance) in
the case of the Thirring model, see  
\cite[Sect. 2]{BFM07}; in that case
one starts from a quartic fermionic interaction, but after the integration of the first scale, an expression similar to 
Eq. \eqref{eq_main_potential} is obtained, with kernels verifying
the bounds Eq. \eqref{eq_50}; hence the analysis in \cite[Sect. 2]{BFM07}
holds essentially identical. It should be noted that the integration of the infrared scales is completely different from the ultraviolet ones; there is no dimensional improvement due to the non-locality of the interaction, as briefly explained in the Subsection \ref{subsec:proofsketch}, hence one has to introduce a renormalized expansion in terms of a set of
running coupling constants, corresponding to the relevant and marginal terms. The running coupling constants correspond to: the quartic interaction, the wave function renormalization, the electron mass and the amplitude of the 
currents, see  \cite[Sect. 2.1]{BFM07}. The expansion in the running coupling constants is convergent if they are small enough, but one has to 
prove that the quartic running coupling constants remain small; this follows
from the asymptotic vanishing of the beta function proved in  \cite[Theorem 3.1]{BM}. While the effective quartic coupling remains close to its initial value in the multiscale analysis, the wave function renormalization increases as a power law, with a critical exponent given by a non-trivial, analytic function of the quartic coupling. 
The parameters $Z^5_N$ and $m_N$ in the r.h.s. of Eq. \eqref{def_functional} have to be fixed to ensure the natural physical requirements, see \cite{M22}.
The result can be summarized in the following theorem.

\begin{theorem}
\label{thm:anomaly}
Under the same conditions of Theorem \ref{thm_UV}, 
the limit $L\to \infty$
of the two-point function and of the current and axial current correlations exists. In the massless case, for $|k|\le M$ the two-point Schwinger function behaves like:
\begin{equation}
\label{eq:anomaly1}
\hat{S}^1_{s,s'}(k)=\mc{O}(|k|^{-1+\eta}),
\end{equation}
with $\eta=\a \lambda^2+ \mc{O}(\lambda^3)$, for
 a suitable $\a>0$, and the chiral current correlation verifies:
\begin{equation}
\label{eq:anomaly}
\sum_{\nu=0,1}p_\nu \hat{\Sigma}^{\xi}_{5;\mu,\nu}(p)= \sum_{\nu=0,1}
\tfrac{i\l}{\pi} \ve_{\nu,\mu} p_\nu+ \mc{O}\Big(\tfrac{\l|p|^{1+\th}}{M^{\th}} \Big),
\end{equation}

for some $\th\in(0,1)$.
\end{theorem}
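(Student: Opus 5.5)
The plan is to run the infrared multiscale integration of Eq. \eqref{eq:IRinteg} with $\mc{W}^{(u.v.)}$ from Theorem \ref{thm_UV} as the initial effective potential. Theorem \ref{thm_UV} shows that $\mc{W}^{(u.v.)}$ has exactly the structure met after the first ultraviolet step in the Thirring-type analysis of \cite[Sect. 2]{BFM07}. Its kernels obey the dimensional bounds \eqref{eq_50} on scale $h^*_M$, with weighted $L^1$ norms. The quartic part is of order $\l 2^{-2h^*_M}$, and the quadratic part is close to the free propagator by \eqref{eq:23a}. The stretched-exponential weight in $\|\cdot\|_1^w$ supplies the decay needed for Gevrey-class multiscale cut-offs.

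\textbf{Multiscale expansion.} We decompose $g^{(\le h^*_M)}=\sum_{h\le h^*_M} g^{(h)}$ and define the localization $\mc{L}$ on the quartic terms, the quadratic terms, and the terms that are quadratic in $\psi$ and linear in $J$ or $B$. This introduces running couplings $\l_h,Z_h,m_h,Z^{J}_h,Z^{5}_h$. Using the Gallavotti--Nicolò tree expansion with Brydges--Battle--Federbush determinant bounds, we will prove that the kernels on scale $h$ are analytic and bounded by $2^{hD_{sc}}$ times powers of $\max_{k\ge h}|\l_k|$. This holds provided that $\sup_k|\l_k|\le C\l 2^{-2h^*_M}$ and that $|Z_{h-1}/Z_h-1|$ stays small. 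The bare parameters $m_N$ and $Z^5_N$ are then fixed by a fixed-point (Banach contraction) argument, as in \cite{M22}, so that the massless condition holds and the axial normalization is correct.

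\textbf{Flow of $\l_h$.} This is the main obstacle. The beta function of $\l_h$ is only marginal, so we invoke the asymptotic vanishing of the beta function, \cite[Theorem 3.1]{BM}. It states that $\beta_h=\mc{O}(\l^2 2^{\th h})$ up to corrections controlled by the reference model. To apply it, we must check that the difference between our lattice kernels and the reference-model kernels is a summable, dimensionally improved correction. We expect to get this from the bound \eqref{eq_50} together with the WI \eqref{eq:WI5b} used scale by scale, the cut-off corrections being handled as in \cite{BM}. Given this, $\l_h\to\l_{-\infty}$ with $|\l_h-\l_{-\infty}|\le C\l^2 2^{\th h}$, and $Z_{h-1}/Z_h\to 2^{-\eta}$ with $\eta=\a\l_{-\infty}^2+\ldots$. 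The leading-order computation gives $\a>0$. Summing the tree expansion then yields $\hat S^1(k)\sim Z_{h_k}^{-1}|k|^{-1}$ with $h_k=\log_2|k|$, which is \eqref{eq:anomaly1}.

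\textbf{Limit and anomaly.} The limit $L\to\infty$ follows because every bound is uniform in $L$ and the kernels converge term by term under the weighted norms. For \eqref{eq:anomaly}, we write $\hat\Sigma_5$ as the sum of the ultraviolet part \eqref{eq:23b}--\eqref{eq_24b} and the infrared contributions. The ultraviolet part already contributes $\tfrac{i\l}{\pi}\ve_{\nu,\mu}p_\nu$ up to the stated errors. In the infrared, we compare with the reference model, where the anomalous WI holds with coefficient $\frac{1}{\pi}$ times a function of $\l_{-\infty}$. We then use the vector WI \eqref{eq:WI5a} together with the fixed choice of $Z^5_N$ to show that the higher-order corrections to the coefficient cancel (Adler--Bardeen). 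The remaining terms decay like $2^{\th(h_p-h^*_M)}$, which gives the error $\mc{O}(\l|p|^{1+\th}M^{-\th})$.
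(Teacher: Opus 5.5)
Your infrared outline follows the route the paper takes by citing \cite{BFM07}, \cite[Theorem 3.1]{BM} and \cite{M22}: localization and running couplings, vanishing of the beta function, and tuning of $m_N$ and $Z^5_N$. The gap is in the anomaly step. You say the ultraviolet part already contributes $\frac{i\lambda}{\pi}\varepsilon_{\nu,\mu}p_\nu$ and read this off \eqref{eq_24b}. That formula, however, holds only in the limit $M\to0^+$, where the ultraviolet integration reaches all scales down to $|p|$. In Theorem \ref{thm:anomaly}, $M$ is fixed and $|p|\le M$. There $\hat\Sigma^{(u.v.)}_{5;\mu,\nu}(p)=\hat\Sigma^{(u.v.)}_{5;\mu,\nu}(0)+\mathcal{O}(|p|/M)$ is a local term with a non-universal coefficient, controlled only to lowest order: the corrections in \eqref{eq:23b} are $\mathcal{O}(\lambda^{3/2})$ and are never computed. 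Even at lowest order it is not the full anomaly. For $|p|\ll M$ the infrared free bubble is the momentum-regularized one; compare the two values in the Remark after Theorem \ref{thm_UV}. Consequently the ultraviolet piece reduces, up to small corrections, to the local term $\frac{\lambda}{2\pi}\delta_{\mu,\nu}$ in the vector case and $\frac{i\lambda}{2\pi}\varepsilon_{\nu,\mu}$ in the axial case. That is only half of the anomaly; the other half comes from the infrared scales. Adding the reference-model anomaly on top of your ultraviolet term therefore over-counts already at order $\lambda$.

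The missing idea is that the ultraviolet contribution, like every other non-universal one, is never evaluated; it is fixed by the Ward identity. As in \cite[Sect. VI]{M22}, tune a reference model to have the same infrared fixed point and write $\hat\Sigma^{\xi}_{5;\mu,\nu}(p)=Z^5_N\tilde\Sigma_{5;\mu,\nu}(p)+\mathcal{R}_{\mu,\nu}(p)$. Here $\mathcal{R}$ is continuous, $\mathcal{R}_{\mu,\nu}(p)-\mathcal{R}_{\mu,\nu}(0)=\mathcal{O}((|p|/M)^{\theta})$, and $\hat\Sigma^{(u.v.)}_{5}$ is absorbed into $\mathcal{R}$. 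The contribution of $\tilde\Sigma_5$ to $\sum_\nu p_\nu\hat\Sigma^{\xi}_{5;\mu,\nu}$ is explicit. The unknown constant matrix $\mathcal{R}_{\mu,\nu}(0)$ is \emph{determined} by imposing the exact lattice vector Ward identity \eqref{eq:WI5a} on this decomposition. Inserting it back gives exactly $\frac{i\lambda}{\pi}\sum_\nu\varepsilon_{\nu,\mu}p_\nu$, and the error term $\mathcal{O}(\lambda|p|^{1+\theta}M^{-\theta})$ comes from $\mathcal{R}(p)-\mathcal{R}(0)$. So the vector Ward identity does not cancel higher-order corrections on top of an explicit ultraviolet computation. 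It fixes the non-universal local term, and that is where the non-renormalization comes from.

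A smaller slip: since $Z_h$ grows in the infrared, you need $Z_{h-1}/Z_h\to2^{\eta}$. With your $2^{-\eta}$, the relation $\hat S^1(k)\sim Z_{h_k}^{-1}|k|^{-1}$ gives $|k|^{-1-\eta}$ instead of \eqref{eq:anomaly1}.
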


The anomalous divergence of the  two-point Schwinger function
follows from \cite[Theorem 1]{BFM07}. The fact that, instead, no anomalous exponent appears in the ultraviolet large-momentum behavior, as stated in Eq. \eqref{eq:23a} of Theorem \ref{thm_UV},  is a crucial expected property of the massive QED$_2$ \cite{X0}.

Details for the existence of the $L\to\infty$ limit can be found e.g. in \cite[App. D]{GM2010}. Finally
the proof of Eq. \eqref{eq:anomaly} is done in
\cite[Sect. VI]{M22} (assuming the validity of Theorem \ref{thm_UV}
proved here). 
The integration of the ultraviolet degrees of freedom induces the decomposition $\hat \Sigma^\xi_{5;\mu,\nu}=\hat \Sigma^{(u.v.)}_{5;\mu,\nu}+\hat \Sigma^{(i.r.)}_{5;\mu,\nu}$, with $\hat{\Sigma}^{(u.v.)}_{5;\mu,\nu}$ as in Theorem \ref{thm_UV}, satisfying $\hat{\Sigma}^{(u.v.)}_{5;\mu,\nu}(p)- \hat{\Sigma}^{(u.v.)}_{5;\mu,\nu}(0)= \mc{O}\big(|p|/{M}\big)$. 
The analysis of $\hat{\Sigma}^{(i.r.)}_{5;\mu,\nu}$ is based on \cite[Sect. 2]{BM,BFM07} and references therein. As explained in \cite[Sect. VI]{M22}, one introduces a suitably tuned \it{reference model} (see also \cite[Sect. 1.2]{BM}) such that its infrared fixed point coincides with that of the original model. This implies that, letting $\tilde{\Sigma}_{5;\mu,\nu}$ be the correlation function of the reference model,  $\hat{\Sigma}_{5;\mu,\nu}(p)= Z^5_N \tilde{{\Sigma}}_{5;\mu,\nu}(p)+ \mc{R}_{\mu,\nu}(p)$ \cite[Eq. 44]{M22}, with $\mc{R}_{\mu,\nu}$ continuous (in particular $\mc{R}_{\mu,\nu}(p)- \mc{R}_{\mu,\nu}(0)= \mc{O}\big((|p|/{M})^{\th}\big)$) in which we can reabsorb also the contribution $\hat\Sigma^{(u.v)}_{5;\mu,\nu}$. In this way $\hat\Sigma^{\xi}_{5;\mu,\nu}(p)= Z^5_N \tilde{{\Sigma}}_{5;\mu,\nu}(p)+ \mc{R}_{\mu,\nu}(0)+ \mc{O}\big((|p|/{M})^{\th}\big) $. The contribution to the anomaly from $\tilde{{\Sigma}}_{5;\mu,\nu}$ is explicit \cite[Eq. (46)]{M22}, while the one from $\mc{R}_{\mu,\nu}(0)$ is fixed via the WI in Eq. \eqref{eq:WI5a} \cite[Eq. 50]{M22}. Combining the two, the desired result, Eq. \eqref{eq:anomaly}, follows.

\medskip 

Theorem \ref{thm:anomaly} establishes the non-perturbative validity of the Adler-Bardeen theorem in the lattice QED$_2$; note that Eq. \eqref{eq:anomaly1} ensures the validity of the anomaly non-renormalization with the coefficient in agreement with perturbation theory, Eq. 
\eqref{xx3} (and in contrast with exact solutions or momentum regularization, where a similar result is found with a different factor).
This proves that the lattice regularization of QED$_2$, fulfilling
the physical constraints \textbf{a)}-\textbf{d)} in Subsection \ref{ssect_intro1} and in the regime of Eq. \eqref{energy_regime}, verifies a number of properties at a non-perturbative level, like
the decrease of divergence degree due to the absence of contribution of longitudinal part of boson propagator, the conservation of the current and the correct value of the
anomaly (namely the Adler-Bardeen non-renormalization), whose analogous are crucial ingredients for the consistency of realistic 4d QFT models.

It is an interesting open problem if a similar result could be achieved without integrating out the bosons but with a simultaneous decomposition of bosons and fermions, which seems necessary in four dimensions. Also, it would be interesting 
to consider a 2d chiral model, where finding a regularization 
preserving WI requires the cancellation of the anomalies.

\paragraph{Organization of the paper.}

The paper is organized as follows. In Section \ref{sect_multiscale} we set up the fermionic multiscale analysis in absence of external sources $\vphi,J,B$. We rely on the \it{Gallavotti-Nicolò tree expansion}, which is reviewed in Appendix \ref{app_tree}, and we establish the well posedness of the multiscale expansion assuming the boundedness of the relevant and marginal terms. The latter property is the main focus of Section \ref{sect_improved}, and its proof relies on some exact identities for the kernels of the effective potential, which are proved in Appendix \ref{app_identities}. In Section \ref{sect_external} we reintroduce the external sources $\vphi,J,B$, thus recovering the complete generating functional, Eq. \eqref{def_functional_UV}. In Section \ref{sec:proofTHMUV} we collect all the information from the preceding sections, thus proving the claims of Theorem \ref{thm_UV}. Appendix \ref{app_gauge_inv} is devoted to the proof of two main implications of \it{local phase invariance} (cf. Eq. \eqref{eq_gauge_inv}), namely the $\xi$-independence property, Lemma \ref{lemma_gauge_inv}, and the Ward Identities, Eqs. \eqref{eq:WI5a}-\eqref{eq:WI5b}. In Appendix \ref{app_bubble} we collect some technical results about the non-interacting \it{bubble diagrams}, which are essential for the analysis in Sections \ref{sect_multiscale}-\ref{sec:proofTHMUV}. Finally, in Appendix \ref{app:symmetries} the symmetries of the lattice theory are discussed, whereas in Appendix \ref{app:limit} we sketch the proof for showing the existence of the removed-cut-off limit.

\section{The fermionic multiscale}
\label{sect_multiscale}

With no loss of generality it is sufficient to prove Theorem \ref{thm_UV} with $M\in [1,2)$, i.e. $h^*_M=0$ (cf. below Eq. \eqref{eq_bound_boson}). Indeed, as one can easily check, the rescaling
\begin{equation}
\label{eq:rescaling}
\hspace{-10pt}\left\{\begin{array}{ccc}
\hspace{-3pt}\!\!\L \ni x \mapsto y\in\L^{(R)}\!:=\! 2^{h^*_M}\L; & 
\!\! \psi_x\mapsto \psi^{(R)}_y\!:=\! 2^{-\frac{1}{2}h^*_M}\psi_{2^{-h^*_M}y}; &\!\! \vphi_{x} \mapsto \vphi^{(R)}_{y}\!:=\! 2^{-\frac{3}{2}h^*_M} \vphi_{2^{-h^*_M}y}; \\ 
\\
\hspace{-10pt} \!\! A_{\mu,x} \mapsto A^{(R)}_{\mu,y}\!:=\! A_{\mu, 2^{-h^*_M}y};&\!\!
\hspace{-8pt} J_{\mu,x} \mapsto J^{(R)}_{\mu,y}\!:=\! J_{\mu, 2^{-h^*_M}y};&\hspace{-5pt} \!\!\!
B_{\mu,x} \mapsto B^{(R)}_{\mu,y}\!:=\! 2^{-h^*_M} B_{\mu, 2^{-h^*_M}y} 
\end{array} \right. \end{equation}
maps the original theory into another one with parameters: $M\mapsto M^{(R)}:= 2^{-h^*_M}M;
m_N\mapsto m^{(R)}_N:= 2^{-h^*_M}m_N;
\mathsf{e} \mapsto \mathsf{e}^{(R)}:= 2^{-h^*_M}\mathsf{e}$. In particular $M^{(R)}\in[1,2)$. Then if Theorem \ref{thm_UV} holds for the rescaled theory, it will hold for the original one as well. 

\subsection{Multiscale decomposition}
 As a first step towards the proof of Theorem \ref{thm_UV}, we restrict to the case of zero external sources, namely to the analysis of $\mc{W}^{(u.v.)}(\psi;0;0;0)$, which can expressed as
\begin{equation}
\label{eq:5}
\mc{W}^{(u.v.)}(\psi;0;0;0)= -\log \int P^{(0,N]}(\mc{D}\z) e^{-\mc{V}(\psi+\z;0;0;0)},
\end{equation} where $\mc{V}(\psi;G;\dot{G};\ddot{G})$ is the generalized potential defined in Eq. \eqref{def_potential}.\footnote{Note that for any $N,L<\infty$, due to the finiteness of the Grassmann algebra, all the sums in the r.h.s. of Eq.  \eqref{def_potential} are actually truncated at $n_*\equiv 2 (2^NL)^2 +1$, and $\mc{V}$ depends only upon the auxiliary field variables with \it{degree} $n$ less or equal then $n_*$.} 
The external fields $\vphi,J$ and $B$ will be reintroduced in Section \ref{sect_external}. 
We start by  splitting the fermion propagator as
\[g^{(0,N]}(x-y)\equiv g^{[1, N]}(x-y)= \sum_{h=1}^N g^{(h)}(x-y),\]
where: 
\[\begin{split}
& g^{(N)}(x-y):= \int_{\L^*} dk\; e^{-ik\cdot(x-y)}\frac{1-\chi_{N-1}(|k|)}{-i\slashed{s}(k) + M_N(k) + m_N  },\\
& g^{(h)}(x-y):= \int_{\L^*} dk\; e^{-ik\cdot(x-y)}\frac{\chi_{h}(|k|)-\chi_{h-1}(|k|)}{-i\slashed{s}(k) + M_N(k) + m_N  }, \qquad   1\le h\le N-1
\end{split}\]
with $\chi_h(|k|)\equiv \chi(2^{-h}|k|)$, and $\chi$ defined after Eq. \eqref{eq:7}; recall also Eq. \eqref{def_fermion_prop}. It is possible to check that, due to the presence of the Wilson mass $M_N(k)$, 
\begin{equation}
\label{eq:bound:Wilson}
\big|\big(-i\slashed{s}(k)+ M_N(k)+ m_N\big)^{-1}_{s,s'}\big|\le \frac{\sqrt{2}\pi}{\sqrt{|k|^2 + m_N^2}}, \qquad  \forall k\in\L^*,
\end{equation}
provided that $|m_N|\le1$ and $N\ge 2\log_2\pi+3$. Moreover, from the properties of $\chi_h$, i.e. compact support and Gevrey-2 regularity, it follows that \cite[App. A]{GMR21}
\begin{equation}
\label{eq_bound_propagator}
\max_{s,s'\in\{1,2\}} |g^{(h)}_{s,s'}(x-y)|\le K 2^h e^{-\kappa_2 \sqrt{2^h|x-y|}}, \quad \forall\; 0\le h\le N,
\end{equation}

for suitable $K,\kappa_2>0$, uniformly w.r.t. $m_N\in[-1,1]$. Notice that by construction $\\\mc{W}^{(u.v.)}(\psi;0;0;0)= \mc{V}^{(0)}(\psi;0;0;0)$, where, for any $0\le h\le N-1$, 
\begin{equation}
\begin{split}
\label{eq:v_hbackward}
\mc{V}^{(h)}(\psi;G;\dot{G};\ddot{G}):=  -\log \int P^{(h,N]}(\mc{D}\z) e^{-\mc{V}(\psi+\z;G;\dot{G};\ddot{G})}
\end{split}
\end{equation}

and $P^{(h,N]}$ is the Grassmann integration with propagator $g^{(h,N]}(x-y)\equiv \sum_{k=h+1}^N g^{(k)}(x-y)$. Letting also $P^{(h)}$ be the integration with propagator $g^{(h)}(x-y)$, by the Gaussian addition principle \cite[Eq. (4.21)]{GM01}, we have that
\begin{equation}
\mc{V}^{(h)}(\psi;G;\dot{G};\ddot{G})= -\log \int P^{(h+1)}(\mc{D}\z) e^{-\mc{V}^{(h+1)}(\psi+\z;G;\dot{G};\ddot{G})}.
\end{equation}

As a standard fact \cite[Eq. (4.19)]{GM01}, the effective potential $\mc{V}^{(h)}$ can be formally computed by the formula:
\begin{equation}
\label{eq:8}
\mc{V}^{(h)}(\psi;G;\dot{G};\ddot{G})= \sum_{s\ge1} \frac{(-1)^{s+1}}{s!} \mc{E}^T_{(h+1)}\Big( \mc{V}^{(h+1)}(\psi+\cdot; G;\dot{G};\ddot{G});^s \Big),
\end{equation}

where, for Grassmann monomials $X_1(\psi,\z),\dots,X_s(\psi,\z)$, the \it{truncated expectation} $\mc{E}^T$ is recursively defined as
\begin{equation}
\label{def_truncated}
\begin{split}
\mc{E}^T_{(h+1)}\big(X_1(\psi,\cdot),\dots,&X_s(\psi,\cdot)\big):= \mc{E}_{(h+1)}\big( X_1(\psi,\cdot),\dots,X_s(\psi,\cdot)\big)+\\
&- \mathds{1}_{(s\ge2)} \sum_{\Pi\in\text{ Part}(\{1,\dots,s\})} \s_{\Pi} \prod_{j=1}^{|\Pi|} \mc{E}^T_{(h+1)}\big( X_{\pi_{j,1}}(\psi,\cdot), \dots, X_{\pi_{j,p(j)}}(\psi,\cdot) \big),
\end{split}
\end{equation}

with the following understanding. $ \mc{E}_{(h+1)}(\cdot)\equiv \int P^{(h+1)}(\mc{D}\z)\; \cdot$ is the \it{simple expectation}; $\Pi$ runs over all the partitions of $\{1,\dots,n\}$ into components $\{\pi_{j,1},\dots,\pi_{j,p(j)}\}$ indexed by $j=1,\dots,|\Pi|$, with $p(1)+\dots+p(|\Pi|)=s$; the ordering in the r.h.s. of Eq. \eqref{def_truncated} is such that $\pi_{j,1}<\pi_{j+1,1}$ and $\pi_{j,k}<\pi_{j,k+1}$ for every $j=1,\dots,|\Pi|-1$ and $k=1,\dots,p(j)-1$; $\s_{\Pi}$ is the sign needed to order the Grassmann monomials as they appear in the r.h.s of Eq. \eqref{def_truncated} starting from the ordering as in the l.h.s.. Finally, in Eq. \eqref{eq:8}, $\mc{E}^T_{(h+1)}(X(\psi,\cdot);^s):=\mc{E}^T_{(h+1)}\big(X(\psi,\cdot),\dots,X(\psi,\cdot)\big)$ when $X_1,\dots,X_s=X$. Eq. \eqref{eq:8} naturally yields an integral representation for $\mc{V}^{(h)}$:
\begin{equation}
\label{eq:9}
\begin{split}
\mc{V}^{(h)}(\psi;G;\dot{G};\ddot{G})& = \sum_{q\ge0} \sum_{\ud{n}\in\mbb{N}_{\ge1}^{\times}} \sum_{\ud{\dot{n}}\in \mbb{N}_{\ge2}^{\times}} \sum_{\ud{\ddot{n}}\in \mbb{N}_{\ge2}^{\times}} \int_{\L_F} d\ud \eta\hspace{1pt} d\ud \eta' \int_{\L_B} d\ud{b}\hspace{1pt} d\ud{\dot{b}}  \hspace{1pt} d\ud{\ddot{b}} \, \times\\
&\times W^{(h)|q,q}_{\ud{n};\ud{\dot{n}};\ud{\ddot{n}}}(\ud{\eta};\ud{\eta}';\ud{b};\ud{\dot{b}};\ud{\ddot{b}}) \prod_{k=1}^q \psi^+_{\eta_k} \prod_{k=1}^q \psi^-_{\eta'_k} \prod_{k=1}^{\dim\ud{n}} G^{(n_k)}_{b_k} \prod_{k=1}^{\dim\ud{\dot{n}}} \dot{G}^{(\dot{n})}_{\dot{b}_k} \prod_{k=1}^{\dim\ud{\ddot{n}}} \ddot{G}^{(\ddot{n})}_{\ddot{b}_k},
\end{split}
\end{equation}

where the notation is understood as follows:
\begin{itemize}
\item $\mbb{N}_{\ge p}^{\times} \equiv \emptyset \sqcup \mbb{N}_{\ge p} \sqcup (\mbb{N}_{\ge p})^2\sqcup (\mbb{N}_{\ge p})^3\sqcup \dots$ and  $\mathbb{N}_{\geq p}=\{n \in \mathbb{N}: n \geq p\};$
\item  if $\ud{n}=(n_1,\dots,n_p)$, then $|\ud{n}|\equiv n_1+\dots+n_p$ and $\dim\ud{n}\equiv p$;
\item $d\ud{b}\equiv \prod_{k=1}^{\dim\ud{n}}db_k$, $d\ud{\dot{b}}\equiv \prod_{k=1}^{\dim\ud{\dot{n}}}d\dot{b}_k$ and $d\ud{\ddot{b}}\equiv \prod_{k=1}^{\dim\ud{\ddot{n}}}d\ddot{b}_k$, and each of such variables runs in $\Lambda_B$; similarly $d \ud \eta=d\eta_1\dots d\eta_q$ and each variable runs in $ \Lambda_F$;
\item the term of the sum in the r.h.s. of Eq. \eqref{eq:9} with $q=0$ and $\ud{n}=\ud{\dot{n}}=\ud{\ddot{n}}=\eset$ must be interpreted as $W^{(h)|0,0}_{\eset;\eset;\eset}\in\mbb{C}$, and its $L^p$ norm is by definition $\big\|W^{(h)|0,0}_{\eset;\eset;\eset}\big\|_p^w:= L^{-\frac{2}{p}}\big|W^{(h)|0,0}_{\eset;\eset;\eset}\big|$ (compare with Eq. \eqref{def_Lp_norm}). 
\end{itemize}

 The kernels $W^{(h)|q,q}_{\ud{n};\ud{\dot{n}};\ud{\ddot{n}}}(\ud{\eta};\ud{\eta}';\ud{b};\ud{\dot{b}};\ud{\ddot{b}})$ are assumed to be anti-symmetric under the exchange of any couple of $\psi^+$ or $\psi^-$ labels, and symmetric under exchange of $G,\dot{G}$ and $\ddot{G}$ labels separately. 
Such symmetries can always be imposed and imply a representation for the kernels $W^{(h)}$ as functional derivatives of $\mc{V}^{(h)}$: letting $p=\dim\ud{n}, \dot{p}=\dim\ud{\dot{n}}, \ddot{p}=\dim\ud{\ddot{n}}$,  if $q+p+\dot p+\ddot p >0$, Eq. \eqref{eq:9} implies that
\begin{equation}
\label{eq_14}
\begin{split}
&W^{(h)|q,q}_{\ud{n};\ud{\dot{n}};\ud{\ddot{n}}}(\ud{\eta},\ud{\eta}';\ud{b};\ud{\dot{b}};\ud{\ddot{b}}) = \frac{(-1)^q}{q!^2p! \dot{p}! \ddot{p}!} \times \\
&\times \left(\frac{\d^{2q}}{\d\psi^+_{\eta_1}\dots\d\psi^+_{\eta_q}\d\psi^-_{\eta'_1}\dots\d\psi^-_{\eta'_q}} \frac{\d^p}{\d G^{(n_1)}_{b_1}\dots\d G^{(n_p)}_{b_p}} \frac{\d^{\dot{p}}}{\d \dot{G}^{(\dot{n}_1)}_{\dot{b}_1}\dots\d \dot{G}^{(\dot{n}_p)}_{\dot{b}_{\dot{p}}}} \frac{\d^{\ddot{p}}}{\d \ddot{G}^{(\ddot{n}_1)}_{\ddot{b}_1}\dots\d \ddot{G}^{(\ddot{n}_{\ddot{p}})}_{\ddot{b}_{\ddot{p}}}} \mc{V}^{(h)}\right)(\mb{0}),
\end{split}
\end{equation}

where $\boldsymbol{0}\equiv (0;0;0;0)$; moreover for complex fields $G^{(n)}_b=\text{Re} G^{(n)}_b+i \text{Im} G^{(n)}_b$ and Grassmann monomials the action of functional derivatives is given respectively by

\begin{equation} \hspace{-5pt}\frac{\d}{\d G^{(n)}_b}:= 2^{2N}\left(\frac{\de}{\de \text{Re} G^{(n)}_b}- i \frac{\de}{\de \text{Im} G^{(n)}_b}\right), \quad  \frac{\d}{\d\psi^{\ve}_{\eta}}\prod_{j=1}^n \psi^{\ve_j}_{\eta_j} := \sum_{k=1}^n (-1)^{k-1}\d(\eta-\eta_k) \prod_{j\ne k} \psi^{\ve_j}_{\eta_j}, \label{func_deriv}\end{equation}

the second of which extends to general Grassmann polynomials by linearity.

\subsection{Tree expansion}
\label{sect_tree_expansion}

By Eq. \eqref{eq:8}, one has that Eq. \eqref{eq:9} can be written as a recursion for the kernels $W^{(h)}$ in terms of $W^{(h+1)}$. The kernels $W^{(h+1)}$ can be either left implicit, or can be further expanded in terms of $W^{(h+2)}$. As it is customary in the renormalization group context, we expand only the irrelevant kernels, i.e. those with $D_{sc}<0$, where for a generic kernel of type $(\psi)^{2q} (G)^{p} (\dot{G})^{\dot{p}} (\ddot{G})^{\ddot{p}}$,
\begin{equation}
\label{def_Dsc}
D_{sc}(2q;p;\dot{p}; \ddot{p}):= 2- q- p- \dot{p}- 2\ddot{p}.
\end{equation}

By iterating this procedure, we find a representation for the kernels $W^{(h)}$ in terms of all the kernels at scale $N$ and the relevant and marginal kernels at scales $h'>h$.
The outcome of this expansion is conveniently represented as an infinite sum over trees for the \it{single-scale contributions} $\wt{W}^{(h)|q,q}_{\ud{n};\ud{\dot{n}};\ud{\ddot{n}}}:= W^{(h)|q,q}_{\ud{n};\ud{\dot{n}};\ud{\ddot{n}}} - W^{(h+1)|q,q}_{\ud{n};\ud{\dot{n}};\ud{\ddot{n}}}$:

\begin{equation}
\label{eq_tree2}
\wt{W}^{(h)|q,q}_{\ud{n};\ud{\dot{n}};\ud{\ddot{n}}}= \mathscr{S} \sum_{\t\in\mc{T}_{h,N}}^* \sum_{\mb{P}\in \mc{P}_{\t}}^{**} \wt{W}[\t,\mb{P}],  \end{equation}

with the following understanding.

\begin{enumerate}
\item $\mc{T}_{h,N}$ is the set \it{Gallavotti-Nicolò trees} between scale $h$ and $N$.  A tree $\t \in \mc{T}_{h,N}$ is such that the root is not a branching point and the nodes are partially ordered from the root towards the endpoints (we write $v \prec w$ if $w$ comes after $v$ on the tree). The nodes $v$ that are not the root are called \it{vertices}, the set of which we denote by 
$V(\t)$; 
each vertex $v$ comes with a \it{scale label} $h_v$, an integer between $h+1$ and $N+1$; the scale label of the root equals $h$ by definition. The $*$ over the sum over $\mc{T}_{h,N}$ is the constraint that there must always be one and one only vertex $v_0$, which is not an endpoint, following the root, with scale label $h_{v_0}= h+1$ (see Fig. \ref{fig_tree}).
\item $\mc{P}_{\t}$ is a set of decorations for the tree $\t$, each decoration $\mb{P}\equiv \{P_v,Q_v\}_{v\in V(\t)}$ being called a collection of \it{field labels}. Any $f\in\mb{P}$ comes with a sign $\ve_f$ and with a variable $\eta_f\in\L_F$, so that to any $f$ we can associate a Grassmann variable $\psi^{\ve_f}_{\eta_f}$.

At each vertex $v\in V(\t)$, $Q_v$ is the set of \it{internal fields}, i.e. those corresponding to the Grassmann variables that are integrated under the action of the truncated expectation $\mc{E}^T_{(h_v)}$; conversely $P_v$ identifies the \it{external fields} at the vertex $v$, i.e. the Grassmann variables that are not integrated under the action of $\mc{E}^T_{(h_v)}$. The field labels are subject to the following constraints.

\begin{itemize}
\item[2.1.] If $v_0$ is the vertex following the root, then $P_{v_0}$ identifies the set of fields determined by the labels $q,\ud{n},\ud{\dot{n}},\ud{\ddot{n}}$ of the kernel $W^{(h)|q,q}_{\ud{n};\ud{\dot{n}};\ud{\ddot{n}}}$ in the l.h.s. of Eq. \eqref{eq_tree2}.
\item[2.2.] If $v\in V(\t)$ is not an endpoint (we will write $v\in V_0(\t)$), then $|Q_v|\ge2$, namely at least two fields are always integrated.
\item[2.3.] If $v_1,\dots,v_s$ are the vertices following $v$ on $\t$, then $P_v \dot{\cup} Q_v= \dot{\cup}_{j=1}^s P_{v_j}$.
\item[2.4.] If $v\in V(\t)$ is an endpoint (we will write $v\in V_e(\t)$), then $Q_v=\eset$; besides, if $h_v< N+1$, $P_v$ identifies a set of fields (at most four, with at least two $\psi$ variables) corresponding to a non-irrelevant  kernel (i.e. such that $D_{sc}\equiv D_{sc}(P_v)\ge0$, see Eq. \eqref{def_Dsc}), $W^{(h_v-1)}_{P_v}$. If $h_v=N+1$ instead, $P_v$ can identify either an irrelevant or non-irrelevant kernel (i.e. $D_{sc}(P_v)<0$ or $\ge0$ respectively), $W^{(N)}_{P_v}$, related to the bare potential $\mc{V}$, Eq. \eqref{def_potential}.
\item[2.5.] If $v\in V_e(\t)$ has $P_v$ corresponding to a non-irrelevant kernel, then the vertex $v'$ which precedes $v$ has scale label $h_{v'}= h_v -1$ (this reflects the prescription above that relevant and marginal kernels are not being expanded in terms of kernels at higher scales).
\end{itemize}

The sum $\sum_{\mb{P}\in \mc{P}_{\t}}^{**}$ is simply the sum over all the choices of field labels coherent with the constraints above.
\item $\wt{W}[\t,\mb{P}]$ has the following analytical expression:
\begin{equation}
\label{eq:12}
\begin{split}
&\wt{W}[\t,\mb{P}](\ud{\eta};\ud{b};\ud{\dot{b}};\ud{\ddot{b}})=\\ & \frac{\s_{\mb{P}}}{s_{v_0}!} \int_{\L_F} d\ud{\eta}_{Q_{v_0}} \mc{E}^T_{(h_{v_0})} \big( \psi_{\tilde{P}_{w_1}},\dots,\psi_{\tilde{P}_{w_{s_{v_0}}}} \big)  \prod_{j=1}^{s_{v_0}} \wt{W}[\t_{w_j},\mb{P}_{w_j}] \big((\ud{\eta};\ud{b};\ud{\dot{b}};\ud{\ddot{b}})_{P_{w_j}}\big),
\end{split}
\end{equation}

where:
\begin{itemize}
\item[3.1.] $\s_{\mb{P}}$ is a sign;
\item[3.2.] for $v\in V_0(\t)$, $w_1,\dots,w_{s_v}$ are the vertices following $v$ on $\t$ and $\mb{P}_{w_j}\equiv \bigcup_{v\succeq w_j} P_v $;
\item[3.3.] $\tilde{P}_v:= P_v\cap Q_{v'}$, with $v'$ the vertex which precedes $v$ on $\t$, and $\psi_{\tilde{P}_v}\equiv \prod_{f\in \tilde{P}_v} \psi^{\ve_f}_{\eta_f}$, the product being performed in a suitable, prescribed order;
\item[3.4.] If $v\in V_0(\t)$, $\t_v$ is the sub-tree with $v_0=v$ and $\wt{W}[\t_v,\mb{P}_v] \big((\ud{\eta};\ud{b};\ud{\dot{b}};\ud{\ddot{b}})_{P_v}\big)$ is defined recursively via Eq. \eqref{eq:12}; if instead $v\in V_e(\t)$, $\wt{W}[\t_v,\mb{P}_v] \big((\ud{\eta};\ud{b};\ud{\dot{b}};\ud{\ddot{b}})_{P_v}\big)$ is the kernel associated with the endpoint $v$ determined by the decoration $P_v$, to be denoted by $W^{(h_v-1)}_{P_v} \big((\ud{\eta};\ud{b};\ud{\dot{b}};\ud{\ddot{b}})_{P_v}\big)$.

\end{itemize}

\item $\mathscr{S}$ is the \it{symmetrization operator}, which acts on every kernel taking the graded average over all the permutations of labels of identical Grassmann and complex fields, so that for the overall kernel Eq. \eqref{eq_14} holds.
\end{enumerate}

\begin{figure}[h]
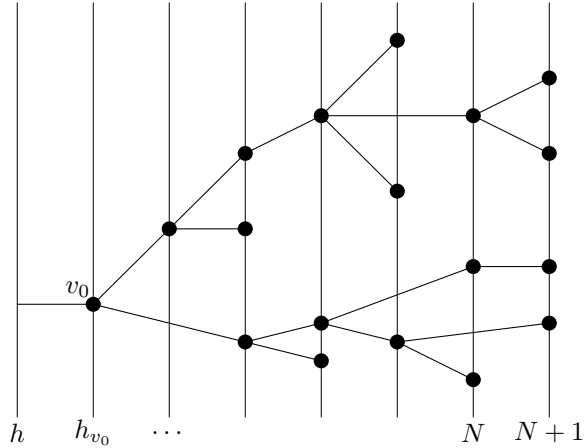

\small
    \centering
    \tikz[baseline=-2pt]{

    \foreach \x in {0,...,7}
      {
	  \draw[very thin] (\x ,-1.5) -- (\x , 4);
      }

      \draw (0,-1.7)node{$h$}; \draw (1,-1.7)node{$h_{v_0}$}; \draw (2,-1.7)node{$\dots$}; \draw (6,-1.7)node{$N$}; \draw (7,-1.7)node{$N+1$}; \draw (0.8,0.2)node{$v_0$};
      
			\draw (0,0) node {} -- (1,0) node[vertex] {} -- (2,1) node[vertex] {} -- (3,2) node[vertex] {}-- (4,2.5) node[vertex] {} -- (5,3.5) node[vertex] {} ;
   \draw (4,2.5) node {} -- (5,1.5) node[vertex]{};
   \draw (4,2.5) node {} -- (6,2.5)node[vertex]{}-- (7,3)node[vertex]{};
   \draw (6,2.5)node{}-- (7,2) node[vertex]{};
   \draw (2,1) node {} -- (3,1.0) node[vertex]{};
   \draw (1,0) node {} -- (3,-0.5) node[vertex]{}-- (4,-0.75)node[vertex]{};
   \draw (3,-0.5) node{}-- (4,-0.25) node[vertex]{}--  (6,0.5)node[vertex]{}-- (7,0.5)node[vertex]{};
   \draw (4,-0.25) node{}-- (5,-0.5)node[vertex]{} -- (6,-1)node[vertex]{};
   \draw (5,-0.5) node{} --  (7,-0.25)node[vertex]{};
   }
 \caption{\small\textit{Example of a Gallavotti-Nicolò tree $\t\in\mc{T}_{h,N}$, where the vertices $v\in V(\t)$ are outlined as black dots.}} 
 \label{fig_tree}
\end{figure}

\subsection{Bounds for the tree expansion}

Note that $\wt{W}[\t,\mb{P}]$ can be regarded as a functional in the kernels $\big\{W^{(h_v-1)}_{P_v}\big\}_{v\in V_e(\t)}$ associated with the endpoints. In fact, an iterative application of Eq. \eqref{eq:12} yields:
\begin{equation}
\label{eq:12b}
\begin{split}
&\wt{W}_{\ud{n};\ud{\dot{n}};\ud{\ddot{n}}}[\t,\mb{P}](\ud{\eta};\ud{b};\ud{\dot{b}};\ud{\ddot{b}})=\\
& \tilde{\s}_{\mb{P}} \prod_{v\in V_0(\t)} \left\{\frac{1}{s_v!} \int_{\L_F} d\ud{\eta}_{Q_v} \mc{E}^T_{(h_v)} \big( \psi_{\tilde{P}_{w_1}},\dots,\psi_{\tilde{P}_{w_{s_v}}} \big) \right\} \prod_{v\in V_e(\t)} W^{(h_v-1)}_{P_v} \big((\ud{\eta};\ud{b};\ud{\dot{b}};\ud{\ddot{b}})_{P_v}\big)
\end{split}
\end{equation}

with $\tilde{\s}_{\mb{P}}\equiv \prod_{v\in V_0(\t)} \s_{\mb{P}_v} \in \{\pm\}$. The tree expansion turns out to be a convenient tool for controlling the size of the kernels $\wt{W}^{(h)}$ as in Eq. \eqref{eq_tree2}. In fact, introducing
 \begin{align}
&\label{eq_29a}\Pi^{(h,N]}(b_1,b_2):=  \frac{1}{2} \int d\eta_1 d\eta_1' d\eta_2 d\eta_2' c_{b_1}(\eta_1,\eta_1') c_{b_2}(\eta_2,\eta_2') g^{(h,N]}_{\eta_1',\eta_2}g^{(h,N]}_{\eta_2',\eta_1},\\
&\label{eq_29b} Q^{(N)}(\eta_1,\eta_2,\eta_1',\eta_2') := -\frac{1}{4}\int db db' \l g^A_{b,b'} \Big( c_b(\eta_1,\eta_1') c_{b'}(\eta_2,\eta_2') - c_b(\eta_2,\eta_1') c_{b'}(\eta_1,\eta_2')\Big),
\end{align}
the lowest-order contributions to the kernels $W^{(h)|0,0}_{(1,1);\eset;\eset}$ and $W^{(h)|2,2}_{\eset;\eset;\eset}$ respectively, we can prove the following Proposition.
\begin{proposition}
\label{prop:SDB}
For every $\vth,\th\in(0,1)$ fixed, there exists a constant $C_0\ge1$ such that for any $R\ge1$ and $0\le h\le N$,  the following is true. Assume that  at every scale $h<k\le N$ the following bounds hold true for the non-irrelevant kernels:
\begin{align}
&\label{eq_IB_1} \big\|W^{(k)|1,1}_{\eset;\eset;\eset}\big\|_1^w\le R\l 2^{-\vth k}; \;\;\; \big\|(W^{(k)|2,2}_{\eset;\eset;\eset} - Q^{(N)})\big\|_1^w\le R\l^{\frac{5}{4}};  \\
&\label{eq_IB_2} \big\| W^{(k)|1,1}_{1;\eset;\eset} - c\big\|_1^w \le R\l; \;\;\;  \big\|W^{(k)|1,1}_{n;\eset;\eset}\big\|_1^w, \big\| W^{(k)|1,1}_{\eset;n;\eset}\big\|_1^w \le  \l^{\frac{n-1}{2}} \text{ for }  n\ge2;\\
&\label{eq_IB_5}\big\| g^A*(W^{(k)|0,0}_{(1,1);\eset;\eset}- \Pi^{(k,N]} )\big\|_1^w \le R\l;   \\
&\label{eq_IB_6}\big\|W^{(k)|0,0}_{(n_1,n_2);\eset;\eset}\big\|_1^w, \big\| W^{(k)|0,0}_{\eset;(n_1,n_2);\eset}\big\|_1^w, \big\|W^{(k)|0,0}_{n_1,n_2;\eset}\big\|_1^w \le \l^{\frac{n_1+n_2-2}{2}} \text{ for } n_1+n_2\ge3.
\end{align}

Then the following bounds hold true at scale $h$, for $\l\le (C_0R^4)^{-1}$:
\begin{align}
&\label{eq_SDB_1} \big\|W^{(h)|0;0}_{n;\eset;\emptyset}\big\|_1^w, \big\|W^{(h)|0;0}_{\emptyset;n;\emptyset}\big\|_1^w\le C_02^N \l^{\frac{n-1}{2}}; \; \big\| W^{(h)|0;0}_{\emptyset;\emptyset;n}\big\|_1^w \le C_0 \l^{\frac{5}{8}n};  \\
&\label{eq_SDB_2} \big\| g^A*W^{(h)|0;0}_{(1,1);\emptyset;\emptyset}\big\|_1^w \le C_0; \; \big\| W^{(h)|1,1}_{\emptyset;\emptyset;\emptyset}\big\|_1^w \le RC_0\l 2^h
\end{align}

and for all the other kernels:
\begin{equation}
\label{eq_SDB_3}
\big\| W^{(h)|q,q}_{\ud{n};\ud{\dot{n}};\ud{\ddot{n}}}\big\|_1^w \le C_0^{1+\dim\ud{n}+\dim\ud{\dot{n}}+\dim\ud{\ddot{n}}} \l^{d} 2^{h D_{sc}} \times \left\{\begin{array}{cc}
2^{\th(h-N)}     & \dim\ud{\ddot{n}}>0  \\
1     & \text{otherwise},
\end{array}\right.
\end{equation}

with $D_{sc}\equiv D_{sc}(2q;\dim\ud{n};\dim\ud{\dot{n}};\dim\ud{\ddot{n}})$, see Eq. \eqref{def_Dsc}, and
\begin{equation}
\label{eq:10}
\begin{split} 
&d\equiv d(2q;\ud{n};\ud{\dot{n}};\ud{\ddot{n}}):=\\
&\left\{\begin{array}{cc}
\max\{1,\frac{2}{5}q\},     &\dim\ud{n},\dim\ud{\dot{n}},\dim\ud{\ddot{n}}=0  \\
\frac{1}{4}\max\{q-1,0\}+ \frac{1}{2}(|\ud{n}|+|\ud{\dot{n}}| - \dim\ud{n}-\dim\ud{\dot{n}})+ \frac{5}{8}|\ud{\ddot{n}}|, & \text{otherwise.}
\end{array}\right.\end{split}\end{equation}
\end{proposition}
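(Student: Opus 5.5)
We bound the tree expansion \eqref{eq_tree2}--\eqref{eq:12b} directly, in the standard Gallavotti--Nicolò way, treating the endpoint kernels as given data. There are two kinds of endpoints. Endpoints at scale $N+1$ carry the bare kernels of \eqref{def_potential}, controlled through the bounds $\|v_n\|_1^w\le C^n\l^{n-1}$ and $\|\tilde v_n\|_1^w\le C^n\l^n$. Endpoints at scale $h_v\le N$ carry non-irrelevant kernels, controlled by the hypotheses \eqref{eq_IB_1}--\eqref{eq_IB_6}.

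\textbf{Step 1: bound at a single tree.} For each vertex $v\in V_0(\t)$ we write the truncated expectation $\mc{E}^T_{(h_v)}$ through the Battle--Brydges--Federbush/Brydges formula: an anchored spanning tree of propagators $g^{(h_v)}$ times a Gram determinant. The Gram--Hadamard inequality gives
\[
|\det|\le C^{\sum|P|}2^{\frac{h_v}{2}(\sum_j|P_{w_j}|-|P_v|-2(s_v-1))}.
\]
We integrate each tree propagator in $L^1$ using \eqref{eq_bound_propagator}, which contributes $2^{-h_v}$ per line. The stretched-exponential weights in $\|\cdot\|_1^w$ are absorbed by the decay of the propagators, since $\kappa\le\kappa_2$ and $\sqrt{\cdot}$ is subadditive. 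Collecting factors yields the usual estimate
\[
\|\wt W[\t,\mb P]\|_1^w\le C^{n}\,2^{hD_{sc}(P_{v_0})}\prod_{v\in V_0}2^{(h_v-h_{v'})D_{sc}(P_v)}\prod_{v\in V_e}\|W^{(h_v-1)}_{P_v}\|_1^w\,2^{-(h_v-1)D_{sc}(P_v)},
\]
where the scaling dimension is \eqref{def_Dsc}, with the dimensions of $G,\dot G,\ddot G$ as in \eqref{def_potential}.

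\textbf{Step 2: endpoint dimensions and summation.} At a bare endpoint with $n$ bilinears the factor $2^{N(2-n)}$ (respectively $2^{N(1-n)}$) equals $2^{ND_{sc}}$. Each vertex with $D_{sc}(P_v)<0$ therefore gives geometric decay $2^{-(h_v-h_{v'})}$ up to the marginal cases. Two cases need extra care.

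(i) A cluster reducing to a single bare quartic or two-field vertex. Here we subtract the explicit lowest orders $Q^{(N)}$ and $\Pi^{(k,N]}$, whose contribution has dimension $0$ but is summable in scales thanks to the extra bosonic line, which gives $\|g^A\|_1<\infty$.

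(ii) The $\ddot G$ fields. These carry the additional factor $2^{2h^*_M-2N}$ from \eqref{eq_16a}, which is dimension $-1$ relative to power counting. This produces the factor $2^{\theta(h-N)}$ in \eqref{eq_SDB_3}, with the remaining $2^{(1-\theta)(h-N)}$ spent to sum over scales.

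Marginal and relevant endpoints ($D_{sc}\ge0$) sit only at $h_{v'}=h_v-1$, by rule 2.5, so they cost nothing. Their smallness comes from \eqref{eq_IB_1}--\eqref{eq_IB_6}: in particular $\l 2^{-\vartheta k}$ for the relevant bilinear kills its positive dimension $+1$ summed over scales. The sum over $\t$ and $\mb P$ is controlled by $C^n$ as usual, using $\sum_{\t}\prod_v 2^{-\epsilon(h_v-h_{v'})}\le C^n$.

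\textbf{Step 3: powers of $\l$.} We count the $\l$-power $d$ by tracking, through the tree, the $\l$ factors at the endpoints. Bare $v_n$ carries $\l^{n-1}$, and $n$ bilinears produce $2n$ fermion fields. Distributing $\l^{n-1}$ over fields gives at least $\l^{\frac{1}{4}(q-1)}$ in the generic case, with room for $\frac{2}{5}q$ in the pure-fermion case, since the quartic hypothesis supplies $\l$ per endpoint. The source indices contribute $\l^{(n-1)/2}$ per source of degree $n$, which is half the bare power; the other half, $\l^{(n-1)/2}$, is used to beat the combinatorial factors $n^3C^n$. We require $\l\le(C_0R^4)^{-1}$ so that each endpoint factor $R\l^{\alpha}$, with $\alpha\ge1/4$, is $\le C_0^{-1/4}$ and the resulting series converges.

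The special bounds are handled separately. For \eqref{eq_SDB_2}, the kernel $W^{(h)|1,1}$ equals the sum of the endpoint values over scales; the positive dimension gives $2^h$ times $R\l$. The $g^A*$ bound on $W^{(h)|0,0}_{(1,1)}$ follows from \eqref{eq_IB_5} together with $\|g^A*\Pi^{(h,N]}\|_1\le C$, proved via the bubble estimates of Appendix \ref{app_bubble}. The estimates \eqref{eq_SDB_1} for vacuum-type source kernels have dimension $+1$ at scale $N$, which explains the $2^N$.

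\textbf{Main obstacle.} The delicate point is Step 3: obtaining exactly the exponent $d$ in \eqref{eq:10} while summing over the unbounded family of source degrees $n$ with the weights $n^3$. We expect this to require a careful inductive allocation of the $\l$ powers across clusters, and this is where we would spend most effort, choosing the fractions $\frac14,\frac25,\frac58$ to close the induction.
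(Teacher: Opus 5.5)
\textbf{Gap.} Your overall architecture is the same as the paper's. You use a single-scale Gallavotti--Nicol\`o bound for $\widetilde W^{(h)}=W^{(h)}-W^{(h+1)}$, obtained via BBFK, Gram--Hadamard and tree stripping. Every internal vertex gains $2^{(h_v-h_{v'})D_{sc}(P_v)}$ with $D_{sc}(P_v)\le-1$, because non-irrelevant kernels are never expanded. Endpoint kernels are controlled by the hypotheses. The short-memory factor $2^{\theta(h-N)}$ holds because every kernel containing $\ddot G$ and $\psi$'s is irrelevant, so it can only enter through an endpoint at scale $N+1$.

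The genuine gap is the last step: going from the single-scale bound \eqref{eq_tree4} back to $W^{(h)}$ for the \emph{non-irrelevant} kernels. For irrelevant kernels, summing $\sum_{k=h}^{N}2^{kD_{sc}}$ gives $2^{hD_{sc}}$. For non-irrelevant ones this fails:
\begin{itemize}
\item Summing $\|\widetilde W^{(k)|1,1}_{\emptyset;\emptyset;\emptyset}\|_1^w\lesssim R\lambda 2^{k}$ over $k\ge h$ gives $R\lambda 2^{N}$, not $R\lambda 2^{h}$.
\item For the marginal kernels covered by \eqref{eq_SDB_3} (quartic, $G^{(n)}\psi^2$, $\dot G^{(n)}\psi^2$, and the $G$/$\dot G$ two-source kernels with $n_1+n_2\ge3$ of \eqref{eq_IB_6}), it produces a factor $N-h$.
\end{itemize}
So your sentence that $W^{(h)|1,1}$ ``equals the sum of the endpoint values over scales'' with ``the positive dimension giving $2^h$'' does not hold. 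For the marginal kernels you give no mechanism at all; your Step 2 only treats them as endpoints inside trees.

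The missing idea is that the hypotheses must be used a second time, directly on the output kernel. Write $W^{(h)}=\widetilde W^{(h)}+W^{(h+1)}$, bound the first term by the tree estimate, and bound the second by \eqref{eq_IB_1}--\eqref{eq_IB_6} at scale $h+1$. For example:
\begin{itemize}
\item $\|W^{(h)|1,1}_{\emptyset;\emptyset;\emptyset}\|_1^w\le(1-2^{\theta-1})C_0R\lambda2^h+R\lambda2^{-\vartheta(h+1)}\le C_0R\lambda 2^h$.
\item $\|W^{(h)|2,2}_{\emptyset;\emptyset;\emptyset}\|_1^w\le(1-2^{\theta-1})C_0\lambda+\|Q^{(N)}\|_1^w+R\lambda^{5/4}\le C_0\lambda$. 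This is where $R\lambda^{1/4}\lesssim1$, i.e.\ $\lambda\le(C_0R^4)^{-1}$, genuinely enters.
\end{itemize}
Only the relevant source kernels in \eqref{eq_SDB_1} come from plain summation, hence the $2^N$; the $\ddot G$-tadpole also comes from summation, thanks to $2^{\theta(k-N)}$. Your bound on $g^A*W^{(h)|0,0}_{(1,1);\emptyset;\emptyset}$ via \eqref{eq_IB_5} silently uses this decomposition, since \eqref{eq_IB_5} holds only for $k>h$. It still needs the single-scale term $\|g^A\|_1^w\|\widetilde W^{(h)|0,0}_{(1,1);\emptyset;\emptyset}\|_1^w$.

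\textbf{Smaller inaccuracies.} Step 2(i) is muddled in three ways:
\begin{itemize}
\item Nothing is subtracted inside the trees. The endpoint bound is simply $\|W^{(k)|2,2}_{\emptyset;\emptyset;\emptyset}\|_1^w\le\|Q^{(N)}\|_1^w+R\lambda^{5/4}\le C\lambda$, where $\|g^A\|_1<\infty$ suffices.
\item The $(1,1)$ bubble never appears as an endpoint, since endpoints carry at least two $\psi$'s.
\item Finiteness of $g^A*\Pi^{(h,N]}$ does not follow from $\|g^A\|_1<\infty$; the bubble is log-divergent in $L^1$. It needs Lemma~\ref{lemma_bubble}.
\end{itemize}
Step 3 does not require an inductive allocation. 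The exponent $d$ is read off from endpoint bounds as in \eqref{eq_tree9}: you sacrifice a power $\lambda^{\epsilon}$ per field at bare endpoints, so the prefactor is $C$ rather than $C^{|P_v|}$, and keep a $\lambda^{\epsilon'}$ per endpoint to pay for the sum over trees. The sum over field labels also needs $D_{sc}(P_v)\le-\frac18|P_v^{\psi}|$ at internal vertices. Your ``controlled by $C^n$ as usual'' does not supply this.
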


Observe that Eqs. \eqref{eq_IB_1}-\eqref{eq_IB_6} are stronger than Eqs. \eqref{eq_SDB_1}-\eqref{eq_SDB_2} with $h=k$, for the non-irrelevant kernels. For this reason,
Eqs. \eqref{eq_SDB_1}-\eqref{eq_SDB_2} will be called \it{standard dimensional bounds}, while we will refer to Eqs. \eqref{eq_IB_1}-\eqref{eq_IB_6} as \it{improved bounds}. Proposition \ref{prop:SDB} establishes the validity of the \it{standard dimensional bounds} taking as input the \it{improved bounds} at higher scales. The proof of the latter is the content of Section \ref{sect_improved}. 

\begin{proof}

Proposition \ref{prop:SDB} is actually an immediate consequence of the following bounds for the kernels $\wt{W}^{(h)}$ (cf. Eq. \eqref {eq_tree2}):
\begin{equation}
\label{eq_tree4}
\hspace{-6pt} \big\|\wt{W}^{(h)|q,q}_{\ud{n};\ud{\dot{n}}; \ud{\ddot{n}}} \big\|_1^w\le \big(1-2^{\th-1}\big) C_0^{1+\dim\ud{n}+\dim\ud{\dot{n}}+\dim\ud{\ddot{n}}} 2^{h D_{sc}} \l^{d} \times \left\{\begin{array}{cc} \hspace{-5pt}R & |\ud{n}|,|\ud{\dot{n}}|,|\ud{\ddot{n}}|=0,\, q=1
\\ \hspace{-5pt} 2^{\th(h-N)}     & |\ud{\ddot{n}}|\ge1 \\
1     & \text{otherwise},
\end{array}\right. 
\end{equation}

under the same hypotheses of Proposition \ref{prop:SDB}. Eq. \eqref{eq_tree4} is proved in Appendix \ref{app_tree}. 

For any irrelevant kernel with $|\ud{\ddot{n}}|\ge1$, by construction we have:
\[W^{(h)|q,q}_{\ud{n};\ud{\dot{n}};\ud{\ddot{n}}}=
\sum_{k=h}^{N-1} \wt{W}^{(k)|q,q}_{\ud{n};\ud{\dot{n}};\ud{\ddot{n}}} + W^{(N)|q,q}_{\ud{n};\ud{\dot{n}};\ud{\ddot{n}}},\] 
so that, combining Eq.  \eqref{eq_tree4} with the bounds for the kernels at scale $N$, we find:
\begin{equation}
\label{eq:11}
\begin{split} \|W^{(h)|q,q}_{\ud{n};\ud{\dot{n}};\ud{\ddot{n}}}\|_1^w \le &\sum_{k=h}^{N-1} 2^{k D_{sc}} \|\wt{W}^{(k)|q,q}_{\ud{n};\ud{\dot{n}};\ud{\ddot{n}}}\|_{(k)} + \|W^{(N)|q,q}_{\ud{n};\ud{\dot{n}};\ud{\ddot{n}}}\|_1^w  \\ \le &\sum_{k=h}^{N} 2^{k D_{sc}} \big(1-2^{\th-1}\big) C_0^{1+\dim\ud{n}+ \dim\ud{\dot{n}}+ \dim\ud{\ddot{n}}}  \l^d 2^{\th(k-N)}\\ \le & C_0^{1+\dim\ud{n}+ \dim\ud{\dot{n}}+ \dim\ud{\ddot{n}}} \l^d 2^{h D_{sc}} 2^{\th(h-N)},
\end{split}\end{equation}

where we used that $\|W^{(N)|q,q}_{\ud{n};\ud{\dot{n}};\ud{\ddot{n}}}\|_1^w\le 2^{N D_{sc}} \big(1-2^{\th-1}\big) C_0^{1+\dim\ud{n}+ \dim\ud{\dot{n}}+ \dim\ud{\ddot{n}}} \l^d$ as well, and that for irrelevant kernels $D_{sc}\le-1$. For $W^{(h)|0,0}_{\eset;\eset;n}$ and for the irrelevant kernels such that $|\ud{\ddot{n}}|=0$, the estimate follows the same lines as in Eq. \eqref{eq:11}, except for the absence of the \emph{short memory factor} $2^{\th(h-N)}$. Following the same steps as Eq. \eqref{eq:11}, we also find the estimates in Eq. \eqref{eq_SDB_1} for the relevant kernels $W^{(h)|0,0}_{n;\eset;\eset}$ and $W^{(h)|0,0}_{\eset;n;\eset}$, which are however unbounded in $N$, since $D_{sc}=+1$. In fact for the other non-irrelevant kernels we cannot proceed as in Eq. \eqref{eq:11}. Let us e.g. look at the kernel $\psi^2$, which is relevant. Writing $W^{(h)|1,1}_{\eset;\eset;\eset}= \wt{W}^{(h)|1,1}_{\eset;\eset;\eset}+ W^{(h+1)|1,1}_{\eset;\eset;\eset}$, we use Eq. \eqref{eq_tree4} for estimating the first term, while for the second one we exploit the first bound in Eq. \eqref{eq_IB_1}:
\[\begin{split}
&\|W^{(h)|1,1}_{\eset;\eset;\eset}\|_1^w\le \|\wt{W}^{(h)|1,1}_{\eset;\eset;\eset}\|_1^w+ \|W^{(h+1)|1,1}_{\eset;\eset;\eset}\|_1^w\le \big(1-2^{\th-1}\big) C_0 R\l 2^h + R\l 2^{-\vth(h+1)} \le C_0 R\l 2^h,
\end{split}\]
if $C_0\geq 2^{-2\theta+1}$.
The estimates for the other non-irrelevant kernels qualitatively follow the very same lines, so they will be omitted for sake of brevity. All in all one recovers the bounds in Eqs. \eqref{eq_SDB_1}- \eqref{eq_SDB_2}, with $C_0$ (the same in Proposition \ref{prop:SDB} and Eq. \eqref{eq_tree4}) independent of $R$, and $\l$ smaller than $(C_0R^4)^{-1}$, thus proving the claim of Proposition \ref{prop:SDB}.
\end{proof}

Proposition \ref{prop:SDB} shows that the multiscale expansion is well posed and well behaved provided that the relevant and marginal kernels (i.e. those with $D_{sc}\le-1$ according to Eq. \eqref{eq_SDB_3}), satisfy suitable bounds at all the higher scales. 
This property, whose proof is actually a crucial point of the present work and will be discussed in the next section, is in general a crucial point in every renormalization group analysis.

\section{Bounds for the relevant and marginal terms}
\label{sect_improved}

In the present work, in the same way as \cite{BFM09,M07} and similarly to \cite{L87}, the key underlying idea for controlling the non-irrelevant terms is to exploit the non-locality of the boson propagator (see Eq. \eqref{eq_bound_boson} and lines thereafter).

\begin{theorem}
\label{thm:IB}
For every $\vth,\th\in(0,1)$ fixed, there exist $N_0\ge1$, $R\ge1$ and $\l_0>0$ such that, for any $N\ge N_0$ and $\l\le \l_0$, the bounds in Eqs. \eqref{eq_IB_1}-\eqref{eq_IB_6} and Eqs.  \eqref{eq_SDB_1}-\eqref{eq_SDB_3} for $k=h$ hold together at every scale $0\le h\le N$.
\end{theorem}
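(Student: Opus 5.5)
The proof goes by downward induction on the scale $h$, from $h=N$ to $h=0$, with $R$ and $\l_0$ fixed at the end. The inductive hypothesis at step $h$ is that the improved bounds \eqref{eq_IB_1}--\eqref{eq_IB_6} hold for every $k$ with $h<k\le N$. Proposition \ref{prop:SDB} then gives the standard dimensional bounds \eqref{eq_SDB_1}--\eqref{eq_SDB_3} at scale $h$, provided $\l\le (C_0R^4)^{-1}$. It therefore remains to upgrade these to the improved bounds at scale $h$, with the same $R$ and with constants that do not depend on $h$ or $N$.

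\textbf{Base case $h=N$.} The kernels are read off from the bare potential \eqref{def_potential}. Using \eqref{eq_16a}, the bounds on $v_n,\tilde v_n$, and \eqref{eq:bound:boson:b}, one checks directly that:
\begin{itemize}
\item $W^{(N)|1,1}$ is of size $2^N(\mu_{N,\l}-1)=\mc{O}(\l 2^{-N}(N+1))$, which is $\le R\l 2^{-\vth N}$;
\item $W^{(N)|2,2}$ equals $Q^{(N)}$ up to $\mc{O}(\l^2)$;
\item $W^{(N)|1,1}_{1;\eset;\eset}=c+\mc{O}(\l)$;
\item the higher-degree source kernels carry $\l^{n-1}$ from the $n-1$ tree lines, and this beats $\l^{(n-1)/2}$.
\end{itemize}

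\textbf{Step $h+1\to h$.} I would derive exact kernel identities by extracting one internal line, as sketched in Subsection \ref{subsec:proofsketch}. The plan is to apply a Schwinger--Dyson / Wick step to one external field $\psi$ of the marginal or relevant kernel, and then expand the interaction vertex it contracts with.

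\begin{itemize}
\item If that vertex is a leaf of the spanning tree in $v_n$, the extracted structure is a boson line $\l g^A$ attached to the expectation of the $G^{(n-1)}$-monomial. This gives
$$W^{(h)}=\text{(explicit lowest order)}+\l\, g^{(h,N]}*g^A*W^{(h)}_{\text{source}}.$$
\item If the vertex is not a leaf, the tree splits, and one obtains products of $G^{(n_1)}$ and $\dot G^{(n_2)}$ kernels with $n_1+n_2=n$.
\item The remainder $\tilde v_n$ feeds the $\ddot G$ terms, which carry the factor $2^{-2N}$.
\end{itemize}

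Each identity expresses a non-irrelevant kernel as a convolution of the single-scale propagator $g^{(h,N]}$, one boson propagator, and kernels that are already under control by the standard bounds at scale $h$. The bounds to use are $\|g^A\|_1=\mc{O}(1)$ and $\|g^{(k)}\|_1\le C2^{-k}$. These give a gain $2^{-h}$ times $\|W\|_\infty$-type norms, and so convert the dimensional factor $2^{h D_{sc}}$ into an improvement by one power of $\l$ together with summable scale factors:
\begin{itemize}
\item $2^{-\vth h}$ for the $\psi^2$ kernel;
\item the extra $\l^{1/4}$ for $W^{|2,2}-Q^{(N)}$;
\item $\l^{(n-1)/2}$ for the source kernels, obtained because each extraction consumes one tree line, hence one factor $\l$, and costs at most $\l^{-1/2}$ through the induction.
\end{itemize}
The subtractions of $\Pi^{(k,N]}$, $c$ and $Q^{(N)}$ isolate exactly the lowest-order terms, which cannot be improved. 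For these, \eqref{eq_IB_5} uses $g^A*\Pi$ together with the bubble estimates of Appendix \ref{app_bubble}.

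\textbf{Main obstacle.} The hardest step is the combinatorics of the extraction for kernels of arbitrary degree $n$. One must sum over which vertex is extracted and over the splittings $n_1+n_2=n$, without losing the determinant (Gram) bounds in the truncated expectations. The plan is to perform the extraction at the level of the generating functional, by differentiating $\mc{V}^{(h)}$ with respect to the auxiliary sources. This keeps a single truncated expectation with an extra source field, so the tree expansion of Appendix \ref{app_tree} applies unchanged.

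The counting is then controlled by the weights $n^3$ and $n$ in \eqref{def_potential}. The sum over splittings gives $\sum_{n_1+n_2=n} n_1^3n_2^3/n^3\le C n^4$, and this polynomial factor is absorbed by the surplus $\l^{(n-1)/2}$ versus $\l^{n-1}$.

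With $R$ chosen as a large multiple of $C_0$ and $\l_0\le (C_0R^4)^{-1}$ small, the constants close. This completes the induction and proves the theorem.
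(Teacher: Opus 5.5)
The architecture of your proposal is the paper's: downward induction, Proposition \ref{prop:SDB} for the standard bounds at scale $h$, and identities from Grassmann integration by parts plus the tree property (leaves give $G^{(n-1)}$, non-leaves give $G^{(n_1)}\dot G^{(n_2)}$, $\tilde v_n$ gives $\ddot G$). However, the step you call the main obstacle is resolved incorrectly. The weights $n^3$ in \eqref{def_potential} enter the kernel identities in the \emph{denominator}. Rewriting the pieces of $v_n$ as derivatives with respect to the weighted sources gives the coefficient $\mu_{N,\lambda}/n^2$ for the leaf term in Lemma \ref{lemma_id_A} (and $n^2/(n-1)^2$ in Lemma \ref{lemma_id_B}). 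The non-leaf term gets $\mathcal{N}_{n_1,n_2}\le (n_1+n_2)^2/(n_1^2n_2^2)$, so that $\sum_{n_1+n_2=n} n^2/(n_1^2n_2^2)\le 4\pi^2/3$ uniformly in $n$. This is exactly why the exponent $3$ is chosen (see the remark after Eq. \eqref{eq_37}). Your factor $n_1^3n_2^3/n^3$ is the inverse. Moreover, the resulting $Cn^4$ cannot be absorbed as you suggest. Inside the induction you only have $\lambda^{(n-1)/2}$ (with prefactor $1$), not $\lambda^{n-1}$, so one extraction gains only a fixed $\lambda^{1/2}$; for instance $\lambda\cdot\lambda^{(n_1+n_2-2)/2}=\lambda^{1/2}\lambda^{(n-1)/2}$. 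The condition $Cn^4\lambda^{1/2}\le 1$ fails once $n\gtrsim\lambda^{-1/8}$, while $n$ ranges up to $n_*\sim(2^NL)^2$. So \eqref{eq_IB_2} and \eqref{eq_IB_6} would not reproduce themselves.

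Second, your scheme of ``$g^{(h,N]}$, one boson line, and kernels with standard bounds'' fails on terms where the paper needs specific extra ideas.
(i) The $\psi^2$ identity contains $c_{b_1}(\eta_1,\eta_2)\int db\,\lambda g^A_{b_1,b}W^{(h)|0,0}_{n;\emptyset;\emptyset}(b)$, where $W^{(h)|0,0}_{n;\emptyset;\emptyset}$ is of order $2^N$ (it contains the tadpole $2^N\mathcal{T}^{(h,N]}$). This term is not small; it vanishes identically by charge conjugation, Eq. \eqref{eq:Furry}.
(ii) In the terms generated by $\dot G$, the Dirac delta forces $L^\infty$ norms on both $g^A$ ($\mathcal{O}(N)$) and $g^{(h,N]}$ ($\mathcal{O}(2^N)$). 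This gives $\lambda N2^{-h}$, which is not $\le R\lambda 2^{-\vartheta h}$ uniformly in $N$. It requires a second extraction, Eq. \eqref{eq_18}, and the same issue recurs in the vertex, bubble, quartic and higher-degree identities.
(iii) Even in the main term (diagram $(d)$), the boson and fermion lines close a loop through the kernel, so you cannot take $L^1$ norms of both. Sup-type norms of the kernels are not provided by the tree expansion, and are not even uniform in $N$: $W^{(h)|1,1}_{1;\emptyset;\emptyset}$ contains the delta-like bare vertex $c$. The factor $2^{-\vartheta h}$ instead comes from H\"older's inequality, using $\|g^A\|_p^w<\infty$ for $p=2/(1-\vartheta)$ (the boson propagator is only log-singular) and $\|g^{(h,N]}\|_{p'}\lesssim 2^{-\vartheta h}$ for $p'=2/(1+\vartheta)$. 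This is also why $\vartheta<1$ is required.

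Finally, the order of the estimates matters. The improved vertex, bubble and quartic bounds must use the improved $\psi^2$ (and vertex) bounds just obtained at the same scale $h$, not only the inductive ones; see diagram $(b)$ in Fig. \ref{fig_kernel_vertex}. Otherwise $R$ reappears on the right-hand side and the constants do not close.
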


For the implications of Theorem \ref{thm:IB} on the kernels bounds appearing in the main Theorem \ref{thm_UV}, see Subsection \ref{subsec:boundkernels}.

\medskip 

\emph{Proof of Theorem \ref{thm:IB}.} The proof is by induction over the scale index $h$ of the kernel:  we will think of $R$ entering the bounds in the inductive hypothesis as a free parameter and the idea is to show that the inductive step can be performed if $\l \leq \l_0$ with $\l_0$ small enough with respect to $R$, and $R$ large but fixed. It turns out that a working choice for $R$ and $\l_0$ is \begin{equation}R\geq \mf{C}:= \max_{j=0}^4 \mf{c}_j C_0^8, \qquad  \l_0= \min\Big\{(C_0{ R}^4)^{-1},\min_{j=0}^7 (\mf{c}'_j C_0^{12}{R})^{-1}\Big\},
\label{eq:parameterchoice}\end{equation}
where $C_0\geq 1$ is the same as in Proposition \ref{prop:SDB} while $\mf{c}_j$ and $\mf{c}'_j$ are constants independent of $C_0,R$ that will appear along the proof in the present section.

To see the origin of Eq. \eqref{eq:parameterchoice}, we start by observing that for the base of the induction, i.e. at scale $h=N$, since the potential has the explicit expression in Eq.  \eqref{def_potential}, it is possible to find constants $\mf{c}_0,\mf{c}'_0\ge1$ such that bounds in Eqs. \eqref{eq_SDB_1}-\eqref{eq_SDB_3}, \eqref{eq_IB_1}-\eqref{eq_IB_6},   hold for $\l\le (\mf{c}'_0)^{-1}$ and  $R\geq \mf{c}_0$.

 For the inductive step, we assume the claim true up to scale $h+1$, and we prove its validity at scale $h$. As a first step, we can apply Proposition \ref{prop:SDB}, which implies the validity of bounds in Eqs. \eqref{eq_SDB_1}-\eqref{eq_SDB_3} at scale $h$,  for $\l\le \l_0$ with $R,\l_0$ as above (note that by construction $\l_0\le (C_0R^4)^{-1}$). The nontrivial point is now to show the validity of bounds in Eqs. \eqref{eq_IB_1}-\eqref{eq_IB_6} at scale $h$, taking for granted Eqs. \eqref{eq_SDB_1}-\eqref{eq_SDB_3} at the same scale. The remaining part of this section is dedicated to the proof of this fact. For sake of clarity, in order to avoid confusion with constants, throughout this section we will regard $R$ and $C_0$ as generic parameters greater than 1 appearing in the bounds in Eqs. \eqref{eq_SDB_1}-\eqref{eq_SDB_3}, so we will track explicitly their dependence in the upcoming estimates. Conversely, we will use the symbols $\mf{c},\mf{c}',\mf{c}'',\dots$ to denote quantities that are independent on $R$ and $C_0$.

\subsection{The electron self-energy}
\label{ssection_improved_1}

We start by proving the bound regarding the kernel $W^{(h)|1,1}_{\eset;\eset;\eset}$ associated to the Grassmann monomial $\psi^2$, also called \it{electron self-energy}. The goal of this section is to show:

\begin{equation}
\big\|W^{(h)|1,1}_{\eset;\eset;\eset}\big\|_1^w\le \mf{c}_1 C_0^5 \l 2^{-\vth h}, \label{eq:goal_bound_psiquadro}\end{equation} 

with $C_0$ as in Proposition \ref{prop:SDB}, $\l\le (\mf{c}'_1C_0^2R)^{-1}$, and suitable constants $\mf{c}_1,\mf{c}'_1\ge1$ independent on $C_0,R$. The starting point consists of deriving some convenient representation for the kernel.

\begin{lemma}
\label{lemma_id_self_energy}
The following identity holds: 
\begin{equation}
\label{eq_15}
\begin{split}
&W^{(h)|1,1}_{\eset;\eset;\eset}(\eta_1,\eta_2)= \sum_{j=1}^4 \mc{A}_j(\eta_1,\eta_2)
\end{split}\end{equation}

where, recalling  $\mu_{N,\l}=e^{-\frac{1}{2}\l 2^{-2N}g^A_{\mu,\mu}(0)},$ $\mc{A}_j(\eta_1,\eta_2)$ are given by
\begin{equation}
\hspace{-5pt} \mc{A}_1(\eta_1,\eta_2)= 2^N\left(\mu_{N,\l} -1 \right) \int db \left( c_{b} (\eta_1,\eta_2)- \int d\eta d\eta' c_{b}(\eta_1,\eta) g^{(h,N]}_{\eta,\eta'} W^{(h)|1,1}_{\eset;\eset;\eset}(\eta',\eta_2)\right),
\end{equation}
\begin{equation}
\begin{split}
\hspace{-5pt}\mc{A}_2(\eta_1,\eta_2)&=\sum_{n\ge1} \frac{\mu_{N,\l}}{n^2}  \int db_1 db \, \l g^A_{b_1,b} \left\{ c_{b_1}(\eta_1,\eta_2) W^{(h)|0,0}_{n;\eset;\eset}(b) + \int d\eta d\eta' c_{b_1}(\eta_1,\eta) g^{(h,N]}_{\eta,\eta'} \times \right.\\
&\times\left. \left[W^{(h)|1,1}_{n;\eset;\eset}(\eta',\eta_2;b) - W^{(h)|1,1}_{\eset;\eset;\eset}(\eta',\eta_2) W^{(h)|0,0}_{n;\eset;\eset}(b) \right] \right\},
\end{split}
\end{equation}
\begin{equation}
\begin{split}
&\mc{A}_3(\eta_1,\eta_2)=\\
&=2^{-N} \sum_{n_1\ge1,n_2\ge2} \frac{\mc{N}_{n_1,n_2}}{(n_1+n_2)^2} \int db_1 db db' \l g^A_{b_1,b} \d(b_1-b')\times \\
&\times \left\{ c_{b_1}(\eta_1,\eta_2) \left[-W^{(h)|0,0}_{n_1;n_2;\eset}(b;b') + W^{(h)|0,0}_{n_1;\eset;\eset}(b) W^{(h)|0,0}_{\eset;n_2;\eset}(b') \right]+  \int d\eta d\eta' c_{b_1}(\eta_1,\eta) g^{(h,N]}_{\eta,\eta'}\times\right. \\
&\times \left.\left[-W^{(h)|1,1}_{n_1;n_2;\eset}(\eta',\eta_2;b;b') + W^{(h)|1,1}_{n_1;\eset;\eset}(\eta',\eta_2;b) W^{(h)|0,0}_{\eset;n_2;\eset}(b')+ W^{(h)|0,0}_{n_1;\eset;\eset}(b) W^{(h)|1,1}_{\eset;n_2;\eset}(\eta',\eta_2;b') \right] \right.\\
&\left. -\int d\eta d\eta' c_{b_1}(\eta_1,\eta) g^{(h,N]}_{\eta,\eta'} W^{(h)|1,1}_{\eset;\eset;\eset}(\eta',\eta_2) \left[  -W^{(h)|0,0}_{n_1;n_2;\eset}(b;b') +  W^{(h)|0,0}_{n_1;\eset;\eset}(b) W^{(h)|0,0}_{\eset;n_2;\eset}(b') \right]\right\},
\end{split}\label{eq_A3}
\end{equation}
\begin{equation}
\begin{split}
\label{eq_A4}
\mc{A}_4(\eta_1,\eta_2)&= \sum_{n\ge2} 2^{-N}\int db_1 db \delta(b_1-b)\left\{ c_{b_1}(\eta_1,\eta_2) W^{(h)|0,0}_{\eset;\eset;n}(b)+ \int d\eta d\eta' c_{b_1}(\eta_1,\eta) \times \right.\\
&\left.  \times g^{(h,N]}_{\eta,\eta'}\left[ W^{(h)|1,1}_{\eset;\eset;n}(\eta',\eta_2;b) - W^{(h)|1,1}_{\eset;\eset;\eset}(\eta',\eta_2) W^{(h)|0,0}_{\eset;\eset;n}(b) \right]\right\},
\end{split}
\end{equation}

where $\mc{N}_{n_1,n_2}$ are suitable positive coefficients, such that $\mc{N}_{n_1,n_2}\le \frac{(n_1+n_2)^2}{n_1^2 n_2^2}$.

\end{lemma}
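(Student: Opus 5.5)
The identity should be a Schwinger--Dyson-type ``extraction'' identity. The strategy is to differentiate $\mc{V}^{(h)}$ with respect to $\psi^+_{\eta_1}$ and $\psi^-_{\eta_2}$ at zero fields, and then integrate by parts (Wick) the Grassmann variable $\zeta^+$ that is attached to $\eta_1$ through a bare vertex $c_{b_1}$.

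Concretely, by Eq. \eqref{eq_14} we have $W^{(h)|1,1}_{\eset;\eset;\eset}(\eta_1,\eta_2)= -\frac{\d^2}{\d\psi^+_{\eta_1}\d\psi^-_{\eta_2}}\mc{V}^{(h)}(\mb 0)$, which we express as a connected expectation under $P^{(h,N]}$. The first step is to compute $\frac{\d}{\d\psi^+_{\eta_1}}\mc{V}(\psi+\z;\mb 0)$ using Eq. \eqref{def_potential}. A derivative hitting one bilinear $O_{b_1}$ in the $n$-th term produces $\int d\eta\, c_{b_1}(\eta_1,\eta)(\psi+\z)^-_\eta$ times the remaining monomial. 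The resulting terms are sorted by the role of $b_1$:
\begin{itemize}
\item[(i)] the term $n=1$, from the coefficient $2^N(\mu_{N,\l}-1)$;
\item[(ii)] the case where $b_1$ is a leaf of the spanning tree in $v_n$;
\item[(iii)] the case where $b_1$ is an internal node of that tree;
\item[(iv)] the remainder part $\tilde v_n$ coming from the splitting in Eq. \eqref{eq_16a}.
\end{itemize}

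In case (ii) we factor $\l g^A_{b_1,b}$, where $b$ is the unique tree-neighbour of $b_1$. The remaining tree weight then matches the kernel multiplying $G^{(n-1)}_b$ in Eq. \eqref{def_potential}. The factors $n^3$ and the $1/n!$ counting give the weight $\mu_{N,\l}/n^2$ after relabelling, and this is exactly how the source $G^{(n)}$ was designed.

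In case (iii), removing the node $b_1$ splits the tree into subtrees. We group one subtree as the $G^{(n_1)}$-type term and the rest, still rooted at $b_1$, as a $\dot G^{(n_2)}$-type term (in which $b_1$ carries no bilinear, hence $O^{n-1}_{\ud b_*}$). The combinatorial factor obtained from counting the decompositions of Cayley trees is $\mc N_{n_1,n_2}$. The bound $\mc N_{n_1,n_2}\le (n_1+n_2)^2/(n_1^2n_2^2)$ should follow from the normalisations $n^3$ in the definition of the sources, since $\frac{1}{n!}$ times the number of trees factorises. Case (iv) gives $\ddot G^{(n)}$ with weight $n$ and $\d(b_1-b)$.

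The second step is to use the identity $\mc{E}^T(\z^-_\eta X;\ldots)$. Writing $(\psi+\z)^-_\eta$, the $\psi^-$ part directly gives the ``local'' term $c_{b_1}(\eta_1,\eta_2)\times$(expectation of the source monomial), i.e. $W^{(h)|0,0}_{\cdot}(b)$, after differentiating in $\psi^-_{\eta_2}$. The $\z^-$ part is contracted by Wick's rule, producing $g^{(h,N]}_{\eta,\eta'}$ times a $\frac{\d}{\d\psi^+_{\eta'}}$ derivative of the effective potential. Differentiating next in $\psi^-_{\eta_2}$ yields $W^{(h)|1,1}$ with source insertions. The passage from simple to truncated expectations and the $-\log$ generate the subtracted products, for example $W^{(h)|1,1}_{\eset;\eset;\eset}\,W^{(h)|0,0}_{n}$. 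In case (iii), the two-source structure generates the connected combination $-W_{n_1;n_2}+W_{n_1}W_{n_2}$, together with its dressing by $W^{(h)|1,1}$.

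The dimensional prefactors $2^{N(\cdot)}$ must match: the $2^{-N}$ in front of $\mc A_3$ and $\mc A_4$ is the mismatch between $2^{N(2-n)}$ and the source normalisations. A clean way to implement all of this is to write $e^{-\mc V^{(h)}(\psi;G,\dot G,\ddot G)}$ and apply $\frac{\d}{\d\psi^+_{\eta_1}}$ under the integral using Grassmann integration by parts, $\int P(d\z)\,\z^-_\eta F(\z)=\int P(d\z)\int g_{\eta,\eta'}\,\partial_{\z^+_{\eta'}}F$. Then differentiating in the sources at $\mb 0$ gives the kernel identities.

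I expect the main obstacle to be the combinatorics of case (iii). We must verify that splitting a spanning tree at the extracted vertex reproduces exactly products of the $G$ and $\dot G$ kernels, with the stated coefficients, and that signs from reordering Grassmann bilinears cancel (they are even, so they should). I would first check $n=2,3$ explicitly, and then prove the general count via the Prüfer-code identity for the number of trees with prescribed degree at a vertex.
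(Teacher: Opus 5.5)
Your route is the one the paper takes (Lemma \ref{lemma_id_A}, followed by differentiating Eq. \eqref{eq_id_12} in $\psi^-_{\eta_2}$). The paper writes $\frac{\d\mc{V}^{(h)}}{\d\psi^+_{\eta_1}}=e^{\mc{V}^{(h)}}\mc{E}\big(\frac{\d\mc{V}}{\d\psi^+_{\eta_1}}(\psi+\cdot)\,e^{-\mc{V}(\psi+\cdot)}\big)$, splits $v_n$ at the vertex carrying the differentiated bilinear, recognises the pieces as $G$-, $(G,\dot G)$- and $\ddot G$-derivatives of $e^{-\mc{V}^{(h)}}$, and integrates $\z^-$ by parts. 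Your treatment of cases (i), (ii), (iv), of the $2^{-N}$ prefactors and of the subtracted products matches this. The step that is not yet a proof is case (iii), the one you flag yourself, and the tool you propose for it is not the right one.

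\textbf{The gap in case (iii).} If $b_1$ has degree $d\ge2$ in $T$, removing it leaves $d$ branches. ``Group one subtree as the $G^{(n_1)}$-type term'' only yields an identity if that branch is chosen canonically. Summing over the choice counts each $T$ exactly $d$ times, and this $T$-dependent multiplicity cannot be absorbed into a product $v_{n_1}v_{n_2}$.

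\textbf{The paper's fix.} Take the branch containing a fixed label, namely $2$. Define $T\mapsto(X_1,T_1,T_2,k)$, where $X_1$ is the component of $T\setminus\{1\}$ containing $2$, $k$ is the neighbour of $1$ in $X_1$, $X_2=I'\setminus X_1$, $T_1=T|_{X_1}$ and $T_2=T|_{X_2\cup\{1\}}$. This map is a bijection that preserves $\prod_{\{i,j\}\in T}\l g^A_{b_i,b_j}$ (Eq. \eqref{tree_property}). The rest is bookkeeping:
\begin{itemize}
\item $\frac{1}{n!}=\binom{n}{n_1}^{-1}\frac{1}{n_1!\,n_2!}$ turns the two restricted sums into $v_{n_1}v_{n_2}$ (Eq. \eqref{v_n_rewriting}).
\item Permutation symmetry of $v_n$ and of $O^n_{\ud b}$ under the integral replaces $\sum_{k\in X_1}$ by a factor $n_1$, and $\sum_{X_1\ni2,\,|X_1|=n_1}$ by $\binom{n-2}{n_1-1}$.
\item With the factor $n$ from the $\psi^+$-derivative and the source normalisations $n_1^3$, $n_2^3$, this gives exactly $\mc{N}_{n_1,n_2}=\frac{(n_1+n_2)^2}{(n_1+n_2-1)\,n_1n_2^2}$ (Eq. \eqref{eq_id_10}).
\item The stated bound is then just $n_1\le n_1+n_2-1$. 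Also $X_2\neq\eset$ forces $n_2\ge2$, matching the range of $\dot G^{(n)}$.
\end{itemize}

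No enumeration of trees is involved. The identity is between edge-weighted sums $\sum_T\prod g^A$, so a Prüfer count of trees with prescribed degree at $b_1$ does not produce the factorisation. Moreover, the degree of $b_1$ is not the parameter that fixes $(n_1,n_2)$.
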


The proof of Lemma \ref{lemma_id_self_energy} given in Appendix \ref{app_identities}. 

\paragraph{Notation.} 
In order to show Eq. \eqref{eq:goal_bound_psiquadro} we find convenient to exploit a graphical representation associated to the decomposition in Eq. \eqref{eq_15}, as shown in Fig. \ref{fig_self_energy_1}.
 For $\sharp=a,b,\dots, o$, we will denote by $\val(\sharp)(\eta,\eta')$, the analytical value of the diagram $(\sharp)$ in Fig. \ref{fig_self_energy_1}, i.e. the corresponding term in the decomposition of Eq. \eqref{eq_15} and by $\| \val (\sharp) \|_1^w$ its weighted norm, cf.   Eq. 
 \eqref{def_Lp_norm}.   Even though there are 15 diagrams to discuss, we will see that only a fraction of them is independent, in a suitable sense (cf. Remark \ref{rmk_2}).

\begin{figure}[ht]
\centering \hspace{-20pt}
\includegraphics[scale=0.8]{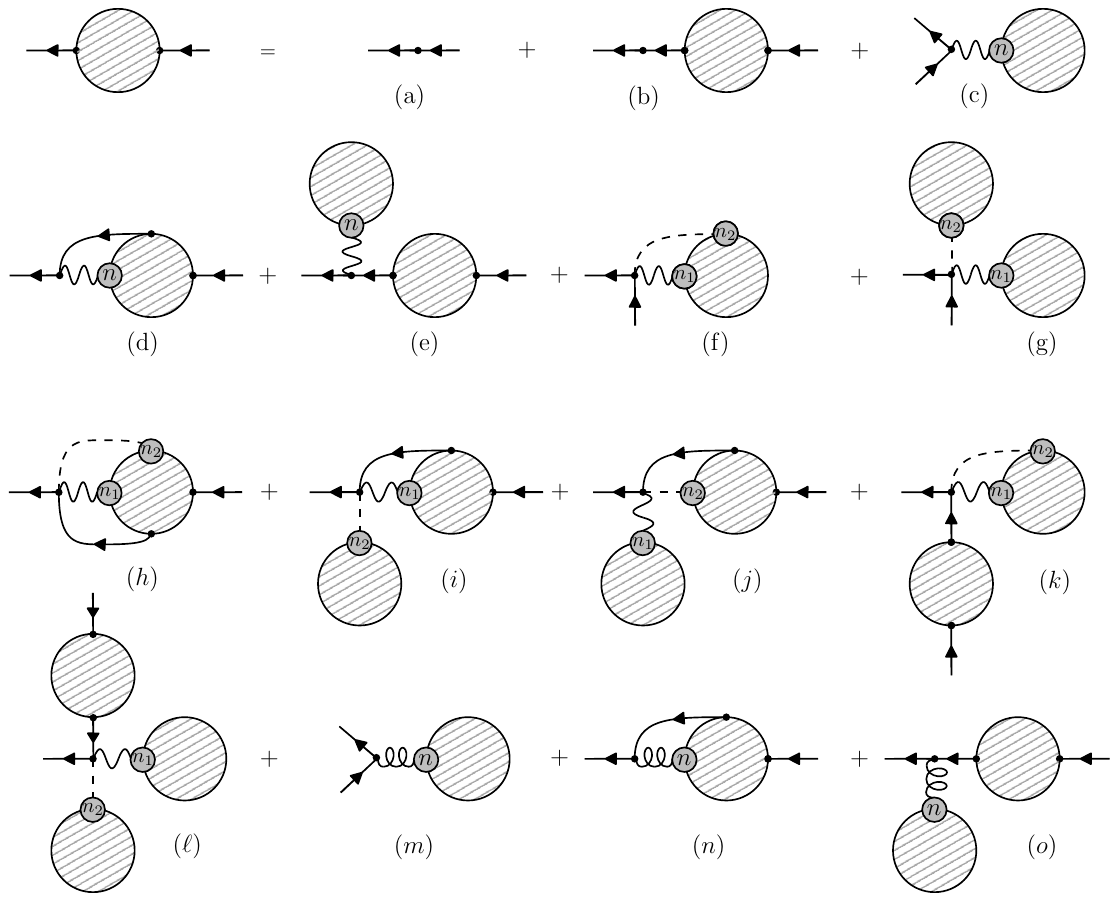}
\caption{Graphical representation of Eq. \eqref{eq_15}. The graphs are numbered as in the order they appear in the expansion of Eq.  \eqref{eq_15}: graphs $(a),(b)$ are relative to $\mc{A}_1$,  $(c)-(e)$ to $\mc{A}_2$, $(f)-(l)$ to $\mc{A}_3$ and  $(m)-(o)$  to $\mc{A}_4$. Solid and wiggly lines, with dotted endpoints, represent fermion and boson propagators respectively, while dashed and curly lines, with black dotted endpoints, represent delta functions. Blobs (big North East line patterned circles) represent kernels, associated to external fields distinguished by the type of attached lines to them: each solid, wiggly, dashed and curly line attached to a blob identifies a $\psi,G,\dot{G}$ and $\ddot{G}$ field respectively.    }
\label{fig_self_energy_1}
\end{figure}

\begin{remark}
\label{rmk_dotG}
The decomposition in Fig. \ref{fig_self_energy_1} is paradigmatic for understanding the combinatorial difficulty emerging from the lattice regularization of the theory. In the continuum version of this model \cite{F10}, the only term in the r.h.s. would be diagram $(d)$ with $n=1$ (see \cite[Fig.2]{F10}); similarly in \cite{M22}, where only the terms $\psi^2,\psi^4$ and $\psi^2G(J)$ are retained in the fermionic interaction (cf. Eq. \eqref{eq:2}), one would get only diagrams $(a),(b),(d)$ and $(e)$ with $n=1$. In the present work, the multitude of terms in the expansion is due to the form of the \virg{bare kernels} $v_n$, Eq. \eqref{eq_16a}, which involve sums over trees of arbitrary order. If we adopt the representation in Fig. \ref{fig_diag_2}.(a) for the bare kernels $v_n$, we see that terms like the one in Fig. \ref{fig_diag_2}.(b) (corresponding to diagram $(h)$ of Fig. \ref{fig_self_energy_1}) can emerge as contributions to the kernel $W^{(h)|1,1}_{\eset;\eset;\eset}$.

\begin{figure}[h]
    \centering
\includegraphics[width=0.9\textwidth]{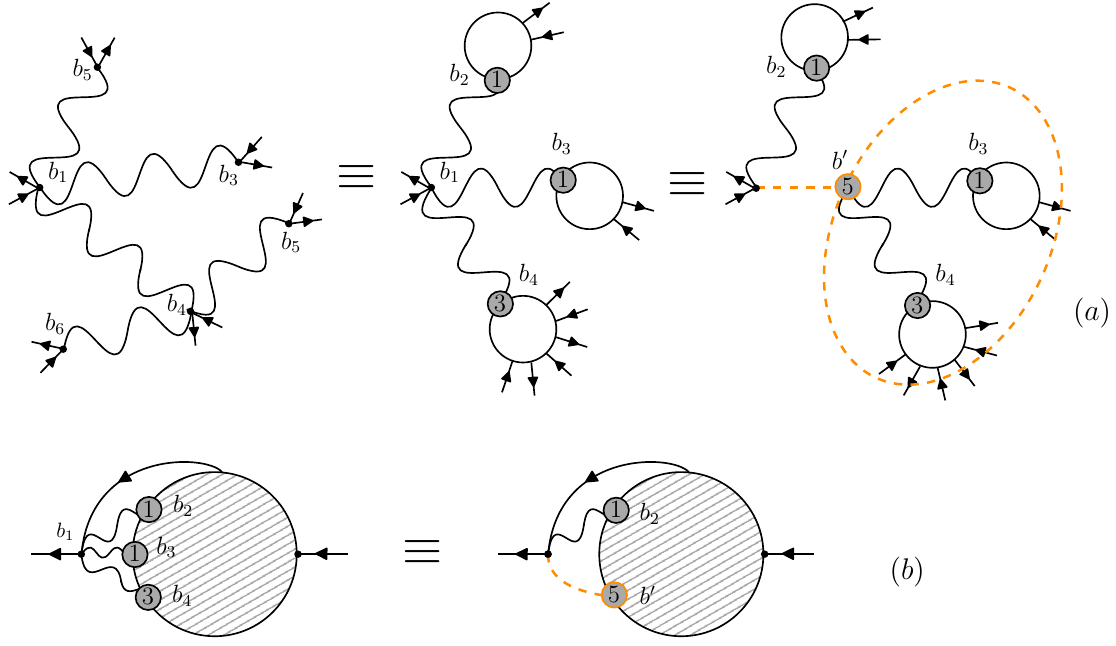}
    \caption{(a): synthetic representation for $v_n(\ud{b})O^n_{\ud{b}}$, obtained from Fig. \ref{fig_kernel} by first collapsing each sub-tree rooted in $b_2,b_3$ and $b_4$ into one dark blob (the label $(n)$ stands for the order of the sub-tree), and then by thinking of the sub-trees rooted at $b_3$ and $b_4$ as one only tree rooted at $b'$, with $b'=b$ (the dashed line stands for a Dirac delta).
    (b): contribution to the kernel $W^{(h)|1,1}_{\eset;\eset;\eset}$ arising from the bare kernel in (a).}
    \label{fig_diag_2}
\end{figure}

The blob in the l.h.s. of Fig \ref{fig_diag_2}.(b) should be regarded as

\[ \frac{\d^3 \mc{V}^{(h)}}{\d G^{(1)}_{b_2}\d G^{(1)}_{b_3} \d G^{(3)}_{b_4}}(\mb{0}) \equiv 3! W^{(h)|0,0}_{(1,1,3);\eset;\eset}(b_2,b_3,b_4). \]

The factor 3 in the factorial corresponds to the number of $G$ fields, which equals the number of wiggly lines attached to the blob in the diagram. Now, since in general we must expect an arbitrary number (say $s$) of wiggly lines in the l.h.s. of Fig. \ref{fig_diag_2}.(b), following this expansion we would end up with unbounded factorials (say $s!$), which would induce a combinatorial divergence of the expansion. The way we overcome this problem is by rewriting the diagram in the l.h.s. of Fig.  \ref{fig_diag_2}.(b) as in the r.h.s., namely by \virg{hiding} the bunch of wiggly lines at the end of the dashed line (which stands for a Dirac delta): in this way no combinatorial problem arises, being the number of points of the blob bounded. The role of the auxiliary source $\dot{G}$ is exactly to perform analytically what here is explained here by a graphical intuition.
\end{remark}

\paragraph{Diagram $(a)$.}
This is simply the bare quartic kernel, whose weighted norm can be bounded by (cf. Eq. \eqref{eq_bound_boson}) \[ \|\val(a)\|_1^w\leq 2^N\|c\|_1^w|\mu_{N,\l}-1|\leq 12 K' \l N  2^{-N}\leq \mf{c}_{1,1} \l 2^{-\vth N},\] 
where we used definitions in Eqs. \eqref{def_bare_vertex}, \eqref{def_Lp_norm} (recall $r=1$, cf. below Eq. \eqref{eq:IA}) to get $\|c\|_1^w\leq 6 e^{\hat \kappa 2^{-\frac N2-1}}\leq 12$ for $N$ large enough, the explicit expression of $\mu_{N,\l}$ (below Eq. \eqref{eq:15}) and the bound on $|g^A_{\mu,\mu}(0)|\leq \|g^A\|_\infty \leq K' N$ (cf. Eq. \eqref{eq:bound:boson:b}).

\paragraph{Diagram $(b)$.} The bound in this case is \[\|\val(b)\|_1^w\leq 2^N\|c\|_1^w|\mu_{N,\l}-1|\|g^{(h,N]}\|_1^w \|W^{(h)|1,1}_{\eset;\eset;\eset}\|_1^w\leq \tilde K C_0 R \l 2^N\|c\|_1^w|\mu_{N,\l}-1|,\] where we used that for some\footnote{One can choose, e.g., $\tilde K=320 K \hat \kappa^{-2}$,  cf. Eq. \eqref{eq_bound_propagator}.} $\tilde K>0$,   $\|g^{(h,N]}\|_1^w\le \tilde K 2^{-h}$  and that, by assumption, if $\l \leq (C_0R^4)^{-1}$ we have $\|W^{(h)|1,1}_{\eset;\eset;\eset}\|_1^w\le C_0 R\l 2^h$. Taking $\l \leq \min\{(C_0R^4)^{-1}, (RC_0 \tilde K)^{-1}\}$ we see that $\val(b)$ can be bounded as $\val(a)$, i.e.
\begin{equation}
    \|\val(b)\|_1^w \leq \mf{c}_{1,1} \l 2^{-\vartheta N}.
\end{equation}
The constraint on $\l$ implies that  $\mf{c}'_1 \geq  \tilde K$.

\begin{remark}
\label{rmk_2}
Diagram (b) is the first example that not all the graphs of Fig. \ref{fig_self_energy_1} have independent bounds: each time that in a diagram we have a ``dressed fermionic line'' (first diagram in the first line of Fig. \ref{fig_self_energy_1}), in order to obtain  an upper bound, we can graphically delete the dressed fermionic line  
at the price of shrinking $\l$, which, in the end, reflects in changing the values of the  constant $ \mf{c}'_1$ in Eq. \eqref{eq:goal_bound_psiquadro}.  In this way, we can say that graphs $(b)$, $(k)$ and $(o)$ are apriori bounded by $(a)$, $(f)$ and $(m)$ respectively.
\end{remark}

\paragraph{Diagram $(c)$.} In this case $\val(c)(\eta_1,\eta_2)=0$. This is actually a consequence of the \it{charge conjugation} symmetry (cf. Lemma \ref{lemma:symm}), which establishes the following cancellation:

\begin{equation}
\label{eq:Furry}
\int_{\L_B^p} db'_1\dots db'_p  W^{(h)|0,0}_{(n_1,\dots,n_p);\eset;\eset}(b'_1,\dots,b'_p)\prod_{j=1}^p g^A_{b_j,b'_j}=0, 
\end{equation}

for every $p\ge1$ odd, $n_1,\dots,n_p\ge1$ and $b_1,\dots,b_p\in\L_B$. In particular for $p=1$ we find $\val(c)(\eta_1,\eta_2)=0$. 
The proof of Eq. \eqref{eq:Furry} follows immediately by noting that, in force of the \it{charge conjugation} symmetry (cf. Lemma \ref{lemma:symm}), 

\[W^{(h)|0;0}_{(n_1,\dots,n_p);\eset;\eset}(b'_1,\dots,b'_p) = W^{(h)|0;0}_{(n_1,\dots,n_p);\eset;\eset}(\ov{b}'_1,\dots,\ov{b}'_p),\]

where recall that if $b=(x,\mu,\e)$, $\ov{b}\equiv (x,\mu,-\e)$. Using that $g^A_{b,b'}=-g^A_{b,\ov{b}'}$ and the fact that $p$ is odd, we find:
\[\begin{split}
& \int d\ud{b}' \prod_{j=1}^p g^A_{b_j,b'_j} \times W^{(h)|0;0}_{\ud{n};\eset;\eset}(\ud{b}')= \int d\ud{b}' \prod_{j=1}^p g^A_{b_j,b'_j} \times W^{(h)|0;0}_{\ud{n};\eset;\eset}(\ov{\ud{b}}')=\\
&=\int d\ud{b}' \prod_{j=1}^p g^A_{b_j,\ov{b'_j}} \times W^{(h)|0;0}_{\ud{n};\eset;\eset}(\ud{b'})= - \int d\ud{b}' \prod_{j=1}^p g^A_{b_j,b'_j} \times W^{(h)|0;0}_{\ud{n};\eset;\eset}(\ud{b}'),
\end{split} \]
as desired.

\paragraph{Diagram $(d)$.} This is the first case where we must use the non-locality of $g^A$.  
The analytical value of the diagram is
\[ \text{Val}(d)(\eta_1,\eta_2)= \sum_{n\ge1} \frac{\mu_{N,\l}}{n^2}  \int dbdb' \int d\eta d\eta' c_b(\eta_1,\eta) \l g^A_{b,b'} g^{(h,N]}_{\eta,\eta'} W^{(h)|1,1}_{n;\eset;\eset}(\eta',\eta_2;b'), \]

which can be bounded as
\[\begin{split} 
&\|\val(d)\|_1^w \le \\
&\le\sum_{n\ge1} \frac{\l}{n^2}\int \frac{db db'}{L^2} \int d\eta d\eta' d\eta_1 d\eta_2 e^{\frac{\kappa}{2}\sqrt{\delta^D(\eta,\eta',\eta_1,\eta_2,b,b')}}|c_b(\eta_1,\eta)| |g^A_{b,b'}| |g^{(h,N]}_{\eta,\eta'}| |W^{(h)|1,1}_{n;\eset;\eset}(\eta',\eta_2;b')| \\ 
&\le \sum_{n\ge1} \frac{\l}{n^2}\int \frac{db db'}{L^2} \int d\eta d\eta' d\eta_1 d\eta_2 \left(e^{\frac{\kappa}{2}\sqrt{\delta^D(b,\eta,\eta_1)}}|c_b(\eta_1,\eta)|\right) \left(e^{\frac{\kappa}{2}\sqrt{\delta^D(b-b')}}|g^A_{b,b'}|\right) |g^{(h,N]}_{\eta,\eta'}| \times\\
&\times \left(e^{\frac{\kappa}{2}\sqrt{\delta^D(\eta',\eta_2,b')}} |W^{(h)|1,1}_{n;\eset;\eset}(\eta',\eta_2;b')|\right)=\\
&= \sum_{n\ge1} \frac{\l}{n^2}\int \frac{db db'}{L^2} \int  d\eta' d\eta_2 \tilde{g}^{(h,N]}_{b,\eta'} \left(e^{\frac{\kappa}{2}\sqrt{\d^D(b,b')}}|g^A_{b,b'}|\right)   \left(e^{\frac{\kappa}{2}\sqrt{\delta^D(\eta',\eta_2,b')}} |W^{(h)|1,1}_{n;\eset;\eset}(\eta',\eta_2;b')|\right),
\end{split}\]

with $\tilde{g}^{(h,N]}_{b,\eta'}:= \int d\eta d\eta_1 e^{\frac{\kappa}{2}\sqrt{\delta^D(b,\eta,\eta_1)}}|c_b(\eta_1,\eta)| |g^{(h,N]}_{\eta,\eta'}|$.
We proceed by applying the H{\"o}lder's inequality with weights $p=\frac{2}{1-\vth}$ and $p'=\frac{2}{1+\vth}$:
\[\begin{split}
&\|\text{Val}(d)\|_1^w\le  \sum_{n\ge1} \frac{\l}{n^2} \|g^A\|_{p}^w \|\tilde{g}^{(h,N]}\|_{p'} \|W^{(h)|1,1}_{n;\eset;\eset}\|_1^w\le \sum_{n\ge1} \frac{\l}{n^2} \tilde K_{\vartheta}C_0^2 2^{-\vth h} \l^{\frac{n-1}{2}} \le \l \mf{c}_{1,2} C_0^2 2^{-\vth h},
\end{split}\]
where we used Eq.  \eqref{eq_SDB_3} with $\l\le (C_0R^4)^{-1}$, denoted with 
$\tilde K_{\vth}$  the constant coming from $\|\cdot\|_{p},\|\cdot\|_{p'}$ (by using  Eq. \eqref{eq_bound_boson}) and used the fact that $\tilde{g}^{(h,N]}$ has the same dimensional bounds as $g^{(h,N]}$, being $c_b(\eta_1,\eta)$ local; finally $  \mf{c}_{1,2}=\tilde K_\theta  \pi^2/6 $ is independent of $C_0,R$.

\paragraph{Diagram $(f)$.} This is the first diagram with a dashed line,  corresponding to a delta function (cf. the first term in $\mc{A}_3$, Eq. \eqref{eq_A3}). In bounding such term, this forces us to take the $L^{\infty}$ norm of the boson propagator associated with the wiggly line:  
\[\begin{split}
& \|\val(f)\|_1^w \leq \\ & \sum_{\substack{n_1\ge1 \\ n_2\ge2}} \frac{\mc{N}_{n_1,n_2}}{(n_1+n_2)^2} 2^{-N} \l\|g^A\|_{\infty} \|W^{(h)|0,0}_{n_1;n_2;0}\|_1^w\le \sum_{\substack{n_1\ge1 \\ n_2\ge2}} \frac{\l}{n_1^2 n_2^2} 2^{-N}K'N C_0^3\l^{\frac{n_1+n_2-2}{2}} \le \mf{c}_{1,3}C_0^3\l^{3/2} 2^{-\vth N},
\end{split}\]

where we used Eqs. \eqref{eq_bound_boson}, \eqref{eq_SDB_3}, that $2^{-N}N\leq 2^{-\vartheta N}$ for any $\vartheta\in(0,1)$ if $N$ is large enough, and the fact that $\mc{N}_{n_1,n_2}\le \frac{(n_1+n_2)^2}{n_1^2 n_2^2}$; $\mf{c}_{1,3}=K' (\pi^2/6)^2$ does not depend on $C_0,R$.
\begin{remark}
Diagram $(f)$ is the first nontrivial case (after diagram $(a)$) of a class of graphs which not only have the desired bound at any fixed scale $h$, but they are actually vanishing as $N\to \infty.$ In view of proving Theorem \ref{thm:IB} it is sufficient to replace the factor $2^{-\vth N}$ above with $2^{-\vth h}$; however, it is worth mentioning that tracking small terms in $N$ would be crucial for the analysis of the $N \to \infty$ limit of the theory (see Appendix \ref{app:limit} for an overview).

\label{rmk:o(N)diagrams}
\end{remark}
\paragraph{Diagram $(h)$.}  
Again, due to the delta function associated with the dashed line, we are forced to take the $L^{\infty}$ norm of both $g^A$ and $g^{(h,N]}$. We find:
\begin{equation}
\label{eq_17}
\|\val(h)\|_1^w\le \sum_{\substack{n_1\ge 1 \\  n_2\ge2}} 2^{-N} \frac{\mc{N}_{n_1,n_2}}{(n_1+n_2)^2} \l \|g^A\|_{\infty} \|g^{(h,N]}\|_{\infty} \|W^{(h)|1,1}_{n_1;n_2;\eset}\|_1^w\le C_0^3 K N \mf{c}_{1,3} \l 2^{-h}, 
\end{equation}
with $\mf{c}_{1,3}$ as in the previous item, and we used that $\|g^{(h,N]}\|_\infty \leq 2K 2^N$ (cf. Eq. \eqref{eq_bound_propagator}) and that $\|W^{(h)|1,1}_{n_1;n_2;\eset}\|_1^w\leq C_0^3 \l^{\frac{n_1+n_2-2}{2}}2^{-h}$, cf. Eq. \eqref{eq_SDB_3}.  This apparent  logarithmic divergence is related to the definition of the external source $\dot G$, which imposes to take the $L^{\infty}$ norm of both $g^A$ and $g^{(h,N]}$. We will show at the end of this paragraph (see  Eq. \eqref{eq_18} and below) how to solve this divergence; for the moment let us continue with the other graphs of Fig. \ref{fig_self_energy_1}.

\paragraph{Diagram $(i)$.} The analytical value in this case is:
\begin{equation}
\label{eq_39}
\begin{split}
&\val (i)(\eta_1,\eta_2) = \\
&\sum_{\substack{n_1\ge1\\ n_2\ge2}} \frac{\mc{N}_{n_1,n_2}}{(n_1+n_2)^2} 2^{-N} \int d\eta d\eta' \int db db' c_b(\eta_1,\eta) \l g^A_{b,b'} g^{(h,N]}_{\eta,\eta'} W^{(h)|1,1}_{n_1;\eset;\eset}(\eta',\eta_2;b') W^{(h)|0,0}_{\eset;n_2;\eset}(b)
\end{split}
\end{equation}
and by noting that from translation invariance and Items \ref{item:symm:2}, \ref{item:symm:3} of  Lemma \ref{lemma:symm}, one has $W_{\emptyset,n_2,\emptyset}(b)=W_{\emptyset,n_2,\emptyset}((0,0,+))$, the bound for this graph can be deduced from the one for graph $(d)$:
\[\begin{split}
\|\text{Val}(i)\|_1^w & \le \sum_{\substack{n_1\ge1\\ n_2\ge2}} \frac{2^{-N}\mc{N}_{n_1,n_2}} {(n_1+n_2)^2}  \|W^{(h)|0,0}_{\eset;n_2;\eset}\|\int \frac{d\eta_1 d\eta_2}{L^2} e^{\frac{\kappa}{2}\sqrt{\d^D(\eta_1,\eta_2)}} \times\\
&\times \left| \int d\eta d\eta' \int db db' c_b(\eta_1,\eta) \l g^A_{b,b'} g^{(h,N]}_{\eta,\eta'} W^{(h)|1,1}_{n_1;\eset;\eset}(\eta',\eta_2;b') \right| \le\\
&\le \sum_{n_2\ge2} \frac{1}{n_2^2} 2^{-N} \|W^{(h)|0,0}_{\eset;n_2;\eset}\|_1^w |\mu_{N,\l}^{-1}|\|\text{Val}(d)\|_1^w\leq C_0^3 (\pi^2/3) \mf{c}_{1,2} \l^{3/2} 2^{-\vth h} 
\end{split}\]

where we used the bound of diagram $(d)$, again that $\mc{N}_{n_1,n_2}\le (n_1+n_2)^2 (n_1n_2)^{-2}$, that $|\mu_{N,\l}^{-1}|\leq 2$ for $N$ large, and that $\|W^{(h)|0,0}_{\eset;n_2;\eset}\|_1^w\le C_0 2^N\l^{\frac{n_2-1}{2}}$ by Eq. \eqref{eq_SDB_1}. Note that imposing also $\l \leq (C_0^2\pi^4/9)^{-1}$ one has $\|\val(i)\|_1^w\leq C_0^2\mf{c}_{1,2}\l 2^{-\vth h}$ which is the same as case $(d).$ 

\begin{remark}
\label{rmk_1} We observe a second reduction of the number of diagrams to bound related to the presence of a \emph{dashed tadpole graph}, i.e. a blob with only one dashed external line, as in $(g),(i),(\ell).$ By possibly shrinking $\l$, i.e. changing the value of $\mf{c}'_1$ in Eq. \eqref{eq:goal_bound_psiquadro}, one has that $\val(g),\val(i),\val(\ell)$ can be bounded exactly as $\val(e),\val(d),\val(c)$ respectively (in particular $\val(g)=0$).
\end{remark}

\paragraph{Diagram $(m)$.} We have to bound the first of Eq. \eqref{eq_A4}  which immediately gives 
\[\|\val(m)\|_1^w \leq \sum_{n\ge2}2^{-N} \|c\|_1^w \|W^{(h)|0,0}_{\eset;\eset;n}\|_1^w\le 12 \sum_{n\ge2} 2^{-N} C_0\l^{\frac{5}{8}n} \le 12 C_0 \l^{5/4} 2^{-\vth h},\]
where we used $\|c\|_1^w\leq 12 $ as in case $(a)$, Eq. \eqref{eq_SDB_1} and $\l \leq 1/2^{8/5}$ so that $(1-\l^{5/8})^{-1}\leq 2$: this implies that $\mf{c}'_1 \geq 2^{8/5}.$

\paragraph{Diagram $(n)$.} The curly line is again associated to a delta function, which forces us to take the $L^{\infty}$ norm of the fermion propagator. The bound is:
\[\|\val(n)\|_1^w\leq \sum_{n\ge2}2^{-N}\|g^{(h,N]}\|_{\infty} \|W^{(h)|1,1}_{\eset;\eset;n}\|_1^w\le \sum_{n\ge2}2K C_0^2\l^{\frac{5}{8}n} 2^{\th(h-N)}2^{-h}\le 4K  C_0^2\l^{5/4}2^{-\vth h},\]
where we used Eq.  \eqref{eq_SDB_3} and the same hypothesis on $\l$ of the previous item.

\paragraph{Improvement for $(h)$.}
Finally, we return to diagram $(h)$.
The way to overcome the apparent divergence is to extract a second line, i.e. to perform an expansion for the kernel $W^{(h)|1,1}_{n_1;n_2;\eset}$: this allows to solve the apparent logarithmic divergence with an expansion in terms of a finite number of extra diagrams, each having good dimensional estimates. In analogy with Eq. \eqref{eq_15}, it is possible to obtain the following identity, proved in Appendix \ref{app_identities}:
\begin{equation}
\label{eq_18}
\begin{split}
&W^{(h)|1,1}_{n_1;n_2;\eset}(\eta,\eta';b_1;b_2)=\\
&= \frac{n_2^2}{(n_2-1)^2}\mu_{N,\l} \int db \l g^A_{b_2,b}  2 W^{(h)|1,1}_{(n_1,n_2-1);\eset;\eset}(\eta,\eta';b_1,b)+\\
&-  \sum_{\substack{n'_1\ge1, n'_2\ge2\\ n'_1+n'_2=n_2}} 2^{-N} \mc{N}_{n'_1,n'_2}  \int db'_1 db'_2 \l g^A_{b_2,b'_1} \d(b_2-b'_2) \left[2 W^{(h)|1,1}_{(n_1,n_1');n_2';\eset}(\eta,\eta';b_1,b'_1;b'_2)+ \right.\\
&\left.- 2 W^{(h)|1,1}_{(n_1,n_1');\eset;\eset}(\eta,\eta';b_1,b'_1) W^{(h)|0,0}_{\eset;n_2';\eset}(b'_2) - 2 W^{(h)|0,0}_{(n_1,n_1');\eset;\eset}(b_1,b'_1) W^{(h)|1,1}_{\eset;n_2';\eset}(\eta,\eta';b'_2)+ \right.\\
&\left.- W^{(h)|1,1}_{n'_1;\eset;\eset}(\eta,\eta';b'_1) W^{(h)|0,0}_{n_1;n_2';\eset}(b_1;b'_2)- W^{(h)|0,0}_{n'_1;\eset;\eset}(b'_1) W^{(h)|1,1}_{n_1;n_2';\eset}(\eta,\eta';b_1;b'_2)\right],
\end{split}\end{equation}

which admits the graphical representation in Fig. \ref{fig_self_energy_2}. 

\begin{figure}[h]
\centering
\includegraphics[scale=0.8]{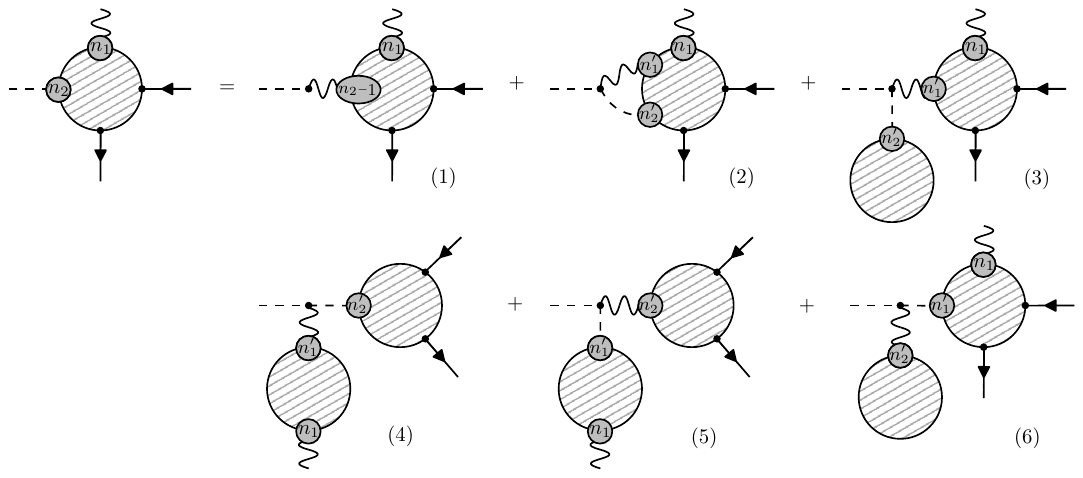}
\caption{Graphical representation of Eq. \eqref{eq_18}.}
\label{fig_self_energy_2}
\end{figure}

Let us denote by $(h1),(h2),\dots, (h6)$  the graphs obtained from $(h)$ (Fig. \ref{fig_self_energy_1}) by replacing $W^{(h)|1,1}_{n_1;n_2;\eset}$ with each  of the terms in its expansion $(1),\dots,(6)$ (as in Fig. \ref{fig_self_energy_2}). Similarly we denote by $\val(h1),\dots,\val(h6)$ their analytical value, as in the notation below Eq. \eqref{eq_A4}. 
\begin{figure}
    \centering
{\includegraphics[scale=0.85]{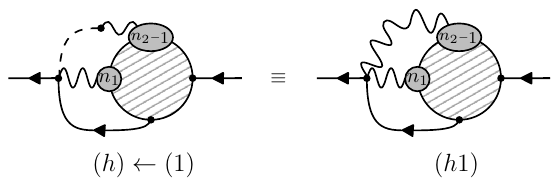}}
\caption{Diagram $(h1)$ obtained by plugging $(1)$ in $(h)$ of Fig. \ref{fig_self_energy_1}.}    \label{fig:placeholder}
\end{figure}
\paragraph{Diagram $(h1)$.}
Explicitly,
\begin{equation}
\begin{split}&
-\val(h1)=\\ & \sum_{\substack{n_1\geq 1\\ n_2\geq 2}}  \frac{2^{-N+1}\mu_{N,\l} n_2^2\, \mathcal{N}_{n_1,n_2}}{(n_1+n_2)^2(n_2-1)^2} \int db_1 db d \tilde b d \eta d\eta' \l^2 g^A_{b_1,b} g^A_{b_1,\tilde b} c_{b_1}(\eta_1,\eta)  g^{(h,N]}_{\eta_1,\eta'} W^{(h)|(1,1)}_{(n_1,n_2-1);\emptyset;\emptyset} (\eta',\eta_2;b;b'), \nonumber 
\end{split}
\end{equation}
which can be bounded by taking the $L^1$ norm of the fermionic propagator and take the $L^{\infty}$ norm of the two boson propagators:
\[\begin{split}
\|\val(h1)\|_1^w & \leq \sum_{\substack{n_1\ge1\\n_2\ge2}} \frac{2^{-N+1}|\mu_{N,\l}| n_2^2\, \mc{N}_{n_1,n_2} }{(n_1+n_2)^2 (n_2-1)^2}  \|c\|_1^w \l^2 \|g^A\|_{\infty}^2\|g^{(h,N]}\|_1^w \|W^{(h)|1,1}_{(n_1,n_2-1);\eset;\eset}\|_1^w\le\\
&\leq 24 K'^2 \tilde K C_0^3\l^2 2^{-2h}N^22^{-N} \sum_{n_1,n_2\ge1} \frac{1}{n_1^2n_2^2}\l^{\frac{n_1+n_2-2}{2}}  \le C_0^3\mf{c}_{1,4}\l^2 2^{-\theta N} 2^{-2h},
\end{split}\]
where $\mf{c}_{1,4}=(4/3)\pi^4 K'^2 \tilde K $ and the constants are the same appearing in the previous items.

\paragraph{Diagrams $(h2)-(h6)$.}
Observe that each of the remaining terms in the r.h.s. of Eq. \eqref{eq_18} have a dimensional factor $2^{-N}$, which graphically corresponds to the leftmost node with two dashed lines and a wiggly line in diagrams $(2)-(6)$  in Fig. \ref{fig_self_energy_2}.
In this cases it is enough to note that each of such diagrams $(2)-(6)$ have good bounds for their analytical values:
\begin{equation}
\label{eq_25}
\begin{split}
& \big\|\val(2)_{n_1,n_2}\big\|_1^w\le \sum_{\substack{n_1'\ge1, n_2'\ge2\\ n_1'+n_2'=n_2}} 2^{-N}\mc{N}_{n_1',n_2'} \l \|g^A\|_{\infty} \big\|W^{(h)|1,1}_{(n_1,n_1');n_2';\eset}\big\|_1^w \le \\ & \sum_{\substack{n_1'\ge1, n_2'\ge2\\ n_1'+n_2'=n_2}} K'N2^{-N}\l \frac{n_2^2}{n_1'^2n_2'^2}  \l^{\frac{n_1+n_1'+n_2'-3}{2}} \big\|W^{(h)|1,1}_{(n_1,n_1');n_2';\eset}\big\|_1^w \le C_0^4(4\pi^2/3) K'2^{-N-2h}N \l^{\frac{n_1+n_2-1}{2}},
\end{split}
\end{equation}

where we  used Eq. \eqref{eq_SDB_3} and the fact that $\sum_{\ell=1}^{n-2} \frac{n^2}{\ell^2(n-\ell)^2}\leq 4\pi^2/3$. It is easy to check that also the other graphs  $(3)-(6)$, in the same way, are bounded as \begin{equation}
    \sup_{j=3}^6\|\val(j)_{n_1,n_2}\|_1^w \leq C_0^5 (4\pi^2/3) K' N2^{-N} \l^{\frac{n_1+n_2-1}{2}}.  
\end{equation} Hence when inserted in graph $(h)$ of Fig. \ref{fig_self_energy_1}, they produce a bound:
\begin{equation}
    \begin{split}
\sup_{j=2}^6\|\val(hj)\|_1^w & \leq \sum_{\substack{n_1\geq 1\\ n_2 \geq 2}} \frac{2^{-N} \mc{N}_{n_1,n_2}}{(n_1+n_2)^2} \l \|g^A\|_\infty \|g^{(h,N]}\|_\infty \sup_{j=2}^6 \|\val(j)_{n_1,n_2}\|_1^w  \\ & \leq  C_0^5 \mf{c}_{1,5}  \l^{3/2} N2^{-N} 
    \end{split}
\end{equation}
with $\mf{c}_{1,5}=K'^2K (\pi^2/3)^3$. \medskip Collecting the bounds for diagrams $(a)$ to $(o)$ in Fig. \ref{fig_self_energy_1} one realizes that all the graphs except $(d)$ admit a bound with at a factor $\l^{1+\eta}$ with $\eta \geq 1/4$. The extra small factor $\l^{1/4}$ 
can be used to reduce, in each diagram but $(d)$, the overall constant to $\mf{c}_{1,2}/14$ at the price of choosing $\mf{c}_1'$ depending on $\mf{c}_{1,1},\mf{c}_{1,3},\mf{c}_{1,4}$ and such that $\mf{c}'_1\geq \max\{K',2^{8/5}\}$ (see diagrams $(b)$ and $(m)$) . Summing all the contributions we obtain that if $\l\le (\mf{c}'_1 C_0^2 R)^{-1}$   
\begin{equation}
\label{eq_19}
\|W^{(h)|1,1}_{\eset;\eset;\eset}\|_1^w\le 2\mf{c}_{1,2} C_0^5 \l 2^{-\vth h},
\end{equation}
thus proving Eq. \eqref{eq:goal_bound_psiquadro} with $\mf{c}_1 \geq 2\mf{c}_{1,2}$, with $\mf{c}_{1,2}$ appearing in diagram $(d).$ 

\subsection{The lowest-degree vertex function}
\label{ssection_improved_2}

Now we focus on the kernel $G^{(1)}\psi^2$. Recalling the definition of the bare vertex kernel $c$,  in Eq. \eqref{def_bare_vertex},  the goal is to prove that  
\begin{equation}
\label{eq:18}
\big\|W^{(h)|1,1}_{1;\eset;\eset}-c\big\|_1^w\le C_0^{6} \mf{c}_2\l,
\end{equation}

with $C_0$ as in Proposition \ref{prop:SDB}, $\l\le (\mf{c}'_2C_0^2R)^{-1}$, and suitable constants $\mf{c}_2,\mf{c}'_2\ge1$ independent on $C_0,R$.

Again, the strategy is to derive a convenient identity for $W^{(h)|1,1}_{1;\eset;\eset}$. In analogy with Lemma \ref{lemma_id_self_energy}, by applying $\d/\d\psi^-_{\eta_2}$ to Eq. \eqref{eq_id_4},   one can prove that 
\begin{equation} 
\label{eq:14}
W^{(h)|1,1}_{1;\eset;\eset}(\eta_1,\eta_2;b)= \mu_{N,\l} c_b(\eta_1,\eta_2)+ \mf{D} W^{(h)|1,1}_{1;\eset;\eset}(\eta_1,\eta_2;b), 
\end{equation}

with $\mu_{N,\l}$ as below Eq. \eqref{eq_15} and where $\mf{D}W^{(h)|1,1}_{1;\eset;\eset}$ is a sum of terms, each interpretable as  the analytical value $\val(\sharp)(\eta_1,\eta_2;b)$ of a graphical diagram $(\sharp)$ with the same understanding of Fig. \ref{fig_self_energy_1} and the notation below it. Also in this case, using Remarks \ref{rmk_2} and \ref{rmk_1}, one can obtain a bound for  $\mf{D}W^{(h)|1,1}_{1;\eset;\eset}$ by studying a \emph{subclass} of terms contributing to it: by possibly shrinking $\l,$ i.e. gauging the value of $\mf{c}'_2$ above, it is sufficient to bound the diagrams labeled by $(\sharp)=(a),\dots,(k)$ appearing in Fig. \ref{fig_kernel_vertex}, which, except $(b)$\footnote{Diagram $(b)$  survives this procedure because, in order to recover the bound in Eq. \eqref{eq:18}, one does not have any extra power of $\l$ to apply Remark \ref{rmk_2} as it follows  from the bound of $\|W^{(h)|1,1}_{\eset;\eset;\eset}\|_1^w$ being of order $\l$. }, are obtained after the removal of a dashed tadpole or a dressed fermionic insertion from the general graph in $\mf{D}W^{(h)|1,1}_{1;\eset;\eset}$.

\begin{figure}[ht]
    \centering
\includegraphics[width=0.99\textwidth]{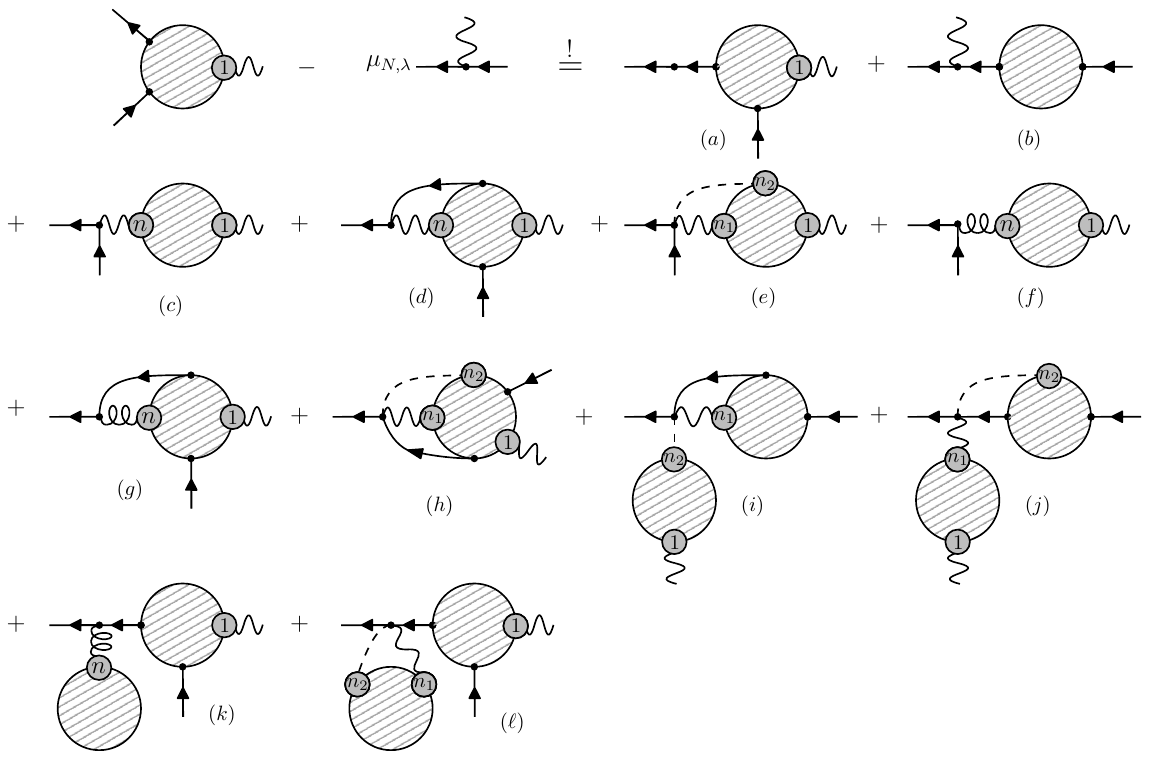}
    \caption{The diagram expansion for $\mf{D} W^{(h)|1,1}_{1;\eset;\eset}(\eta_1,\eta_2;b)$. The symbol $\stackrel{!}{=}$ reminds that the identity holds up to diagrams that can be bounded in terms of the ones listed, i.e.,  only if one considers all the diagrams before the reduction procedure summarized in Remarks \ref{rmk_2} and \ref{rmk_1} by deleting dashed tadpoles and replacing dressed fermionic lines with bare ones.}
    \label{fig_kernel_vertex}
\end{figure}

The estimates follow the same lines as Subsection \ref{ssection_improved_1} and will not be repeated full detail: we only discuss the main novelties.  In order to deduce Eq. \eqref{eq:18} from the diagrams in Fig. \ref{fig_kernel_vertex}, we start by observing the following.

\begin{remark}[Bubble insertions]
Differently from the analysis of the previous section, now some of the graphs in Fig. \ref{fig_kernel_vertex} involve also the \virg{bubble kernel} $W^{(h)|0,0}_{(1,1);\eset;\eset}$ which, by direct inspection, appears always convoluted with a bosonic propagator, i.e. in the form $g^A*W^{(h)|0,0}_{(1,1);\eset;\eset}$. This implies that a bound on $\|g^A*W^{(h)|0,0}_{(1,1);\eset;\eset}\|_1^w$, as provided in Eq.  \eqref{eq_SDB_3}, is sufficient\footnote{By bounding  $g^A*W^{(h)|0,0}_{(1,1);\eset;\eset}$ as $\|g^A\|_1^w\|W^{(h)|0,0}_{(1,1);\eset;\eset}\|_1^w$ one obtains, apriori, a logarithmic divergent factor $|h-N|$, as can be checked at first order in $\l$ for the norm of the bubble diagram.} for a bound of the whole graph: for concreteness, the simplest example is graph $(c)$ which reads
\[\begin{split}
& \text{Val}(c)(\eta_1,\eta_2;b)= \mu_{N,\l}\sum_{n\ge1} \frac{\l}{n^2} \int_{\L_B^2}db'db''  c_{b'}(\eta_1,\eta_2) g^A_{b',b''} W^{(h)|0,0}_{(n,1);\eset;\eset}(b'',b),
\end{split}\]

so that, using the bounds for the kernels $W^{(h)|0,0}_{(n,n');\eset;\eset}$ (Eqs. \eqref{eq_SDB_2}-\eqref{eq_SDB_3}) and the fact that according to Eq. \eqref{eq_bound_boson}, $\|g^A\|_1^w\le \frac{32\pi}{\kappa^2}$, we find: 
\[\begin{split}
\big\| \val (c)\big\|_1^w &\le \|c\|_1^w\l \Big( \big\|g^A*W^{(h)|0,0}_{(1,1);\eset;\eset}\big\|_1^w + \sum_{n\ge2} \frac{1}{n^2} \|g^A\|_1^w \big\| W^{(h)|0,0}_{(n,1);\eset;\eset}\big\|_1^w  \Big)\\
&\le 12\l \Big( C_0+ \sum_{n\ge2}\tfrac{1}{n^2} \tfrac{32\pi}{\kappa^2} C_0^3 \l^{\frac{n-1}{2}} \Big)\le \mf{c}_{2,1}C_0^3 \l,
\end{split}\]

for some $\mf{c}_{2,1}$ independent of $C_0,R$.
\end{remark}
Let us stress that, in order to close the inductive procedure, in bounding the analytical value of diagram $(b)$ one has to really use the just obtained bound Eq. \eqref{eq_19} for the kernel $W^{(h)|1,1}_{\eset;\eset;\eset}$ instead of using the inductive hypothesis in Eq. \eqref{eq_SDB_1}.

Also, a direct bound on graph $(h)$ shows a logarithmic divergence, in complete analogy with graph $(h)$ of the previous subsection (see Fig. \ref{fig_self_energy_1}). To solve such divergence, one has to extract another wiggly line, by expanding the kernel $W^{(h)|1,1}_{(1,n_1);n_2;\eset}$ that appears in graph $(h)$, following the same steps as in Eq. \eqref{eq_18} and below. 

Finally, in analogy with Remark \ref{rmk:o(N)diagrams},
a careful analysis shows that the graphs of type $\sharp\in\{a,e,f,g,h,i,k,\ell\}$ actually admit a bound which is exponentially suppressed in $N$, namely $\|\text{Val}(\sharp)\|_1^w\le \mf{c}_{2,1}C_0^6\l N^22^{-\th N}$, for $\l\le (\mf{c}'_{2,1}C_0^2R)^{-1}$, with some $\mf{c}_{2,2},\mf{c}'_{2,1}\ge1$.

Collecting all the bounds one finds that there exists $\mf{c}_{2,3},\mf{c}'_{2}\geq1$ independent of $C_0,R$ such that for $\l \leq (\mf{c}'_{2}C_0^2 R)^{-1}$,
\[\big\| \mf{D} W^{(h)|1,1}_{1;\eset;\eset} \big\|_1^w\le \mf{c}_{2,3}C_0^6\l, \]

so that from Eq.  \eqref{eq:14}, 
\begin{equation}
\label{eq_28}
\begin{split}
\big\|W^{(h)|1,1}_{1;\eset;\eset}-  c\big\|_1^w&\le  \big\|\mf{D} W^{(h)|1,1}_{1;\eset;\eset} \big\|_1^w  + \left(1- \mu_{N,\l}\right)\|c\|_1^w\\
 &\le \big\|\mf{D}W^{(h)|1,1}_{1;\eset;\eset} \big\|_1^w + \l2^{-2N}\|g^A\|_{\infty}\|c\|_1^w \le \mf{c}_2\l C_0^6,
\end{split}
\end{equation}
for some $\mf{c}_2 \geq \mf{c}_{2,3}.$

\subsection{The lowest-degree polarization bubble}
\label{ssection_improved_3}
We now analyze the kernel $W^{(h)|0,0}_{(1,1);\eset;\eset}$ associated with the monomial $(G^{(1)})^2$, also called the \it{polarization bubble}. We will show that
\begin{equation}
\label{eq:13}
\big\|g^A*(W^{(h)|0,0}_{(1,1);\eset;\eset} - \Pi^{(h,N]})\big\|_1^w\le \mf{c}_3 C_0^8\l,
\end{equation}

with $\Pi^{(h,N]}$ the \it{non-interacting bubble} defined in Eq. \eqref{eq_29a}, $C_0$ as in Proposition \ref{prop:SDB}, $\l\le (\mf{c}'_3C_0^2 R)^{-1}$, and suitable constants $\mf{c}_3,\mf{c}'_3\ge1$ independent on $C_0,R$. In complete analogy with Lemma \ref{lemma_id_self_energy}, by applying $\d/\d G^{(1)}_{b'}$ to Eq.  \eqref{eq_id_2}, for $n=1$, one finds the following identity:
\begin{equation}
\label{eq_30}
\hspace{-5pt} W^{(h)|0,0}_{(1,1);\eset;\eset}(b,b') = \frac{1}{2} \int_{\L_F^2} d\eta_1 d\eta_2 \mu_{N,\l} c_b(\eta_1,\eta_2) \int_{\L_F^2} d\eta_1' d\eta_2' g^{(h,N]}_{\eta_2',\eta_1} g^{(h,N]}_{\eta_2,\eta_1'} W^{(h)|1,1}_{1;\eset;\eset}(\eta_1',\eta_2';b'),
\end{equation}

which can be graphically represented as in Fig. \ref{fig:bolla_1}.
\begin{figure}[ht]
    \centering
    \includegraphics[width=0.42\linewidth]{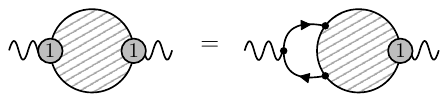}
    \caption{Expansion of the polarization bubble diagram.}
    \label{fig:bolla_1}
\end{figure}

\medskip

Combining Eqs.  \eqref{eq:14},\eqref{eq_30} we find that
\begin{equation}
\begin{split}
&W^{(h)|0,0}_{(1,1);\eset;\eset}(b,b')=\mu_{N,\l} \Pi^{(h,N]}(b,b')+  \mf{D}W^{(h)|0,0}_{(1,1);\eset;\eset}(b,b')
\end{split}
\end{equation}

where
\begin{equation}
\label{eq_31}
\begin{split}
&\mf{D}W^{(h)|0,0}_{(1,1);\eset;\eset}(b,b')=\\
&\frac{1}{2} \int_{\L_F^2} d\eta_1 d\eta_2 \mu_{N,\l} c_b(\eta_1,\eta_2) \int_{\L_F^2} d\eta_1' d\eta_2' g^{(h,N]}_{\eta_2',\eta_1} g^{(h,N]}_{\eta_2,\eta_1'} \; \mf{D}W^{(h)|1,1}_{1;\eset;\eset}(\eta_1',\eta_2';b').
\end{split}\end{equation}

In this way we have:
\begin{equation}
\begin{split}
&\big\|g^A*(W^{(h)|0,0}_{(1,1);\eset;\eset} - \Pi^{(h,N]})\big\|_1^w \le |\mu_{N,\l}-1|\big\|g^A*\Pi^{(h,N]}\big\|_1^w + \big\|g^A* \mf{D}W^{(h)|0,0}_{(1,1);\eset;\eset}\big\|_1^w.
\end{split}
\end{equation}

Using the finiteness of the non-interacting bubble, see Corollary \ref{cor_bubble} in Appendix \ref{app_bubble}, namely $\|g^A*\Pi^{(h,N]}\|_1^w\le C'_{\Pi}$ for some constant $C'_{\Pi}\ge1$, one finds that
\begin{equation}
\label{eq_32}
\begin{split}
\big\|g^A*(W^{(h)|0,0}_{(1,1);\eset;\eset} - \Pi^{(h,N]})\big\|_1^w &\le C'_{\Pi}K' \l N2^{-2N}+ \big\|g^A* \mf{D}W^{(h)|0,0}_{(1,1);\eset;\eset}\big\|_1^w,
\end{split}
\end{equation}

where we also used that $|\mu_{N,\l}-1|\le K'N\l 2^{-2N-1}$ (see the analysis of diagram (a) in Subsection \ref{ssection_improved_1}). 

Now the idea is to control the quantity $\big\|g^A* \mf{D}W^{(h)|0,0}_{(1,1);\eset;\eset}\big\|_1^w$ by expanding the kernel $\mf{D}W^{(h)|1,1}_{1;\eset;\eset}$ in the r.h.s. of Eq.  \eqref{eq_31} as we did in Subsection \ref{ssection_improved_2}. The outcome can be expressed as a sum over graphs, obtained by replacing the blob in the r.h.s. of Fig. \ref{fig:bolla_1},
by the sum over the diagrams in Fig. \ref{fig_kernel_vertex}. For instance, recalling that the graphs of type $\sharp\in\{a,e,f,g,h,i,k,\ell\}$ are bounded as $\|\text{Val}(\sharp)\|_1^w\le \mf{c}_{2,1}C_0^6\l N^22^{-\th N}$ for $\l\le (\mf{c}'_{2,1}C_0^2R)^{-1}$, their contribution to $\big\|\mf{D}W^{(h)|0,0}_{(1,1);\eset;\eset}\big\|_1^w$ can be easily bounded as follows:
\begin{equation}
\label{eq:15}
\begin{split}
& \frac{1}{L^2} \int_{\L_B} dbdb' e^{\frac{\kappa}{2}\sqrt{\d^D(b,b')}} \int_{\L_F^2} d\eta_1 d\eta_2 d\eta_1' d\eta_2'  \mu_{N,\l} |c_b(\eta_1,\eta_2)|\times\\ 
&\times \sum_{k,j=h+1}^N \big|g^{(k)}_{\eta_2',\eta_1}\big| \big|g^{(j)}_{\eta_2,\eta_1'}\big| \big|\text{Val}(\sharp)(\eta_1',\eta_2';b')\big|\le \\
& \sum_{h<k\le j\le N} \|c\|_1^w  \big\|g^{(k)}\big\|_{\infty} \big\|g^{(j)}\big\|_1^w \big\|\text{Val}(\sharp)\big\|_1^w + \sum_{h<j\le k\le N} \|c\|_1^w  \big\|g^{(j)}\big\|_{\infty} \big\|g^{(k)}\big\|_1^w \big\|\text{Val}(\sharp)\big\|_1^w\ \\
&\le 2 \sum_{h<k\le j\le N} \big\|g^{(k)}\big\|_{\infty} \big\|g^{(j)}\big\|_1^w \mf{c}_{2,1} C_0^6 \l N^22^{-\th N} \le  \sum_{h<k\le j\le N} \mf{c}_{3,1}C_0^6 2^{k-j} \l N^2 2^{-\th N}\le\\
&\leq 2\mf{c}_{3,1}C_0^6\l N^3 2^{-\th N},
\end{split}
\end{equation}

for a suitable $\mf{c}_{3,1}\ge1$. The trick of expanding the product $g^{(h,N]} g^{(h,N]}$ in terms of single-scale contributions, namely $\sum_{h<k,j\le N} g^{(k)} g^{(j)}$, and then to take the $L^{\infty}$ (resp. $L^1$) norm of the propagator at the lowest (resp. highest) scale will be recurrently used throughout this paragraph.

\medskip

We are left with bounding the contributions to $\big\|g^A*\mf{D}W^{(h)|0,0}_{(1,1);\eset;\eset}\big\|_1^w$ in the r.h.s. of Eq.  \eqref{eq_32} which derive from diagrams $(b),(c),(d),(j)$ of Fig. \ref{fig_kernel_vertex}.  The \virg{composite} diagrams, obtained by combining $(b),(c),(d),(j)$ of Fig. \ref{fig_kernel_vertex} with the r.h.s of Fig. \ref{fig:bolla_1}, are reported in Fig. \ref{fig_kernel_0200b}. Note that differently from the previous section, where in Fig. \ref{fig_kernel_vertex} we could omit (according to Remark \ref{rmk_2}) the diagram $(\tilde c)$, obtained by attaching a dressed fermionic line to the bare vertex\footnote{Exactly as $(b)$ is obtained by $(a)$ in Fig. \ref{fig_self_energy_1}.} in diagram $(c)$, now one has to keep track of such diagram in the bound for $\big\|g^A* \mf{D}W^{(h)|0,0}_{(1,1);\eset;\eset}\big\|_1^w$. The reason is that a bound for $\|\val (\tilde c)\|_1^w$ does not imply a finite bound on $\|\val(\tilde c')\|_1^w$ of Fig. \ref{fig_kernel_0200b} but one instead has to exploit the topological structure of the resulting composite diagram, as explained shortly.

\begin{figure}[h]
    \centering
\includegraphics[width=0.83\textwidth]{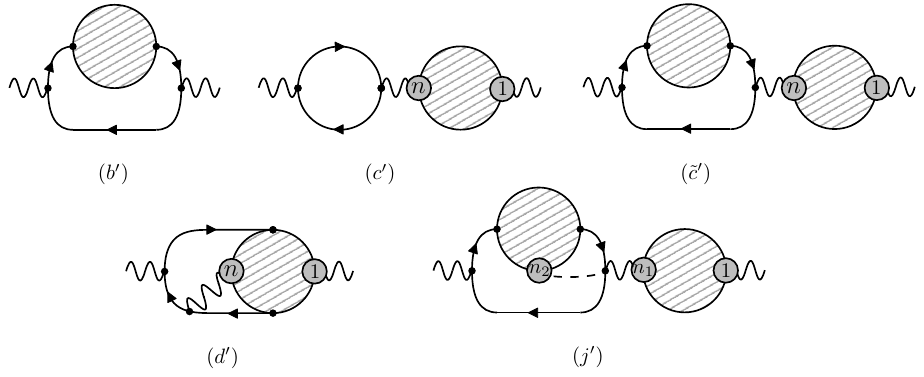}
    \caption{Graphs with non-trivial bounds (i.e. after the reduction of Remarks \ref{rmk_2}, \ref{rmk_1}) obtained by combining $(b),(c),(\tilde c),(d),(j)$ in Fig. \ref{fig_kernel_vertex} with Fig. \ref{fig_kernel_0200b} and contributing to $\mf{D}W^{(h)|0,0}_{(1,1);\eset;\eset}(b,b')$.
    }
    \label{fig_kernel_0200b}
\end{figure}

\paragraph{Diagram $(b')$.} Similarly to Eq.  \eqref{eq:15}, we write each of the three propagators $g^{(h,N]}$ as $\sum_{h'=h+1}^N g^{(h')}$, and for any triple $g^{(h_1)}, g^{(h_2)}, g^{(h_3)}$ we take the $L^{\infty}$ norm of the propagator at lowest scale, and $L^1$ norm for the other two. In this way the bound for diagram $(b')$ becomes:
\[\begin{split}
&\big\|g^A*\text{Val}(b')\big\|_1^w\le\\
&\sum_{h< h_1\le h_2, h_3\le N} \big(\|c\|_1^w\big)^2 \|g^A\|_1^w \big\|g^{(h_1)}\big\|_{\infty} \big\|g^{(h_2)}\big\|_1^w \big\|g^{(h_3)}\big\|_1^w \big\|W^{(h)|1,1}_{\eset;\eset;\eset}\big\|_1^w\le\\
& \sum_{h< h_1\le h_2, h_3\le N} \mf{c}_{3,2} 2^{h_1-h_2-h_3} \big\|W^{(h)|1,1}_{\eset;\eset;\eset}\big\|_1^w \le 4\mf{c}_{3,2} 2^{-h} \|W^{(h)|1,1}_{\eset;\eset;\eset}\|_1^w\le 4\mf{c}_{3,2}\mf{c}_1 C_0^5 \l 2^{-(1+\vth)h},
\end{split}\]

with a suitable $\mf{c}_{3,2}\ge1$ and for $\l\le (\mf{c}'_1C_0^2R)^{-1}$, having used the improved bound for $W^{(h)|1,1}_{\eset;\eset;\eset}$, Eq. \eqref{eq:goal_bound_psiquadro}.

\paragraph{Diagram $(c')$.} Explicitly we have that
\[\begin{split}
\val (c')(b_1,b_2)=&\sum_{n\ge1}   \frac{\l}{n^2} \mu_{N,\l}^2 \int_{\L_B} db db'\, \Pi^{(h,N]}(b_1,b)g^A_{b,b'}W^{(h)|0,0}_{(n,1);\eset;\eset}(b',b_2),
\end{split}\]

and exploiting the finiteness of the non-interacting bubble, namely $\|g^A*\Pi^{(h,N]}\|_1^w\le C'_{\Pi}$ (cf. Corollary \ref{cor_bubble}), we find that
\[\begin{split}
\big\|g^A*\val (c')\big\|_1^w&\le \l  \big\| g^A*\Pi^{(h,N]}\big\|_1^w \Big( \big\|g^A* W^{(h)|0,0}_{(1,1);\eset;\eset}\big\|_1^w + \sum_{n\ge2}   \tfrac{1}{n^2}  \|g^A\|_1^w \big\| W^{(h)|0,0}_{(n,1);\eset;\eset}\big\|_1^w \Big)\\
& \le \l C'_{\Pi} \Big(C_0+ \sum_{n\ge1}\tfrac{1}{n^2} \tfrac{32\pi}{\kappa^2} C_0^3 \l^{\frac{n-1}{2}}  \Big)\le \l \mf{c}_{3,3} C_0^3,
\end{split}\]

with some constant $\mf{c}_{3,3}\ge1$.

\paragraph{Diagram $(\tilde{c}')$.} Analogously to diagram $(b')$, we use the trick of splitting the propagator $g^{(h,N]}$ into its single-scale counterparts: 
\begin{equation}
\label{eq_33}
\begin{split}
\text{Val}(\tilde{c}')(b_1,b_2)=
&\sum_{n\ge1}\sum_{h<h_1,h_2,h_3\le N} \mu_{N,\l} \int_{\L_F^6} d\eta_1d\eta_2 d\eta_1' d\eta_2' d\eta_1''d\eta_2'' \int_{\L_B^2} db_3 db_4  \times \\
&\times c_{b_1}(\eta_1,\eta_2) c_{b_3}(\eta_1'',\eta_2'') g^{(h_1)}_{\eta_2',\eta_1} g^{(h_2)}_{\eta_2,\eta_1''} g^{(h_3)}_{\eta_2'',\eta_1'} g^A_{b_3,b_4} W^{(h)|1,1}_{(1,n);\eset;\eset}(b_2,b_4).
\end{split}
\end{equation}

In the r.h.s. of Eq. \eqref{eq_33} we must distinguish the three possible cases: $h_1\le h_2,h_3$ or $h_2\le h_1,h_3$ or $h_3\le h_1,h_2$. For instance, if $h_1\le h_2,h_3$, we can take the $L^{\infty}$ norm of $g^{(h_1)}$, the $L^1$ norm of $g^{(h_2)}$ and, similarly to diagram $(d)$ of Subsection \ref{ssection_improved_1}, use the H{\"o}lder's inequality for the product $g^A\tilde{g}^{(h_3)}$ with weights $p=\frac{2}{1-\vth}$ and $p'=\frac{2}{1+\vth}$:
\[\begin{split}
&\big\|g^A*\text{Val}(\tilde{c}'|_{h_1\le h_2,h_3})\big\|_1^w\le\\
& \sum_{h<h_1\le h_2,h_3\le N} \sum_{n\ge1}\frac{\l}{n^2} \|g^A\|_1^w \|g^{(h_1)}\|_{\infty} \|g^{(h_2)}\|_{1}^w \|g^A\|_p^w \|\tilde{g}^{(h_3)}\|_{p'} \big\|W^{(h)|1,1}_{(1,n);\eset;\eset}\big\|_1^w\le\\
& \sum_{h<h_1\le h_2,h_3\le N}\sum_{n\ge 1}\frac{1}{n^2} \mf{c}_{3,4} C_0^3 \l \l^{\frac{n-1}{2}} 2^{h_1-h_2-\vth h_3} 2^{-h} \le \frac{\pi^2\mf{c}_{3,4}}{3(1-2^{-\vth})^2} C_0^3 \l 2^{-(1+\vth)h},
\end{split}\]

where $\tilde{g}^{(h_3)}_{b,\eta'}\equiv \int d\eta d\eta_1 e^{\frac{\kappa}{2}\sqrt{\delta^D(b,\eta,\eta_1)}}|c_b(\eta_1,\eta)| |g^{(h_3)}_{\eta,\eta'}|$ and $\mf{c}_{3,4}$ is a suitable constant greater than 1. The regimes $h_2\le h_1,h_3$ and $h_3\le h_1,h_2$ can be worked out in the same way, taking the $L^{\infty}$ norm of the propagator at the lowest scale, and they yield the same bound. 

\paragraph{Diagram $(d')$.} Here again we exploit the finiteness of the non-interacting bubble, namely $\|g^A*\Pi^{(h,N]}\|_1^w\le C'_{\Pi}$ (cf. Corollary \ref{cor_bubble}), so that 
\[\begin{split}
&\big\|g^A* \text{Val}(d')\big\|_1^w\le\\
&\sum_{n\ge1} \frac{\l}{n^2}  \int_{\L_B^3} \frac{db_1db_2db_3}{L^2} e^{\frac{\kappa}{2}\sqrt{\d^D(b_1,b_2)}+ \frac{\kappa}{2}\sqrt{\d^D(b_2,b_3)}} \Big|\big(g^A*\Pi^{(h,N]}\big)(b_1,b_2) \big(g^A*W^{(h)|0,0}_{n;\eset;\eset}\big)(b_2,b_3)\Big|\le\\
&\sum_{n\ge1} \frac{\l}{n^2} \big\|g^A*\Pi^{(h,N]}\big\|_1^w \big\|g^A*W^{(h)|0,0}_{(n,1);\eset;\eset}\big\|_1^w\le C'_{\Pi}\l \|g^A* W^{(h)|0,0}_{(n,1);\eset;\eset}\|_1^w + \\
&+C'_{\Pi}\l\sum_{n\ge2}\frac{1}{n^2} \|g^A\|_1^w \|W^{(h)|0,0}_{(n,1);\eset;\eset}\|_1^w \le \l C'_{\Pi} C_0 + \sum_{n\ge2}\frac{1}{n^2} C'_{\Pi} \|g^A\|_1^w C_0^3\l^{\frac{n+1}{2}}\le \mf{c}_{3,5}C_0^3\l,
\end{split}\]

for a suitable $\mf{c}_{3,5}\ge1$.

\paragraph{Diagram $(d'')$.} The bound for this diagram can be deduced from the bound for diagram $(b')$. Indeed:
\[\begin{split}
&\text{Val}(d'')(b_1,b_2)= \sum_{n\ge1} \frac{\l}{n^2} \int_{\L_B^2} db_3db_4 \big(\text{Val}(b')\big)(b_1,b_3) g^A_{b_3,b_4} W^{(h)|0,0}_{(n,1);\eset;\eset}(b_4,b_2)
\end{split}\]

from which: \[
\begin{split}
\big\|g^A*\text{Val}(d'')\big\|_1^w &\le\sum_{n\ge1} \frac{\l}{n^2}  \big\|g^A*\text{Val}(b')\big\|_1^w \big\| g^A* W^{(h)|0,0}_{(n,1);\eset;\eset}\big\|_1^w\le\\
&\sum_{n\ge1} \frac{\l}{n^2} 4\mf{c}_{3,2} \mf{c}_1 C_0^5\l 2^{-(1+\vth)h} C_0^3 \l^{\frac{n-1}{2}}\le \l \mf{c}_{3,6}C_0^8  2^{-(1+\vth)h},  
\end{split}
\]

with $\mf{c}_{3,6}=\frac{2\pi^2}{3} \mf{c}_{3,2}\mf{c}_1$.

\paragraph{Diagram $(j')$.} Similarly to diagram $(h)$ of Fig. \ref{fig_self_energy_1},  the present graph shows a superficial logarithmic divergence, which can be solved by a second extraction of line. Expanding the kernel $W^{(h)|1,1}_{\eset;n_2;\eset}$ which appears in the diagram, with a formula similar to Eq. \eqref{eq_18}, we find that diagram $(j')$ can be graphically represented as a sum of two graphs (see Fig. below)

\begin{center}
\raisebox{-0.3\height}{\includegraphics[width=0.7\textwidth]{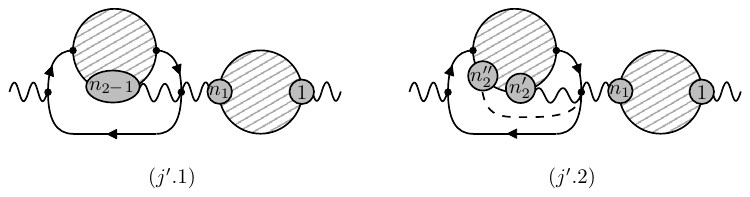}}
\end{center}

having analytical value, respectively:
\begin{equation}
\label{eq_34a}
\begin{split}
&\text{Val}(j'.1)(b,b')=\\
&\sum_{n_1\ge1,n_2\ge2} \frac{\mc{N}_{n_1,n_2}}{(n_1+n_2)^2} \mu_{N,\l}^2 2^{-N}\l^2\int_{\L_F^6} d\eta_1 d\eta_2 d\eta_1' d\eta_2'd\eta_3 d\eta_4\int_{\L_B^3} db_1 db_2 db_3 c_b(\eta_1,\eta_2)  \times\\
&\times c_{b_1}(\eta_1',\eta_2') g^A_{b_1,b_2} g^A_{b_1,b_3}  g^{(h,N]}_{\eta_2,\eta'_1} g^{(h,N]}_{\eta_4,\eta_1}   g^{(h,N]}_{\eta_2',\eta_3} W^{(h)|1,1}_{n_2-1;\eset;\eset}(\eta_3,\eta_4;b_2)   W^{(h)|0,0}_{(n_1,1);\eset;\eset}(b_3,b');
\end{split}
\end{equation}
\begin{equation}
\label{eq_34b}
\begin{split}
&\text{Val}(j'.2)(b,b')=\\
& \sum_{\substack{n_1,n_2'\ge1, n_2''\ge2}}  \frac{ \mc{N}_{n_1,n_2'+n_2''} \mc{N}_{n_2',n_2''}}{(n_1+n_2'+n_2'')^2}  \mu_{N,\l} 2^{-2N}\l^2 \int_{\L_F^6} d\eta_1 d\eta_2 d\eta_1' d\eta_2'd\eta_3 d\eta_4\times \\
&\times \int_{\L_B^4} db_1 db_2 db_3 db_4 c_b(\eta_1,\eta_2) c_{b_1}(\eta_1',\eta_2') g^A_{b_1,b_2} g^A_{b_1,b_3} g^{(h,N]}_{\eta_2,\eta'_1} g^{(h,N]}_{\eta_4,\eta_1}     g^{(h,N]}_{\eta_2',\eta_3} \d(b_1-b_4) \times\\
&\times   W^{(h)|1,1}_{n_2';n_2'';\eset}(\eta_3,\eta_4;b_2;b_4) W^{(h)|0,0}_{(n_1,1);\eset;\eset}(b_3,b').
\end{split}
\end{equation}

Both these graphs can be bounded by writing $g^{(h,N]}_{\eta_1',\eta_2}$ and $g^{(h,N]}_{\eta_1,\eta'_2}$ as the sum of their single-scale contributions, and taking the $L^{\infty}$ norm of the propagator at the lowest scale. In diagram $(j'.1)$ one exploits the extraction of the second line to take the $L^{1}$ norm of $g^{(h,N]}_{\eta_2',\eta_3}$, which gives $\mc{O}(2^{-h})$ and the $L^{\infty}$ norm of $g^A_{b_1,b_3}$, which gives $ \mc{O}(N)$, so that the finiteness of the bound follows, due to the dimensional factor $2^{-N}$ in the r.h.s. of Eq. \eqref{eq_34a}.  In diagram $(j'.2)$ the dashed line, associated with a Dirac delta, forces one to bound in $L^{\infty}$ norm both $g^{(h,N]}_{\eta'_2,\eta_3}$ and $g^A_{b_1,b_3}$, so obtaining a factor $\mc{O}(N2^N)$, which is however compensated by the dimensional factor $\mc{O}(2^{-2N})$ in the r.h.s. of Eq. \eqref{eq_34b}. Summarizing,
\[\big\|g^A*\text{Val}(g') \big\|_1^w\le \mf{c}_{3,7}  C_0^7 \l^2 2^{-N}N^2 2^{-h}, \]

with a suitable $\mf{c}_{3,7}\ge1$. Recalling Eq. \eqref{eq_32}, and collecting the bounds obtained for the various contributions to $\mf{D} W^{(h)|0,0}_{(1,1);\eset;\eset}$, the desired bound for the kernel $W^{(h)|0,0}_{(1,1);\eset;\eset}$ follows:
\begin{equation}
\label{eq:16}
\begin{split}
&\big\|g^A*\big(W^{(h)|0,0}_{(1,1);\eset;\eset} - \Pi^{(h,N]}\big)\big\|_1^w\le\\
&C'_{\Pi}K' \l N2^{-2N}+ 18 \|g^A\|_1^w \mf{c}_{3,1}C_0^6\l N^3 2^{-\th N} + 2\sum_{\sharp\in\{b',c',d',d'',g'\}} \big\|g^A*\text{Val}(\sharp)\big\|_1^w\le \mf{c}_3 C_0^8 \l, 
\end{split}
\end{equation}

for a suitable $\mf{c}_3$ bigger than all the constants $\mf{c}_{3,1},\mf{c}_{3,2},\dots$ appeared throughout the paragraph, and for $\l\le \Big(\max\big\{\mf{c}'_{1,2}, \mf{c}'_1 \big\}C_0^2R\Big)^{-1}$. The factor 2 multiplying the sum over the graphs in the r.h.s. of Eq. \eqref{eq:16} takes into account the fact that all the diagrams containing a dashed tadpole have not been considered in Fig. \ref{fig_kernel_0200b}, coherently with the discussion in Remark \ref{rmk_1} of Subsection \ref{ssection_improved_1}. 

\subsection{The effective quartic interaction}
\label{ssection_improved_4}

We would like to prove the bound: 
\begin{equation}
\label{eq:17}
\big\|W^{(h)|2,2}_{\eset;\eset;\eset} - \mc{Q}^{(N)}\big\|_1^w\le \mf{c}_4 C_0^8\l^{\frac{5}{4}},
\end{equation}

with $\mc{Q}^{(N)}$ defined in Eq.  \eqref{eq_29b}, $C_0$ as in Proposition \ref{prop:SDB}, $\l\le (\mf{c}'_4C_0^2 R)^{-1}$, and suitable constants $\mf{c}_4,\mf{c}'_4\ge1$ independent on $C_0,R$. In analogy with Lemma \ref{lemma_id_self_energy}, one can prove that

\[ W^{(h)|2,2}_{\eset;\eset;\eset}(\eta_1,\eta_2,\eta'_1,\eta'_2)= \mu_{N,\l} \mc{Q}^{(N)}(\eta_1,\eta_2,\eta'_1,\eta'_2) + \mf{D}W^{(h)|2,2}_{\eset;\eset;\eset}(\eta_1,\eta_2,\eta'_1,\eta'_2), \]

where $\mf{D}W^{(h)|2,2}_{\eset;\eset;\eset}$ can be graphically represented as a sum of graphs, whose analytical value is deduced in the same way as in Subsection \ref{ssection_improved_1}. Such graphs are collected in Fig. \ref{fig_quartic_1} where, as before (see Remarks \ref{rmk_2} and \ref{rmk_1}), we have omitted all those
that after the elimination of a \virg{dashed tadpole} or the replacement of a \virg{dressed line} by a \virg{bare line}, reduce to one of the graphs already present.

\begin{figure}[h]
    \centering
\includegraphics[width=0.99\textwidth]{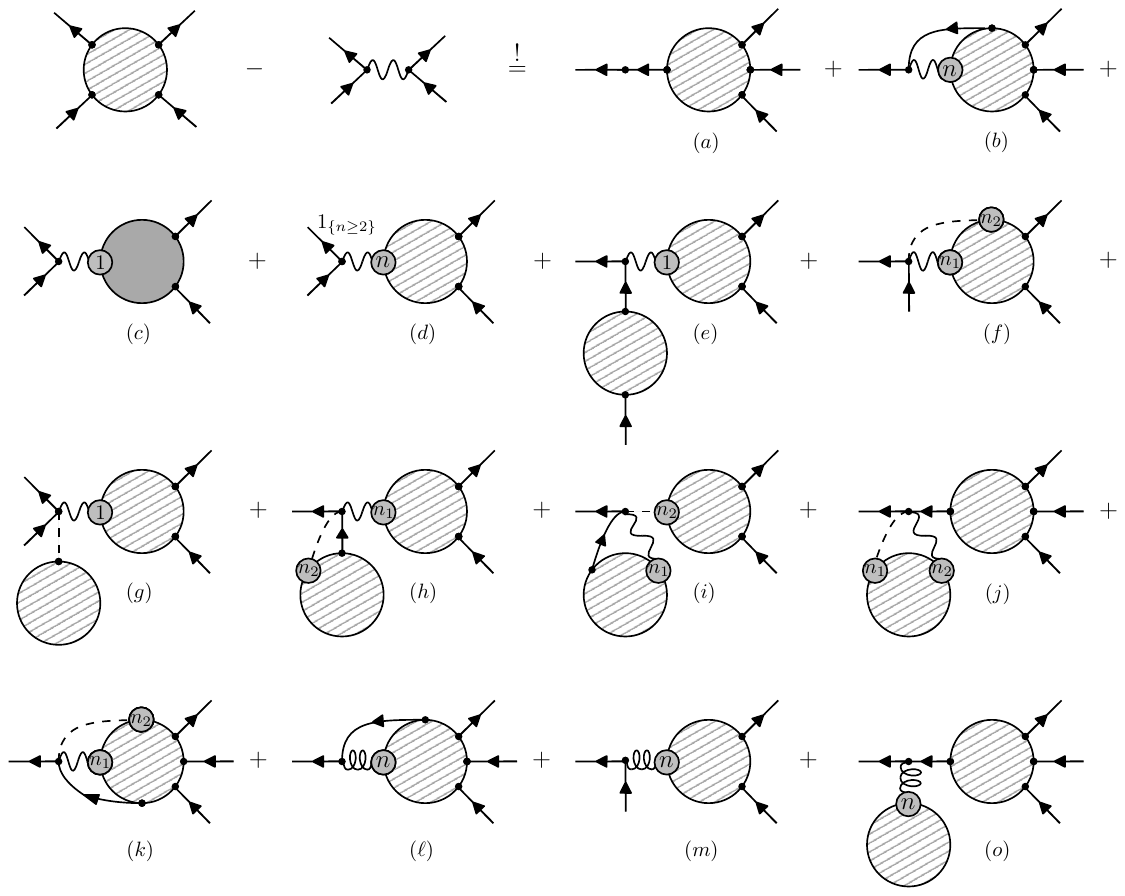}
    \caption{Graphs contributing to $\mf{D}W^{(h)|2,2}_{\eset;\eset;\eset}(\eta_1,\eta_2,\eta'_1,\eta'_2)$. The darker blob in diagram (c) is the difference kernel $\mf{D}W^{(h)|1,1}_{1;\eset;\eset}$. The exiting half-line attached to the bare vertex is always associated with the coordinate label $\eta_1$, while the labels $\eta_1',\eta_2'$ can be associated to the entering lines in all the two possible ways.}
    \label{fig_quartic_1}
\end{figure}

The estimates follow the same techniques used so far in this section, with no conceptual complication, and thus they will not be repeated in detail. Let us stress that in graphs $(c)$ and $(e)$ one has to use the improved bounds for $\|W^{(h)|1,1}_{1;\eset;\eset}- c\|_1^w$ and $\|W^{(h)|1,1}_{\eset;\eset;\eset}\|_1^w$ respectively, namely Eqs. \eqref{eq:18} and \eqref{eq_19}.  Also, in analogy with diagrams $(h)$ of Fig. \ref{fig_self_energy_1} and Fig. \ref{fig_kernel_vertex}, here we have that diagram $(k)$ of Fig. \ref{fig_quartic_1} displays a superficial log-divergence, which is readily solved by a second extraction of line, i.e. by the expansion of the kernel $W^{(h)|2,2}_{n_1;n_2;\eset}$ (the analogous procedure was explained with all the details in Eq. \eqref{eq_18} and below). Collecting the bounds from all the diagrams of Fig. \ref{fig_quartic_1}, one finds:
\begin{equation}
\label{eq_35}
\big\|\mf{D}W^{(h)|2,2}_{\eset;\eset;\eset}\big\|_1^w\le \mf{c}_{4,1}C_0^8\l^{\frac{5}{4}}\end{equation}

for $\l\le (\mf{c}'_4C_0^2R)^{-1}$, with suitable constants $\mf{c}_{4,1}, \mf{c}'_4\ge1$.
The exponent $5/4$ in $\l$ takes into account the lowest power of $\l$ among all the contributions to the l.h.s. of Eq. \eqref{eq_35}, namely diagram $(b)$ with $n=1$ and diagram $(m)$ with $n=2$. From Eq. \eqref{eq_35} we get the desired bound:
\[\begin{split} &\big\|W^{(h)|2,2}_{\eset;\eset;\eset}-  \mc{Q}^{(N)}\big\|_1^w \le\\
&|\mu_{N,\l}-1| \|\mc{Q}^{(N)}\|_1^w+ \big\|\mf{D}W^{(h)|2,2}_{\eset;\eset;\eset}\big\|_1^w \le  \mf{c}_{4,2}\l^2N2^{-2N}+ \mf{c}_{4,1}C_0^8 \l^{\frac{5}{4}}\le (\mf{c}_{4,1}+\mf{c}_{4,2})C_0^8 \l^{\frac{5}{4}}
\end{split}\]

for $\l\le (\mf{c}'_4C_0^2R)^{-1}$, where we also used the fact that $\|\mc{Q}^{(N)}\|_1^w \le \frac{\l}{2}(\|c\|_1^w)^2\|g^A\|_1^w$ is bounded by constant times $\l$.

\subsection{The higher-degree marginal kernels}
\label{ssection_improved_5}

So far we have been discussing the kernels which were already present in the quartic theory \cite{BFM09,F10,M22}. We are left with analyzing the marginal kernels due to the source fields $\{G^{(n)},\dot{G}^{(n)}\}_{n\ge2}$, namely $W^{(h)|1,1}_{n;\eset;\eset}, W^{(h)|1,1}_{\eset;n;\eset}$ with $n\ge2$ and $W^{(h)|0,0}_{(n,n');\eset;\eset}, W^{(h)|1,1}_{\eset;(n,n');\eset}, W^{(h)|1,1}_{n;n';\eset}$ with $n+n' \geq 3$. Even though the aforementioned fields enter in the initial potential, Eq. \eqref{def_potential}, via irrelevant monomials, they all have scaling dimension 1, so they produce non-irrelevant terms at scales $h<N$. It turns out that the bounds for these extra marginal terms are somehow easier than those discussed so far in this section, the reason being that the fields $G^{(n)},\dot{G}^{(n)}$ are coupled to kernels $\{v_n\}_{n\ge2}$ (cf. Eq. \eqref{eq_16a}) which, differently from $v_1$, include \it{at least} one wiggly line. As it already emerged throughout this section, the presence of wiggly lines generally yields dimensional gains, due to the non-locality of the boson propagator.

\medskip

Since the analysis follows the same lines of the previous subsections, we will discuss with some more detail only the kernel $W^{(h)|1,1}_{n;\eset;\eset}$ and only give ideas on how to proceed for the other cases.

\paragraph{The higher-degree vertex.}

We consider the kernel $W^{(h)|1,1}_{n;\eset;\eset}$, with $n\ge2$ and we aim to show that
\begin{equation}
\label{eq:19}
\big\|W^{(h)|1,1}_{n;\eset;\eset}\big\|_1^w\le \mf{c}_5C_0^5\l^{\frac{1}{2}}\l^{\frac{n-1}{2}}, \qquad n\ge2,
\end{equation}

for $\l\le (\mf{c}''_5C_0 R)^{-1}$, and suitable constants $\mf{c}_5,\mf{c}''_5\ge1$ independent on $C_0,R$. Notice that Eq. \eqref{eq:19} directly implies the desired bound in Eq. \eqref{eq_IB_2} by requiring $\l\le (\mf{c}'_5C_0^{10}R)^{-1}$, with $\mf{c}'_5=\max\{\mf{c}_5^2,\mf{c}''_5\}$. Let us recall that the choice of losing a fractional power of $\l$,  so to gain in the r.h.s. of Eq. \eqref{eq_36} a pre-factor which is exactly 1, is only technical and made so that the inductive structure of our proof results more tractable.

Again, the starting point is to derive an expansion for the kernel obtained from Lemma \ref{lemma_id_B} by applying $\d^2/\d\psi^-_{\eta_2}\d\psi^+_{\eta_1}$ to Eq.  \eqref{eq_id_2}.  The output of the expansion is represented in Fig. \ref{fig_kernel_vertex_2} where, as before (cf. Remarks \ref{rmk_2} and \ref{rmk_1}), we have omitted all those diagrams
that after the elimination of a \virg{dashed tadpole} or the replacement of a \virg{dressed line} by a \virg{bare line}, reduce to one of the graphs already present.

In order to read the value of a diagram,  differently from the previous cases referred to Figures  \ref{fig_self_energy_1},\ref{fig_kernel_vertex},\ref{fig_kernel_0200b} and \ref{fig_quartic_1}, now the nodes of the graphs in Fig. \ref{fig_kernel_vertex_2} are interpreted as follows. If the incoming and outgoing solid lines at the node have fermionic labels $\eta_2,\eta_1$ respectively, then: 
\begin{itemize}
\item a node with two wiggly lines, with labels $n$ and $n-1$, is associated to a factor $\\2^{-N}\frac{n^2}{(n-1)^2}\mu_{N,\l} c_b(\eta_1,\eta_2)$ ;
\item a node with  two wiggly lines and a dashed line,  having labels $n,n_1,n_2$, respectively, and satisfying $n_1+n_2=n$, is associated to a factor $2^{-2N} \mc{N}_{n_1,n_2}c_b(\eta_1,\eta_2)$ where we recall that $\mc{N}_{n_1,n_2}= \frac{(n_1+n_2)^2}{(n_1+n_2-1)n_1n_2^2}\le \frac{(n_1+n_2)^2}{n_1^2n_2^2}$.
\end{itemize}

\begin{figure}[h]
\centering
\includegraphics[width=0.92\textwidth]{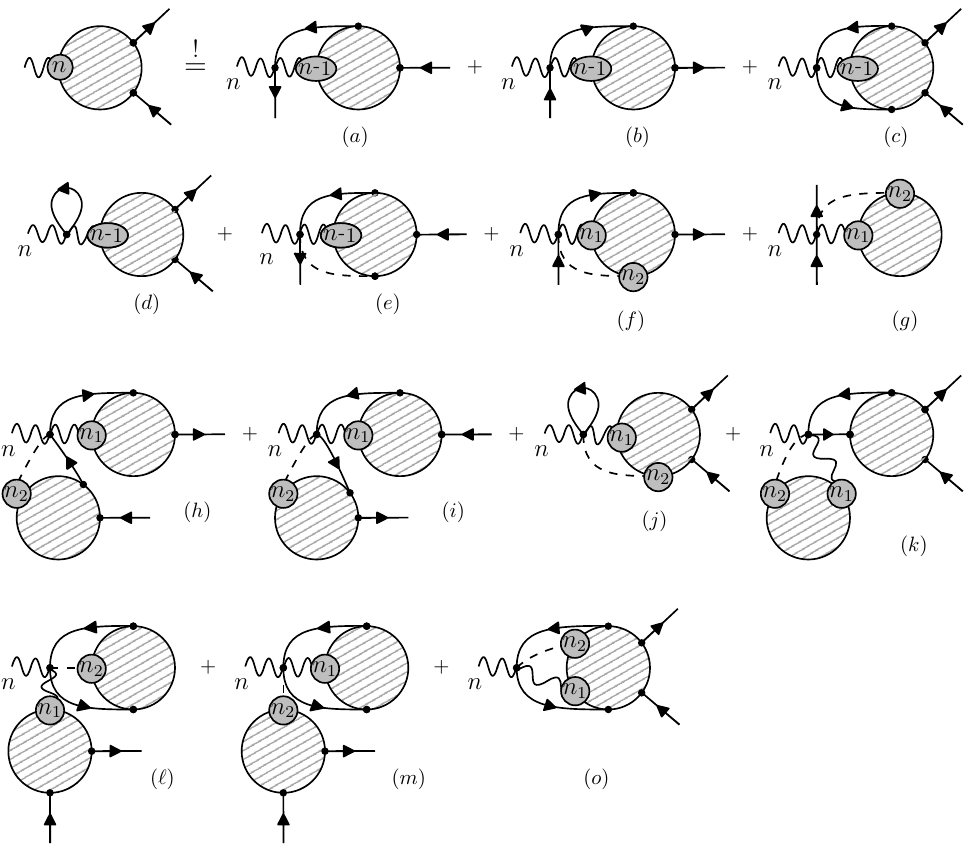}
\caption{Graphs contributing to the kernel $W^{(h)|1,1}_{n;\eset;\eset}$, with $n\ge2$. The $n$ label of the leftmost wiggly line in each diagram reminds that the external field is of type $G^{(n)}.$ The symbol $\stackrel{!}{=}$ reminds that the identity is up to diagrams which are not drawn, according to the reduction procedure explained below Eq. \eqref{eq:19}.}
\label{fig_kernel_vertex_2}
\end{figure}

\paragraph{Diagram $(a)$.}
In this case we get 
\begin{equation}
\label{eq_36}
\begin{split} 
\big\|\text{Val}(a)\big\|_1^w& \le 2^{-N}n^2(n-1)^{-2} \|c\|_1^w \l\|g^A\|_\infty \|g^{(h,N]}\|_1^w \big\|W^{(h)|1,1}_{n-1;\eset;\eset}\big\|_1^w \\
&\le  24 \lambda K' \tilde K N 2^{-N} 2^{-h} C_0^2\l^{\frac{n-2}{2}}\leq  \mf{c}_{5,1}C_0^2 \l^{\frac{1}{2}} \l^{\frac{n-1}{2}}, \end{split}\end{equation}
where $\mf{c}_{5,1}\ge1$ is a suitable constant depending on $\tilde K,K'$, already appearing along Section \ref{ssection_improved_1}. Note that we neglected the vanishing factor $ 2^{-\theta N-h}$ as $N\to \infty$ (see Remark \ref{rmk:o(N)diagrams}) Graphs $(b)$ and $(d)$ follow the same type of estimates as $(a)$.

\paragraph{Diagram $(c)$.} Here we use the same idea of Eq. \eqref{eq:15} and lines below by splitting in scales each $g^{(h,N]}$ and taking the $L^\infty$ norm of $g^A$, 
\[\begin{split}
\big\|\text{Val}(c)\big\|_1^w& \le \lambda 2^{-N+1} n^2(n-1)^{-2}  \sum_{h \leq i \leq j \leq N} \|g^A\|_{\infty}\|g^{(i)}\|_\infty  \|\tilde{g}^{(j)}\|_1^w \big\|W^{(h)|2,2}_{n-1;\eset;\eset}\big\|_1^w \\
&\le \mf{c}_{5,2}C_0^2 2^{-N} N^2 \l^{\frac{n-2}{2}}\le \mf{c}_{5,2}C_0^2 \l^{\frac{1}{2}} \l^{\frac{n-1}{2}}, 
\end{split}\]

for a suitable $\mf{c}_{5,2}\ge1$.

\medskip

\paragraph{Diagram $(e)$.} The presence of the dashed line forces us to take the $L^{\infty}$ norm of both $g^{(h,N]}$ and $g^A$:
\begin{equation}
\label{eq_37}
\begin{split}
\big\|\text{Val}(e)\big\|_1^w& \le \sum_{\substack{n_1\ge1,n_2\ge2\\ n_1+n_2=n}} 2^{-2N} \mc{N}_{n_1,n_2} \|g^{(h,N]}\|_{\infty}\l\|g^A\|_{\infty} \big\|W^{(h)|1,1}_{n_1;n_2;\eset}\big\|_1^w \\
&\le \sum_{\substack{n_1\ge1,n_2\ge2\\ n_1+n_2=n}} \tfrac{n^2}{n_1^2 n_2^2} \mf{c}_{5,3}\l 2^{-N}N  C_0^3 \l^{\frac{n_1+n_2-2}{2}}2^{-h}\le \tfrac{4}{3}\pi^2\mf{c}_{5,3}C_0^3 \l^{\frac{1}{2}} \l^{\frac{n-1}{2}} 2^{-h},
\end{split}
\end{equation}

for a suitable $\mf{c}_{5,3}\ge1$, having also used the fact that $\sum_{n_1+n_2=n}\frac{n^2}{n_1^2 n_2^2}\le \frac{4}{3}\pi^2$. 

\begin{remark}\label{rmk_n^3}
The origin of the quantity $\mc{N}_{n_1,n_2}\leq (n_1+n_2)^2n_1^{-2}n_2^{-2}$ is in the numerical factors $n^3$ in the definition of the source terms with $G^{(n)}$ and $\dot{G}^{(n)}$ in Eq. \eqref{def_potential}, as it can be explicitly derived from Lemmas \ref{lemma_id_A} and \ref{lemma_id_B}. If in place of $n^3$ we had put $n^k$, in the r.h.s. of Eq.  \eqref{eq_37} we would have found: 
\[\sum_{n_1+n_2=n}\left(\frac{n}{n_1n_2}\right)^{k-1},\]

 which is \it{bounded} in $n$ for every $k\ge3$, but for $k=0,1,2$ it is actually \it{unbounded}. For instance, with the naive choice $k=0$, the estimate for graph $(d)$ would have been:

\[\big\|\text{Val}(d)\big\|_1^w\le  \mc{O}(n) C_0^3 N 2^{-N-h} \l^{\frac{1}{2}} \l^{\frac{n-1}{2}}, \]

which has no hope to be consistent with the desired bound in Eq. \eqref{eq_IB_2}. With our choice $k=3$ instead, we manage to get $\mc{O}(1)$ in place of $\mc{O}(n)$.
\end{remark}

\medskip

The estimates for diagrams $(f)$-$(o)$ can be performed similarly and we will not belabor the details. We mention that a direct bound on diagram $(o)$ shows superficial divergences, which can be solved by expanding the kernel $W^{(h)|2,2}_{\eset;\eset;\eset}$ and exploiting the presence of an extra boson propagator (the analogous procedure for diagram $(h)$ of Fig. \ref{fig_self_energy_1} was explained in Eq. \eqref{eq_18} and below).

Collecting all the contributions in Fig. \ref{fig_kernel_vertex_2} one finds the bound in Eq. \eqref{eq:19}, from which the desired bound in Eq.  \eqref{eq_IB_2} follows.

\paragraph{Remaining higher-degree kernels.}
The analysis of the \emph{higher order bubbles}, i.e. the kernels $W^{(h)|0,0}_{(n,n');\eset;\eset}$ with $n+n'\ge3$, is almost identical to that for the higher-degree vertices and one can find that
\begin{equation}
\big\|W^{(h)|0,0}_{(n,n');\eset;\eset}\big\|_1^w\le \mf{c}_6C_0^6\l^{\frac{n+n'-2}{2}}, \qquad  n+n'\ge3,
\end{equation}

for $\l\le (\mf{c}''_6C_0R)^{-1}$ and with a suitable $\mf{c}''_6\ge1$. As before, by requiring $\l\le (\mf{c}'_6C_0^{12}R)^{-1}$, with $\mf{c}'_6=\max\{\mf{c}''_6,\mf{c}_6^2\}$, the desired bound in Eq. \eqref{eq_IB_6} follows.

\medskip

Concerning instead the kernels $W^{(h)|1,1}_{\eset;n;\eset}$, $W^{(h)|0,0}_{\eset;(n,n');\eset}$ and $W^{(h)|0,0}_{n';n;\eset}$, their expansion is similar to  Eq. \eqref{eq_18} (depicted in  Fig. \ref{fig_self_energy_2}) and is shown in Fig. \ref{fig_dot_kernels}.

\begin{figure}[h]
\centering
\includegraphics[width=0.98\textwidth]{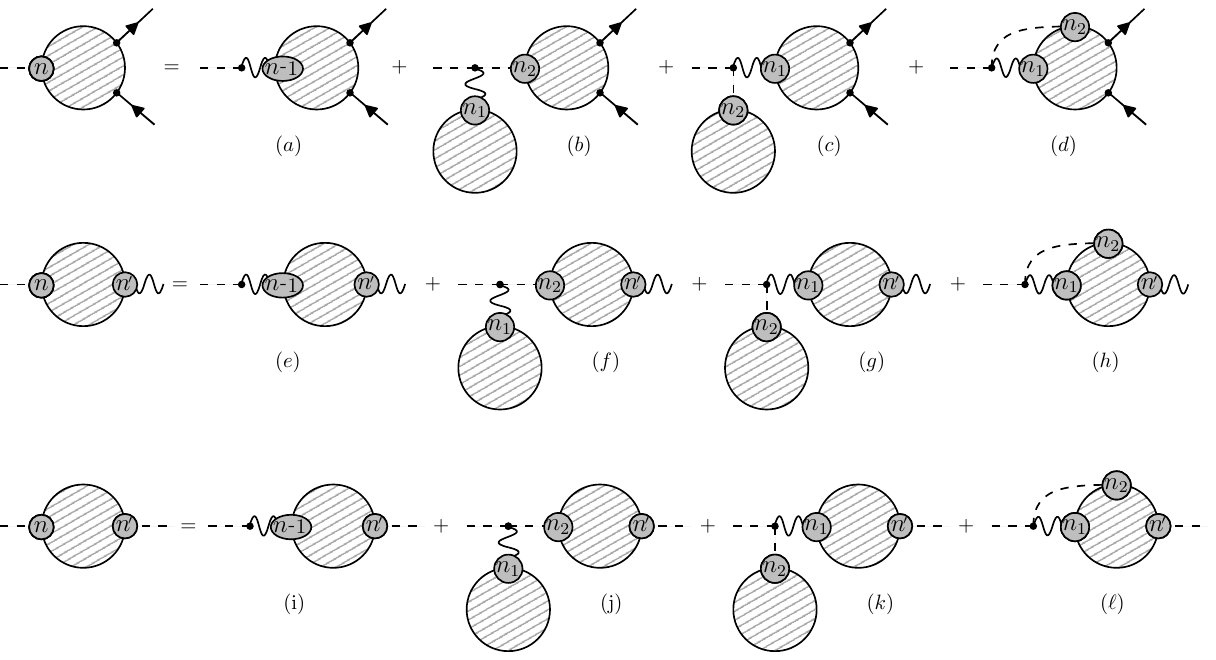}
\caption{Graphical expansions for the kernels $W^{(h)|0,0}_{\eset;n;\eset}$, $W^{(h)|1,1}_{\eset;n;\eset}$, $W^{(h)|0,0}_{n';n;\eset}$ and $W^{(h)|0,0}_{\eset;(n',n);\eset}$ respectively.}
\label{fig_dot_kernels}
\end{figure}

Using the same tools as in the previous cases, one can check that each diagram in the r.h.s. of each line of Fig. \ref{fig_dot_kernels} is bounded by $\mc{O}\big(C_0^4\l^{\frac{1}{2}} \l^{\frac{n-1}{2}}\big)$, without the need to extract any further boson line, so that:
\begin{equation}
\label{eq:20}
\big\|W^{(h)|1,1}_{\eset;n;\eset} \big\|_1^w \le \l^{\frac{n-1}{2}}, \;\; \big\|W^{(h)|1,1}_{\eset;(n,n');\eset} \big\|_1^w\le  \l^{\frac{n+n'-2}{2}},\;\; \big\|W^{(h)|1,1}_{n';n;\eset} \big\|_1^w \le \l^{\frac{n+n'-2}{2}},
\end{equation}

for any $n\ge2$ and $\l\le (\mf{c}'_7C_0^8)^{-1}$, with a suitable $\mf{c}'_7\ge1$.

\bigskip

Summarizing, with Eqs. \eqref{eq:goal_bound_psiquadro}, \eqref{eq:18}, \eqref{eq:13}, \eqref{eq:17}, \eqref{eq:19} and \eqref{eq:20}, we have shown the validity of the bounds in Eqs. \eqref{eq_IB_1}-\eqref{eq_IB_6} with $R= \max_{j=1}^4 \mf{c}_j C_0^8$ and for  $\lambda$ such that $\l\le \min\big\{(C_0{\hat R}^4)^{-1},\min_{j=1}^7 (\mf{c}'_j C_0^{12}{\hat R})^{-1}\big\}$. This establishes the validity of the inductive step introduced at the beginning of Section \ref{sect_improved}, thus concluding the proof of Theorem \ref{thm:IB}.

\section{The complete generating functional}
\label{sect_external}

In this section we reintroduce the external sources of the theory, i.e. the background vector field $J$, the chiral vector $B$ and the Grassmann field $\vphi$ with the aim of  completing the proof of Theorem \ref{thm_UV} for the full generating functional: \begin{equation}\mc{W}^{(u.v.)}(\psi;\varphi;J;B)= -\log \int P^{(0,N]}(\mc{D}\z) e^{-\mc{V}(\psi+\z;G(J))+(\vphi,\z + \psi)- (B,O_5(\z+ \psi))}\label{eq:44}\end{equation} (cf. Eqs.  \eqref{def_functional_UV},  \eqref{eq:2}). 
We will follow two intermediate steps:  we will first reintroduce the fields $\vphi$ and $B$ by keeping $J=0$, and in a second moment we will discuss the case $J\neq 0$. The reason for this subdivision is that the multiscale analysis in presence of $\vphi$ and $B$, at $J=0$, requires a procedure which is very close to the one already explained in Sections \ref{sect_multiscale} and \ref{sect_improved} for the purely fermionic interaction.  In particular,  we will use a combination of new \it{standard dimensional bounds} and \it{improved bounds}, analogous to Eqs. \eqref{eq_SDB_1}-\eqref{eq_SDB_3} and Eqs. \eqref{eq_IB_1}-\eqref{eq_IB_6} which provide in the end the bounds for the new $\varphi,B$ kernels appearing in Theorem \ref{thm_UV}. 
On the other hand the reintroduction of the field $J$ and the bounds for the related kernels, appearing in Theorem \ref{thm_UV}, will require a simpler strategy based only on standard estimates with no \emph{improved bounds} needed.

\subsection{The chiral and fermion sources}
\label{ssection_external_1}

At this stage we add to the fermionic potential $\mc{V}(\psi;0)$, Eq. \eqref{eq:2}, the two source terms:

\[ \int_{\L_F}d\eta (\vphi^+_{\eta}\psi^-_{\eta}+ \psi^+_{\eta}\vphi^-_{\eta}) + \int_{\L_B} db B_b O_{5;b}(\psi). \]

The scaling dimension of these external fields is defined so that the monomials $\vphi\psi$ and $B\psi^2$ are both marginal. In this way, the scaling dimension of a monomial $(\psi)^{q} (\vphi)^{\tilde{q}} (B)^p$ is $2-q-\frac{3}{2}\tilde{q}- p$. In analogy with Section \ref{sect_multiscale}, it is convenient to add to the potential some auxiliary source terms to help controlling the relevant and marginal terms. It turns out that it is enough to introduce exactly the same source fields $G,\dot{G},\ddot{G}$ of Section \ref{sect_multiscale}, with potential
\begin{equation}
\label{def_potential_2}
\begin{split}
\mc{V}(\psi;G;\dot{G};\ddot{G};\vphi;B)&:= \mc{V}(\psi;G;\dot{G};\ddot{G})+ \int_{\L_F}d\eta (\vphi^+_{\eta}\psi^-_{\eta}+ \psi^+_{\eta}\vphi^-_{\eta}) + \int_{\L_B} db B_b O_{5;b}(\psi),
\end{split}
\end{equation}

where $\mc{V}(\psi;G;\dot{G};\ddot{G})$ is the same as Eq. \eqref{def_potential}. In analogy with Section \ref{sect_multiscale}, we construct the effective potential at scale $h$:
\begin{equation}
\mc{V}^{(h)}(\psi;G;\dot{G};\ddot{G};\vphi;B)= -\log \int P^{(h,N]}(\mc{D}\z) e^{-\mc{V}(\psi+\z;G;\dot{G};\ddot{G};\vphi;B)}, \label{eq:VhconvphiBmaJnullo}\end{equation}

and we denote by $W^{(h)|q,q';\tilde{q},\tilde{q}'}_{\ud{n};\ud{\dot{n}};\ud{\ddot{n}};p}(\ud{\eta},\ud{\eta}';\ud{\tilde{\eta}},\ud{\tilde{\eta}}';\ud{b};\ud{\dot{b}};\ud{\ddot{b}}; \ud{b}')$ the generic kernel of $\mc{V}^{(h)}$ associated with the monomial
\begin{equation}
\begin{split}
\prod_{k=1}^q \psi^+_{\eta_k}  \prod_{k=1}^{\tilde{q}} \vphi^+_{\tilde{\eta}_k} 
\prod_{k=1}^{q'} \psi^-_{\eta'_k}
\prod_{k=1}^{\tilde{q}'} \vphi^-_{\tilde{\eta}'_k} \prod_{k=1}^{\dim\ud{n}} G^{(n_k)}_{b_k} \prod_{k=1}^{\dim\ud{\dot{n}}} \dot{G}^{(\dot{n}_k)}_{\dot{b}_k} \prod_{k=1}^{\dim\ud{\ddot{n}}} \ddot{G}^{(\ddot{n}_k)}_{\ddot{b}_k}
\prod_{k=1}^p B_{b'_k},
\end{split}
\end{equation}

where we are keeping the same notation of Eq. \eqref{eq:9} and lines thereafter. The scaling dimension associated with the kernel $W^{(h)|q,q';\tilde{q},\tilde{q}'}_{\ud{n};\ud{\dot{n}};\ud{\ddot{n}};p}$ is
\begin{equation}
\label{def_Dsc_2}
D_{sc}:= 2- \tfrac{1}{2}(q+q')- \tfrac{3}{2}(\tilde{q}+\tilde{q}') - \dim\ud{n}- \dim\ud{\dot{n}}- 2\dim\ud{\ddot{n}}- p.
\end{equation}

The new \emph{relevant} and \emph{marginal} (i.e. non-irrelevant) kernels that emerge are those satisfying $D_{sc}\ge0$, which are graphically represented in Fig. \ref{fig:newrelevantmarginal}. 
\begin{figure}[h]
    \centering
    \includegraphics[width=0.77\linewidth]{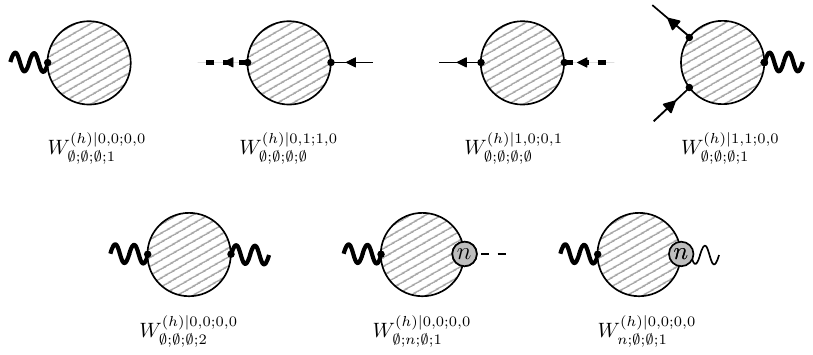}
    \caption{The graphical representation of the marginal and relevant kernels associated to the fields $\vphi,B$, which are respectively represented as thick dashed and wiggly external lines. The thick dashed line is distinguishable from the dashed line associated to $\dot G^{(n)}$ thanks to the circled $n$ label attached to the latter.}
    \label{fig:newrelevantmarginal}
\end{figure}

Now we state the analogue of Theorem \ref{thm:IB}, providing bounds for all the kernels $W^{(h)|q,q';\tilde{q},\tilde{q}'}_{\ud{n};\ud{\dot{n}};\ud{\ddot{n}};p}$. Such bounds involve either the whole kernels, or their deviation from the value of their lowest-order counterparts. Specifically,
\begin{equation}
\Pi_5^{(h,N]}(b_1,b_2):=  \frac{1}{2} \int d\eta_1 d\eta_1' d\eta_2 d\eta_2' c_{b_1}(\eta_1,\eta_1') c^5_{b_2}(\eta_2,\eta_2') g^{(h,N]}_{\eta_1',\eta_2}g^{(h,N]}_{\eta_2',\eta_1}\label{eq_chiralbubble}
\end{equation}

is the lowest-order counterpart of the kernel $W^{(h)|0,0;0,0}_{1;\eset;\eset;1}$, with $c^5$ the bare chiral vertex\footnote{Recall that by assumption $|Z^5_N|\le\frac{3}{2}$, cf. Section \ref{sect:model}.} given by Eq. \eqref{def_chiral_vertex}, which is also the lowest-order contribution to the kernel $W^{(h)|1,1;0,0}_{\eset;\eset;\eset;1}$.
\begin{proposition}
\label{prop:IB2}
For every $\vth,\th\in(0,1)$ fixed, there exist $N'_0\ge1$, $R',C_0'\ge1$ and $\l_0'>0$ such that, for any $N\ge N'_0$, $\l\le \l_0'$ and $0\le h\le N$, the following is true.

\begin{enumerate}
\item  The non-irrelevant kernels represented in Fig. \ref{fig:newrelevantmarginal} admit the following bounds:
\begin{align}
&\label{eq_IB_2:1}\big\|W^{(h)|0,1;1,0}_{\eset;\eset;\eset;0}- \delta\big\|_1^w \le R'\l 2^{-(1+\vth)h}; \;\; \big\|W^{(h)|1,0;0,1}_{\eset;\eset;\eset;0}- \delta\big\|_1^w \le R'\l 2^{-(1+\vth)h};\\
&\label{eq_IB_2:2} \big\|W^{(h)|1,1;0,0}_{\eset;\eset;\eset;1} - c^5\big\|_1^w \le R'\l; \;\; \big\|g^A*(W^{(h)|0,0;0,0}_{1;\eset;\eset;1}- \Pi_5^{(h,N]} )\big\|_1^w \le R'\l; \\
&\label{eq_IB_2:3}\big\|W^{(h)|0,0;0,0}_{n;\eset;\eset;1}\big\|_1^w\le \l^{\frac{n-1}{2}} \; \forall n\ge2; \;\; \big\|W^{(h)|0,0;0,0}_{\eset;n;\eset;1}\big\|_1^w\le \l^{\frac{n-1}{2}} \; \forall n\ge2;
\end{align}

with $\d$ the lattice Dirac delta $\d(\eta-\eta')$ over $\L_F$.

\item  All the irrelevant and non-irrelevant kernels not included in Eqs. \eqref{eq_SDB_1}-\eqref{eq_SDB_3}, admit the following bounds: 
\begin{equation}
\label{eq_SDB_2:1}
\big\|g^A*W^{(h)|0,0;0,0}_{1;\eset;\eset;1}\big\|_1^w\le C_0', \;\; \big\|W^{(h)|0,0;0,0}_{\eset;\eset;\eset;2} \big\|_1^w\le C_0'N, \;\;W^{(h)|0,0;0,0}_{\eset;\eset;\eset;1}=0\end{equation} 

and in the remaining cases
\begin{equation}
\label{eq_SDB_2:2}
\hspace{-5pt}\big\|W^{(h)|q,q';\tilde{q},\tilde{q}'}_{\ud{n};\ud{\dot{n}};\ud{\ddot{n}};p}\big\|_1^w \le C_0'^{1+\dim\ud{n}+\dim\ud{\dot{n}}+\dim\ud{\ddot{n}}+ \tilde{q}+ \tilde{q}'+p} \l^{\tilde d} 2^{h D_{sc}} \times \left\{\begin{array}{cc}
2^{\th(h-N)}     & \dim\ud{\ddot{n}}>0  \\
1     & \text{otherwise},
\end{array}\right.
\end{equation}

with $D_{sc}$ as in Eq. \eqref{def_Dsc_2} and  $\tilde d := \tfrac{1}{8}\max\{0,q+q'+\tilde{q}+\tilde{q}'-2\}+ \tfrac{1}{2}(|\ud{n}|+|\ud{\dot{n}}| - \dim\ud{n}-\dim\ud{\dot{n}})+ \tfrac{3}{4}|\ud{\ddot{n}}|$.

\end{enumerate}
\end{proposition}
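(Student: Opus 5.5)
The proof will mirror the inductive scheme of Proposition \ref{prop:SDB} together with Theorem \ref{thm:IB}, now applied to the enlarged potential \eqref{def_potential_2}. The induction runs downward in the scale $h$. At $h=N$ the kernels can be read off explicitly from \eqref{def_potential_2}: the $\vphi\psi$ kernels equal $\delta$, the $B\psi^2$ kernel equals $c^5$, and all other $\vphi,B$ kernels vanish. Hence the bounds hold there once $R'\ge1$ and $\l_0'$ is small.

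For the inductive step, assume \eqref{eq_IB_2:1}--\eqref{eq_IB_2:3} at all scales $k>h$, together with the bounds of Theorem \ref{thm:IB}, which are already available at every scale since setting $\vphi=B=0$ recovers the previous theory. The tree expansion of Section \ref{sect_tree_expansion} then carries over unchanged, with endpoints now also carrying $\vphi$ and $B$ fields. Repeating the power-counting argument behind \eqref{eq_tree4} gives the standard bounds \eqref{eq_SDB_2:2} at scale $h$. Each external $\vphi$ or $B$ costs a factor $C_0'$ and scales with dimension $3/2$ or $1$.

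Two points need care in this step. First, the exponent of $\l$ drops to $\tilde d$, with $1/8$ in place of $1/4$ or $2/5$: kernels containing $\vphi$ need not carry a bare quartic vertex, so we only keep a fraction of the available $\l$ power. Second, the three special identities in \eqref{eq_SDB_2:1} are handled separately:
\begin{itemize}
\item $W^{(h)|0,0;0,0}_{\eset;\eset;\eset;1}=0$ follows from the symmetries of Lemma \ref{lemma:symm}, exactly as \eqref{eq:Furry}, since the $B$-tadpole is odd under charge conjugation or parity.
\item $\|W_{\eset;\eset;\eset;2}\|\le C_0'N$ is the $BB$ bubble. Since $c^5$ is local, $\int\Pi$ with two chiral vertices is only logarithmically divergent, and summing over scales gives the factor $N$.
\item $\|g^A*W_{1;\eset;\eset;1}\|$ is bounded by Corollary \ref{cor_bubble} applied to $\Pi_5$, plus the improved bound.
\end{itemize}

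The improved bounds come from identities analogous to Lemma \ref{lemma_id_self_energy}, proved as in Appendix \ref{app_identities} by extracting the fermion line attached to the external $\psi$ (or to $\vphi$).

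For $W^{(h)|0,1;1,0}$ the identity reads $\delta + g^{(h,N]}*W^{(h)|1,1}_{\eset;\eset;\eset}$ plus corrections. Using \eqref{eq:goal_bound_psiquadro} with $\|g^{(h,N]}\|_1^w\le\tilde K2^{-h}$, this is $\mathcal{O}(\l 2^{-(1+\vth)h})$, which is \eqref{eq_IB_2:1}.

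For $W^{(h)|1,1;0,0}_{\eset;\eset;\eset;1}-c^5$ we write the vertex as $c^5$ plus a sum of graphs identical to those of Fig. \ref{fig_kernel_vertex} with the external $G^{(1)}$ replaced by $B$. These are bounded as in Subsection \ref{ssection_improved_2}, using Hölder with the nonlocal $g^A$ for the dressed-vertex graph and a second line extraction for the log-divergent graph of type $(h)$. The chiral bubble $g^A*(W_{1;\eset;\eset;1}-\Pi_5)$ is then treated as in Subsection \ref{ssection_improved_3}: we close the $B$-vertex identity with two propagators and apply the single-scale splitting trick \eqref{eq:15}. Finally, the higher-degree kernels \eqref{eq_IB_2:3} follow Subsection \ref{ssection_improved_5}: at least one wiggly line is always present, giving the extra $\l^{1/2}$ that absorbs the constants.

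The main obstacle is closing the induction with constants independent of $R'$. In the $B$-vertex and chiral-bubble estimates the only order-$\l$ contributions that are not suppressed involve previously improved kernels: the self-energy \eqref{eq_19}, the vertex \eqref{eq:18}, and the bubble \eqref{eq:13}. Those must be used in place of the standard bounds, as was done for diagram $(b)$ in Subsection \ref{ssection_improved_2}, and we then choose $R'\ge \mathfrak{c}\,C_0'^{8}$ and $\l_0'\le(\mathfrak{c}'C_0'^{12}R')^{-1}$, in analogy with \eqref{eq:parameterchoice}. A further point to check is the log-divergent dashed-line graphs. I would handle these first by the second-extraction identity \eqref{eq_18}, which remains valid because $B$ couples only locally through $c^5$.
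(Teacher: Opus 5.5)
Your proposal is correct and follows essentially the paper's own route: a tree-expansion analogue of Proposition \ref{prop:SDB} gives the standard bounds, and the improved bounds come from line-extraction identities in which the $B\psi^2$ vertex and the chiral bubble reproduce the graphs of Subsections \ref{ssection_improved_2}--\ref{ssection_improved_3} with $G^{(1)}$ replaced by $B$, closed using $\|g^A*\Pi_5^{(h,N]}\|_1^w\le C'_{\Pi}$, while the $\vphi\psi$ kernels reduce to the self-energy bound. Two small precisions: the $\vphi\psi$ identity is exact, $W^{(h)|0,1;1,0}_{\eset;\eset;\eset;0}=\delta-g^{(h,N]}*W^{(h)|1,1}_{\eset;\eset;\eset}$, with no further corrections; and some unsuppressed order-$\l$ graphs in the $B$-vertex (those of type $(c)$) use the standard bound $\|g^A*W^{(h)|0,0;0,0}_{1;\eset;\eset;1}\|_1^w\le C_0'$ rather than a previously improved kernel, which is harmless because $C_0'$ does not depend on $R'$.
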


As a direct consequence one can derive bounds for the kernels of the original generating functional $\mc{W}^{(u.v.)}(\psi;\vphi;0;B)$ since $\mc{W}^{(u.v.)}(\psi;\vphi;0;B)=\mc{V}^{(0)}(\psi;0;0;0;\vphi;B)$, see Subsection \ref{subsec:boundkernels}.

Proposition \ref{prop:IB2} is the analogue of Theorem \ref{thm:IB} for the kernels involving also the external fields $\vphi$ and $B$, at $J=0$, and its proof follows the same scheme as in Subsection \ref{sect_multiscale} and Section \ref{sect_improved}, thus we only provide a sketch of it. 

\begin{proof}[Sketch of the proof]
The strategy is made of two steps.

\begin{enumerate}
\item First  one derives the analogue of Proposition \ref{prop:SDB}, namely that there exists $C_0'\ge1$ (the same $C_0'$ as in Proposition \ref{prop:IB2}) such that, assuming the bounds in Eqs. \eqref{eq_IB_2:1}-\eqref{eq_IB_2:3} at scales higher than $h$, then at scale $h$ the bounds in Eqs.  \eqref{eq_SDB_2:1}-\eqref{eq_SDB_2:2} hold true, for $\l$ small enough, say $\l\le (C_0'R')^4$. The arguments used to prove this fact are essentially the same as for Proposition \ref{prop:SDB}, which are based on the expansion in Gallavotti-Nicolò trees (see Appendix \ref{app_tree}) and will not be rediscussed.

\item 
As a second step, one has to show the validity of the bounds in Eqs. \eqref{eq_IB_2:1}-\eqref{eq_IB_2:3} for some $R'$ large enough and $\l$ small enough, namely the analogue of Proposition \ref{prop:IB2}. This is achieved by induction over the scale $h$, as explained at the beginning of Section \ref{sect_improved}. Again, the proof of this fact is ultimately based on the non-locality of the boson propagator, and can be exploiting crucial identities for the non-irrelevant kernels in the l.h.s. of Eqs. \eqref{eq_IB_2:1}-\eqref{eq_IB_2:3}. Such identities are written as expansions in terms of all the kernels of $\mc{V}^{(h)}$, for which we shall use: 

\begin{itemize}
\item the bounds in Eqs. \eqref{eq_SDB_2:1}-\eqref{eq_SDB_2:2}, for the kernels involving at least a variable $\vphi^{\pm}$ or $B$;
\item the bounds in Eqs. \eqref{eq_SDB_1}-\eqref{eq_SDB_2}, with $R$ defined at the beginning of Section \ref{sect_improved}, for the kernels not involving any $\vphi^{\pm}$ or $B$ variables.
\end{itemize}
\end{enumerate}

Let us begin by discussing the bound for the kernels $\vphi^{\pm}\psi^{\mp}$, Eq. \eqref{eq_IB_2:1}. In analogy with Lemma \ref{lemma_id_self_energy}, one can obtain the following representation for these kernels by differentiating  Eq. \eqref{eq:VhconvphiBmaJnullo} with respect to $\psi^\pm, \vphi^{\mp}$:
\begin{align}
&\label{eq:24a}W^{(h)|0,1;1,0}_{\eset;\eset;\eset;0}(\eta';\tilde{\eta}) = \delta(\eta'-\tilde{\eta}) - \int d\eta'' g^{(h,N]}_{\tilde{\eta},\eta''} W^{(h)|1,1;0,0}_{\eset;\eset;\eset;0}(\eta'',\eta'),\\
&\label{eq:24b}W^{(h)|1,0;0,1}_{\eset;\eset;\eset;0}(\eta;\tilde{\eta}') = \delta(\eta-\tilde{\eta}') - \int d\eta'' g^{(h,N]}_{\eta'',\tilde{\eta}'} W^{(h)|1,1;0,0}_{\eset;\eset;\eset;0}(\eta,\eta''),
\end{align}
which are graphically represented as in the figure below.

\begin{center}
\raisebox{-0.3\height}{\includegraphics[width=0.57\textwidth]{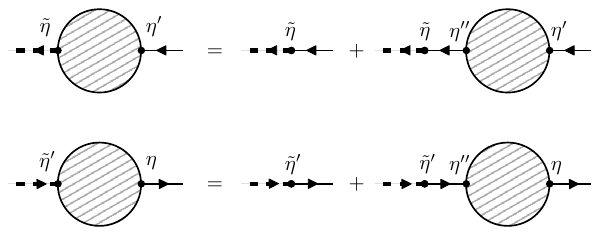}}
\end{center}

We therefore get the following estimate in $L^1$ norm:
\begin{equation}
\label{eq_40}
\big\|W^{(h)|0,1;1,0}_{\eset;\eset;\eset;0} - \d\big\|_1^w\le \|g^{(h,N]}\|_1^w \big\|W^{(h)|1,1;0,0}_{\eset;\eset;\eset;0}\big\|_1^w. 
\end{equation}

Recall that the kernel $W^{(h)|1,1;0,0}_{\eset;\eset;\eset;0}$ coincides with $W^{(h)|1,1}_{\eset;\eset;\eset}$ of the effective potential Eq.  \eqref{eq:9}, which, in force of Theorem \ref{thm:IB}, admits the bound $\big\|W^{(h)|1,1}_{\eset;\eset;\eset}\big\|\le R\l 2^{-\vth h}$ for $\l\le \l_0$, where the quantities $R$ and $\l_0$ are fixed by the analysis of Section \ref{sect_improved}. Recalling that $\|g^{(h,N]}\|_1^w\le \tilde{K}2^{-h}$ for a suitable constant $\tilde{K}\ge1$, from Eq. \eqref{eq_40} we get
\[
\big\|W^{(h)|0,1;1,0}_{\eset;\eset;\eset;0} - \d\big\|_1^w\le \tilde{K}R\l 2^{-(1+\vth)h}. 
\]

The same argument applies to the kernel $\psi^+\vphi^-$, leading to the same bound. 

\medskip

The marginal kernels with one field $B$ can be treated almost identically to the kernels where $B$ is replaced by $G^{(1)}$, which have been discussed in Section \ref{sect_improved}. For instance, about the kernel $B\psi^2$, in complete analogy with the kernel $G^{(1)}\psi^2$ discussed in Subsection \ref{ssection_improved_2}, one finds that the difference
\[ W^{(h)|1,1;0,0}_{\eset;\eset;\eset;1}(\eta_1,\eta_2;b)- c^5_b(\eta_1,\eta_2)
\]

admits an expansion which is graphically reported in Fig. \ref{fig_chiral_vertex}, where the graphical rules are analogous to those of Section \ref{sect_improved}  (see also the notation of Fig. \ref{fig:newrelevantmarginal} for the new kernels), and the understanding that the node with a thick wiggly line and two solid lines is associated to the bare chiral vertex $c^5_b(\eta_1,\eta_2)$.

\begin{figure}[h]
    \centering
\includegraphics[width=0.99\textwidth]{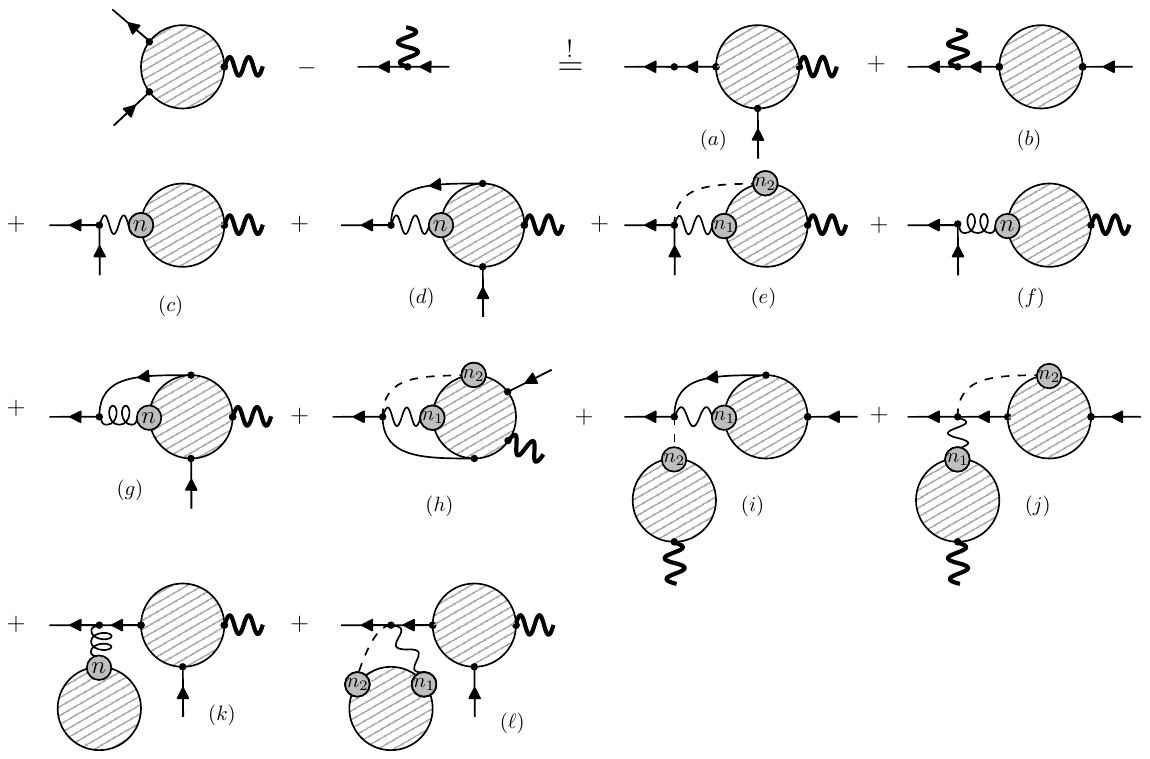}
    \caption{graphs contributing to the difference $W^{(h)|1,1;0,0}_{\eset;\eset;\eset;1}(\eta_1,\eta_2;b)- c^5_b(\eta_1,\eta_2)$.}
    \label{fig_chiral_vertex}
\end{figure}

The graphs of Fig. \ref{fig_chiral_vertex} are in one to one correspondence with those in Fig. \ref{fig_kernel_vertex}, an can be in fact bounded in the very same way. Each of them is indeed checked to be bounded by $\max\big\{C_0,C'_0\big\}^6 \mf{c}_8\l$ for $\l\le \Big(\mf{c}'_8 \max\big\{C_0,C'_0 \big\}^2 \max\big\{R',R\big\} \Big)^{-1}$, for suitable constants $\mf{c}_8,\mf{c}'_8\ge1$, independent of $R'$ and $C'_0$. Such a bound leads to the same qualitative estimate as Eq. \eqref{eq:18}, with $W^{(h)|1,1}_{1;\eset;\eset}$ and $c$ in the l.h.s. replaced by $W^{(h)|1,1;0,0}_{\eset;\eset;\eset;1}$ and $c^5$ respectively, and $C_0$ in the r.h.s. replaced by $\max\big\{C_0,C'_0\big\}$.

\medskip

Similarly, one can study the kernels $G^{(n)}B$. For instance, in the case $n=1$, following the same steps as in Subsection \ref{ssection_improved_3}, one can expand the difference $W^{(h)|0,0;0,0}_{1;\eset;\eset;1}(b;b')- \mu_{N,\l} \Pi_5^{(h,N]}(b,b')$ ending up with the same graphs as Fig. \ref{fig_kernel_0200b}, where now the external wiggly line attached to one of the blobs is thicker, corresponding to the external field $B$, in place of $G^{(1)}$. Again, the estimates for the graphs are formally identical to those in Subsection \ref{ssection_improved_3}, so they will not be repeated here. One crucial fact is that, as well as the \it{vector bubble} diagram $\Pi^{(h,N]}$, also its \it{chiral} counterpart $\Pi_5^{(h,N]}$ is uniformly bounded, in the sense that $\big\|g^A*\Pi_5^{(h,N]}\big\|_1^w\le C'_{\Pi}$ (see Corollary \ref{cor_bubble}).
\end{proof}

\begin{remark}
The property $W^{(h)|0,0;0,0}_{\eset;\eset;\eset;1}=0$, namely the third of Eq. \eqref{eq_SDB_2:1}, is a simple consequence of the \it{charge conjugation} symmetry (cf. Eq. \ref{symm:cc}). Under this transformation, $O_{5;x,\mu,\e}(\psi) \mapsto O_{5;x,\mu,\e}(\mc{Q}\psi)= O_{5;x,\mu,-\e}(\psi)$. This immediately implies that $W^{(h)|0,0;0,0}_{\eset;\eset;\eset;1}(x,\mu,\e)= W^{(h)|0,0;0,0}_{\eset;\eset;\eset;1}(x,\mu,-\e)$. On the other hand, since $O_{5;x,\mu,-\e}(\psi)= -O_{5;x,\mu,\e}(\psi)$, we also have that $W^{(h)|0,0;0,0}_{\eset;\eset;\eset;1}(x,\mu,\e)= -W^{(h)|0,0;0,0}_{\eset;\eset;\eset;1}(x,\mu,-\e)$, which implies $W^{(h)|0,0;0,0}_{\eset;\eset;\eset;1}(x,\mu,\e)= 0$.
\end{remark}

\subsection{The vector source}
\label{ssection_external_2}

According to Eq. \eqref{eq:2}, in order to restore the original dependence upon $J$ of the generating functional, we must consider as a starting potential the r.h.s. of Eq.  \eqref{eq:2}:
\begin{equation}
\label{eq_41}
\mc{V}(\psi;G(J))= \mc{V}(\psi;0)+ \sum_{n\ge 1}\sum_{1\le p\le n} 2^{N(2-n-p)} \int_{\L_B^n} d\ud{b} \int_{\L_B^p} d\ud{b}' w_{n,p}(\ud{b};\ud{b}') O^n_{\ud{b}}(\psi) G^p_{\ud{b}'}(J),
\end{equation}

with the kernels $w_{n,p}$ given by Eqs. \eqref{eq:4a},  \eqref{eq:4b}. The presence of infinitely many monomials in Eq. \eqref{eq_41} is again a consequence of \it{local phase invariance}, Eq. \eqref{eq_gauge_inv}. For dealing with such a combinatorial complication we adopt the following strategy. 

\begin{enumerate}
\item\label{it:source:1} All the terms in Eq. \eqref{eq_41} that are linear in $G(J)$ will be recast in a form which is compatible with the r.h.s. of Eq. \eqref{def_potential} after the replacement of $G^{(n)}$ by $G(J)$. In this way, for all the non-irrelevant terms produced by the multiscale integration involving only the variables $G^{(n)}$ and $\psi$, we will rely on the bounds provided by Theorem \ref{thm:IB}.

\item\label{it:source:2} All the monomials in Eq. \eqref{eq_41} involving at least two variables $G(J)$ will be checked not to produce any relevant or marginal term at lower scales (with a single exception which is however easily manageable), hence we will be able to control them by a standard multiscale analysis based on the tree-expansion techniques presented in Appendix \ref{app_tree}.
\end{enumerate}

Practically, in the r.h.s. of Eq. \eqref{eq_41} we isolate the terms that are linear in $G(J)$ and by splitting $w_{n,0}(\ud{b})= v_n(\ud{b})+ 2^{-2N}\tilde{v}_n(\ud{b})$ (cf. Eq.   \eqref{eq_16a}), we rewrite them as
\[\begin{split}
\sum_{n\ge 2} n2^{N(1-n)} \int d\ud{b} w_{n,0}(\ud{b}) G_{b_1} O_{\ud{b}}^n(\psi) \equiv \sum_{n\ge 2} n2^{N(1-n)} \int d\ud{b} \big( v_n(\ud{b})+ 2^{-2N}\tilde{v}_n(\ud{b}) \big) G_{b_1} O_{\ud{b}}^n(\psi),
\end{split}\]
so that we can rewrite Eq. \eqref{eq_41} as
\begin{equation}
\label{eq_42}
\begin{split} 
\mc{V}(\psi;G(J)) &= \mc{V}(\psi;0)+\\
&+\sum_{n\ge 1} n2^{N(1-n)} \int_{\L_B^n} d\ud{b}  v_n(\ud{b}) G_{b_1}(J) O_{\ud{b}}^n(\psi)+ \\ &+\sum_{n\ge 1} n2^{-N(1+n)} \int_{\L_B^n} d\ud{b}  \tilde{v}_n(\ud{b}) G_{b_1}(J) O_{\ud{b}}^n(\psi)+\\
& +\sum_{n\ge 2}\sum_{2\le p\le n} 2^{N(2-n-p)} \int_{\L_B^n} d\ud{b} \int_{\L_B^p} d\ud{b}' w_{n,p}(\ud{b};\ud{b}') O_{\ud{b}}^n(\psi) G_{\ud{b}'}^p(J).
\end{split}
\end{equation}

Now the idea is to treat differently the source field $G(J)$ in the three lines in the last r.h.s. of Eq. \eqref{eq_42}. To this purpose, we introduce three families of auxiliary fields:
\[ H\equiv\{H^{(n)}_b\}_{n\ge1, b\in \L_B}, \;  \dot{H}\equiv \{\dot{H}^{(n)}_b\}_{n\ge2, b\in\L_B}, \; \ddot{H}\equiv \{\ddot{H}_b\}_{b\in \L_B}, \]

and the generalized potential 
\begin{equation}
\label{def_potential_3}
\begin{split}
\mc{U}(\psi;H;\dot{H};\ddot{H};\vphi;B)&:= \mc{V}(\psi;0)+  \int_{\L_F} d\eta\; (\vphi^+_{\eta}\psi^-_{\eta} + \psi^+_{\eta}\vphi^-_{\eta})+ \int_{\L_B} db\; B_b O_{5;b}(\psi)+\\
&+\sum_{n\ge1}n 2^{N(1-n)} \int_{\L_B^n} d\ud{b}\; v_n(\ud{b}) H^{(n)}_{b_1} O_{\ud{b}}^n(\psi)+\\ &+  \sum_{n\ge2} n 2^{-(n+1)N} \int_{\L_B^n} d\ud{b}\; \tilde{v}_n(\ud{b}) \dot{H}^{(n)}_{b_1} O_{\ud{b}}^n(\psi)+\\
& +\sum_{2\le p\le n} 2^{N(2-n-p)} \int_{\L_B^n} d\ud{b}\; \int_{\L_B^p} d\ud{b}' w_{n,p}(\ud{b};\ud{b}') O_{\ud{b}}^n(\psi) \ddot{H}_{\ud{b}'}^p.
\end{split}
\end{equation}

Note that by construction $\mc{V}\big(\psi;G(J);\vphi;B\big)= \mc{U}\big(\psi;  G(J); G(J); G(J); \vphi;B\big)$\footnote{ $H=G(J)$ means that $H^{(n)}_b$ is replaced by $G_b(J)$ for every $n\geq 1$, and similarly for $\dot H,\ddot H$.}. We then consider, for every $0\le h\le N$,
\begin{equation} \mc{U}^{(h)}(\psi;H;\dot{H};\ddot{H};\vphi;B):= -\log\int P^{(h,N]}(\mc{D}\z) e^{-\mc{U}(\psi+\z;H;\dot{H};\ddot{H};\vphi;B)} \label{eq:45} \end{equation}

so that, in particular, the original generating functional $\mc{W}^{(u.v.)}(\psi;J;\vphi;B)$ can be recovered via the identification \begin{equation}\mc{W}^{(u.v.)}(\psi;J;\vphi;B)=
    \mc{U}^{(0)}\big(\psi;G(J);G(J);G(J);\vphi;B\big).\label{eq:46}
    \end{equation} As usual $\mc{U}^{(h)}$ can be expressed in an integral form analogous to Eq. \eqref{eq:9}, where now we use the symbol $U^{(h)|q,q';\tilde{q},\tilde{q}'}_{\ud{n};\ud{\dot{n}};\ddot{p};p}(\ud{\eta},\ud{\eta}';\ud{\tilde{\eta}},\ud{\tilde{\eta}}';\ud{b};\ud{\dot{b}};\ud{\ddot{b}}; \ud{b}')$ to denote the generic kernel of $\mc{U}^{(h)}$ associated with the monomial
\begin{equation}
\begin{split}
\prod_{k=1}^q \psi^+_{\eta_k} \prod_{k=1}^{\tilde{q}} \vphi^+_{\tilde{\eta}_k}  \prod_{k=1}^{q'} \psi^-_{\eta'_k} \prod_{k=1}^{\tilde{q}'} \vphi^-_{\tilde{\eta}'_k} \prod_{k=1}^{\dim\ud{n}} H^{(n_k)}_{b_k} \prod_{k=1}^{\dim\ud{\dot{n}}} \dot{H}^{(\dot{n}_k)}_{\dot{b}_k} \prod_{k=1}^{\ddot{p}} \ddot{H}_{\ddot{b}_k}
\prod_{k=1}^{p} B_{b'_k}.
\end{split}
\end{equation}

The scaling dimension associated with the kernels $U^{(h)|q,q';\tilde{q},\tilde{q}'}_{\ud{n};\ud{\dot{n}};\ddot{p};p}$ is defined, as usual, so to fit the dimensional factors in the r.h.s. of Eq. \eqref{def_potential_3}:
\begin{equation}
\label{def_Dsc_3}
D_{sc}\equiv D_{sc}:= 2- \tfrac{1}{2}(q+q')- \tfrac{3}{2}(\tilde{q}+\tilde{q}') - \dim\ud{n}- 3\dim\ud{\dot{n}}- \ddot{p}- p.
\end{equation}

\begin{remark}
\label{rmk_4}
As anticipated in Item \ref{it:source:1} below Eq. \eqref{eq_41}, the different labeling of the field $G(J)$ in terms of the auxiliary variables $H,\dot{H},\ddot{H}$, has the purpose of providing an identification between $H^{(n)}$ and the auxiliary field $G^{(n)}$ introduced in Eq. \eqref{def_potential_2}, namely
\[\mc{U}\big(\psi; \{n^2 G^{(n)}\}_{n\ge1};0;0;\vphi;B\big)= \mc{V}\big(\psi;\{G^{(n)}\}_{n\ge1};0;0;\vphi;B\big).\]

As a consequence, the kernels of $\mc{U}^{(h)}$ are related to those of $\mc{V}^{(h)}$ (cf. Eq. \eqref {eq:VhconvphiBmaJnullo} and below) via the relation:
\begin{equation}
U^{(h)|q,q';\tilde{q},\tilde{q}'}_{\ud{n};\eset;0;p}= \prod_{j=1}^s n_j^{-2} \times W^{(h)|q,q';\tilde{q},\tilde{q}'}_{\ud{n};\eset;\eset;p}, \qquad \ud{n}\equiv(n_1,\dots,n_s).
\end{equation}

The kernels $W^{(h)|q,q';\tilde{q},\tilde{q}'}_{\ud{n};\eset;\eset;0}$ have already been studied in the previous analysis. In particular for those with $\tilde{q}+\tilde{q}'=0$, we have the identification $W^{(h)|q,q;0,0}_{\ud{n};\eset;\eset;0}\equiv W^{(h)|q,q}_{\ud{n};\eset;\eset}$ and Theorem \ref{thm:IB} applies; on the other hand, for the kernels with $\tilde{q}+\tilde{q}'\ge1$ we can use Proposition \ref{prop:IB2}.
\end{remark}

We now state the result concerning the kernels not covered by the analysis carried out so far, namely those associated with at least one variable $\dot{H}$ or $\ddot{H}$.

\begin{proposition}
\label{prop_multiscale_ext}
For every $\vth,\th\in(0,1)$ fixed, there exist $C_0''\ge1$ and $\l_0''>0$ such that for $\l\le \l_0''$, the kernels $U^{(h)}$ admit the following bounds. $\big\|U^{(h)|0,0;0,0}_{\eset;\eset;2;0}\big\|_1^w\le \frac{1-2^{\th-1}}{1-2^{-\th}} C_0''^3 \l^{\frac{1}{2}}$, while for any other kernel with $|\ud{\dot{n}}|+\ddot{p}\ge1$,
\begin{equation}
\label{eq_45}
\big\|U^{(h)|q,q';\tilde{q},\tilde{q}'}_{\ud{n};\ud{\dot{n}};\ddot{p};p}\big\|_1^w\le C_0''^{1+\dim\ud{n}+\dim\ud{\dot{n}}+\ddot{p}+p} \l^{d} 2^{h D_{sc}} 2^{\th(h-N)}
\end{equation}

with $D_{sc}$ as in Eq. \eqref{def_Dsc_3} and 
$d:=
\tfrac{1}{2}(|\ud{n}|-\dim\ud{n})+ \tfrac{5}{8}|\ud{\dot{n}}|+ \tfrac{1}{4}\ddot{p} + \tfrac{1}{8}\max\left\{0, q+q'+\tilde{q}+\tilde{q}'-2\right\}$.
\end{proposition}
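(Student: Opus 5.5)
Since every kernel in Proposition \ref{prop_multiscale_ext} contains at least one $\dot H$ or $\ddot H$, the strategy is a plain Gallavotti--Nicol\`o tree expansion for $\mc{U}^{(h)}$, Eq. \eqref{eq:45}, with no improved bounds. The non-irrelevant kernels without $\dot H,\ddot H$ are supplied by Remark \ref{rmk_4}, together with Theorem \ref{thm:IB} and Proposition \ref{prop:IB2}. These enter the trees as endpoints at scales $h_v<N+1$, with the bounds of Eqs. \eqref{eq_IB_1}--\eqref{eq_IB_6} and \eqref{eq_IB_2:1}--\eqref{eq_IB_2:3}, rescaled by $\prod_j n_j^{-2}$.

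\textbf{Classification by $D_{sc}$.} With $D_{sc}$ as in Eq. \eqref{def_Dsc_3}, any kernel carrying a $\dot H$ has $D_{sc}\le 2-3-1<0$, since $\dot H$ always comes with at least two $\psi$ or $\psi$-contractions. So it is irrelevant, and the short memory factor $2^{\th(h-N)}$ is produced exactly as for $\ddot G$ in Eq. \eqref{eq:11}. Kernels with $\ddot p\ge1$ have $D_{sc}=2-q-\ddot p-\dots$, which is negative unless $\ddot p=2$ and $q=\tilde q=p=0$, $\dim\ud n=\dim\ud{\dot n}=0$; the marginal $\ddot H^2$ term is the single exception. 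By charge conjugation, and as for $W^{(h)|0,0;0,0}_{\eset;\eset;\eset;1}=0$, the term $\ddot H^1$ with no fields vanishes. All other monomials with $\ddot p=1$ have $q\ge1$ and are therefore irrelevant. Hence, apart from $\ddot H^2$, every kernel covered by the statement is irrelevant. It is expanded down to scale $N$, and the analogue of Eq. \eqref{eq_tree4} (Appendix \ref{app_tree}) gives Eq. \eqref{eq_45}.

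\textbf{Tree bound and $\l$-powers.} The powers of $\l$ come from the bare bounds at scale $N$. From $\|\tilde v_n\|_1^w\le C^n\l^n$ and the extra $2^{-2N}$ in the $\dot H$ term, one gets $\l^{5/8\,n}$ per $\dot H^{(n)}$ after sacrificing part of the power to absorb the factor $n$ and the $2^{-N}$ losses. The kernels $w_{n,p}$ in Eq. \eqref{eq:4b} with $p\ge2$ contain at least $n-1\ge1$ boson lines, which yields at least $\l^{1/4}$ per $\ddot H$. In a tree, at least one endpoint at scale $N+1$ carries the $\dot H$ or $\ddot H$ field, which explains the factor $2^{\th(h-N)}$. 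Summing over scale labels uses $D_{sc}\le -1$ to pay for $2^{-\th}$.

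The fermionic structure is the same as in Proposition \ref{prop:SDB}. Gram--Hadamard determinant bounds hold for $\mc{E}^T$ via the Brydges--Battle--Federbush tree formula. The weighted $L^1$ norms are controlled because the endpoint kernels have stretched-exponential decay. The combinatorics of the $n$-sums are handled by $\l^{(n-1)/2}$ from the endpoints.

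\textbf{The marginal term $\ddot H^2$ (main obstacle).} I would write $U^{(h)|0,0;0,0}_{\eset;\eset;2;0}$ as a telescopic sum $\sum_{k=h}^{N-1}\wt U^{(k)}+U^{(N)}$. The bare term at scale $N$ is the $n=2$, $p=2$ piece of $w_{2,2}$ contracted with one fermion loop. More generally, each single-scale contribution has a tree whose endpoint with $\ddot H^2$ lies at scale $N+1$. That endpoint has $D_{sc}\le -2$ at that level, namely $n\ge2$ bilinears and $\l^{1/2}$ from the boson line. Because every such contribution has an irrelevant endpoint at scale $N+1$ linked to the root, it carries $2^{\th(k-N)}$, so $\|\wt U^{(k)}\|\le (1-2^{\th-1})C_0''^3\l^{1/2}2^{\th(k-N)}$. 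Summing the geometric series produces the constant $\frac{1-2^{\th-1}}{1-2^{-\th}}$ appearing in the statement.

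The delicate point is whether the marginal $\ddot H^2$ endpoints at lower scales ever need to be reinserted. They do not, because $\ddot H^2$ carries no $\psi$ and so can never appear as an internal endpoint of a tree. This is the step I would check most carefully.
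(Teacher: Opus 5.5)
Your architecture is the paper's: single-scale tree bounds in which the factor $2^{\theta(h-N)}$ comes from an endpoint at scale $N+1$ carrying $\dot H$ or $\ddot H$, summation over scales using $D_{sc}\le-1$ for the irrelevant kernels, and a geometric series for the marginal kernel $\ddot H^2$. The gap is in the step that identifies $\ddot H^2$ as the \emph{only} non-irrelevant kernel, namely your treatment of $\ddot p=1$.

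By Eq.~\eqref{def_Dsc_3}, $\ddot H$ alone has $D_{sc}=1$, while $\ddot H\psi^+\psi^-$, $\ddot H\,H^{(n)}$ and $\ddot H B$ have $D_{sc}=0$. Your claim that every other $\ddot p=1$ monomial has $q\ge1$ and is irrelevant is therefore false on both counts. If these kernels were nonzero they would be marginal running couplings. They would reappear as endpoints at every scale, the short-memory factor would be lost, and a plain tree expansion without improved bounds would not close.

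Your charge-conjugation argument for the $\ddot H$ tadpole does not work either. For a $G$-type source the symmetry only gives $U(b)=U(\overline b)$, because $O_{\overline b}\neq-O_b$; the sign you borrow comes from the factor $\varepsilon$ in the chiral vertex $c^5$. Indeed the $J$-tadpole obeys the same symmetry and does not vanish: its leading term is $2^N\mu_{N,\lambda}\mathcal T^{(0,N]}$, cf.~Eq.~\eqref{eq:34}. The paper uses charge conjugation there only to conclude that it is constant in $b$.

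The missing observation, which is the paper's argument, is structural. Every monomial of Eq.~\eqref{def_potential_3} containing $\ddot H$ contains at least two of them ($2\le p\le n$), so $\partial\mathcal U/\partial\ddot H_b$ vanishes at $\ddot H=0$. Differentiating under the Grassmann integral in Eq.~\eqref{eq:45} gives the same for $\mathcal U^{(h)}$ at every scale. Hence all kernels with $\ddot p=1$ vanish identically, whatever their $D_{sc}$. With this in place the rest of your proposal goes through.

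Two minor points:
\begin{itemize}
\item $U^{(N)|0,0;0,0}_{\emptyset;\emptyset;2;0}=0$, since the bare potential has no pure-$\ddot H$ term. The first contribution appears after one integration, not at scale $N$.
\item A kernel with a $\dot H$ but no external $\psi$ (all bilinears contracted) has $D_{sc}=-1$, not $\le-2$. This is still enough for irrelevance.
\end{itemize}
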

In virtue of Eq. \eqref{eq:46}, Proposition \ref{prop_multiscale_ext} directly implies the bounds of the kernels appearing in Theorem \ref{thm_UV}, see Subsection \ref{subsec:boundkernels} for more details.

\begin{proof}[Proof of Proposition \ref{prop_multiscale_ext}] The proof goes along the same lines as that of Proposition \ref{prop:SDB}. It is convenient to introduce the single-scale contributions to the kernels, $\wt{U}^{(h)|q,q';\tilde{q},\tilde{q}'}_{\ud{n};\ud{\dot{n}};\ddot{p};p}:= U^{(h)|q,q';\tilde{q},\tilde{q}'}_{\ud{n};\ud{\dot{n}};\ddot{p};p}- U^{(h+1)|q,q';\tilde{q},\tilde{q}'}_{\ud{n};\ud{\dot{n}};\ddot{p};p}$, for $0\le h\le N-1$. For these kernels, by using the tools of the tree expansion (see Appendix \ref{app_tree}), one gets the following bounds:
\begin{equation}
\label{eq:tree:H}
\big\|\wt{U}^{(h)|q,q';\tilde{q},\tilde{q}'}_{\ud{n};\ud{\dot{n}};\ddot{p};p; \ud{\ddot{n}}} \big\|_1^w\le \big(1-2^{\th-1}\big) C_0''^{1+\dim\ud{n}+\dim\ud{\dot{n}}+ \ddot{p}+ p} 2^{h D_{sc}} \l^{d} 2^{\th(h-N)},
\end{equation}

for $\l\le \l_0''$. The proof of Eq. \eqref{eq:tree:H} is very similar to that of Eq. \eqref{eq_tree4}, and thus will be omitted. We just mention, as a main difference from Eq. \eqref{eq_tree4}, that in the r.h.s. of Eq. \eqref{eq:tree:H} the \virg{short memory factor} $2^{\th(h-N)}$ is always present. This is is trace of the fact that the monomials involving $\dot{H}$ and $\ddot{H}$, in the r.h.s. of Eq. \eqref{def_potential_3}, are all irrelevant and do not produce, under the multiscale integration, any \virg{running coupling function}, namely a relevant or marginal term involving $\psi$.

Let us consider a kernel $U^{(h)}$ which is irrelevant, namely has $D_{sc}\le-1$. Using Eq. \eqref{eq:tree:H}, we find:
\[\begin{split} \big\|U^{(h)|q,q';\tilde{q},\tilde{q}'}_{\ud{n};\ud{\dot{n}};\ddot{p};p}\big\|_1^w &\le\sum_{h'=h}^{N-1} \big\|\wt{U}^{(h')|q,q';\tilde{q},\tilde{q}'}_{\ud{n};\ud{\dot{n}};\ddot{p};p}\big\|_1^w + \big\|U^{(N)|q,q';\tilde{q},\tilde{q}'}_{\ud{n};\ud{\dot{n}};\ddot{p};p}\big\|_1^w \\
&\le \sum_{h'=h}^N \big(1- 2^{\th-1}\big) C_0''^{1+\dim\ud{n}+\dim\ud{\dot{n}}+\ddot{p}+p} \l^{d} 2^{h' D_{sc}} 2^{\th(h'-N)}\\
&\le \big(1- 2^{\th-1}\big) C_0''^{1+\dim\ud{n}+\dim\ud{\dot{n}}+\ddot{p}+p} \l^{d} 2^{h D_{sc}} 2^{\th(h-N)} \sum_{h'=h}^N 2^{(h'-h)(\th+D_{sc})}\\
&\le C_0''^{1+\dim\ud{n}+\dim\ud{\dot{n}}+\ddot{p}+p} \l^{d} 2^{h D_{sc}} 2^{\th(h-N)},
\end{split}\]

where we used that $\|U^{(N)|q,q';\tilde{q},\tilde{q}'}_{\ud{n};\ud{\dot{n}};\ddot{p};p}\|_1^w$ admits the same bound as the r.h.s. of Eq. \eqref{eq_45} with $h=N$. Now we must analyze the non-irrelevant terms, namely those with $D_{sc}\ge0$. Observe that since (by construction) there are no terms with a single $\ddot{H}$ field in the r.h.s. of Eq. \eqref{def_potential_3}, nor they can be generated by the multiscale integration, the only non-irrelevant kernel with $|\ud{\dot{n}}|+\ddot{p}\ge1$ is the one with $|\ud{\dot{n}}|=0,\ddot{p}=2$. Even though this kernel is marginal, the \virg{short memory factor} $2^{\th(h-N)}$ in the r.h.s. of Eq. \eqref{eq:tree:H} is sufficient to provide a finite bound:
\[\begin{split}
& \big\|U^{(h)|0,0;0,0}_{\eset;\eset;2;0}\big\|_1^w\le \sum_{h'=h}^{N-1}\big\|\wt{U}^{(h')|0,0;0,0}_{\eset;\eset;2;0}\big\|_1^w\le \sum_{h'=h}^{N-1} \big(1- 2^{\th-1}\big) C_0''^{3} \l^{\frac{1}{2}} 2^{\th(h'-N)}\le \frac{1-2^{\th-1}}{1-2^{-\th}} C_0''^3\l^{\frac{1}{2}}.
\end{split}\]
\end{proof}

\section{Proof of Theorem \ref{thm_UV}.}
\label{sec:proofTHMUV}

In this section we combine the partial results obtained along Sections \ref{sect_multiscale}-\ref{sect_external} in order to prove the statements of Theorem \ref{thm_UV}. For the sake of clarity we stress the dependence on $h^*_M$ of the generating functional in Eq.  \eqref{def_functional_UV}, by writing it as $\mc{W}^{(u.v.|h^*_M)}$, and similarly for its kernels appearing in Theorem \ref{thm_UV}. Recall that by Eq. \eqref{eq:rescaling} the kernels of $\mc{W}^{(u.v.|h^*_M)}$ and $\mc{W}^{(u.v.|0   )}$ are related by a simple rescaling.

\subsection{Bounds for the kernels}
\label{subsec:boundkernels}
In order to prove Eqs. \eqref{eq_50}-\eqref{eq_51b} we follow the structure of the paper by collecting the desired bounds first for $\varphi,J,B=0$ (i.e. $\tilde q,\tilde q',p,p'=0$, cf. Sections  \ref{sect_tree_expansion}-\ref{sect_improved}) then for $J=0$ (i.e. $p=0$, cf. Section \ref{ssection_external_1}) and finally in the  most general case  (i.e. $q, q',\tilde q,\tilde q',p, p' \neq 0$, cf. Section \ref{ssection_external_2}).
\paragraph{Case $\varphi,J,B=0$.} If also $\psi=0$, by definition (cf. Eqs.  \eqref{eq_main_potential}, \eqref{eq:9} for $h=0$ and Eq. \eqref{eq_tree2}): 
\begin{equation}
\mc{F}^{(u.v|0)}=L^{-2}W^{(0)|0,0}_{\eset;\eset;\eset}=\sum_{h=0}^{N-1} L^{-2} \widetilde W^{(0)|0,0}_{\eset;\eset;\eset},
\end{equation}

which, after Eq. \eqref{eq_tree4}, implies that $|\mc{F}^{(u.v.|0)}|\leq \frac{1}{3}(1-2^{-\theta})C_0 \lambda 2^{2N}$. Rescaling with Eq. \eqref{eq:rescaling}, $\mc{F}^{(u.v.|h^*_M)}=2^{2h^*_M}\mc{F}^{(u.v.|0)}$ and $\lambda \mapsto \lambda 2^{-2h^*_M}$, $N \mapsto N-h^*_M$ so that $|\mc{F}^{(u.v.|h^*_M)}|\leq C_\star (\l2^{-2h^*_M})2^{2N}$, for $C_\star \geq \tfrac{1}{3}(1-2^{-\theta})C_0$, i.e. the first of Eq. \eqref{eq_51a}.

If $\psi \neq 0$, combining Eqs. \eqref{eq:5}, \eqref{eq:v_hbackward} and Eq. \eqref{eq:9}, and recalling the rescaling in Eq. \eqref{eq:rescaling}, the desired bounds in this case follow from Theorem \ref{thm:IB} (applied with $h=0$): we thus find the constraint that $N_\star \geq N_0$, $\l_\star\leq \l_0$ and $C_\star \geq C_0 \max\big\{R,\frac{1}{3}(1+2^{-\theta})\big\}.$

\paragraph{Case $J=0$, $(\vphi,B)\ne0$.}
\medskip
In this case, since $\mc{W}^{(u.v.|0)}(\psi;\vphi;0;B)= \mc{V}^{(0)}(\psi;0;0;0;\vphi;B)$ (cf. Eqs.  \eqref{eq:44}, \eqref{eq:VhconvphiBmaJnullo}), Proposition \ref{prop:IB2} readily implies the claims of Theorem \ref{thm_UV} concerning the kernels with zero $J$ variables, namely $W^{(u.v.|0)}_{q,q';\tilde{q},\tilde{q}';0;p}$, with
the constraint $N_{\star}\ge N'_0, \l_{\star}\le\l_0'$ and $C_{\star}\ge C_0'$. For $W^{(u.v.|h^*_M)}_{q,q';\tilde{q},\tilde{q}';0;p}$ one simply uses the rescaling in Eq. \eqref{eq:rescaling}.

\paragraph{Case $J\ne0$.}
In this case the kernels  $U^{(0)|q,q';\tilde{q},\tilde{q}'}_{\ud{n};\ud{\dot{n}};\ddot{p};p'}$ of $\mc{U}^{(0)}$ (cf. Eq. \eqref{eq:45})  collect all the information about $W^{(u.v.|0)}_{q,q';\tilde{q},\tilde{q};p;p'}$,
since $
\mc{W}^{(u.v.|0)}(\psi;J;\vphi;B)= \mc{U}^{(0)}\big(\psi;G(J);G(J);G(J);\vphi;B\big)$ (compare Eqs. \eqref{eq:44}, \eqref{eq:45} and below). We discuss first the strategy in two cases of importance.

\medskip 

\emph{The tadpole $G(J)$.} 
We have that
\begin{equation}
\label{eq:33}
\begin{split}
W^{(u.v.|0)}_{0,0;0,0;1;0}(b)&= \sum_{n\ge1}U^{(0)|0,0;0,0}_{n;\eset;\eset}(b)+ \sum_{\dot{n}\ge2} U^{(0)|0,0;0,0}_{\eset;\dot{n};\eset}(b)\\
&= \sum_{n\ge1}\frac{1}{n^2} W^{(0)|0,0}_{n;\eset;\eset}(b)+ \sum_{\dot{n}\ge2} U^{(0)|0,0;0,0}_{\eset;\dot{n};\eset}(b).
\end{split}
\end{equation}
Note that the above kernels are actually constant w.r.t. $b\in\L_B$ due to the \it{charge conjugation} and the \it{axes flip} symmetries (cf. Lemma \ref{lemma:symm}) and the translation invariance of the theory. Using the bounds for the kernels $W^{(0)|0,0}_{n;\eset;\eset}$ from Theorem \ref{thm:IB}, and the bounds for $U^{(0)|0,0;0,0}_{\eset;\dot{n};0;0}$ from Proposition \ref{prop_multiscale_ext}, we find:
\[\begin{split}
\big\|W^{(u.v.|0)}_{0,0;0,0;1;0}\big\|_1^w&\le \sum_{n\ge1} \tfrac{1}{n^2} \big\|W^{(0)|0,0}_{n;\eset;\eset}\big\|_1^w + \sum_{\dot{n}\ge2} \big\|U^{(0)|0,0;0,0}_{\eset;\dot{n};0;0}\big\|_1^w \\
&\le \sum_{n\ge1} \tfrac{1}{n^2} C_0 2^N  \l^{\frac{n-1}{2}}+ \sum_{\dot{n}\ge2} C_0''^2\l^{\frac{5}{8}\dot{n}} 2^{-\th N}\le 2^N\Big(\tfrac{\pi^2}{6}C_0^2+ \tfrac{\sqrt{2}}{\sqrt{2}-1} C_0''^2\Big),
\end{split}\]
for $\l\le \min\{\l_0,\l_0'',\tfrac{1}{2}\}$; by rescaling $\|W^{(u.v.|h^*_M)}_{0,0;0,0;1;0}\|^w_1=2^{h^*_M}\|W^{(u.v.|0)}_{0,0;0,0;1;0}\|^w_1$ one obtains the third of Eq. \eqref{eq_51a}.

\emph{The polarization bubble $G(J)G(J)$.} We have that
\begin{equation}\label{eq:30}\begin{split}W^{(u.v.|0)}_{0,0;0,0;2;0}(b,b')& =W^{(0)|0,0}_{(1,1);\eset;\eset}(b,b')+\sum_{\substack{n_1,n_2\ge1\\ n_1+n_2\ge3}} \tfrac{1}{n_1^2 n_2^2} W^{(0)|0,0}_{(n_1,n_2);\eset;\eset}(b,b')  \\&+ U^{(0)|0,0;0,0}_{\eset;\eset;2;0}(b,b')+\hspace{-5pt} \sum_{n\ge1,\dot{n}\ge2} U^{(0)|0,0;0,0}_{n;\dot{n};0;0}(b,b'),\end{split}\end{equation}which, again in force of Theorem \ref{thm:IB} and Proposition \ref{prop_multiscale_ext}, implies for $\l\le \min\{\l_0,\l_0'',\tfrac{1}{2}\}$:
\begin{equation}
\label{eq:47}
\begin{split}&
\big\|g^A*W^{(u.v.|0)}_{0,0;0,0;2;0}\big\|_1^w\le\\&C_0^3+ \sum_{\substack{n_1,n_2\ge1\\ n_1+n_2\ge3}} \frac{\l^{\frac{n_1+n_2-2}{2}}}{n_1^2 n_2^2} + C_0''^3\l^{\frac{1}{2}} \sum_{n\ge1,\dot{n}\ge2} C_0''^3 \l^{\frac{1}{2}(n-1)+ \frac{5}{8}\dot{n}}\le C_0^3+ C_0''^3+ \tfrac{\sqrt{2}}{\sqrt{2}-1}(1+ C_0''^3). \end{split}\end{equation}

By the rescaling, $g^A_{\mu,\nu}(x-y)\mapsto g^A_{\mu,\nu}(2^{h^*_M}(x-y))$ and $W^{(u.v.|0)}_{0,0;0,0;2;0}\mapsto W^{(u.v.|h^*_M)}_{0,0;0,0;2;0}$, with

\[W^{(u.v.|h^*_M)}_{0,0;0,0;2;0}\big((x,\mu,\e),(y,\nu,\e')\big)= 2^{2h^*_M}W^{(u.v.|0)}_{0,0;0,0;2;0}\big((2^{h^*_M}x,\mu,\e),(2^{h^*_M}y,\nu,\e')\big),\]

so that $\big\|g^A*W^{(u.v.|h^*_M)}_{0,0;0,0;2;0}\big\|_1^w$ is bounded by the r.h.s. of Eq. \eqref{eq:47} times an extra $2^{-2h^*_M}$, proving the first bound in Eq. \eqref{eq_51b} for $C_{\star}$ greater than the r.h.s. of Eq. \eqref{eq:47}.

The desired bound for the kernel $G(J)B$, i.e. the second of Eq. \eqref{eq_51b}, follows analogously.

\medskip

\emph{Remaining kernels.}  Similar (though more cumbersome) steps yield finite bounds for all the remaining kernels of arbitrary order in the fields $G(J)$. As above one has to express the kernels of $\mc{W}^{(u.v.|0)}$ in terms of those of $\mc{U}^{(0)}$, via the relation Eq. \eqref{eq:46}, and then exploit Theorem \ref{thm:IB} and Propositions \ref{prop:IB2}, \ref{prop_multiscale_ext} for bounding the various terms that appear. All in all one finds the validity of the whole collection of bounds in Item \ref{itt:1} of Theorem \ref{thm_UV}, for $N_\star$ large enough and suitable constants $C_{\star}$ large enough and $\l_{\star}$ small enough.

\subsection{Correlation functions}
Here we discuss the Item \ref{itt:2} of Theorem \ref{thm_UV}, concerning the two-point function $\hat{S}^{(u.v.)}$, the current-current function $\hat{\Sigma}^{(u.v.)}_{\mu,\nu}$ and its chiral counterpart. We keep using the same conventions introduced at the beginning of Section \ref{sec:proofTHMUV}.

\paragraph{The two-point Schwinger function.} By construction we have that \sloppy $\hat{S}^{(u.v.|0)}_{s,s'}(k)= \int_{\L} dx\, e^{ik\cdot x} W^{(u.v.|0)}_{0,0;1,1;0;0}\big((x,s),(0,s')\big)$. In analogy with Eqs. \eqref{eq:24a}, \eqref{eq:24b}, it is straightforward to check that the kernel $W^{(u.v.|0)}_{0,0;1,1;0;0}$ admits the expansion:
\begin{equation}
\begin{split}
&W^{(u.v.|0)}_{0,0;1,1;0;0}\big((x,s),(0,s')\big)=\\
&g^{(0,N]}_{s,s'}(x) - \sum_{s_1,s_2\in\{1,2\}}\int_{\L} dydz\, g^{(0,N]}_{s,s_1}(x-y) W^{(u.v.|0)}_{1,1;0,0;0;0}\big((y,s_1), (z,s_2)\big) g^{(0,N]}_{s_2,s'}(z).
\end{split}
\end{equation}

It follows that
\[\begin{split}
&\hat{S}^{(u.v.|0)}_{s,s'}(k)= \hat{g}^{(0,N]}_{s,s'}(k) - \sum_{s_1,s_2\in\{1,2\}} \hat{g}^{(0,N]}_{s,s_1}(k) \hat{g}^{(0,N]}_{s_2,s'}(k) \int_{\L}dy\, e^{ik\cdot y} W^{(u.v.|0)}_{1,1;0,0;0;0}\big((y,s_1), (0,s_2)\big).
\end{split}\]

Note that since $1-\chi\equiv0$ on $[0,\frac{1}{2}]$, the bound in Eq. \eqref{eq:bound:Wilson} implies:
\[\begin{split}
|\hat{g}^{(0,N]}_{s,s'}(k)|&= \big(1-\chi(|k|)\big) \big|\big(-i\slashed{s}(k)+ m_N+ M_N(k) \big)^{-1}_{s,s'}\big|\le \sqrt{2}\pi \frac{1-\chi(|k|)}{\sqrt{|k|^2 + m_N^2}}\le \frac{4\sqrt{2}\pi}{|k|+1}.
\end{split}\]

Using also the bound for the kernel $W^{(u.v.|0)}_{1,1;0,0;0;0}\equiv W^{(0)|1,1}_{\eset;\eset;\eset}$ from Theorem \ref{thm:IB}, we find:
\[\begin{split}
&\big|\hat{S}^{(u.v.|0)}_{s,s'}(k)- \hat{g}^{(0,N]}_{s,s'}(k)\big| \le \sum_{s_1,s_2\in\{1,2\}} \frac{32\pi^2}{1+|k|^2} \big\|W^{(u.v.|0)}_{1,1;0,0;0;0}\big\|_1^w\le \frac{32\pi^2R\l}{1+|k|^2},
\end{split}\]

which, after the rescaling in Eq. \eqref{eq:rescaling}, yields the desired bound in Eq. \eqref{eq:23a} for $\hat{S}^{(u.v.|h^*_M)}_{s,s'}(k)$, if $C_{\star}\ge 32\pi^2R$.

\paragraph{The non-interacting current functions.} For the sake of concreteness we discuss only the \it{vector current function}, namely we prove the first bound of Eq.  \eqref{eq:23b} and Eq.  \eqref{eq_24a} since the second of Eq. \eqref{eq:23b} and Eq. \eqref{eq_24b}, concerning the \it{chiral current function}, can be proved with the same strategy. 

\medskip \emph{Proof of Eq. \eqref{eq:23b}.} 
From the definition $\hat{\Sigma}^{(u.v.|0)}_{\mu,\nu}(p):= \int_{\L}dx e^{-ip\cdot x} \frac{\d^2\mc{W}^{(u.v.|0)}}{\d J_{\mu,x}\d J_{\nu,0}}(0)$, and the structure of $\mc{W}^{(u.v.|0)}$ (cf. Eq. \eqref{eq_main_potential}), it is straightforward to check that
\begin{equation}
\label{eq:32}
\hspace{-7pt}\hat{\Sigma}^{(u.v.|0)}_{\mu,\nu}(p)= -2\l \Big( 2^{-N}\d_{\mu,\nu} W^{(u.v.|0)}_{0,0;0,0;1;0} + \hspace{-3pt} \sum_{\e,\e'=\pm} \hspace{-3pt}\e\e' \hspace{-4pt}\int_{\L} \hspace{-3pt} dx\, e^{-ip\cdot x} W^{(u.v.|0)}_{0,0;0,0;2;0}\big((x,\mu,\e),(0,\nu,\e') \big) \Big), \end{equation}

where recall that the kernel $W^{(u.v.|0)}_{0,0;0,0;1;0}$ is actually a constant (see Eq. \eqref{eq:33} and below). Now we explicitly characterize the dominant term in $\lambda$ of the r.h.s. of Eq. \eqref{eq:32}. 
About the first term in the r.h.s., we start from the decomposition in Eq. \eqref{eq:33} for the kernel $W^{(u.v.|0)}_{0,0;0,0;1;0}$ and we apply Lemma \ref{lemma_id_B} to further expand the kernel $W^{(0)|0,0}_{1;\eset;\eset}$:
\begin{equation}
\label{eq:34}
\hspace{-4pt} W^{(0)|0,0}_{1;\eset;\eset}(b)= \mu_{N,\l} 2^N \mc{T}^{(0,N]} +  \mu_{N,\l}\int d\eta_1 d\eta_2 c_b(\eta_1,\eta_2)\int d\eta d\eta' g^{(h,N]}_{\eta,\eta_1} g^{(h,N]}_{\eta_2,\eta'} W^{(0)|1,1}_{\eset;\eset;\eset}(\eta',\eta),\end{equation}

where 
\begin{equation}
\label{def:tadpole}
\mc{T}^{(h,N]}:= 2^{-N}\mc{E}^T_{(h,N]}(O_b)\equiv - 2^{-N}\int_{\L_F^2} d\eta_1 d\eta_2 c_b(\eta_1,\eta_2) g^{(h,N]}_{\eta_2,\eta_1}.
\end{equation}

Plugging the decomposition in Eq. \eqref{eq:34} into Eq. \eqref{eq:33}, we find
\begin{equation}
\label{eq:35}
\begin{split}
& \big|2^{-N}\big(W^{(u.v.|0)}_{0,0;0,0;1;0}- \mc{T}^{(0,N]}\big)\big|\le\\
&(1- \mu_{N,\l})2^{-N}\|c\|_1^w \|g^{(0,N]}\|_{\infty}+ 2^{-N}\|c\|_1^w \|g^{(0,N]}\|_1^w \|g^{(0,N]}\|_{\infty}\big\|W^{(0)|1,1}_{\eset;\eset;\eset}\big\|_1^w+\\
& \sum_{n\ge2} \tfrac{1}{n^2} \big\|W^{(0)|0,0}_{n;\eset;\eset}\big\|_1^w + \sum_{\dot{n}\ge2} \big\|U^{(0)|0,0;0,0}_{\eset;\dot{n};0;0}\big\|_1^w\le C_1\l^{\frac{1}{2}},
\end{split}
\end{equation}

for a suitable constant $C_1\ge1$, where in the last step we used Theorem \ref{thm:IB} for the bounds of $W^{(0)|1,1}_{\eset;\eset;\eset}$ and $W^{(0)|0,0}_{n;\eset;\eset}$, and Proposition \ref{prop_multiscale_ext} for the bound of $U^{(0)|0,0;0,0}_{\eset;\dot{n};0;0}$. 

About the second summand in the r.h.s. of Eq. \eqref{eq:32}, by Eq. \eqref{eq:30} we have that
\[\begin{split}
&\big\|g^A*\big(W^{(u.v.|0)}_{0,0;0,0;2;0}- \Pi^{(0,N]} \big)\|_1^w\le\big\|g^A*\big( W^{(0)|0,0}_{(1,1);\eset;\eset} - \Pi^{(0,N]} \big)\big\|_1^w+ \\
&+\|g^A\|_1^w \left( \sum_{\substack{n_1,n_2\ge1\\ n_1+n_2\ge3}} \tfrac{1}{n_1^2 n_2^2}  \big\|W^{(0)|0,0}_{(n_1,n_2);\eset;\eset}\big\|_1^w + \big\|U^{(0)|0,0;0,0}_{\eset;\eset;2;0}\big\|_1^w +\hspace{-5pt} \sum_{n\ge1,\dot{n}\ge2} \big\|U^{(0)|0,0;0,0}_{n;\dot{n};0;0}\big\|_1^w\right)\le C_2\l^{\frac{1}{2}},
\end{split}\]

for a suitable constant $C_2\ge1$. In momentum space this bound reads:
\begin{equation}
\label{eq:31}
\begin{split}
& \Big|\sum_{\e=\pm} \e \int_{\L}dx\, e^{-ip\cdot x} W^{(u.v.|0)}_{0,0;0,0;2;0}\big((x,\mu,\e),(0,\nu,\e') \big) - \e' \hat{\Pi}^{(0,N]}_{\mu,\nu}(p) \Big|\le \big(|p|^2+4)C_2 \l^{\frac{1}{2}},
\end{split}
\end{equation}

where we used the fact that $|\s(p)|^2+M^2\le |p|^2 +4$ (recall that we are assuming $h^*_M=0$) and we have introduced:
\begin{equation}
\label{def:bubble:Fourier}
\begin{split}
\hat{\Pi}^{(h,N]}_{\mu,\nu}(p)&:= \frac{1}{2}\sum_{\e,\e'=\pm} \e\e' \int_{\L} \frac{dx dy}{L^2} e^{-ip\cdot(x-y)} \Pi^{(h,N]}\big((x,\mu,\e),(y,\nu,\e')\big)\\
&\equiv \sum_{\e=\pm} \e\e' \int_{\L} \frac{dx dy}{L^2} e^{-ip\cdot(x-y)} \Pi^{(h,N]}\big((x,\mu,\e),(y,\nu,\e')\big),
\end{split}
\end{equation}

with $\Pi^{(h,N]}$ defined in Eq. \eqref{eq_29a}, and the last equality holds in force of the \it{charge conjugation} symmetry in Eq. \eqref{symm:cc}. By simply reverting the rescaling in Eq. \eqref{eq:rescaling}, Eqs.  \eqref{eq:34}, \eqref{eq:31} become:
\begin{align}
&\big|2^{-N}\big(W^{(u.v.|h^*_M)}_{0,0;0,0;1;0}- \mc{T}^{(h^*_M,N]}\big)\big|\le C_1(2^{-2h^*_M}\l)^{\frac{1}{2}},\\
& \Big|\sum_{\e=\pm} \e \int_{\L}dx\, e^{-ip\cdot x} W^{(u.v.|h^*_M)}_{0,0;0,0;2;0}\big((x,\mu,\e),(0,\nu,\e') \big) - \e' \hat{\Pi}^{(h^*_M,N]}_{\mu,\nu}(p) \Big|\le 5C_2 (2^{-2h^*_M}\l)^{\frac{1}{2}},
\end{align}

for every $p\in\L^*$ such that $|p|\le 2^{h^*_M}$. Plugging these bounds back into Eq. \eqref{eq:32}, and letting: 
\begin{equation}
\label{eq:41}
\hat{\mf{B}}^{(u.v.|h^*_M)}_{\mu,\nu}(p,M,m_N,L):= -2\Big(\d_{\mu,\nu} \mc{T}^{(h^*_M,N]}(m_N,L)+ 2 \hat{\Pi}^{(h^*_M,N]}_{\mu,\nu}(p,m_N,L) \Big),
\end{equation}

we find: 
\begin{equation}
\label{eq:42}
\begin{split}
&\Big|\hat{\Sigma}^{(u.v.|h^*_M)}_{\mu,\nu}(p) - \l\hat{\mf{B}}^{(u.v.|h^*_M)}_{\mu,\nu}(p,M,m_N,L)\Big|\le 2\l 2^{-N}\d_{\mu,\nu} \Big| W^{(u.v.|h^*_M)}_{0,0;0,0;1;0}- \mc{T}^{(h^*_M,N]}\Big| + \\
&+2\l \Big|\sum_{\e,\e'=\pm} \e\e' \int_{\L}dx\, e^{-ip\cdot x} W^{(u.v.|h^*_M)}_{0,0;0,0;2;0}\big((x,\mu,\e),(0,\nu,\e') \big) - 2\hat{\Pi}^{(h^*_M,N]}_{\mu,\nu}(p) \Big|\\ &\le 2(C_1+5C_2)2^{-h^*_M}\l^{\frac{3}{2}},
\end{split}
\end{equation}

thus proving the first of Eq. \eqref{eq:23b}, for $C_{\star}\ge 2(C_1+5C_2)$.

\medskip

For the incoming analysis it is also useful to introduce $\hat{\mf{B}}_{\mu,\nu}(p,M,m_N,L):=\lim_{\l \to 0^+} \frac{1}{\lambda} \hat{\Sigma}_{\mu,\nu}(p)$, which, in force of Eq. \eqref{eq:WI5a},  satisfies the following Ward Identity:
\begin{equation}
\label{eq:WI:bubble}
\sum_{\mu=0,1} \s_{\mu}(p) \hat{\mf{B}}_{\mu,\nu}(p,M,m_N,L)=0, \; \forall p\in \tfrac{2\pi}{L}\mbb{Z}^2 \cap (-\tfrac{\pi}{a},\tfrac{\pi}{a}].
\end{equation}

It is straightforward to check that
\begin{equation}
\hat{\mf{B}}_{\mu,\nu}(p,M,m_N,L)= -2\Big(\d_{\mu,\nu} \mc{T}^{(\le N)}(m_N,L)+ 2 \hat{\Pi}^{(\le N)}_{\mu,\nu}(p,m_N,L) \Big),
\end{equation}

where $\mc{T}^{(\le N)}$ and $\hat{\Pi}^{(\le N)}_{\mu,\nu}$ are as in the r.h.s. of Eqs. \eqref{def:tadpole} and \eqref{def:bubble:Fourier} respectively, with the fermionic propagator $g^{(h,N]}$ in their definition replaced by $g^{(\le N)}$. In order to proceed we make use of the following proposition for the characterization of $\hat{\mf{B}}_{\mu,\nu}$ and $\hat{\mf{B}}^{(u.v.)}_{\mu,\nu}$.

\begin{proposition}
\label{prop_limitebollaGG}
For every $M,m_N,L$ as in Theorem \ref{thm_UV}, with $m_N\ne0$, the functions $\L^*\ni p\mapsto \hat{\Pi}^{(h^*_M,N]}(p,m_N,L)$ and $\L^*\ni p\mapsto \hat{\Pi}^{(h^*_M,N]}(p,m_N,L)$ admit a continuous extension over $(-\frac{\pi}{a},\frac{\pi}{a}\big ]^2$, which, with some abuse of notation, we will denote by the same symbol; their $L\to\infty$ limit  exists and is reached uniformly w.r.t. $p\in(-\frac{\pi}{a},\frac{\pi}{a}]^2$. Besides, for $p\ne0$,
\begin{equation}
\begin{split}
    & \lim_{M \to 0^+}\lim_{m_N\to 0}\lim_{L \to \infty} \hat \Pi^{(h_M^*,N]}_{\mu,\nu}(p,m_N,L)=\lim_{m_N\to 0}\lim_{L \to \infty} \hat \Pi^{(\leq N)}_{\mu,\nu}(p,m_N,L)=:\hat \Pi_{\mu,\nu}(p), \\ & \lim_{M \to 0^+}\lim_{m_N\to 0}\lim_{L \to \infty} \mc{T}^{(h_M^*,N]}(m_N,L)=\lim_{m_N\to 0}\lim_{L \to \infty} \mc{T}^{(\leq N)}(m_N,L)=: \mc{T}, 
    \end{split}\label{eq:bubbletadpolelimits}
\end{equation}
where 
\begin{equation}
    \hat \Pi_{\mu,\nu}(p)=\frac{1}{8\pi |p|^2}\left ( 2p_\mu p_\nu -|p|^2 \d_{\mu,\nu} \right )+\tilde R_{\mu,\nu}(ap),\label{eq:bubblecomput}
\end{equation}
for a suitable function $\tilde{R}_{\mu,\nu}$ on $(-\pi,\pi]^2$, such that $|\tilde R_{\mu,\nu}(q)-\tilde R_{\mu,\nu}(0)|\leq C|q|^{\frac{1}{2}}$, for some constant $C>0$ independent of $N$.
\end{proposition}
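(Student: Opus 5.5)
The plan is to work directly with the explicit momentum-space expressions. Since $\Pi^{(h,N]}$ and $\mc{T}^{(h,N]}$ are built from the local vertex $c_b$ of Eq. \eqref{def_bare_vertex} and the propagator, their Fourier transforms have the form
\[
\hat{\Pi}^{(h,N]}_{\mu,\nu}(p)=\int_{\L^*}dk\,\mathrm{Tr}\big[\Gamma_\mu(k,k+p)\,\hat g^{(h,N]}(k)\,\Gamma_\nu(k+p,k)\,\hat g^{(h,N]}(k+p)\big],
\qquad
\mc{T}^{(h,N]}=2^{-N}\int_{\L^*}dk\,\mathrm{Tr}\big[\Gamma^{T}_\mu(k)\,\hat g^{(h,N]}(k)\big].
\]
Here $\Gamma_\mu$ is the lattice current vertex, namely $\tfrac12[(\gamma_\mu-r)e^{-i2^{-N}(k+p)_\mu}-(\gamma_\mu+r)e^{i2^{-N}k_\mu}]$ up to the $\e$-combination in Eq. \eqref{def:bubble:Fourier}, and $\Gamma^T_\mu$ is the diamagnetic (tadpole) vertex.

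For $m_N\neq0$ the integrands are smooth and bounded on the torus, because the denominator in Eq. \eqref{eq:bound:Wilson} is bounded below by $m_N^2$. Hence the Riemann sums are continuous in $p$, which gives the continuous extension. They also converge to the integrals over $(-\pi/a,\pi/a]^2$ as $L\to\infty$, uniformly in $p$, by equicontinuity of the integrand in $(k,p)$. This settles the first part.

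For the limits I would first remove the infrared cut-off. The difference $\hat\Pi^{(\le N)}-\hat\Pi^{(h^*_M,N]}$ involves at least one factor $\chi(2^{-h^*_M}|k|)$, so it is supported on $|k|\le 2M$ or $|k+p|\le 2M$. There the integrand is bounded by $|k|^{-1}|k+p|^{-1}$ thanks to Eq. \eqref{eq:bound:Wilson}, uniformly in $m_N$. For fixed $p\ne0$ this gives a contribution of order $M\cdot M/|p|+M^2\log(1/M)\to0$ as $M\to0^+$. For the tadpole the difference is $O(2^{-N}M)$, and there the $M$ and $N$ dependences both vanish. The $m_N\to0$ limit at fixed $p\ne0$ then follows by dominated convergence with the same integrable majorant; the Wilson term keeps the doublers away, so the only singularities are at $k=0$ and $k=-p$.

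For the explicit form \eqref{eq:bubblecomput} I would rescale $k=2^N q$. This writes $\hat\Pi_{\mu,\nu}(p)=F_{\mu,\nu}(ap)$ with $F$ an $N$-independent lattice integral over $(-\pi,\pi]^2$ of the massless Wilson propagator; the dimension-zero scaling in 2d makes the prefactor exactly $1$. I would then split $F(q)=F^{\rm cont}(q)+[F(q)-F^{\rm cont}(q)]$, where $F^{\rm cont}(q)=\frac{1}{8\pi|q|^2}(2q_\mu q_\nu-|q|^2\d_{\mu,\nu})$ is obtained by restricting to the region $|k|\le\pi/2$ and replacing the propagator and vertex by their continuum leading terms. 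The continuum term is homogeneous of degree $0$, so it equals the same expression in $p$. Its value follows from the standard 2d massless bubble, computed e.g. via Feynman parameters or the $\gamma$-trace identities. Since the dimensional-regularization-like structure emerges only after adding the Wilson/UV part, I will not attempt to make it gauge-invariant here: the non-transverse constant is absorbed in $\tilde R(0)$, consistent with the statement.

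The main obstacle is the Hölder-$\tfrac12$ bound on $\tilde R(q)-\tilde R(0)$. Near $k=0$ and $k=-q$ the lattice integrand differs from the continuum one by relative errors $O(|k|)$, $O(|k+q|)$, together with the Wilson term $M_N\sim|k|^2$. These yield $\int_{|k|\le1}|k|^{-1}|k+q|^{-1}\,O(|k|+|k+q|)\,dk$-type corrections that are only log-singular. Their $q$-increments are bounded by $|q|\log(1/|q|)\le C|q|^{1/2}$. On the region away from the two singular points the integrand is smooth in $q$, giving a Lipschitz contribution. So the route is to split $k$-space into $|k|\le|q|^{1/2}$-type neighbourhoods and their complement, and to estimate the difference-quotient in each piece.

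Finally, the tadpole limit $\mc{T}$ is an $N$-independent constant after the same rescaling. Its existence and $M$-independence follow as above.
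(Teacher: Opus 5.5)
Your proposal is correct and follows essentially the paper's route. The shared steps are: rescaling to an $N$-independent integral over $(-\pi,\pi]^2$; dominated convergence with the majorant $|k|^{-1}|k+p|^{-1}$ for the $m_N\to0$ and $M\to0^+$ limits; subtraction of a continuum bubble localized near the origin (the paper uses the smooth cutoff $\chi(|q|)\chi(|q+ap|)$ and quotes the value from \cite{BFM07}, where you use a sharp restriction); and the H\"older-$\tfrac12$ bound via a ball of radius $\sqrt{a|p|}$ together with an $O(a|p|/|q|^2)$ bound on the integrand increment outside it.

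The remaining differences are inessential: you use uniform continuity of the Riemann sums instead of Poisson summation for $L\to\infty$, and you bound the infrared-cutoff difference explicitly. That difference is actually $O(M/|p|)$ rather than your $O(M^2/|p|)$, but it still vanishes as $M\to0^+$.
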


The proof is postponed to Appendix \ref{subapp:proofprop_limitebollaGG}. Let us see how Proposition \ref{prop_limitebollaGG} implies Eq. \eqref{eq_24a} (the derivation of Eq. \eqref{eq_24b} can be done analogously). First note that after taking the limit $L\to\infty$ at both sides of Eq. \eqref{eq:WI:bubble}, in force of the uniform convergence established by Proposition \ref{prop_limitebollaGG}, the equality above extends to every $p\in (-\frac{\pi}{a},\frac{\pi}{a}]^2$. Taking also the limit $m_N\to0$ and then $M\to 0^+$, we obtain:
\[
\sum_{\mu=0,1} \tfrac{i}{a}\big(e^{-ia p_{\mu}}-1\big) \left( \frac{1}{4\pi |p|^2}\left ( 2p_\mu p_\nu -|p|^2 \d_{\mu,\nu} \right )+ 2\tilde R_{\mu,\nu}(ap) +\mc{T} \d_{\mu,\nu} \right)=0,
\]

for every $p \in(-\tfrac{\pi}{a},\tfrac{\pi}{a}]^2\setminus\{0\}$. Using the continuity in 0 of $\tilde{R}_{\mu,\nu}$, we find that
\[
\sum_{\mu=0,1} p_{\mu} \left( \frac{1}{4\pi |p|^2}\left ( 2p_\mu p_\nu -|p|^2 \d_{\mu,\nu} \right )+ 2\tilde R_{\mu,\nu}(0) +\mc{T} \d_{\mu,\nu} \right)+ \mc{O}\big(a^{\frac{1}{2}}|p|^{\frac{3}{2}} \big)=0,
\]

and for arbitrariness of $p$, we must have that $2\tilde{R}_{\mu,\nu}(0)+ \mc{T}\d_{\mu,\nu} = -\frac{1}{4\pi}\d_{\mu,\nu}$. Hence, going back to $\hat{\mf{B}}^{(u.v.|h^*_M)}_{\mu,\nu}$, we find that
\[
\begin{split}
\lim_{M \to 0^+}\lim_{m_N\to 0}\lim_{L \to \infty}\hat{\mf{B}}^{(u.v.|h^*_M)}_{\mu,\nu}(p,M,m_N,L)&= \!-2\! \left(\frac{1}{4\pi |p|^2}\left ( 2p_\mu p_\nu -|p|^2 \d_{\mu,\nu} \right )+ 2\tilde R_{\mu,\nu}(ap) +\mc{T} \d_{\mu,\nu}\right)\\
&= \frac{1}{\pi}\left( \d_{\mu,\nu}- \frac{p_{\mu} p_{\nu}}{|p|^2} \right) - 4\Big(\tilde{R}_{\mu,\nu}(ap)- \tilde{R}_{\mu,\nu}(0) \Big),
\end{split}\]

for every $p\ne0$, which proves Eq. \eqref{eq_24a} with $R_{\mu,\nu}(p)= - 4\Big(\tilde{R}_{\mu,\nu}(ap)- \tilde{R}_{\mu,\nu}(0) \Big)$.

\appendix

\section{Consequences of the local phase invariance}
\label{app_gauge_inv} 

In this section we discuss two main consequences of the local phase invariance of the theory as in Eq. \eqref{eq_gauge_inv}, namely the $\xi-$independence (Lemma \ref{lemma_gauge_inv}) and the Ward identities (Eqs. \eqref{eq:WI5a} and \eqref{eq:WI5b}). 

\begin{proof}[Proof of Lemma \ref{lemma_gauge_inv}.]

Letting
\begin{equation} 
\label{eq_46}
Z_{\xi}(\vphi;J;B):= e^{-\mc{W}_{\xi}(\vphi;J;B)} \equiv \frac{\int \mbb{P}_{\xi}(\mc{D}A) \int \mc{D}\psi \; e^{-I(A+J,\psi)-(\vphi,\psi)-i(B,\jmath^5(\psi))} }{\int \mc{D}\psi \; e^{-I(0,\psi)}},
\end{equation}

we are going to show that $Z_{\xi}(0;J;B)$ is constant w.r.t. $\xi\in [0,1]$. Observe that $Z_{\xi}(\vphi;J;B)$, as a function of $\xi$, is continuous in $[0,1]$ and continuously differentiable in $(0,1)$, with 
\begin{equation}
\label{eq_48}
\de_{\xi}Z_{\xi}(\vphi;J;B)= \frac{1}{2} \sum_{\mu,\nu\in\{0,1\}} \int_{\L^2}dxdy \; \de_{\xi} g^{A,\xi}_{\nu,\mu}(y-x) \frac{\d^2Z_{\xi}}{ \d J_{\mu,x} \d J_{\nu,y}}(\vphi;J;B),
\end{equation}

as a standard formula for derivatives of Gaussian measures, where $\frac{\d}{\d J_{\mu,x}}\equiv 2^{2N}\frac{\de}{\de J_{\mu,x}}$. Hence it suffices to show that $\de_{\xi}Z_{\xi}(0;J;B)=0$ for every $\xi\in(0,1)$. Eq. \eqref{eq_48} can be rewritten in Fourier space:
\begin{equation}
\label{eq_49}
\de_{\xi}Z_{\xi}(\vphi;J;B)= \frac{1}{2}\sum_{\mu,\nu} \int_{\L^*} dk\; \de_{\xi}\left(\frac{1-\xi}{\xi|\s(k)|^2+M^2} \right) \frac{\s_{\nu}(k)\s_{\mu}^*(k)}{|\s(k)|^2+M^2} \hat{K}^{\xi}_{\mu,\nu}(k,-k)_{\vphi;J;B},
\end{equation}

where $\hat{K}^{\xi}_{\mu,\nu}(k,p)_{\vphi;J;B}:= \int_{\L^2} dxdy e^{ik\cdot x +ip\cdot y} \frac{\d^2 Z_{\xi}}{\d J_{\mu,x} \d J_{\nu,y}}(\vphi;J;B)$ and $\s_{\mu}(k)$ defined after Eq. \eqref{eq:boson_propag}. Now one has the following equation:
\begin{equation}
\label{eq:WI3}
\sum_{\mu=0,1}\s_{\mu}^*(k) \hat{K}^{\xi}_{\mu,\nu}(k,p)_{0;J;B}=0, \qquad\forall k,p \in \L^*,
\end{equation}

which readily implies the vanishing of the r.h.s. of Eq. \eqref{eq_49} at $\vphi=0$, which in turn implies the independence of $Z_{\xi}(0;J;B)$ upon $\xi\in[0,1]$. In order to prove Eq. \eqref{eq:WI3}, we must exploit the local phase invariance of the theory, namely
\begin{equation}
\label{eq:WI1}
Z_{\xi}(e^{i\mathsf{e}\a}\vphi;J+ \nabla\a;B)= Z_{\xi}(\vphi;J;B), \qquad \forall \a\in\mbb{R}^{\L}
\end{equation}

with the understanding that $(e^{i\mathsf{e}\a}\vphi)_{x,s}^{\pm}\equiv e^{\pm i\mathsf{e}\a(x)} \vphi^{\pm}_{x,s}$,  $(A+ \nabla \a)_{\mu,x}\equiv A_{\mu,x}+ \nabla_{\mu}\a(x)$, and where $\nabla_{\mu}\a(x)\equiv a^{-1}\big(\a(x+a\hat{e}_{\mu})- \a(x) \big)$. By differentiating both sided of Eq. \eqref{eq:WI1} w.r.t. $\a(x)$ and setting $\a=0$, we find:
\begin{equation}
\label{eq:WI2}
\begin{split}
&a^{-1}\sum_{\mu=0,1} \Bigg( \frac{\d Z_{\xi}}{\d J_{\mu,x-a\hat{e}_{\mu}}}(\vphi;J;B)- \frac{\d Z_{\xi}}{\d J_{\mu,x}}(\vphi;J;B) \Bigg)  =\\
& -i\mathsf{e} \sum_{s=1,2} \Bigg(\vphi^+_{x,s}\frac{\d Z_{\xi}}{\d \vphi_{x,s}^+}(\vphi;J;B)- \vphi^-_{x,s}\frac{\d Z_{\xi}}{\d \vphi_{x,s}^-}(\vphi;J;B) \Bigg).
\end{split}\end{equation}

If we further differentiate both sides of Eq. \eqref{eq:WI2} w.r.t. $J_{\nu,y}$ and we set $\vphi=0$, we get:

\begin{equation}
\begin{split}
a^{-1}\sum_{\mu=0,1} \Bigg( \frac{\d^2 Z_{\xi}}{\d J_{\mu,x-a\hat{e}_{\mu}} \d J_{\nu,y}}(0;J;B)- \frac{\d^2 Z_{\xi}}{\d J_{\mu,x} \d J_{\nu,y}}(0;J;B) \Bigg)  =0.
\end{split}\end{equation}

By taking the Fourier transform, we finally recover Eq. \eqref{eq:WI3}:
\begin{equation*}
\begin{split}
&0= a^{-1}\sum_{\mu=0,1} \int_{\L^2} dx dy\; e^{i(k\cdot x+ p\cdot{y})} \Bigg( \frac{\d^2 Z_{\xi}}{\d J_{\mu,x-a\hat{e}_{\mu}} \d J_{\nu,y}}(0;J;B)- \frac{\d^2 Z_{\xi}}{\d J_{\mu,x} \d J_{\nu,y}}(0;J;B) \Bigg)  =\\
& \sum_{\mu=0,1}\frac{e^{ia k_{\mu}}-1}{a} \int_{\L^2} dx dy\; e^{i(k\cdot x+ p\cdot{y})} \frac{\d^2 Z_{\xi}}{\d J_{\mu,x} \d J_{\nu,y}}(0;J;B)= i \sum_{\mu=0,1} \s_{\mu}^*(k) \hat{K}^{\xi}_{\mu,\nu}(k,p)_{0;J;B}.
\end{split}\end{equation*}
\end{proof}

We now turn our focus to the generating functional, defined formally as $\mc{W}_{\xi}(\vphi;J;B)= -\log Z_{\xi}(\vphi;J;B)$. We are going to assume that $\mc{W}_{\xi}$ is a well defined object, analytic w.r.t. $J$ and $B$ close enough to zero: note that this is apriori true only for $|\mathsf{e}|$ \it{very} small depending on the cut-offs of the model, namely $N,L,m_N$; however, one of the nontrivial outcomes of our analysis is the well posedness of $\mc{W}_1$ for $|\mathsf{e}|\ll M$, uniformly in $N,L,m_N$. 

\begin{proof}[Proof of the Ward Identities: Eqs. \eqref{eq:WI5a},\eqref{eq:WI5b}.]

After taking the logarithm at both sides of Eq. \eqref{eq:WI1}, it follows that $\mc{W}_{\xi}(e^{i\mathsf{e}\a}\vphi;J+ \nabla\a;B)= \mc{W}_{\xi}(\vphi;J;B)\; \forall \a\in\mbb{R}^{\L}$. This implies the analogue of Eq. \eqref{eq:WI2} with $Z_{\xi}$ replaced by $\mc{W}_{\xi}$, which, after differentiation of both sides w.r.t. $J_{\nu,0}$, reads:
\begin{equation}
\label{eq:WI4}
\begin{split}
&a^{-1}\sum_{\mu=0,1} \Bigg(\frac{\d}{\d J_{\mu,x- a\hat{e}_{\mu}}} - \frac{\d}{\d J_{\mu,x}} \Bigg)\Bigg( \frac{\d \mc{W}_{\xi}(\vphi;J;B)}{\d J_{\nu,0}} \Bigg)  =\\
& -i\mathsf{e} \sum_{s=1,2} \Bigg(\vphi^+_{x,s}\frac{\d }{\d \vphi_{x,s}^+}- \vphi^-_{x,s}\frac{\d }{\d \vphi_{x,s}^-} \Bigg) \Bigg( \frac{\d\mc{W}_{\xi}(\vphi;J;B)}{\d J_{\nu,0}}\Bigg).
\end{split}
\end{equation}

As a first step, we set $\vphi,J,B$ to zero in both sides of Eq. \eqref{eq:WI4}; then if we also take the Fourier transform, we get:
\[\begin{split}
& a^{-1}\sum_{\mu=0,1} \int_{\L} dx\; e^{-ip\cdot x}  \Bigg( \frac{\d^2\mc{W}_{\xi}}{\d J_{\mu,x-a\hat{e}_{\mu}}\d J_{\nu,0}}(0;0;0)- \frac{\d^2\mc{W}_{\xi}}{\d J_{\mu,x}\d J_{\nu,0}}(0;0;0) \Bigg)=0
\end{split}\]

which readily implies the first of  Eq.  \eqref{eq:WI5a}, namely $\sum_{\mu=0,1} \s_{\mu}(p) \hat{\Sigma}^{\xi}_{\mu,\nu}(p)=0$ for every $p\in\L^*.$

\medskip
The second of Eq.  \eqref{eq:WI5a}, concerning $\hat{\Sigma}^{\xi}_{5;\mu,\nu}$, follows in full analogy. Let us now consider the analogue of Eq. \eqref{eq:WI4} with the derivative w.r.t. $J_{\nu,0}$ replaced by the second derivative w.r.t. $\vphi^-_{0,s'}$ and $\vphi^+_{y,s''}$:  
\[
\begin{split}
&a^{-1}\sum_{\mu=0,1} \Bigg(\frac{\d^3\mc{W}_{\xi}}{\d J_{\mu,x- a\hat{e}_{\mu}}\d\vphi^-_{0,s'}\d\vphi^+_{y,s''}}(0;0;0) - \frac{\d^3\mc{W}_{\xi}}{\d J_{\mu,x}\d\vphi^-_{0,s'}\d\vphi^+_{y,s''}}(0;0;0) \Bigg)  =\\
& i\mathsf{e} \sum_{s=1,2} \Bigg(\d(y-x)\d_{s,s''} \frac{\d^2 \mc{W}_{\xi} }{\d \vphi_{x,s}^+ \d\vphi^-_{0,s'}}(0;0;0) - \d(x)\d_{s,s'}\frac{\d^2 \mc{W}_{\xi} }{\d\vphi^+_{y,s''} \d \vphi_{x,s}^-}(0;0;0) \Bigg).
\end{split}
\]

Then, taking the Fourier transform,
\[
\begin{split}
&\sum_{\mu=0,1} \frac{e^{-iap_{\mu}}-1}{a} \int_{\L^2} dx dy\; e^{i(-p\cdot x+ k\cdot y)} \frac{\d^3\mc{W}_{\xi}}{\d J_{\mu,x}\d\vphi^-_{0,s'}\d\vphi^+_{y,s''}}(0;0;0)  =\\
& i\mathsf{e} \Bigg( \int_{\L} dx\; e^{i (-p+k)\cdot x} \frac{\d^2 \mc{W}_{\xi} }{\d \vphi_{x,s''}^+ \d\vphi^-_{0,s'}}(0;0;0) - \int_{\L} dy\; e^{ik\cdot y} \frac{\d^2 \mc{W}_{\xi} }{\d\vphi^+_{y,s''} \d \vphi_{0,s'}^-}(0;0;0)\Bigg), 
\end{split}
\]

which is exactly Eq. \eqref{eq:WI5b}, namely
\[\sum_{\mu=0,1} \s_{\mu}(p) \hat{\Gamma}^{\xi}_{\mu}(p;k)= -\mathsf{e}\big(\hat{S}^{\xi}(k-p)- \hat{S}^{\xi}(k)\big), \qquad \forall p,k\in \L^*.\]
\end{proof}

\section{Identities for the effective potential}\label{app_identities}

Here we give the proof of the identities among the kernels of the effective potential, which have been used in Section \ref{sect_improved}. Since, as long as $N,L$ are finite, 
$\mc{V}^{(h)}$ depends only upon field sources with degree $n$ less or equal than $n_*\equiv 2(2^NL)^2+1$, throughout the section,  functional derivatives and kernels related to $G^{(n)},\dot{G}^{(n)},\ddot{G}^{(n)}$ will be understood as zero for $n>n_*$. For completeness, recall that 
\[\mc{V}^{(h)}(\psi;G;\dot{G};\ddot{G})= -\log \mathcal{E} \Big( e^{-\mc{V}(\psi+\, \cdot\, ;G;\dot{G};\ddot{G})} \Big), \] 
where in this section $\mc{E}(F)\equiv \mc{E}_{(h,N]}(F):=\int P^{(h,N]}(\mc{D} \z) F(\z)$ for any Grassmann polynomial $F(\z),$ while  $\mc{V}$ is as in  Eq. \eqref{def_potential}.

The incoming analysis is based on the following two fundamental identities.

\begin{enumerate}

\item \it{Grassmann integration by parts.} Given any Grassmann 
    polynomial $F(\z)$,
\begin{align}
&\label{int_part_1} \int P^{(h,N]}(\mc{D}\z) \z^-_{\eta} F(\z)= \int_{\L_F} d\eta' g^{(h,N]}_{\eta,\eta'} \int P^{(h,N]}(\mc{D}\z) \frac{\d F}{\d \z^+_{\eta'}}(\z)\\
&\label{int_part_2} \int P^{(h,N]}(\mc{D}\z) \z^+_{\eta} F(\z)= -\int_{\L_F} d\eta' g^{(h,N]}_{\eta',\eta} \int P^{(h,N]}(\mc{D}\z) \frac{\d F}{\d \z^-_{\eta'}}(\z).
\end{align}

\item \it{Tree property.} Let $I:=\{1,\dots,n\},I':=I\setminus\{1\}$, $g^A_T:=\prod_{(i,j) \in T} \l g^A_{b_i,b_j}$ and recall that $\mc{T}(I)$ denotes set of all spanning trees on $I$ (cf. below Eq.  \eqref{eq:4b}). If $n=1$ by definition $\mc{T}(I)=\emptyset$, otherwise:
\begin{equation}
\begin{split}
& \sum_{T\in\mc{T}(I)} g_T^A= \\ &=\sum_{k \in I'} g^A_{b_1,b_k}\sum_{T\in\mc{T}(I')} g^A_T\; +\mathds{1}_{(n \geq 3)}\sum_{\substack{X_1\dot{\cup} X_2=I' \\ X_1 \ni 2}}
\sum_{k \in X_1} g^A_{b_1,b_k} \sum_{T_1\in\mc{T}(X_1)} g^A_{T_1}\sum_{T_2\in\mc{T}(X_2\cup\{1\})}g^A_{T_2}.  
\label{tree_property}\end{split}\end{equation}
\end{enumerate}
While the \it{Grassmann integration by parts} is a standard fact (see e.g. \cite[Eq. (A4)]{F10}), the \it{tree property} is not completely straightforward, and since it is a crucial building block for the incoming analysis of this section, we are going to present its proof. 

\begin{proof}[Proof of the tree property]

There is a one-to-one correspondence between $\mc{T}(I)$ and
\[ \mc{C}(I):= \Big\{(X_1,T_1,T_2,k): \; X_1\subseteq I', 2\in X_1, T_1\in\mc{T}(X_1), T_2\in\mc{T}(I'\setminus X_1),k\in X_1 \Big\}.\]

Indeed, we can define a map $\mc{T}(I)\mapsto \mc{C}(I)$ obtained by setting, for any $T\in\mc{T}(I)$, 

\begin{itemize}
    \item $X_1\subseteq I'$ is the maximal connected set such that $2\in X_1$.
    \item $T_1$ is the restriction of $T$ to $X_1$, namely $T_1=\big\{(i,j)\in T: \; i,j\in X_1\big\}$.
    \item $k\in X_1$ is the unique vertex such that $(1,k)\in T$.
    \item Letting $X_2:= I'\setminus X_1$, $T_2$ is the restriction of $T$ to $X_2\cup\{1\}$.
\end{itemize}

See Fig. \ref{fig:spanningtree} for a graphical representation. On the other hand, we can consider the map $\mc{C}(I)\mapsto \mc{T}(I)$ defined as $T:= T_1\dot{\cup} T_2\dot{\cup} (1,k)$. It is then straightforward to check that the maps $\mc{T}(I) \mapsto \mc{C}(I)$ and $\mc{C}(I) \mapsto \mc{T}(I)$ are one the inverse of the other.

\begin{figure}[h]
    \centering
\includegraphics[width=0.8\textwidth]{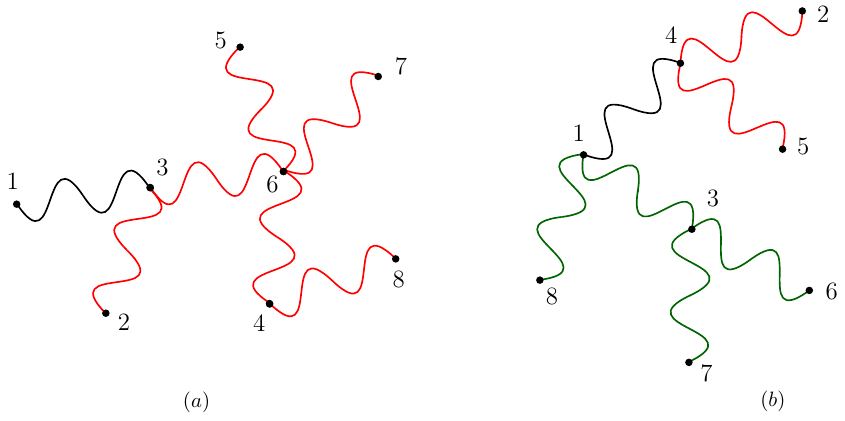}
    \caption{Graphical representation of the correspondence $\mc{T}(I)\leftrightarrow \mc{C}(I)$. In the example (a), the vertex $1$ is a leaf of $T$, $X_1=I',X_2=\eset$, $T_1=\{(2,3),(3,6),(6,5),(6,7),(6,4),(4,8) \}$ and $k=3$. In the example (b), $1$ is not a leaf, $X_1=\{2,4,5\}$, $X_2=\{3,6,7,8\}$, $T_1=\{(4,2),(4,5)\}$, $T_2=\{(1,3),(3,6),(3,7),(1,8)\}$ and $k=4$.}
    \label{fig:spanningtree}
\end{figure}

The bijection between $\mc{C}(I)$ and $\mc{T}(I)$ allows us to write:

\begin{equation}
\label{eq:tree1}
\sum_{T\in\mc{T}(I)} g^A_T= \sum_{(X_1,T_1,T_2,k) \in \mc{C}(I)} g^A_{T_{(X_1,T_1,T_2,k)}},
\end{equation}

where $T_{(X_1,T_1,T_2,k)}$ is the tree in $\mc{T}(I)$ uniquely identified by $(X_1,T_1,T_2,k)\in \mc{C}(I)$. Now it is convenient to distinguish whether $X_2\equiv I'\setminus X_1$ is empty or not (i.e. the vertex $1$ is a leaf or not, see Fig. \ref{fig:spanningtree}). In the first case we simply have: $g^A_{T(X_1,T_1,T_2,k)}= g^A_{(1,k)} g^A_{T_1}$, while, if $X_2\ne \eset$,  $g^A_{T(X_1,T_1,T_2,k)}= g^A_{(1,k)} g^A_{T_1} g^A_{T_2}$. Therefore Eq. \eqref{eq:tree1} becomes:

\begin{equation}
\label{eq:tree2}
\sum_{T\in\mc{T}(I)} g^A_T= \sum_{\substack{(X_1,T_1,T_2,k) \in \mc{C}(I)\\ X_1=I'}} g^A_{(1,k)} g^A_{T_1}+ \sum_{\substack{(X_1,T_1,T_2,k) \in \mc{C}(I)\\ X_1\subset I'}} g^A_{(1,k)} g^A_{T_1} g^A_{T_2}.
\end{equation}

Noting that the second sum in the r.h.s. of Eq. \eqref{eq:tree2} is nonempty only if $n=3$, the desired \it{tree property}, Eq. \eqref{tree_property}, follows.
\end{proof}

\subsection{Preliminary identities}

We begin with two intermediate lemmas. The first one, Lemma \ref{lemma_id_A}, characterizes the derivatives of $\mc{V}^{(h)}$ w.r.t. $\psi$ and $G^{(1)}$. The next one, Lemma \ref{lemma_id_B}, characterizes the derivatives of $\mc{V}^{(h)}$ w.r.t. $G^{(n)}$ and $\dot{G}^{(n)}$, with $n\ge2$.  We will use the notation $\mc{V}^{(h)}(\psi;\mb{0}):=\mc{V}^{(h)}(\psi;G;\dot G; \ddot G)\rvert_{G,\dot G, \ddot G\equiv 0}$ or $\mc{V}^{(h)}(\psi;G^{(1)};\mb{0})$ when $G^{(n)},\dot G^{(n)},\ddot G^{(n)} \equiv 0$ for $n \geq 2,$ etc.

\begin{lemma}
\label{lemma_id_A}
The following identities hold for the effective potential $\mc{V}^{(h)}$.

\begin{equation}
\label{eq_id_1}
\begin{split}
&\frac{\d\mc{V}^{(h)}}{\d\psi^+_{\eta_1}}(\psi;\mb{0}) =
e^{\mc{V}^{(h)}(\psi;\mb{0})} \int db_1 d\eta c_{b_1}(\eta_1,\eta) \left( \psi^-_{\eta} + \int d\eta' g^{(h,N]}_{\eta,\eta'} \frac{\d}{\d\psi^+_{\eta'}} \right)\times \\ & \times \left\{ 2^N\left(\mu_{N,\l} -1 \right)-\sum_{n\ge1} \frac{ \mu_{N,\l}}{n^2} \int db \l g^A_{b_1,b} \frac{\d}{\d G^{(n)}_{b}} +\right.\\
&+ \sum_{n_1 \geq 1, n_2 \geq 2}\left.  2^{-N}\frac{\mc{N}_{n_1,n_2}}{(n_1+n_2)^2} \int db \l g^A_{b_1,b}  \frac{\d^2}{\d G^{(n_1)}_{b}\d \dot{G}^{(n_2)}_{b_1}} -\left. \sum_{n\ge2} 2^{-N} \frac{\d}{\d \ddot{G}^{(n)}_{b_1}} \right\} e^{-\mc{V}^{(h)}}\right|_{(\psi;\mb{0})},
\end{split}
\end{equation}

where  $\mc{N}_{n_1,n_2}=\frac{(n_1+n_2)^2}{(n_1+n_2-1)n_1 n_2^2}\le \frac{(n_1+n_2)^2}{n_1^2n_2^2}$. Moreover:

\begin{equation}
\label{eq_id_4}
\begin{split}
&\frac{\d^2\mc{V}^{(h)}}{\d G^{(1)}_b \d\psi^+_{\eta_1}} (\psi;\mb{0})=\\
&\frac{\d}{\d G^{(1)}_b}\left\{ e^{\mc{V}^{(h)}} \int db_1 d\eta c_{b_1}(\eta_1,\eta) \left( \psi^-_{ \eta} + \int d\eta' g^{(h,N]}_{\eta,\eta'} \frac{\d}{\d\psi^+_{\eta'}} \right)\times \right.\\
&\times \left[ 2^N\left(\mu_{N,\l}  -1 \right) -\sum_{n\ge1} \frac{\mu_{N,\l}}{n^2}  \int db \l g^A_{b_1,b} \frac{\d}{\d G^{(n)}_{b}} +\right.\\
&+ \left.\left.\left.  \sum_{n_1 \geq 1, n_2 \geq 2}  2^{-N}\frac{\mc{N}_{n_1,n_2}}{(n_1+n_2)^2} \int db \l g^A_{b_1,b}  \frac{\d^2}{\d G^{(n_1)}_{b}\d \dot{G}^{(n_2)}_{b_1}} -  \sum_{n\ge2} 2^{-N} \frac{\d}{\d \ddot{G}^{(n)}_{b_1}}\right]\right\} e^{-\mc{V}^{(h)}}\right|_{(\psi;\mb{0})}+\\
&+ \mu_{N,\l}\int d\eta c_b(\eta_1,\eta) \left(\psi^-_{\eta}- \int d\eta' g^{(h,N]}_{\eta,\eta'} \frac{\d\mc{V}^{(h)}}{\d\psi^+_{\eta'}}(\psi;\mb{0}) \right). 
\end{split}
\end{equation}
\end{lemma}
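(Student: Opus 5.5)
We prove Lemma \ref{lemma_id_A} by writing $\frac{\d\mc{V}^{(h)}}{\d\psi^+_{\eta_1}}(\psi;\mb{0})= -e^{\mc{V}^{(h)}}\frac{\d}{\d\psi^+_{\eta_1}}e^{-\mc{V}^{(h)}}$, with $e^{-\mc{V}^{(h)}(\psi;\cdot)}=\mc{E}\big(e^{-\mc{V}(\psi+\z;\cdot)}\big)$. The derivative acts on the bare potential, Eq. \eqref{def_potential}. At $G,\dot G,\ddot G=0$ only the source-free terms survive. Every monomial of $\mc{V}(\psi+\z;\mb{0})$ is a product of bilinears $O_{b_j}(\psi+\z)$, and $\d O_{b_1}/\d\psi^+_{\eta_1}=\int d\eta\, c_{b_1}(\eta_1,\eta)(\psi+\z)^-_\eta$. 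By the antisymmetry/symmetrization in $\ud b$, we may always select the differentiated bilinear to be $O_{b_1}$, gaining a factor $n$. We then remove $\z^-_\eta$ by the Grassmann integration by parts, Eq. \eqref{int_part_1}. The field $\z^-_\eta$ is replaced by $\int d\eta' g^{(h,N]}_{\eta,\eta'}\,\d/\d\z^+_{\eta'}$ acting on the remaining expression, and since the remaining expression depends on $\psi+\z$, this equals $\d/\d\psi^+_{\eta'}$ taken outside the expectation. This produces the common prefactor $\big(\psi^-_\eta+\int d\eta' g^{(h,N]}_{\eta,\eta'}\frac{\d}{\d\psi^+_{\eta'}}\big)$ in Eq. \eqref{eq_id_1}.

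It remains to identify the residual coefficient of $O_{b_1}$ as the bracket of derivatives acting on $e^{-\mc{V}^{(h)}}$. For each $n$, we split $w_{n,0}=v_n+2^{-2N}\tilde v_n$ (Eq. \eqref{eq_16a}; recall $h^*_M=0$).
\begin{itemize}
\item The $n=1$ term gives the constant $2^N(\mu_{N,\l}-1)$.
\item The $\tilde v_n$ part leaves $2^{N(1-n)}\,n\,\tilde v_n(\ud b)\,O^{n-1}_{\ud b_*}$. Up to the factor $2^{-N}$, this is exactly the insertion produced by $\d/\d\ddot G^{(n)}_{b_1}$ of the last line of Eq. \eqref{def_potential}, which yields the $\ddot G$ term.
\item For the $v_n$ part we apply the tree property, Eq. \eqref{tree_property}, at the vertex $1$. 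The leaf case contributes $\sum_k\l g^A_{b_1,b_k}$ times a tree on $I'$. After relabeling $b_k\to b$ as the root of an $(n-1)$-vertex kernel, and using $\mu^n_{N,\l}/n!$ together with the combinatorics of the $n^3$ normalizations, this matches $\frac{\mu_{N,\l}}{(n-1)^2}\int db\,\l g^A_{b_1,b}\,\d/\d G^{(n-1)}_b$. After a shift of the index, this is the $G$ term. The case $n-1=0$ is excluded, since the tree is then empty.
\item The non-leaf case factorizes into a tree on $X_1\ni k$, which becomes a $G^{(n_1)}$-insertion at $b$ joined to $b_1$ by $\l g^A_{b_1,b}$, and a tree on $X_2\cup\{1\}$, which is rooted at $b_1$ and is precisely the $\dot G^{(n_2)}_{b_1}$ insertion, with $n_2=|X_2|+1\ge2$. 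Counting the choices of $X_1$ via binomial coefficients against the factorials and the weights $n^3$ produces the coefficient $\mc{N}_{n_1,n_2}/(n_1+n_2)^2$, with the stated $\mc{N}$. The bound $\mc{N}_{n_1,n_2}\le (n_1+n_2)^2/(n_1^2n_2^2)$ is elementary from $n_1\le n_1+n_2-1$.
\end{itemize}
Signs are tracked through the even Grassmann parity of each $O_b$.

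For Eq. \eqref{eq_id_4}, we repeat the computation keeping $G^{(1)}$ nonzero, then differentiate in $G^{(1)}_b$ and set it to zero. The extra term comes from the source monomial $\int db\, v_1\, G^{(1)}_b O_b=\mu_{N,\l}\int db\,G^{(1)}_bO_b$, which is linear in $G^{(1)}$. When $\d/\d\psi^+_{\eta_1}$ hits it and integration by parts is applied, this term yields $\mu_{N,\l}\int d\eta\, c_b(\eta_1,\eta)\big(\psi^-_\eta-\int g^{(h,N]}_{\eta,\eta'}\,\d\mc{V}^{(h)}/\d\psi^+_{\eta'}\big)$, because $e^{\mc{V}^{(h)}}\frac{\d}{\d\psi^+}e^{-\mc{V}^{(h)}}=-\frac{\d\mc{V}^{(h)}}{\d\psi^+}$. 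All other terms are the $G^{(1)}$-derivative of the Eq. \eqref{eq_id_1}-type expression.

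The main obstacle is the combinatorial bookkeeping in the non-leaf case. There one must check that summing over $X_1$ with fixed sizes, together with the $n!$ of $v_n$, the choice of $k$, and the $n^3$, $n$ prefactors of Eq. \eqref{def_potential}, reproduces exactly $\mc{N}_{n_1,n_2}$ with no leftover $n$-dependence. I would first verify this on small $n$ ($n=3,4$) and then establish it in general by a direct count of labeled rooted forests.
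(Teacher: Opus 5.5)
Your proposal is correct and is essentially the paper's own proof. You differentiate inside $\mc{E}$, split $w_{n,0}=v_n+2^{-2N}\tilde v_n$, and use the tree property \eqref{tree_property} at vertex $1$ to match the leaf and non-leaf pieces of $v_n$ with $G^{(n-1)}$ and $G^{(n_1)}\dot G^{(n_2)}$ insertions, and the $\tilde v_n$ piece with $\ddot G^{(n)}$. You then integrate by parts via \eqref{int_part_1}, and for \eqref{eq_id_4} you keep the monomial $\mu_{N,\l}\int db\, G^{(1)}_b O_b$ before differentiating in $G^{(1)}$.

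The bookkeeping you flag is exactly the paper's Eq. \eqref{eq_id_10}. The $\binom{n-2}{n_1-1}$ choices of $X_1\ni 2$ and the $n_1$ choices of $k$, combined with the factor $n$, the $\binom{n}{n_1}^{-1}$ from the tree decomposition and the weights $n_1^3 n_2^3$, give $\mc{N}_{n_1,n_2}=\frac{n^2}{(n-1)n_1n_2^2}$ with $n=n_1+n_2$. One small correction: when selecting $O_{b_1}$ you use the \emph{symmetry} of $w_{n,0}$, not antisymmetry.
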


\begin{proof}
We start from the identity:

\[ \frac{\d \mc{V}^{(h)}}{\d\psi^+_{\eta_1}}(\psi;G;\dot{G};\ddot{G})= e^{\mc{V}^{(h)}(\psi;G;\dot{G};\ddot{G})} \mc{E} \Big( \frac{\d \mc{V}}{\d\psi^+_{\eta_1}}(\psi+\, \cdot \, ;G;\dot{G};\ddot{G}) e^{-\mc{V}(\psi+\, \cdot \, ;G;\dot{G};\ddot{G})}\Big) \]   

and we set to zero all the fields except $\psi$ and $G^{(1)}$. From Eq. \eqref{def_potential} we find:
\begin{equation}
\label{eq_id_5}
\begin{split}
& \frac{\d\mc{V}^{(h)}}{\d\psi^+_{\eta_1}}(\psi;G^{(1)};\mb{0})= \\
& e^{\mc{V}^{(h)}} \Bigg\{ \Bigg[2^N\left(\mu_{N,\l} -1 \right)\int db_1  \mc{E}\Big( \frac{\d O_{b_1}}{\d\psi^+_{\eta_1}}(\psi+\cdot) e^{-\mc{V}(\psi+\, \cdot\, ;G^{(1)};\mb{0})} \Big)+\\
&\sum_{n\ge2} n2^{N(2-n)} \int d\ud b v_n(\ud b)  \mc{E} \Big(\frac{\d O_{b_1}}{\d\psi^+_{\eta_1}}(\psi+\cdot) O_{\ud b'}(\psi+\cdot) e^{-\mc{V}(\psi+\, \cdot \, ; G^{(1)};\mb{0})}\Big)+\\
& \sum_{n\ge2} n2^{-nN} \int d\ud b\tilde{v}_n(\ud b) \mc{E} \Big( \frac{\d O_{b_1}}{\d\psi^+_{\eta_1}}(\psi+\cdot) O_{\ud b'}(\psi+\cdot) e^{-\mc{V}(\psi+\, \cdot \, ; G^{(1)};\mb{0})}\Big) \Bigg] \Bigg\}+\\
& e^{\mc{V}^{(h)}(\psi; G^{(1)};\mb{0})} \int db\; G^{(1)}_b v_1(b) \mc{E}\Big( \frac{\d O_b}{\d\psi^+_{\eta_1}}(\psi+\cdot) e^{-\mc{V}(\psi+\, \cdot\, ; G^{(1)};\mb{0})}\Big),
\end{split} 
\end{equation}

where we have used the decomposition $w_{n,0}= v_n+ 2^{-2N}\tilde{v}_n$ according to Eq. \eqref{eq_16a}. Noting that $\frac{\d O_{b}}{\d\psi^+_{\eta_1}}(\psi+\z)= \int d\eta c_{b}(\eta_1,\eta) (\psi^-_{\eta}+ \z^-_{\eta})$ and using Eq. \eqref{int_part_1}, we find that the first line in the r.h.s. of Eq. \eqref{eq_id_5} becomes

\[ 2^N\left(\mu_{N,\l} -1 \right)\int db \int d\eta c_b(\eta_1,\eta) \left(\psi^-_{\eta}+ \int d\eta' g^{(h,N]}_{\eta,\eta'} \frac{\d}{\d\psi^+_{\eta'}}\right) e^{-\mc{V}^{(h)}(\psi; G^{(1)};\mb{0})}. \]

Let us proceed with the second line in the r.h.s. of Eq. \eqref{eq_id_5}. Recalling that $v_n(\ud b)=\sum_{T\in\mc{T}(I)} \frac{\mu_{N,\l}^n}{n!} g_T^A$ (cf. Eq. \eqref{eq_16a}) and using Eq. \eqref{tree_property} we find that
\begin{equation}
\hspace{-20pt}\begin{split}
     v_n(\ud b)&=\mu_{N,\l} \mathds{1}_{(n=1)}+ \mathds{1}_{(n\geq 2)}\sum_{k \in I} \frac{\l \mu_{N,\l}}{n} g^A_{b_1,b_k} v_{n-1}(\ud b')+\\ &+\mathds{1}_{(n \geq 3)}\sum_{\substack{X_1 \dot{\cup} X_2=I' \\ 2 \in X_1}} \binom{n}{|X_1|}^{-1} \sum_{k \in X_1}\l g^A_{b_1,b_k} v_{|X_1|}(\ud b_{X_1})v_{|X_2|+1}(\ud b_{X_2 \cup \{1\}}),
\end{split}\label{v_n_rewriting}
\end{equation}

where $\ud b_{X_1}=\{b_i: i \in X_1\}$ and recall that $v_n$ is symmetric under permutations of its variables. Thus we can rewrite the second line in the r.h.s. of Eq. \eqref{eq_id_5} as
\begin{equation}
\label{eq_id_13}
\begin{split}
&    \sum_{n\ge2} \int d \ud b  e^{\mc{V}^{(h)}(\psi; G^{(1)};\mb{0})}\mc{E}\Big( \frac{\d O_{b_1}}{\d\psi^+_{\eta_1}}(\psi+\cdot)   \left\{ \mu_{N,\l}(n-1) \l g^A_{b_1,b_2} \left(2^{N(2-n)} v_{n-1}(\ud b')  O_{\ud b'}(\psi+\cdot)  \right) \right.\\
&+ 2^{-N}n \mathds{1}_{(n \geq 3)}  \sum_{\substack{X_1{\dot{\cup}}X_2= I'\\ X_1\ni 2}}\binom{n}{|X_1|}^{-1} |X_1| \l g^A_{b_1,b_2} 2^{N(1-|X_1|)} v_{|X_1|}(\ud{b}_{X_1}) O_{\ud b_{X_1}}(\psi+\cdot) \times \\
&\times\left. 2^{N(1-|X_2|)} v_{|X_2|+1} (\ud{b}_{X_2\cup\{1\}}) O_{\ud b_{X_2}}(\psi+\cdot) \right\}e^{-\mc{V}(\psi+\, \cdot\,; G^{(1)}; \mb{0} )} \Big),
\end{split}
\end{equation}
where we used the symmetry of $v_n$ and $O_{\ud b_{X_1}}:=\prod_{i \in X_1} O_{b_i}$ etc.. Eq. \eqref{eq_id_13} can be further rewritten by identifying the terms in parentheses as derivatives of $\mc{V}$ w.r.t. $G$ and $\dot{G}$ (cf. Eq. \eqref{def_potential}) as
\[\begin{split}
&\int db_1 db\; e^{\mc{V}^{(h)}(\psi; G^{(1)};\mb{0})}\mc{E}\Big(  \frac{\d O_{b_1}}{\d\psi^+_{\eta_1}}(\psi+\cdot)\Big[ -\sum_{n\ge2} \frac{\mu_{N,\l}}{(n-1)^2}  \l g^A_{b_1,b} \frac{\d}{\d G^{(n-1)}_{b}}+\\
& +\sum_{n\ge3} 2^{-N} \sum_{\substack{n_1 \geq 1, n_2 \geq 2\\ n_1+n_2=n}} \frac{\mc{N}_{n_1,n_2}}{(n_1+n_2)^2} \l g^A_{b_1,b} \frac{\d}{\d G^{(n_1)}_{b}}  \frac{\d}{\d\dot{G}^{(n_2)}_{b_1}} \Big] e^{-\mc{V}(\psi+\, \cdot\, ; G^{(1)};\mb{0})} \Big),
\end{split}\]

where
\begin{equation}
\label{eq_id_10}
\begin{split}
\mc{N}_{n_1,n_2}= \frac{(n_1+n_2)^3 n_1}{\binom{n_1+n_2}{n_1} n_1^3n_2^3} \sum_{\substack{X_1\dot{\cup}X_2=I'\\ X_1\ni 2 \\ |X_1|=n_1}}   1 =\frac{\binom{n_1+n_2-2}{n_1-1}(n_1+n_2)^3}{\binom{n_1+n_2}{n_1}n_1^2 n_2^3}=\frac{(n_1+n_2)^2}{(n_1+n_2-1)n_1n_2^2}.
\end{split}
\end{equation}

Expanding again $\frac{\d O_{b}}{\d\psi^+_{\eta_1}}(\psi+\z)= \int d\eta c_{b}(\eta_1,\eta) (\psi^-_{\eta}+ \z^-_{\eta})$ and using Eq. \eqref{int_part_1}, we find that the second line of Eq. \eqref{eq_id_5} can be written as
\[\begin{split}
&e^{\mc{V}^{(h)}(\psi; G^{(1)};\mb{0})} \int db_1 db d\eta \; c_{b_1}(\eta_1,\eta)  \left(\psi^-_{\eta}+ \int d\eta' g^{(h,N]}_{\eta,\eta'} \frac{\d}{\d\psi^+_{\eta'}} \right) \times\\
&\times\left( - \sum_{n\ge2} \frac{\mu_{N,\l} \l g^A_{b_1,b}}{(n-1)^2}    \frac{\d}{\d G^{(n-1)}_{b}} +\sum_{n_1 \geq 1, n_2 \geq 2} 2^{-N} \frac{\mc{N}_{n_1,n_2}}{(n_1+n_2)^2}  \l g^A_{b_1,b}  \frac{\d^2}{\d G^{(n_1)}_{b} \d \dot{G}^{(n_2)}_{b_1}}  \right )e^{-\mc{V}^{(h)}(\psi; G^{(1)};\mb{0})}.
\end{split}\]

Let us now discuss the third line in the r.h.s. of Eq. \eqref{eq_id_5}, which can be rewritten as
\[\begin{split}
& \sum_{n\ge2} 2^{-N} \int d\ud b  e^{\mc{V}^{(h)}(\psi; G^{(1)};\mb{0})} \mc{E} \Big( \int d\eta c_{b_1}(\eta_1,\eta) \Big(\psi^-_{\eta}+ \int d\eta' g^{(h,N]}_{\eta,\eta'} \frac{\d}{\d\psi^+_{\eta'}}\Big)  \times \\
&\times \left(n 2^{N(1-n)} \tilde{v}_n(\ud b)O_{\ud b'}(\psi+\cdot)\right) e^{-\mc{V}(\psi+\, \cdot\, ; G^{(1)};\mb{0})} \Big)=\\
& -\sum_{n\ge2} 2^{-N} \int db_1  d\eta c_{b_1}(\eta_1,\eta) e^{\mc{V}^{(h)}} \Big(\psi^-_{\eta}+ \int d\eta' g^{(h,N]}_{\eta,\eta'} \frac{\d}{\d\psi^+_{\eta'}}\Big) \frac{\d}{\d \ddot{G}_{b_1}^{(n)}} e^{-\mc{V}^{(h)}(\psi; G^{(1)};\mb{0})}.
\end{split}\]

Finally, the fourth line in the r.h.s. of Eq. \eqref{eq_id_5} can be easily handled using again Eq. \eqref{int_part_1} and recalling that $v_1(b)=\mu_{N,\l}$: 
\[\begin{split}
&e^{\mc{V}^{(h)}(\psi; G^{(1)};\mb{0})} \mu_{N,\l} \int db G^{(1)}_b   \mc{E} \Big( \frac{\d O_b}{\d\psi^+_{\eta_1}}(\psi+\cdot) e^{-\mc{V}(\psi+\cdot; G^{(1)}; \ud  0)}\Big) =\\
&e^{\mc{V}^{(h)}(\psi; G^{(1)};\mb{0})} \mu_{N,\l} \int db d\eta G^{(1)}_b c_b(\eta_1,\eta) \Big(\psi^-_{\eta}+ \int d\eta' g^{(h,N]}_{\eta,\eta'} \frac{\d}{\d\psi^+_{\eta'}} \Big) e^{-\mc{V}^{(h)}(\psi ; G^{(1)};\mb{0})}.
\end{split}\]

All in all we have rewritten Eq. \eqref{eq_id_5} as
\[\begin{split}
&\frac{\d\mc{V}^{(h)}}{\d\psi^+_{\eta_1}}(\psi; G^{(1)};\mb{0})=\\
&e^{\mc{V}^{(h)}(\psi; G^{(1)};\mb{0})} \int db_1 d\eta c_{b_1}(\eta_1,\eta) \Big( \psi^-_{\eta} + \int d\eta' g^{(h,N]}_{\eta,\eta'} \frac{\d}{\d\psi^+_{\eta'}} \Big)\Bigg \{ 2^N\left(\mu_{N,\l} -1 \right)+ \\ 
& -\sum_{n\ge1} \frac{\mu_{N,\l}}{n^2}  \int db \l g^A_{b_1,b} \frac{\d}{\d G^{(n)}_{b}}  +  2^{-N} \sum_{\substack{n_1\ge1 \\ n_2\ge2}} \frac{\mc{N}_{n_1,n_2}}{(n_1+n_2)^2} \int db \l g^A_{b_1,b} \frac{\d^2}{\d G^{(n_1)}_{b}\d \dot{G}^{(n_2)}_{b_1}}+\\
&- 2^{-N}\sum_{n\ge2} \frac{\d}{\d \ddot{G}^{(n)}_{b_1}} \Bigg \} e^{-\mc{V}^{(h)}(\psi;G^{(1)};\mb{0})}+ \\ 
&+ e^{\mc{V}^{(h)}(\psi;G^{(1)};\mb{0})} \mu_{N,\l} \int db d\eta G^{(1)}_b c_b(\eta_1,\eta) \left (\psi^-_{\eta}+ \int d\eta' g^{(h,N]}_{\eta,\eta'} \frac{\d}{\d\psi^+_{\eta'}} \right ) e^{-\mc{V}^{(h)}(\psi;G^{(1)};\mb{0})}.
\end{split}\]

 By setting also $G^{(1)}=0$ we obtain Eq. \eqref{eq_id_1}, while,  deriving both sides w.r.t. $G^{(1)}$ and then setting $G^{(1)}=0$, we find Eq. \eqref{eq_id_4}.
\end{proof}

\begin{lemma}
\label{lemma_id_B}
The following identities hold for the effective potential $\mc{V}^{(h)}$.
\begin{equation}
\label{eq_id_2}
\begin{split}
&\frac{\d \mc{V}^{(h)}}{\d G^{(n)}_{b_1}}=\\
&e^{\mc{V}^{(h)}} \int d\eta_1 d\eta_2 c_{b_1}(\eta_1,\eta_2)  \left(\psi^+_{\eta_1}\psi^-_{\eta_2} + \psi^-_{\eta_2}\int d\eta g^{(h,N]}_{\eta,\eta_1} \frac{\d}{\d\psi^-_{\eta}}+ \psi^+_{\eta_1}\int d\eta' g^{(h,N]}_{\eta_2,\eta'} \frac{\d}{\d\psi^+_{\eta'}}+\right.\\
&\left.- g^{(h,N]}_{\eta_2,\eta_1} + \int d\eta d\eta' g^{(h,N]}_{\eta,\eta_1} g^{(h,N]}_{\eta_2,\eta'} \frac{\d^2}{\d\psi^+_{\eta'}\d\psi^-_{\eta}}\right)\times \\
&\times \left\{ \mathds{1}_{(n=1)}\mu_{N,\l} - \mathds{1}_{(n\ge2)} \frac{n^2}{(n-1)^2} 2^{-N} \mu_{N,\l} \int db \l g^A_{b_1,b}  \frac{\d}{\d G^{(n-1)}_{b}}   \right.+\\
&+ \left.  \mathds{1}_{(n\ge3)} 2^{-2N} \sum_{\substack{n_1\ge1,n_2\ge2\\ n_1+n_2=n}} \mc{N}_{n_1,n_2}  \int db \l g^A_{b_1,b} \frac{\d^2}{\d G^{(n_1)}_{b}\d \dot{G}^{(n_2)}_{b_1}}  \right\}e^{-\mc{V}^{(h)}},  
\end{split}\end{equation}

where $\mc{N}_{n_1,n_2}$ are as in Lemma \ref{lemma_id_A}. Moreover:
\begin{equation}
\label{eq_id_3}
\begin{split}
&\frac{\d \mc{V}^{(h)}}{\d \dot{G}^{(n)}_{b_1}}= \\ &     e^{\mc{V}^{(h)}} \hspace{-5pt}\int\hspace{-3pt} db  \Bigg\{-  \frac{ \mu_{N,\l} n^2}{(n-1)^2}   \l g^A_{b_1,b}  \frac{\d}{\d G^{(n-1)}_{b}}   +  2^{-N}\mathds{1}_{(n\ge3)} \sum_{\substack{n_1\ge1 \\  n_2\ge2\\ n_1+n_2=n}} \mc{N}_{n_1,n_2}\l g^A_{b_1,b} \frac{\d^2}{\d G^{(n_1)}_{b}\d \dot{G}^{(n_2)}_{b_1}}  \Bigg\}e^{-\mc{V}^{(h)}}.  
\end{split}\end{equation}
\end{lemma}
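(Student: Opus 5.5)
The plan is to mimic the proof of Lemma \ref{lemma_id_A}, now differentiating $\mc{V}^{(h)}$ with respect to the source fields instead of $\psi^+$. Differentiating $\mc{V}^{(h)}=-\log\mc{E}(e^{-\mc{V}(\psi+\cdot;G;\dot G;\ddot G)})$ gives
\[
\frac{\d \mc{V}^{(h)}}{\d G^{(n)}_{b_1}}= e^{\mc{V}^{(h)}}\,\mc{E}\Big(\frac{\d \mc{V}}{\d G^{(n)}_{b_1}}(\psi+\cdot)\, e^{-\mc{V}(\psi+\cdot)}\Big).
\]
From Eq. \eqref{def_potential}, $\frac{\d \mc{V}}{\d G^{(n)}_{b_1}}=n^3 2^{N(1-n)}\int d\ud b_*\, v_n(b_1,\ud b_*)\,O_{b_1}\,O^{n-1}_{\ud b_*}$. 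The analogous expression for $\dot G^{(n)}$ is $n^3 2^{N(2-n)}\int d\ud b_*\, v_n\, O^{n-1}_{\ud b_*}$, with the factor $O_{b_1}$ absent.

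First, I will remove the bilinear $O_{b_1}(\psi+\z)=\int c_{b_1}(\eta_1,\eta_2)(\psi^+_{\eta_1}+\z^+_{\eta_1})(\psi^-_{\eta_2}+\z^-_{\eta_2})$. Expanding it gives four terms. Each $\z$ factor is handled by the integration by parts formulas \eqref{int_part_1}-\eqref{int_part_2}; a derivative acting on $e^{-\mc{V}(\psi+\z)}$ can be traded for $\d/\d\psi^\pm$, because $\mc{V}$ depends on $\psi+\z$ only. The term with two $\z$'s produces the contraction $-g^{(h,N]}_{\eta_2,\eta_1}$ together with the double-derivative term. Up to care with Grassmann signs and orderings, this yields the differential operator in the parentheses of Eq. \eqref{eq_id_2}. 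There is one subtlety: the derivatives must not act on $O^{n-1}_{\ud b_*}$. I will avoid this by first rewriting the remaining factor $\int v_n\, O^{n-1}_{\ud b_*}e^{-\mc{V}}$ as source derivatives of $e^{-\mc{V}}$, which commute with the $\psi$ operations. So in practice the order of operations is to do the second step first and then integrate by parts.

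Second, I will decompose $v_n$ with the tree property \eqref{tree_property}, in its form \eqref{v_n_rewriting}. For $n=1$ we have $v_1=\mu_{N,\l}$, which gives the term $\mathds{1}_{(n=1)}\mu_{N,\l}$. Next comes the leaf term, $\frac{\l\mu_{N,\l}}{n}\sum_k g^A_{b_1,b_k}v_{n-1}$. After symmetrization it equals $(n-1)\frac{\l\mu_{N,\l}}{n}\int db\, g^A_{b_1,b}\, v_{n-1}(b,\cdot)$, and the factor multiplying $g^A$ is recognized as $-\frac{1}{(n-1)^3}2^{-N(2-n)}\,\d/\d G^{(n-1)}_b$ acting on $e^{-\mc{V}}$. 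Collecting the prefactors,
\[
n^3\cdot\frac{n-1}{n}\cdot\frac{1}{(n-1)^3}=\frac{n^2}{(n-1)^2},
\]
with the leftover dimensional factor $2^{-N}$, in accordance with Eq. \eqref{eq_id_2}. The non-leaf term, with $|X_1|=n_1$ and $|X_2|=n_2-1$, factorizes into $v_{n_1}O^{n_1}$ times $v_{n_2}(b_1,\ud b_{X_2})O^{n_2-1}$. These are exactly the monomials coupled to $G^{(n_1)}_b$ and $\dot G^{(n_2)}_{b_1}$. Counting the choices of $X_1\ni 2$ and reusing the computation \eqref{eq_id_10}, I obtain the coefficient $\mc{N}_{n_1,n_2}$ together with the factor $2^{-2N}$. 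The $\dot G$ identity \eqref{eq_id_3} follows in the same way: there is no $O_{b_1}$ to extract, hence no fermionic operator, and one power of $2^N$ more remains, consistent with the different prefactors.

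The main obstacle is the exact bookkeeping of combinatorial factors: the $n^3$ normalizations, the binomials $\binom{n}{|X_1|}^{-1}$, the $1/n!$ in $v_n$ versus the symmetrization over $b_2,\dots,b_n$, and whether it is $b_1$ or $b$ that is singled out for $\dot G$. To check them, I would compute these factors explicitly and test them on $n=2,3$. A second point to verify is that the terms $\tilde v_n$ and $\ddot G$ do not enter, since the $G^{(n)}$ and $\dot G^{(n)}$ monomials involve only $v_n$.
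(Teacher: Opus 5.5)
Your proposal is correct and follows essentially the same route as the paper. You first use the tree decomposition \eqref{v_n_rewriting} to rewrite $v_n O^{n-1}_{\ud b_*}e^{-\mc{V}}$ as source derivatives of $e^{-\mc{V}}$, which gives Eq. \eqref{eq_id_11} with the coefficients $\frac{n^2}{(n-1)^2}2^{-N}$ and $\mc{N}_{n_1,n_2}2^{-2N}$ you computed; only then do you integrate the remaining bilinear $O_{b_1}(\psi+\z)$ by parts via \eqref{int_part_1}--\eqref{int_part_2}, and the $\dot G$ identity follows the same way without $O_{b_1}$. The fact you use implicitly in the non-leaf term does hold: the bare $\mc{V}$ is linear in each source, so $\frac{\d^2}{\d G\,\d\dot G}e^{-\mc{V}}=\frac{\d\mc{V}}{\d G}\frac{\d\mc{V}}{\d\dot G}e^{-\mc{V}}$.
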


\begin{proof}
The proof follows the same ideas used for Lemma \ref{lemma_id_A} and will not be belabored in detail. The starting point is the identity:
\[ \frac{\d \mc{V}^{(h)}}{\d G^{(n)}_{b_1}}(\psi;G;\dot{G};\ddot{G})= e^{\mc{V}^{(h)}(\psi;G;\dot{G};\ddot{G})} \mc{E} \Big(\frac{\d \mc{V}}{\d G^{(n)}_{b_1}}(\psi+\cdot\, ;G;\dot{G};\ddot{G}) e^{-\mc{V}(\psi+\cdot \, ;G;\dot{G};\ddot{G})} \Big) \]  

for proving Eq. \eqref{eq_id_2}, and the same one with the derivative w.r.t. $\dot{G}^{(n)}$ instead of $G^{(n)}$ for proving Eq. \eqref{eq_id_3}. Note that:
\[\begin{split}
&\frac{\d\mc{V}}{\d G^{(n)}_{b_1}}(\psi+\z;G;\dot{G};\ddot{G}) = n^3 2^{N(1-n)} \int d \ud b' v_n(\ud b) O_{\ud b}(\psi +\z).
\end{split}\]

Using Eq. \eqref{v_n_rewriting} for rewriting $v_n$ and following the same steps as for the proof of Lemma \ref{lemma_id_A}, we find:
\begin{equation}
\label{eq_id_11}
\begin{split}  &\frac{\d \mc{V}^{(h)}}{\d G^{(n)}_{b_1}}(\psi;G;\dot{G};\ddot{G}) 
=e^{\mc{V}^{(h)}}  \Bigg\{ \mathds{1}_{(n=1)}\mu_{N,\l} -  2^{-N} \mathds{1}_{(n\ge2)} \frac{\mu_{N,\l} \, n^2}{(n-1)^2} \int db  \l g^A_{b_1,b}  \frac{\d}{\d G^{(n-1)}_{b}}   +\\
&+ 2^{-2N} \mathds{1}_{(n\ge3)} \int db  \sum_{\substack{n_1\ge1, n_2\ge2\\ n_1+n_2=n}} \mc{N}_{n_1,n_2} \l g^A_{b_1,b}  \frac{\d^2}{\d G^{(n_1)}_{b} \d\dot{G}^{(n_2)}_{b_1}} \Bigg\} \mc{E} \Big( O_{b_1}(\psi+\cdot \, ) e^{-\mc{V}(\psi+\, \cdot \, ;G;\dot{G};\ddot{G})} \Big),   
\end{split}
\end{equation}

and the same expression holds for $\frac{\d\mc{V}^{(h)}}{\d\dot{G}^{(n)}_{b_1}}$ without the factor $O_{b_1}(\psi+\z)$ within the Grassmann integral, yielding immediately Eq. \eqref{eq_id_3}. Letting $\mc{V}=\mc{V}(\psi+\z;G;\dot{G};\ddot{G})$, we can rewrite:
\[\begin{split}
&\mc{E} \Big( O_{b_1}(\psi+\cdot) e^{-\mc{V}}\Big)  = \int d\eta_1 d\eta_2 c_{b_1}(\eta_1,\eta_2) \mc{E} \Big( (\psi^+_{\eta_1}+(\cdot)^+_{\eta_1})(\psi^-_{\eta_2}+ (\cdot)^-_{\eta_2})  e^{-\mc{V}}\Big) \\
&= \int d\eta_1 d\eta_2 c_{b_1}(\eta_1,\eta_2) \mc{E} \Big( \Big[-(\psi^-_{\eta_2}+ (\cdot)^-_{\eta_2}) \Big(\psi^+_{\eta_1}-\int d\eta g^{(h,N]}_{\eta,\eta_1}\frac{\d}{\d(\cdot)^-_{\eta}}\Big) - g^{(h,N]}_{\eta_2,\eta_1}\Big]  e^{-\mc{V}}\Big)=\\
&= \int d\eta_1 d\eta_2 c_{b_1}(\eta_1,\eta_2) \left(\psi^+_{\eta_1}\psi^-_{\eta_2} + \psi^-_{\eta_2} \int d\eta g^{(h,N]}_{\eta,\eta_1}\frac{\d}{\d\psi^-_{\eta}} + \psi^+_{\eta_1}\int d\eta' g^{(h,N]}_{\eta_2,\eta'}\frac{\d}{\d\psi^+_{\eta'}} + \right.\\
&\left. + \int d\eta d\eta' g^{(h,N]}_{\eta,\eta_1} g^{(h,N]}_{\eta_2,\eta'}\frac{\d^2}{\d\psi^+_{\eta'}\d\psi^-_{\eta}} - g^{(h,N]}_{\eta_2,\eta_1} \right) e^{-\mc{V}^{(h)}},
\end{split}\]

which, once plugged in the r.h.s. of Eq. \eqref{eq_id_11}, yields Eq. \eqref{eq_id_2}.
\end{proof}

\subsection{Identities for the kernels}

Lemma \ref{lemma_id_A} and Lemma \ref{lemma_id_B} can be used to infer the identities among the kernels of $\mc{V}^{(h)}$, by means of Eq. \eqref{eq_14}.

\begin{proof}[Proof of Lemma \ref{lemma_id_self_energy}]

This is actually a corollary of Eq. \eqref{eq_id_1}:
\begin{equation}
\label{eq_id_12}
\begin{split}
\frac{\d\mc{V}^{(h)}}{\d\psi^+_{\eta_1}}(\psi;\mb{0})&= \int db_1 \int d\eta c_{b_1}(\eta_1,\eta) \left( \psi^-_{\eta}- \int d\eta' g^{(h,N]}_{\eta,\eta'} \frac{\d\mc{V}^{(h)}}{\d\psi^+_{\eta'}} + \int d\eta' g^{(h,N]}_{\eta,\eta'} \frac{\d}{\d\psi^+_{\eta'}} \right)\times\\
&\times \left\{ 2^N\left(\mu_{N,\l} -1 \right) +\sum_{n\ge1} \frac{\mu_{N,\l}}{n^2}  \int db \l g^A_{b_1,b} \frac{\d\mc{V}^{(h)}}{\d G^{(n)}_{b}} +\right.\\
&-  2^{-N} \sum_{n_1\ge1,n_2\ge2} \frac{\mc{N}_{n_1,n_2}}{(n_1+n_2)^2} \int db \l g^A_{b_1,b}    \left(\frac{\d^2\mc{V}^{(h)}}{\d G^{(n_1)}_{b}\d \dot{G}^{(n_2)}_{b_1}} + \frac{\d\mc{V}^{(h)}}{\d G^{(n_1)}_{b}} \frac{\d\mc{V}^{(h)}}{\d \dot{G}^{(n_2)}_{b_1}} \right) +\\
&\left.+ \sum_{n\ge2} 2^{-N} \frac{\d\mc{V}^{(h)}}{\d \ddot{G}^{(n)}_{b_1}} \right\}(\psi;\mb{0}). 
\end{split}
\end{equation}

Eq. \eqref{eq_15} follows by deriving both sides of Eq. \eqref{eq_id_12} w.r.t. $\psi^-_{\eta_2}$ and then setting $\psi=0$. $\mc{A}_1(\eta_1,\eta_2)$ comes from the first term in curly bracket in the r.h.s. of Eq. \eqref{eq_id_12}:
\[\begin{split}
\mc{A}_1(\eta_1,\eta_2)&=2^N\left(\mu_{N,\l} -1 \right) \int db_1 \int d\eta c_{b_1}(\eta_1,\eta)  \frac{\d}{\d\psi^-_{\eta_2}}\left[ \psi^-_{\eta} - \int d\eta' g^{(h,N]}_{\eta,\eta'} \frac{\d\mc{V}^{(h)}}{\d\psi^+_{\eta'}} \right](\mb{0})\\
&= 2^N\left(\mu_{N,\l} -1 \right) \int db_1 \int d\eta c_{b_1}(\eta_1,\eta)  \left[ \d(\eta-\eta_2) - \int d\eta' g^{(h,N]}_{\eta,\eta'} W^{(h)|1,1}_{\eset;\eset;\eset}(\eta',\eta_2) \right].
\end{split}\]

$\mc{A}_2(\eta_1,\eta_2)$ comes from the second term in curly bracket in the r.h.s. of Eq. \eqref{eq_id_12}:
\[\begin{split}
\mc{A}_2(\eta_1,& \eta_2)= \sum_{n\ge1} \frac{\mu_{N,\l}}{n^2} \int db db_1\int d\eta  \l g^A_{b_1,b}  c_{b_1}(\eta_1,\eta)\times\\
&\times \left[\frac{\d}{\d\psi^-_{\eta_2}} \left( \psi^-_{\eta}- \int d\eta' g^{(h,N]}_{\eta,\eta'} \frac{\d\mc{V}^{(h)}}{\d\psi^+_{\eta'}} + \int d\eta' g^{(h,N]}_{\eta,\eta'} \frac{\d}{\d\psi^+_{\eta'}} \right)  \frac{\d\mc{V}^{(h)}}{\d G^{(n)}_{b}}\right](\mb{0})=\\
& =\sum_{n\ge1} \frac{\mu_{N,\l}}{n^2}  \int db db_1\int d\eta  \l g^A_{b_1,b}  c_{b_1}(\eta_1,\eta) \left[  \d(\eta-\eta_2)\frac{\d\mc{V}^{(h)}}{\d G^{(n)}_{b}}(\mb{0})+\right.\\
&\left.- \int d\eta' g^{(h,N]}_{\eta,\eta'} \frac{\d^2\mc{V}^{(h)}}{\d\psi^-_{\eta_2}\d\psi^+_{\eta'}}(\mb{0}) \frac{\d\mc{V}^{(h)}}{\d G^{(n)}_{b}}(\mb{0}) + \int d\eta' g^{(h,N]}_{\eta,\eta'} \frac{\d^3\mc{V}^{(h)}}{\d\psi^-_{\eta_2}\d\psi^+_{\eta'}\d G^{(n)}_b}(\mb{0})   \right].
\end{split}\]

Similarly, $\mc{A}_3(\eta_1,\eta_2)$ and $\mc{A}_4(\eta_1,\eta_2)$ come from the third and fourth terms respectively in the curly bracket in the r.h.s. of Eq. \eqref{eq_id_12}.
\end{proof}

\begin{proof}[Proof of Eq. \eqref{eq_18}]
This is a corollary of Eq. \eqref{eq_id_3}:
\[
\begin{split}
\frac{\d \mc{V}^{(h)}}{\d \dot{G}^{(n_2)}_{b_2}}&=        \frac{n_2^2}{(n_2-1)^2}  \mu_{N,\l} \int db \l g^A_{b_2,b}  \frac{\d\mc{V}^{(h)}}{\d G^{(n_2-1)}_{b}}+\\
&-  2^{-N}  \sum_{\substack{n'_1\ge1, n'_2\ge2\\ n'_1+n'_2=n_2}} \mc{N}_{n'_1,n'_2}\int db_1' db'_2\l g^A_{b_2,b_1'} \d(b_2-b'_2) \left( \frac{\d^2\mc{V}^{(h)}}{\d G^{(n'_1)}_{b_1'}\d \dot{G}^{(n'_2)}_{b'_2}}- \frac{\d\mc{V}^{(h)}}{\d G^{(n'_1)}_{b'_1}} \frac{\d\mc{V}^{(h)}}{\d \dot{G}^{(n'_2)}_{b'_2}}  \right).  
\end{split}\]

Deriving both sides w.r.t. $\psi^+_{\eta}$, $\psi^-_{\eta'}$ and $G^{(n_1)}_{b_1}$, and setting the fields to zero, we find:
\[
\begin{split}
&\frac{\d^4\mc{V}^{(h)}}{\d\psi^-_{\eta'}\d\psi^+_{\eta}\d G^{(n_1)}_{b_1}\d\dot{G}^{(n_2)}_{b_2}}(\mb{0})=\\
&\frac{n_2^2}{(n_2-1)^2}  \mu_{N,\l} \int db \l g^A_{b_2,b}  \frac{\d^4\mc{V}^{(h)}}{\d\psi^-_{\eta'}\d\psi^+_{\eta}\d G^{(n_1)}_{b_1}\d G^{(n_2-1)}_{b}}(\mb{0}) +\\
&-  \sum_{\substack{n'_1\ge1, n'_2\ge2\\ n'_1+n'_2=n_2}} 2^{-N} \mc{N}_{n'_1,n'_2}  \int db'_1 db'_2 \l g^A_{b_2,b'_1} \d(b_2-b'_2) \left[\frac{\d^5\mc{V}^{(h)}}{\d\psi^-_{\eta'}\d\psi^+_{\eta}\d G^{(n_1)}_{b_1} \d G^{(n'_1)}_{b'_1} \d \dot{G}^{(n'_2)}_{b'_2}}(\mb{0}) + \right.\\
&\left.- \frac{\d^4\mc{V}^{(h)}}{\d\psi^-_{\eta'}\d\psi^+_{\eta}\d G^{(n_1)}_{b_1} \d G^{(n'_1)}_{b'_1}}(\mb{0}) \frac{\d\mc{V}^{(h)}}{\d \dot{G}^{(n'_2)}_{b'_2}}(\mb{0}) - \frac{\d^2\mc{V}^{(h)}}{\d G^{(n_1)}_{b_1} \d G^{(n'_1)}_{b'_1}}(\mb{0}) \frac{\d^3\mc{V}^{(h)}}{\d\psi^-_{\eta'}\d\psi^+_{\eta}\d \dot{G}^{(n'_2)}_{b'_2}}(\mb{0}) +\right.\\
&\left.- \frac{\d^3\mc{V}^{(h)}}{\d\psi^-_{\eta'}\d\psi^+_{\eta} \d G^{(n'_1)}_{b'_1}}(\mb{0}) \frac{\d^2\mc{V}^{(h)}}{G^{(n_1)}_{b_1} \d \dot{G}^{(n'_2)}_{b'_2}}(\mb{0}) - \frac{\d\mc{V}^{(h)}}{\d G^{(n'_1)}_{b'_1}}(\mb{0}) \frac{\d^4\mc{V}^{(h)}}{\d\psi^-_{\eta'}\d\psi^+_{\eta}\d G^{(n_1)}_{b_1}  \d \dot{G}^{(n'_2)}_{b'_2}}(\mb{0}) \right],
\end{split}
\]

which gives exactly Eq. \eqref{eq_18}, in force of. Eq. \eqref{eq_14}.
\end{proof}

\section{The non-interacting bubbles}
\label{app_bubble}

This appendix is dedicated to the characterization of the non-interacting bubbles $\Pi^{(h,N]}$ and $\Pi_5^{(h,N]}$:
\[\begin{split}
&\Pi^{(h,N]}(b_1,b_2)=-\frac{1}{2}\mc{E}^T_{(h,N]}(O_{b_1},O_{b_2}) \equiv  \frac{1}{2} \int d\eta_1 d\eta_1' d\eta_2 d\eta_2' c_{b_1}(\eta_1,\eta_1') c_{b_2}(\eta_2,\eta_2') g^{(h,N]}_{\eta_1',\eta_2}g^{(h,N]}_{\eta_2',\eta_1}\\
&\Pi^{(h,N]}_5(b_1,b_2)=-\frac{1}{2}\mc{E}^T_{(h,N]}(O_{b_1},O_{5;b_2}) \equiv  \frac{1}{2} \int d\eta_1 d\eta_1' d\eta_2 d\eta_2' c_{b_1}(\eta_1,\eta_1') c^5_{b_2}(\eta_2,\eta_2') g^{(h,N]}_{\eta_1',\eta_2}g^{(h,N]}_{\eta_2',\eta_1},
\end{split}\]

(cf. Eq. \eqref{eq_29a}) with $c$ and $c^5$ defined in Eq. \eqref{def_bare_vertex} and Eq. \eqref{def_bare_vertex} respectively. 

\subsection{The finiteness of the non-interacting bubbles}

In the first part of this appendix we establish the validity of the following result.

\begin{lemma}
\label{lemma_bubble} 
There exists $C_{\Pi}\ge1$ such that, for every $0\le h\le N$, letting

\begin{equation}
\label{eq_bubble6}
\hat{\Pi}^{(h,N]}_{\sharp; \mu,\nu}(p):= \frac{1}{2}\sum_{\e,\e'=\pm} \e\e' \int_{\L} \frac{dx dy}{L^2} e^{-ip\cdot(x-y)} \Pi^{(h,N]}_{\sharp}\big((x,\mu,\e),(y,\nu,\e')\big),
\end{equation}

(cf. Eq. \eqref{def:bubble:Fourier}) with $\sharp$ either blank or 5, the following items are true.

\begin{enumerate}
\item\label{it:bubble:1} $\hat{\Pi}^{(h,N]}_{\mu,\nu}(0)= \d_{\mu,\nu}\hat{\Pi}^{(h,N]}_{0,0}(0)$; $\hat{\Pi}^{(h,N]}_{5;\mu,\nu}(0)= \ve_{\mu,\nu}\hat{\Pi}^{(h,N]}_{5;0,1}(0)$.

\item\label{it:bubble:2} $|\hat{\Pi}^{(h,N]}_{\sharp;\mu,\nu}(0)|\le C_{\Pi}$.

\item\label{it:bubble:3} For every $v:\L_B\mapsto\mbb{C}$ such that $v\big((x,\mu,\e),(y,\nu,\e')\big)= \e\e' \tilde{v}_{\mu,\nu}(x-y)$ for some $\tilde{v}:\L\times\{0,1\} \mapsto \mbb{C}$, we have that
\[\big\|v*\Pi_{\sharp}^{(h,N]}- v \hat{\Pi}_{\sharp}^{(h,N]}(0)\big\|_1^w\le C_{\Pi} 2^{-h} \|\nabla v\|_1^w,\] 

where $\nabla_{\a} v\big((x,\mu,\e),(y,\nu,\e')\big):= 2^{N}\e\e' \big( \tilde{v}_{\mu,\nu}(x+ 2^{-N}\hat{e}_{\a})- \tilde{v}_{\mu,\nu}(x) \big)$ and $\|\nabla v\|_1^w :=  \sum_{\a=0,1} \|\nabla_{\a}v\|_1^w$.
\end{enumerate}

\end{lemma}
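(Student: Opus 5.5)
The plan is to treat the bubble as a function of the translation-invariant combination $\tilde\Pi_{\sharp;\mu,\nu}(x-y):=\frac12\sum_{\e,\e'}\e\e'\,\Pi^{(h,N]}_\sharp((x,\mu,\e),(y,\nu,\e'))$ and to read off its structure from the explicit form of the bare vertices. Writing $c_{(x,\mu,\e)}$ from Eq. \eqref{def_bare_vertex}, the combination $\sum_\e \e\, c_{(x,\mu,\e)}$ is the lattice current vertex: a $\gamma_\mu$ hopping term plus a Wilson term, i.e. in momentum space the vertex $\frac12[(\gamma_\mu - r)e^{-i(k+p)_\mu a}\ldots]$. Thus $\hat\Pi^{(h,N]}_{\sharp;\mu,\nu}(p)=\int dk\,\mathrm{Tr}\big[\Gamma_\mu(k,p)\hat g^{(h,N]}(k)\Gamma^\sharp_\nu(k+p,-p)\hat g^{(h,N]}(k+p)\big]$, with $\Gamma^5_\nu=Z^5_N\gamma_\nu\gamma_5$ local.

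\textbf{Item \ref{it:bubble:1}.} At $p=0$ I would use the lattice symmetries of Lemma \ref{lemma:symm} (axes flip $k_\mu\to -k_\mu$, the exchange $0\leftrightarrow 1$ combined with the corresponding rotation of $\gamma$'s, and charge conjugation). Under the flip of axis $\mu$ alone, $\hat g(k)$ transforms by conjugation with a $\gamma$-matrix that flips the sign of the $\mu$ component of the current vertex only, so off-diagonal entries $\mu\ne\nu$ of $\hat\Pi_{\mu,\nu}(0)$ are odd and vanish; the $0\leftrightarrow1$ exchange gives $\hat\Pi_{00}(0)=\hat\Pi_{11}(0)$. For the chiral case, $\gamma_\nu\gamma_5=\pm i\varepsilon_{\nu\alpha}\gamma_\alpha$ in 2d, so the same argument gives the antisymmetric structure $\varepsilon_{\mu,\nu}\hat\Pi_{5;0,1}(0)$.

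\textbf{Item \ref{it:bubble:2}.} A naive bound $\sum_{h<j\le k\le N}\|g^{(j)}\|_\infty\|g^{(k)}\|_1$ gives only $O(N-h)$, so a cancellation is needed; this is the main obstacle. I would use the lattice Ward identity: the current vertex satisfies $\sum_\mu\sigma_\mu(p)\Gamma_\mu(k,p)=\hat g^{-1}(k+p)-\hat g^{-1}(k)$ (exact for the full Wilson propagator, which is why the lattice WI holds). For the diagonal entry, I would write $\Gamma_0(k,0)=\partial_{k_0}\hat g^{-1}(k)$ (the $p\to0$ limit of the WI), so that $\hat\Pi_{00}(0)=\int dk\,\mathrm{Tr}[\partial_{k_0}\hat g^{-1}\,\hat g\,\partial_{k_0}\hat g^{-1}\,\hat g]\chi^2$ plus the tadpole-like term from the second-order vertex. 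Integrating by parts in $k_0$ over the torus, the would-be logarithmic part becomes a total derivative and only terms where the derivative hits the cutoff function $\chi_h,\chi_N$ survive; each such term is supported on a single shell $|k|\sim 2^h$ or $|k|\sim 2^N$ and is $O(1)$ by scaling. The remaining Wilson contributions carry an explicit factor $a|k|$ and are summable over scales. The chiral case is identical after replacing one vertex by $\gamma_\nu\gamma_5$ and using the same scale-by-scale bookkeeping.

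\textbf{Item \ref{it:bubble:3}.} Write $(v*\Pi)(x)-v(x)\hat\Pi(0)=\int dz\,\Pi(z)\,[\tilde v(x-z)-\tilde v(x)]$ and express the difference as a telescoping sum of lattice gradients along a lattice path from $x$ to $x-z$. This gives a bound by $\|\nabla v\|_1^w\,\int dz\,|z|\,|\tilde\Pi(z)|\,e^{\frac\kappa2\sqrt{|z|}}$, using subadditivity of $\sqrt{\cdot}$ to split the weight. Then $\tilde\Pi(z)$ decomposes into scale pairs $g^{(j)}g^{(k)}$ with $j\le k$, bounded by $2^{j+k}e^{-\kappa\sqrt{2^j|z|}}$ via Eq. \eqref{eq_bound_propagator}. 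Integrating $|z|$ against this gives $2^{j+k}2^{-3j}=2^{k-2j}$; summing over $k\ge j$ requires care. Summing naively diverges in $k$, so I would instead bound $\int|z||g^{(j)}(z)||g^{(k)}(z)|\le\||z|g^{(j)}\|_\infty\|g^{(k)}\|_1\le 2^{j}2^{-j}\cdot 2^{-k}$, hence $\sum_{h<j\le k}2^{-k}\le C2^{-h}$, which is the claimed estimate.
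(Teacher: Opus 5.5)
Your Items 1 and 3 follow the paper's proof. Item 1 uses the axes flip and parity to get the tensor structure. Item 3 uses a telescoping lattice path, subadditivity of $\sqrt{\cdot}$ and a single-scale decomposition. You put $|z|$ on the lower-scale propagator in $L^\infty$, whereas the paper puts it on the higher-scale one in $L^1$; this is an inessential variant, and both give $C2^{-h}$.

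Item 2 is where your route genuinely differs. The paper first removes the $O(2^{-N})$ shifts of the hopping vertices, at an $O(1)$ cost bounded by $2^{-N}\sum_{j\le j'}\|g^{(j)}\|_\infty\|\nabla g^{(j')}\|_1$, so that after summing over $\epsilon$ the vector vertex is just $\gamma_\mu$. It then computes the trace and kills the would-be logarithm because
\[
\int_{\Lambda^*}dk\,\chi^2_{(h,N]}(k)\,\frac{2s_\mu(k)s_\nu(k)-\delta_{\mu,\nu}|s(k)|^2}{(|s(k)|^2+(m_N+M_N(k))^2)^2}
\]
vanishes under $k_0\leftrightarrow k_1$ and $k\to-k$, a cancellation specific to $d=2$; the leftover $(m_N+M_N)^2$ term is finite. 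The chiral bubble is reduced to the same trace via $\gamma_\nu\gamma_5=i\sum_\sigma\varepsilon_{\nu,\sigma}\gamma_\sigma$.

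You instead keep the exact $p=0$ vertex $\partial_{k_\mu}\hat g^{-1}(k)$ and use $\hat g\,(\partial_{k_\mu}\hat g^{-1})\,\hat g=-\partial_{k_\mu}\hat g$. This buys three things: no shifts need to be dropped; the Ward-identity mechanism (bubble equals an infrared boundary term minus a tadpole-type term) becomes explicit; and the chiral case is even simpler. Indeed, there $\mathrm{Tr}[(\partial_{k_\mu}\hat g^{-1})\hat g\,\gamma_\nu\gamma_5\,\hat g]=-\partial_{k_\mu}\mathrm{Tr}[\hat g\,\gamma_\nu\gamma_5]$, so only the infrared shell term survives.

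Two points need repair before this is a proof.

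\textbf{The bubble contains no tadpole, and there is no UV shell.} $\Pi^{(h,N]}=-\frac12\mathcal{E}^T_{(h,N]}(O_{b_1},O_{b_2})$ contains no second-order vertex; the tadpole $\mathcal{T}^{(h,N]}$ is a separate object. Moreover $g^{(h,N]}$ carries $\phi_h(k):=1-\chi_h(|k|)$, which is smooth and periodic on the Brillouin zone, so there is no cutoff-derivative term at $|k|\sim 2^N$. For the bubble alone, with $V:=\partial_{k_0}\hat g^{-1}$, your integration by parts reads
\[
\int dk\,\phi_h^2\,\mathrm{Tr}[V\hat g V\hat g]=\int dk\,\partial_{k_0}(\phi_h^2)\,\mathrm{Tr}[V\hat g]+\int dk\,\phi_h^2\,\mathrm{Tr}[(\partial_{k_0}V)\,\hat g].
\]
The last term must be bounded on its own. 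It is $O(1)$, because $\partial_{k_0}^2\hat g^{-1}=O(2^{-N})$ while $\int dk\,\phi_h^2|\hat g|=O(2^N)$ by Eq. \eqref{eq:bound:Wilson}.

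\textbf{Finite volume.} At finite $L$ the integral $\int_{\Lambda^*}dk$ is a Riemann sum, so the sum of a total derivative does not vanish exactly. You need, for instance, Poisson summation. With $F:=\phi_h^2\,\mathrm{Tr}[V\hat g]$ and $\check F$ its infinite-volume position-space counterpart, the error is bounded by $\sum_{n\neq0}|nL|\,|\check F(nL)|$, which is uniformly bounded thanks to Eq. \eqref{eq_bound_propagator}. The paper's symmetry argument has no such issue, since $k_0\leftrightarrow k_1$ and $k\to-k$ map the grid $\Lambda^*$ onto itself.
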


As a straightforward corollary of Lemma \ref{lemma_bubble} we have, as a crucial property of the theory, the finiteness of the quantity $\big\|g^A*\Pi_{\sharp}^{(h,N]}\big\|_1^w$, which is a main building block for the estimates in Sects. \ref{sect_improved} and \ref{sect_external}.

\begin{corollary}
\label{cor_bubble}
There exists a constant $C'_{\Pi}\ge1$ such that for every $0\le h\le N$, one has: $\|g^A*\Pi_{\sharp}^{(h,N]}\|_1^w \le C'_{\Pi} $.
\end{corollary}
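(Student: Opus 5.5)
The plan is to write the convolution as a constant term plus a gradient remainder and apply Lemma \ref{lemma_bubble} with $v$ the boson propagator itself. Concretely, set
\[v\big((x,\mu,\e),(y,\nu,\e')\big):=g^A_{(x,\mu,\e),(y,\nu,\e')}.\]
This has the required form $\e\e'\tilde v_{\mu,\nu}(x-y)$. Indeed, the bosonic label $b=(x,\mu,\e)$ enters $G_b(A)$ through $\e\mathsf{e}A_{\mu,x}$ at linear order, so the propagator between labels is $\e\e' g^A_{\mu,\nu}(x-y)$ (this is the sign rule $g^A_{b,b'}=-g^A_{b,\ov b'}$ already used for diagram $(c)$). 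Since $\xi=1$, we moreover have $\tilde v_{\mu,\nu}=\d_{\mu,\nu}g^A_{\mu,\mu}$.

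By the triangle inequality,
\[\big\|g^A*\Pi_\sharp^{(h,N]}\big\|_1^w\le \big\|g^A\,\hat\Pi_\sharp^{(h,N]}(0)\big\|_1^w+\big\|g^A*\Pi_\sharp^{(h,N]}-g^A\hat\Pi_\sharp^{(h,N]}(0)\big\|_1^w .\]
The first term is controlled by Item \ref{it:bubble:2} of Lemma \ref{lemma_bubble}. Item \ref{it:bubble:1} says that $\hat\Pi_\sharp^{(h,N]}(0)$ is just a $2\times2$ matrix in $(\mu,\nu)$ multiplying $\d_{\mu,\nu}$ or $\ve_{\mu,\nu}$, with entries of size at most $C_\Pi$. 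So the first term is bounded by $2C_\Pi\|g^A\|_1^w$. By Eq. \eqref{eq:bound:boson:b} with $h^*_M=0$, $\|g^A\|_1^w$ is finite: the weight $e^{\frac{\kappa}{2}\sqrt{\cdot}}$ is dominated by the decay $e^{-\kappa_1\sqrt{\cdot}}$ in Eq. \eqref{eq_bound_boson} because $\kappa\le\kappa_1$, which gives a bound such as $32\pi/\kappa^2$.

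The second term is at most $C_\Pi2^{-h}\|\nabla g^A\|_1^w$ by Item \ref{it:bubble:3}. Since $h\ge0$, it remains to show that $\|\nabla_\a g^A\|_1^w\le C$ uniformly in $N$. This is the main point to check. In Fourier space, $\widehat{\nabla_\a g^A}(k)$ carries an extra factor $\s_\a(k)$ relative to $\hat g^A(k)=(|\s(k)|^2+M^2)^{-1}$, so it behaves like $|k|^{-1}$. Its position-space kernel is therefore like $|x|^{-1}e^{-c\sqrt{|x|}}$, which is integrable in 2d, uniformly in the lattice spacing. I would obtain this in one of two ways. The first is the decay bound Eq. \eqref{boson:gradient} referenced in the definition of $\kappa$ (constant $\kappa_3$), presumably a bound $|\nabla g^A(x)|\le K(2^{-N}+\dist{x}_L)^{-1}e^{-\kappa_3\sqrt{\dist{x}_L}}$. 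The second, if that bound must be derived, is a multiscale decomposition of $\hat g^A$ with the Gevrey cutoffs $\chi_j$. The scale-$j$ piece of $\nabla g^A$ has sup norm $O(2^{j})$ and is supported essentially on $|x|\lesssim 2^{-j}$ with stretched-exponential tails. Its weighted $L^1$ norm is therefore $O(2^{-j})$, and summing over $j\ge0$ converges, with the $j\le0$ infrared part controlled by the mass $M\in[1,2)$.

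Combining the two pieces gives $C'_\Pi=2C_\Pi\|g^A\|_1^w+C_\Pi\sup_N\|\nabla g^A\|_1^w$, which proves Corollary \ref{cor_bubble} for both $\sharp$ blank and $\sharp=5$.
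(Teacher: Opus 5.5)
Your proposal is correct and is essentially the paper's own proof. You split $g^A*\Pi^{(h,N]}_\sharp$ into $g^A\hat\Pi^{(h,N]}_\sharp(0)$, which you bound by Item~2 of Lemma~\ref{lemma_bubble} and the finiteness of $\|g^A\|_1^w$, plus a remainder bounded by $C_\Pi 2^{-h}\|\nabla g^A\|_1^w$ via Item~3, and you close with the decay estimate \eqref{boson:gradient} to get $\|\nabla g^A\|_1^w$ uniformly in $N$. The paper only asserts \eqref{boson:gradient} (``one can easily show''), so your multiscale sketch fills in that step rather than taking a different route.
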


\begin{proof}
From Lemma \ref{lemma_bubble} we find:
\[\begin{split} \big\|g^A*\Pi^{(h,N]}_{\sharp}\big\|_1^w &\le \big\| g^A \hat{\Pi}^{(h,N]}_{\sharp}(0) \big\|_1^w + \big\|g^A*\Pi^{(h,N]}_{\sharp}- g^A\hat{\Pi}^{(h,N]}_{\sharp}(0) \big\|_1^w  \\
&\le  \Big(2C_{\Pi} \|g^A\|_1^w +C_{\Pi}2^{-h} \|\nabla g^A\|_1^w\Big).
\end{split}\]

We already know that $\|g^A\|_1^w$ is bounded by a constant (cf. Eq. \eqref{eq_bound_boson} and comments thereafter); about $\nabla g^A$, one can easily show that

\begin{equation}
\label{boson:gradient}
|\nabla_{\a}g^A_{\mu,\nu}(x-y)|\le \frac{K_{\nabla} \d_{\mu,\nu}}{2^{-N}+ \dist{x-y}_L} e^{-\kappa_3 \sqrt{\dist{x-y}_L}}
\end{equation}

for suitable $K_{\nabla},\kappa_3>0$. Eq. \eqref{boson:gradient} readily implies $\|\nabla g^A\|_1^w\le K_{\nabla}'$, for a suitable $K_{\nabla}'>0$, thus the claim of Corollary \ref{cor_bubble}.
\end{proof}

\begin{proof}[Proof of Lemma \ref{lemma_bubble}]$\,$

\emph{Item \ref{it:bubble:1}.} This is a consequence of two symmetries of the theory, namely \it{axes flip} and \it{parity}, defined in Eqs. \eqref{symm:rot}, \eqref{sym:par} respectively. First, under the \it{axes flip} transformation, one has: $O_{x,\mu,\e}(\psi) \mapsto O_{x,\mu,\e}(\mc{F}\psi)= O_{Tx,1-\mu,\e}(\psi)$, where, if $b=(x,\mu,\e)$, $Tb\equiv (Tx,1-\mu,\e)$. Moreover $O_{5;x,\mu,\e}(\psi) \mapsto O_{5;x,\mu,\e}(\mc{F}\psi)= -O_{5;Tx,1-\mu,\e}(\psi)$. As a consequence:
\[\begin{split}
\hat{\Pi}^{(h,N]}_{\mu,\nu}(0)=&
-\frac{1}{4}\sum_{\e,\e'=\pm} \e\e' \int \frac{dx dy}{L^2} \mc{E}^T_{(h,N]}(O_{x,\mu,\e},O_{y,\nu,\e'})=\\
&-\frac{1}{4}\sum_{\e,\e'=\pm} \e\e' \int \frac{dx dy}{L^2} \mc{E}^T_{(h,N]}(O_{Tx,1-\mu,\e},O_{Ty,1-\nu,\e'})= \hat{\Pi}^{(h,N]}_{1-\mu,1-\nu}(0).
\end{split}\]

Similarly one finds that $\hat{\Pi}^{(h,N]}_{5;\mu,\nu}(0)=- \hat{\Pi}^{(h,N]}_{5;1-\mu,1-\nu}(0)$. On the other hand, under the \it{parity} transformation one has $O_b(\psi)\mapsto O_b(\mc{P}\psi)= O_{Pb}(\psi)$, where if $b=(x,\mu,\e)$, then $Pb\equiv(Px-\delta_{\mu,1}2^{-N}\hat{e}_1, \mu, (-1)^{\mu}\e)$. Moreover $O_{5;x,\mu,\e}(\psi) \mapsto O_{5;x,\mu,\e}(\mc{P}\psi)= (-1)^{\mu-1} O_{5;Px,\mu,\e}(\psi)$. It follows that
\[\begin{split}
\hat{\Pi}^{(h,N]}_{0,1}(0)=&
-\frac{1}{4}\sum_{\e,\e'=\pm} \e\e' \int \frac{dx dy}{L^2} \mc{E}^T_{(h,N]}(O_{x,0,\e},O_{y,1,\e'})=\\
&-\frac{1}{4}\sum_{\e,\e'=\pm} \e\e' \int \frac{dx dy}{L^2} \mc{E}^T_{(h,N]}(O_{Px,0,\e},O_{Py- 2^{-N}\hat{e}_1,1,-\e'})= \\
&\frac{1}{4}\sum_{\e,\e'=\pm} \e\e' \int \frac{dx dy}{L^2} \mc{E}^T_{(h,N]}(O_{Px,0,\e},O_{Py- 2^{-N}\hat{e}_1,1,\e'})= -\hat{\Pi}^{(h,N]}_{0,1}(0).
\end{split}\]

Similarly, $\hat{\Pi}^{(h,N]}_{5;0,0}(0)= - \hat{\Pi}^{(h,N]}_{5;0,0}(0)$. All in all: $\hat{\Pi}^{(h,N]}_{\mu,\nu}(0)= \d_{\mu,\nu}\hat{\Pi}^{(h,N]}_{0,0}(0)$ and $\hat{\Pi}^{(h,N]}_{5;\mu,\nu}(0)= \ve_{\mu,\nu}\hat{\Pi}^{(h,N]}_{5;0,1}(0)$.

\medskip

\emph{Item \ref{it:bubble:2}.} First of all note that by \it{charge conjugation} symmetry, Eq. \eqref{sym:par},\footnote{Note that under the \it{charge conjugation} transformation, $O_{5;x,\mu,\e}(\psi) \mapsto O_{5;x,\mu,\e}(\mc{Q}\psi)= O_{5;x,\mu,-\e}(\psi)$.} we have (compare also with Eq. \eqref{def:bubble:Fourier}):
\[\hat{\Pi}^{(h,N]}_{\sharp; \mu,\nu}(p)= \sum_{\e=\pm} \e \int_{\L} \frac{dx dy}{L^2} e^{-ip\cdot(x-y)} \Pi^{(h,N]}_{\sharp}((x,\mu,\e),(y,\nu,+)).\]

Let us begin with the \it{vector bubble}, $\Pi^{(h,N]}$. Recalling the explicit expression for $c_b(\eta_1,\eta_2)$, Eq. \eqref{def_bare_vertex}, we have that
\begin{equation}
\label{eq_bubble2}
\begin{split}
& \hat{\Pi}^{(h,N]}\big((\mu,+),(\nu,+)\big)(0)\equiv L^{-2} \int_{\L} dx dy\; \Pi^{(h,N]}((x,\mu,+),(y,\nu,+))=\\
& \frac{1}{8L^2} \int dzdy\, \mbox{Tr}\left\{(\gamma_{\mu}-r) g^{(h,N]}(x+2^{-N}\hat{e}_{\mu}-y) (\gamma_{\nu}-r) g^{(h,N]}(y+ 2^{-N} \hat{e}_{\nu} -x) \right\}=\\
& \frac{1}{8} \int dz\, \mbox{Tr}\left\{(\gamma_{\mu}-r) g^{(h,N]}(z)(\gamma_{\nu}-r) g^{(h,N]}(-z) \right\}+\\
&+ \frac{1}{8}\sum_{j,j'=h+1}^N \int dz\,  \mbox{Tr}\left\{(\gamma_{\mu}-r) g^{(j)}(z)(\gamma_{\nu}-r) \left[ g^{(j')}(-z)- g^{(j')}(-z +2^{-N}(\hat{e}_{\mu}+\hat{e}_{\nu}))\right] \right\}. 
\end{split}
\end{equation}

Note that the last line at the r.h.s. of Eq. \eqref{eq_bubble2} can be bounded by
\[ \frac{1}{2} 2^{-N}\sum_{h+1\le j\le j'\le N} \|g^{(j)}\|_{\infty} \sum_{\mu=0,1}\|\nabla_{\mu}g^{(j')}\|_1 \le  \sum_{h+1\le j\le j'\le N} \mc{O}(2^{-N+j}) =\mc{O}(1),  \]

where we used that $\|\nabla_{\mu}g^{(j')}\|_1 =\mc{O}(1)$, with $\mc{O}(1)$ denoting quantities which are bounded by universal constants. Therefore:
\begin{equation}
\label{eq_bubble3a}
\hat{\Pi}^{(h,N]}\big((\mu,+),(\nu,+)\big)(0)= 
\frac{1}{8} \int dz' \mbox{Tr}\left\{(\gamma_{\mu}-r) g^{(h,N]}(z')(\gamma_{\nu}-r) g^{(h,N]}(-z') \right\} + \mc{O}(1).
\end{equation}

Similarly, one finds:
\begin{equation}
\label{eq_bubble3b}
\hat{\Pi}^{(h,N]}\big((\mu,-),(\nu,+)\big)(0)= -\frac{1}{8}
 \int dz' \mbox{Tr}\left\{(\gamma_{\mu}+r) g^{(h,N]}(z')(\gamma_{\nu}-r) g^{(h,N]}(-z') \right\} + \mc{O}(1).
\end{equation}

Collecting Eqs. \eqref{eq_bubble3a} and \eqref{eq_bubble3b}, we find:
\begin{equation}
\label{eq_bubble4}
\begin{split}
&\sum_{\e=\pm} \e \hat{\Pi}^{(h,N]}\big((\mu,\e),(\nu,+)\big)(0) =\\
& \frac{1}{8} \sum_{\e=\pm} \int dz' \text{Tr}\left\{(\gamma_{\mu}-\e r) g^{(h,N]}(z')(\gamma_{\nu}- r) g^{(h,N]}(-z') \right\}+ \mc{O}(1)=\\
& \frac{1}{4} \int_{\L^*} dk \frac{\chi^2_{(h,N]}(k)}{\big(|s(k)|^2+ (m_N+M_N(k))^2\big)^2}\times\\
&\times \mbox{Tr}\Big\{ \gamma_{\mu} \left(i\slashed{s}(k) + m_N+M_N(k)\right) (\gamma_{\nu}- r) \left(i\slashed{s}(k) + m_N+M_N(k)\right) \Big\} + \mc{O}(1).
\end{split}
\end{equation}

The trace in the r.h.s. of Eq. \eqref{eq_bubble4} can be computed using that $\text{Tr}\{\gamma_{\mu}\gamma_{\nu}\}= 2\d_{\mu,\nu}$, $\text{Tr}\{\gamma_{\mu}\}= \text{Tr}\{\gamma_{\mu}\gamma_{\nu} \gamma_{\a}\}= 0$ and $\text{Tr}\{\gamma_{\mu}\gamma_{\a}\gamma_{\nu}\gamma_{\beta}\}= 2( \d_{\mu,\a}\d_{\nu,\b}- \d_{\mu,\nu} \d_{\a,\b} + \d_{\mu,\b}\d_{\nu,\a})$. We find that
\[\begin{split}
& \sum_{\e=\pm} \e \hat{\Pi}^{(h,N]}\big((\mu,\e),(\nu,+) \big)(0)= \\
&=-\frac{1}{2} \int_{\L^*} dk \frac{\chi^2_{(h,N]}(k)}{\big(|s(k)|^2+ (m_N+M_N(k))^2\big)^2} \Big( 2s_{\mu}(k)s_{\nu}(k) - \delta_{\nu,\mu} |s(k)|^2  \Big) +\\
&-ir \int_{\L^*} dk \; \chi^2_{(h,N]}(k)\frac{(m_N+M_N(k))s_{\mu}(k)}{\big(|s(k)|^2+ (m_N+M_N(k))^2\big)^2}+\\
&+\frac{1}{2} \delta_{\mu,\nu} \int_{\L^*} dk\; \chi^2_{(h,N]}(k)\frac{(m_N+M_N(k))^2}{\big(|s(k)|^2+ (m_N+M_N(k))^2\big)^2} + \mc{O}(1).
\end{split}\]

Now, the first two terms in the r.h.s. above are checked to be zero after \it{axes flip}: $(k_0,k_1)\mapsto (k_1,k_0)$ and \it{axes inversion}: $(k_0,k_1)\mapsto-(k_0,k_1)$.\footnote{Note that the vanishing of the first term would not hold in dimension $d\ne2$.} The third term is non-zero, but still finite:
\[\begin{split}
& \int_{\L^*} \!\! dk\; \chi^2_{(h,N]}(k)\frac{\big(m_N+M_N(k)\big)^2}{\big(|s(k)|^2+ (m_N+M_N(k))^2\big)^2} \le  4\pi^2 \int_{\L^*} \!\! dk\, \chi^2_{(0,N]}(k) \frac{m_N^2+ r^2 2^{-2N}|k|^4}{\big(|k|^2+ m_N^2\big)^2}= \mc{O}(1),
\end{split}\]
where we used Eq. \eqref{eq:bound:Wilson}; it follows that $\hat{\Pi}^{(h,N]}_{\mu,\nu}(0)=\mc{O}(1)$. 

Now let us discuss the \it{chiral bubble} $\Pi^{(h,N]}_5$. In analogy with the \it{vector bubble}, we have:
\[\begin{split}
& \hat{\Pi}^{(h,N]}_5 \big((\mu,\e),(\nu,+)\big)(0)= \frac{Z^5_N}{8} \int_{\L} dz\, \mbox{Tr}\Big\{(\e\gamma_{\mu}-r) g^{(h,N]}(z) \gamma_{\nu}\gamma_5 g^{(h,N]}(-z) \Big\} + \mc{O}(1).
\end{split}\]

Note that $\gamma_{\nu}\gamma_5= i\sum_{\s=0,1} \ve_{\nu,\s} \gamma_{\s}$, therefore:
\[\begin{split}
&\sum_{\e=\pm} \e \hat{\Pi}^{(h,N]}_5\big((\mu,\e),(\nu,+)\big)(0)= \frac{iZ^5_N}{4} \sum_{\s=0,1}\ve_{\nu,\s} \int_{\L^*} dk \frac{\chi^2_{(h,N]}(k)}{\big(|s(k)|^2 + (m_N+M_N(k))^2 \big)^2}\times\\
&\times \text{Tr}\Big\{\gamma_{\mu}\big(i\slashed{s}(k)+ m_N+ M_N(k)\big) \gamma_{\s} \big(i\slashed{s}(k)+ m_N+ M_N(k)\big) \Big\} +\mc{O}(1)
\end{split}\]

and the computation reduces exactly to the one for the simple bubble, so that:
\[\hat{\Pi}^{(h,N]}_{5;\mu,\nu}(0)\equiv \sum_{\e=\pm} \e \hat{\Pi}^{(h,N]}_5\big((\mu,\e),(\nu,+)\big)(0)= \mc{O}(1).\]

\medskip

\emph{Item \ref{it:bubble:3}.} Letting $P_{\sharp}^{(h)}(b,b'):= \big(v*\Pi_{\sharp}^{(h,N]}\big)(b,b') - \big(v\hat{\Pi}_{\sharp}^{(h,N]}(0)\big)(b,b')$, we have that
\begin{equation}
\label{eq_A8}
\begin{split}
& P^{(h)}_{\sharp}\big((x,\mu,\e),(y,\nu,\e')\big)=\\
&\sum_{\a\in\{0,1\}} \sum_{\e''\in\{\pm\}} \e\e'' \int_{\L} dz \Big(v_{\mu,\a}(x-z) - v_{\mu,\a}(x-y) \Big) \Pi^{(h,N]}_{\sharp}\big((z,\a,\e''),(y,\nu,\e')\big).
\end{split}
\end{equation}

We rewrite: $v_{\mu,\a}(x-z) - v_{\mu,\a}(x-y)= \int_{\mc{C}_{y\rightarrow z}} d\ell \cdot \nabla v_{\mu,\a}(x-\ell)$, with the following understanding.

\begin{itemize}
\item $\mc{C}_{y\rightarrow z}$ is a path on the discrete torus $\L$ connecting $y$ to $z$, with minimal length. More precisely, $\mc{C}_{y\rightarrow z}$ is a collection of ordered couples: 
\[\big( (x_1,x_2), (x_2,x_3),\dots,(x_{n-1},x_n), (x'_1,x'_2), (x'_2,x'_3),\dots,(x'_{m-1},x'_m)\big)\]

with the constraint that $x_1=y, x'_m=z$; $x_{i+1}-x_i= 2^{-N}\s_0 \hat{e}_0$ and $x'_{i+1}-x'_i= 2^{-N}\s_1\hat{e}_1$, for some $0\le m,n\le \frac{2^N L}{2}+1$ and $\s_0,\s_1\in\{\pm\}$. 
\item If $\mc{C}_{y\rightarrow z}=\{(x_1,x_2), (x_2,x_3),\dots,(x_n,x_{n+1})\}$, then
\[ \int_{\mc{C}_{y\rightarrow z}} d\ell \cdot \nabla v_{\mu,\a}(x-\ell)\equiv \sum_{i=1}^n 2^{-N}\left( \tfrac{v_{\mu,\a}(x-x_{i+1})- v_{\mu,\a}(x-x_i)}{2^{-N}}\right), \]

where note that the difference $\left( \frac{v_{\mu,\a}(x-x_{i+1})- v_{\mu,\a}(x- x_i)}{2^{-N}}\right)$ equals, up to a sign, the discrete gradient $\nabla_{\nu} v_{\mu,\a}(x-u)\equiv \left( \frac{v_{\mu,\a}(x-u+2^{-N}\hat{e}_{\nu})- v_{\mu,\a}(x-u)}{2^{-N}}\right)$, with either $x_{i+1}+2^{-N}\hat{e}_{\nu}=x_i=u$ or $x_i+2^{-N}\hat{e}_{\nu}=x_{i+1}=u$.
\end{itemize}

Using this interpolated expansion for the difference of $v$'s, we can estimate:
\[\begin{split}
\big\|P^{(h)}_{\sharp}\big\|_1^w &= \frac{1}{L^2} \sum_{\mu,\nu\in\{0,1\}} \sum_{\e,\e'\in\{\pm\}} \int_{\L^2} dxdy\, e^{\frac{\kappa}{2}\sqrt{\dist{x-y}_L}} \left| P^{(h)}_{\sharp}\big((x,\mu,\e),(y,\nu,\e')\big) \right|\le \\
&\le \frac{1}{L^2} \sum_{\mu,\nu,\a\in\{0,1\}} \sum_{\e,\e',\e''\in\{\pm\}} \int_{\L^3} dxdydz\, e^{\frac{\kappa}{2}\sqrt{\dist{x-y}_L}} \Big|\int_{\mc{C}_{y\rightarrow z}} d\ell\cdot \nabla v_{\mu,\a}(x-\ell)\Big| \times\\
&\times  \left|\Pi^{(h,N]}_{\sharp}\big((z,\a,\e''),(y,\nu,\e')\big)\right|
\end{split}\]

Note that for any $u\in\mc{C}_{y\rightarrow z}$, the quantity $\sqrt{\dist{x-y}_L}$ is smaller than $\sqrt{\dist{x-u}_L}+\sqrt{\dist{z-y}_L}$. Therefore, using that the length of the path $\mc{C}_{y\rightarrow z}$ is smaller than $\sqrt{2}\dist{z-y}_L$, we find that
\begin{equation}
\label{eq_bubble5}
\begin{split}
\|P^{(h)}_{\sharp}\|_1^w  &\le C \|\nabla v\|_1^w \sum_{\nu,\a\in\{0,1\}} \sum_{\e',\e''\in\{\pm\}} \int_{\L^2} \frac{dydz}{L^2}   \left|\Pi^{(h,N]}_{\sharp}\big((z,\a,\e''),(y,\nu,\e')\big)\right| e^{\frac{\kappa}{2}\sqrt{\dist{y-z}_L}} \dist{y-z}_L\\
&\equiv C \|\nabla v\|_1^w  \int_{\L_B^2} \frac{dbdb'}{L^2}   \left|\Pi^{(h,N]}_{\sharp}(b,b')\right| e^{\frac{\kappa}{2}\sqrt{\d^D(b,b')}} \d^D(b,b').
\end{split}\end{equation}

for a suitable constant $C>0$. Expanding the expression of $\Pi_{\sharp}^{(h,N]}$ and decomposing in scale each propagator in the same fashion as Eq. \eqref{eq:15} and lines below, 
we can estimate the r.h.s. of Eq. \eqref{eq_bubble5} by
\[\begin{split}
& C'\|\nabla v\|_1^w \sum_{h+1\le j'\le j\le N} 2^{-j} \|g^{(j')}\|_{\infty} \big\|g^{(j)}(\,\cdot\,) e^{\frac{\kappa}{2}\sqrt{2^j \dist{\,\cdot\,}_L}} \big\|_1 \le \frac{C_{\Pi}}{4}\|\nabla v\|_1^w \sum_{h+1\le j'\le j\le N}  2^{j'-2j},
\end{split}\]

for some $C_{\Pi}>0$, where we used the fact that, according to Eq. \eqref{eq_bound_propagator},

\[ \big\|g^{(j)}(\,\cdot\,) e^{\frac{\kappa}{2}\sqrt{2^j \dist{\,\cdot\,}_L}} \big\|_1 \equiv \sum_{s,s'\in\{1,2\}} \int_{\L}dx |g^{(j)}_{s,s'}(x)| e^{\frac{\kappa}{2}\sqrt{2^j\dist{x}_L}} \le \tilde{K} 2^{-j},\]

for a suitable constant $\tilde{K}>0$. Hence, as desired:

\[\|P^{(h)}_{\sharp}\|_1^w \le  \frac{C_{\Pi}}{4} \|\nabla v\|_1^w \sum_{h+1\le j'\le j\le N}   2^{j'-2j} \le C_{\Pi} 2^{-h}\|\nabla v\|_1^w.\]
\end{proof}

\subsection{Proof of Proposition \ref{prop_limitebollaGG}}
\label{subapp:proofprop_limitebollaGG}
For definiteness we discuss only the claims concerning $\hat \Pi^{(h^*_M,N]}_{\mu,\nu}(p,M,m_N,L)$ and $\hat \Pi^{(\leq N)}_{\mu,\nu}(p,m_N,L)$; the discussion for $\mc{T}^{(h^*_M,N]}$ and $\mc{T}^{(\le N)}$ follows similarly.
\medskip 
\par \emph{Limits: proof of Eq. \eqref{eq:bubbletadpolelimits}.}
By exploiting the \it{charge conjugation} symmetry, Eq. \eqref{symm:cc}, combining the definition in Eq. \eqref{eq_bubble6} with the notations in Eq.  \eqref{def_bare_vertex}, one explicitly obtains that, for every $p\in \frac{2\pi}{L}\mbb{Z}^2 \cap (-\frac{\pi}{a},\frac{\pi}{a}]^2$,
\begin{equation}  
\begin{split}
& \hat \Pi^{(\leq N)}_{\mu,\nu}(p,m_N,L)= \\ &\frac{1}{8} \int_{\Lambda^*} dk\, \sum_{\epsilon=\pm} e^{ia (\epsilon k_\mu-k_\nu-p_\nu+\frac{1-\epsilon}{2} p_\mu)} \text{Tr}\left \{(\gamma_\mu-\epsilon r)\hat g^{(\le N)}(k) (\gamma_\nu-r) \hat g^{(\le N)}(k+p)\right \}, \label{eq:C_1}
\end{split}
\end{equation}

 and $\Pi^{(h^*_M, N]}_{\mu,\nu}(p,M,m_N,L)$ having the same expression with an extra factor $(1-\chi_{h^*_M}(|k|)(1-\chi_{h^*_M}(|k+p|))$ under integral sign. Notice that the r.h.s. of Eq. \eqref{eq:C_1} makes sense for every $p\in(-\frac{\pi}{a},\frac{\pi}{a}]^2$, thus defining an extension of $\hat{\Pi}^{(\le N)}_{\mu,\nu}$ which is continuous w.r.t. $p$, as long as $m_N\ne0$. By rescaling $k \to ak=q \in a\L^*\equiv \frac{2\pi a}{L}\mbb{Z}^2\cap (-\pi,\pi]^2$, one finds that
\begin{equation} 
\label{eq:C_1b}
\begin{split}
& \hat \Pi^{(\leq N)}_{\mu,\nu}(p,m_N,L)= \\ &\frac{1}{8} \int_{a\Lambda^*} dq\, \sum_{\epsilon=\pm} e^{i (\epsilon q_\mu-q_\nu-ap_\nu+\frac{1-\epsilon}{2} ap_\mu)} \text{Tr}\left \{(\gamma_\mu-\epsilon r) \tilde{g}(q) (\gamma_\nu-r) \tilde{g}(k+ap)\right \},
\end{split}
\end{equation} 

\begin{sloppypar}
    where $\tilde{g}(q)= 2^N\hat{g}^{(\le N)}(2^Nq)\equiv\big(-i \sum_{\mu=0,1}\gamma_{\mu}\sin (q_{\mu})+ am_N+ 2r\sum_{\mu=0,1}\sin^2(q_{\mu}/2)\big)^{-1}$. 
\end{sloppypar}  The limit $\lim_{L\to\infty}\hat{\Pi}^{(\le N)}_{\mu,\nu}(p,m_N,L) =: \hat{\Pi}^{(\le N)}_{\mu,\nu}(p,m_N)$ can be characterized via the Poisson summation formula \cite[App. D]{GM2010}, which establishes the convergence to the same expression as the r.h.s. of Eq. \eqref{eq:C_1b} where the Riemann sum over $a\L^*$ is replaced by the integral over $(-\frac{\pi}{a},\frac{\pi}{a}]^2$. Besides the convergence is uniform w.r.t. $p\in(-\frac{\pi}{a},\frac{\pi}{a}]^2$ for any $a>0$ and $m_N\ne0$ fixed. Similar considerations hold for $\hat\Pi^{(h^*_M, N]}_{\mu,\nu}(p,M,m_N):=\lim_{L \to \infty}\hat\Pi^{(h^*_M, N]}_{\mu,\nu}(p,M,m_N,L)$.  The existence of $\hat\Pi^{(h^*_M, N]}_{\mu,\nu}(p,M):=\lim_{m_N \to 0}\hat\Pi^{(h^*_M, N]}_{\mu,\nu}(p,M,m_N)$ instead trivially follows from the fact that $M >0$ acts as an infrared cut-off. Finally, for $p \neq 0$, the Dominated Convergence Theorem\footnote{It sufficient to observe that if $p \neq 0$ then $\text{Tr}\big\{(\gamma_\mu-\epsilon r)\tilde g(q) (\gamma_\nu-r) \tilde g(q+ap)\big\}$ has two simple integrable singularities at $q=0,q=-ap$.} implies that $\lim_{M \to 0^+} \hat\Pi^{(h^*_M, N]}_{\mu,\nu}(p,M)$ and $\lim_{m_N \to 0} \hat\Pi^{(\leq N)}_{\mu,\nu}(p,m_N)$ exist and are both given by
 \begin{equation*}
 \hat \Pi_{\mu,\nu}(p):= \frac{1}{8} \int_{(-\pi,\pi]^2} \frac{d^2q}{(2\pi)^2}\, \sum_{\epsilon=\pm}e^{i(\epsilon q_\mu-q_\nu-ap_\nu+\frac{1-\epsilon}{2}ap_\mu)} \text{Tr}\left \{(\gamma_\mu-\epsilon r)\tilde g(q)(\gamma_\nu-r)\tilde g(q+ap)\right \}
 \end{equation*}

 with the understanding that $\tilde{g}(q)$ and $\tilde{g}(q+ap)$ are both computed at $m_N=0$.
\medskip
 \par \emph{Computation of $\hat \Pi_{\mu,\nu}(p)$: proof of Eq.  \eqref{eq:bubblecomput}.}
 We start by rewriting $\hat \Pi_{\mu,\nu}(p)=\hat \Pi^{\text{(sing)}}_{\mu,\nu}(p)+\hat \Pi^{\text{(reg)}}_{\mu,\nu}(p)$, where 
 \begin{equation}
 \begin{split}
      \hat \Pi^{\text{(sing)}}_{\mu,\nu}(p):=& \frac{1}{8} \int_{(-\pi,\pi]^2} \frac{d^2q}{(2\pi)^2}\, \sum_{\epsilon=\pm} \text{Tr} \left \{ (\gamma_\mu-\epsilon r) \frac{\chi(|q|)}{-i \slashed{q}}(\gamma_\nu-r) \frac{\chi(|q+ap|)}{-i (\slashed{q}+a\slashed{p})}\right \}= \\  -& \sum_{\alpha,\beta=0,1}\int_{(-\pi,\pi]^2} \frac{d^2q}{(2\pi)^2}\frac{\chi(|q|)\chi(|q+ap|)q_\a(q_\beta+ap_\beta)}{4 |q|^2 |q+p|^2} \text{Tr}\left \{\gamma_\mu \gamma_\alpha\gamma_\nu \gamma_\beta \right \}, \, \,  \forall p\ne 0,\label{eq:C_12} 
     \end{split}
 \end{equation}
with $\slashed{q}\equiv\sum_{\mu=0,1} \gamma_\mu q_\mu$ and we used that $\text{Tr}\{\gamma_\mu \gamma_\nu \gamma_\a\}=0.$ Besides it is possible to show that

 \begin{equation}
 \label{eq:48}
 \begin{split}
\hat{\Pi}^{\text{(reg)}}_{\mu,\nu}(p)&:=\frac{1}{8}\sum_{\e=\pm} \int_{(-\pi,\pi]^2} \frac{d^2q}{(2\pi)^2} \left(- \text{Tr} \Big\{ (\gamma_\mu-\epsilon r) \tfrac{\chi(|q|)}{-i \slashed{q}}(\gamma_\nu-r) \tfrac{\chi(|q+ap|)}{-i (\slashed{q}+a\slashed{p})}\Big\} \right.\\
&\left. + \, \,  e^{i(\epsilon q_\mu-q_\nu-ap_\nu+\frac{1-\epsilon}{2}ap_\mu)} \text{Tr}\Big\{(\gamma_\mu-\epsilon r)\tilde g(q)(\gamma_\nu-r)\tilde g(q+ap) \Big\} \right)
\end{split}\end{equation}

is a well defined function of $p\in(-\frac{\pi}{a},\frac{\pi}{a}]^2$, included the case $p=0$, where the integrand is checked to be singular only at $q=0$ with an integrable singularity, see Eqs. \eqref{eq:49}-\eqref{eq:37} and comments nearby. To conclude it is enough to show the two following properties: 
\begin{align}
&\label{eq:36a} \hat \Pi_{\mu,\nu}^{\text{(sing)}}(p)=\frac{1}{8\pi|p|^2}(2p_\mu p_\nu -|p|^2\d_{\mu,\nu})+\tilde{R}'_{\mu,\nu}(ap), \qquad  p\ne0, \\ 
&\label{eq:36b} \big|\hat \Pi_{\mu,\nu}^{\text{(reg)}}(p)-\hat \Pi_{\mu,\nu}^{\text{(reg)}}(0)\big|\leq C (a|p|)^{1/2}, 
 \end{align} 
 
 for some positive constant $C$ and a function $\tilde{R}'_{\mu,\nu}$ such that $|\tilde{R}'_{\mu,\nu}(q)|\leq C|q|^{1/2}$. Notice that this concludes the proof of Eq.  \eqref{eq:bubblecomput} with $\tilde R_{\mu,\nu}=\tilde R'_{\mu,\nu}+\hat \Pi^{(\text{reg})}_{\mu,\nu}$, with the understanding that $\tilde R'_{\mu,\nu}(0)\equiv 0$.
 \medskip
 \par \emph{Proof of Eq. \eqref{eq:36a}.} Given $\omega=\pm$, we introduce $\gamma_\omega:=\frac{1}{2}(-i \gamma_0-\omega \gamma_1)$ so that $\gamma_\mu=\sum_{\omega=\pm} \Omega_{\mu,\omega} \gamma_\omega$, with $\Omega_{\mu, \omega}=i\d_{\mu,0}-\omega\d_{\mu,1}$. Using that $\text{Tr}\{\gamma_{\mu}\gamma_\alpha\gamma_\nu\gamma_\beta\}=2(\d_{\mu,\alpha}+\delta_{\mu,\beta}\delta_{\nu, \alpha}-\delta_{\mu,\nu}\delta_{\alpha, \beta})$, one can check that $\text{Tr}\{\gamma_\omega \gamma_\alpha \gamma_{\omega'} \gamma_\beta\}=\delta_{\omega,\omega'}(i\omega\d_{\alpha,0}+\delta_{\alpha,1})(i\omega\delta_{\beta, 0}+\delta_{\beta, 1}),$ which means that we can rewrite Eq. \eqref{eq:C_12} as
 \begin{equation}
 \hat \Pi_{\mu,\nu}^{\text{(sing)}}(p)=\sum_{\omega}\frac{\Omega_{\mu\omega}\Omega_{\nu\omega}}{4}\int_{\mathbb{R}^2} \frac{d^2q}{(2\pi)^2} \frac{\chi(|q|)\chi(|q+ap|)}{D_\omega(q)D_\omega(q+ap)}, \qquad D_\omega(q):=-iq_0+\omega q_1. \label{eq:C_13}
 \end{equation}
 Now it is known \cite[App. B.1]{BFM07} that the integral in Eq. \eqref{eq:C_13} is given by  $\frac{1}{4\pi} \frac{D_{-\omega}(p)}{D_\omega(p)}+\tilde{R}'_\omega(ap)$, with $|\tilde{R}'_\omega(q)|\leq C|q|^{1/2},$ so that letting $4\tilde R'_{\mu,\nu}(q):=\sum_{\omega} \Omega_{\mu, \omega}\Omega_{\nu,\omega}\tilde{R}'_\omega(q)$,
 \begin{equation}
 \begin{split} \hat \Pi_{\mu,\nu}^{\text{(sing)}}(p)-\tilde R'_{\mu,\nu}(ap)&=\frac1{16\pi|p|^2}\sum_{\omega=\pm} \Omega_{\mu,\omega}\Omega_{\nu,\omega}(p_1^2-p_0^2-2i\omega p_0p_1)\\ &=\frac{1}{8\pi |p|^2}\Big((p_1^2-p_0^2)(-\d_{\mu0}\d_{\nu0}+\d_{\mu1}\d_{\nu1})-2p_0 p_1(\d_{\mu0}\d_{\nu1}+\d_{\mu1}\d_{\nu0}) \Big) \\ &= \frac{1}{8\pi |p|^2}(2p_\mu p_\nu-\d_{\mu,\nu}|p|^2). \nonumber
 \end{split}
 \end{equation}

\par \emph{Proof of Eq. \eqref{eq:36b}.} For $p\in(-\frac{\pi}{a},\frac{\pi}{a}]^2$ and $|p|\ge \frac{1}{2}a^{-1}$, by simple dimensional arguments one can easily check that $\hat{\Pi}_{\mu,\nu}(p)$ and $\hat{\Pi}^{\text{(sing)}}_{\mu,\nu}(p)$ are both bounded by constant, so that Eq. \eqref{eq:36b} trivially holds for some constant $C>0$. Hence we must analyze the case $|p|\le \frac{1}{2} a^{-1}$. Explicitly, we have that the expression in the r.h.s. can be further expanded as $\hat{\Pi}^{\text{(reg)}}_{\mu,\nu}(p)= \hat{\Pi}^{\text{(reg,1)}}_{\mu,\nu}(p)+ \hat{\Pi}^{\text{(reg,2)}}_{\mu,\nu}(p)$, where:
\begin{equation}
\label{eq:49}
\begin{split}
\hat{\Pi}^{\text{(reg,1)}}_{\mu,\nu}(p)&= \frac{1}{8}\sum_{\e=\pm} \int_{(-\pi,\pi]^2} \frac{d^2q}{(2\pi)^2} \Big[ \big(1-\chi(|q+ap|)\big) + \chi(|q+ap|)\big(1-\chi(|q|)\big) \Big] \times\\
&\times e^{i(\epsilon q_\mu-q_\nu-ap_\nu+\frac{1-\epsilon}{2}ap_\mu)} \text{Tr}\Big\{(\gamma_\mu-\epsilon r)\tilde g(q)(\gamma_\nu-r)\tilde g(q+ap) \Big\},
\end{split}
\end{equation}
\begin{equation}
\label{eq:37}
\begin{split}
\hat{\Pi}^{\text{(reg,2)}}_{\mu,\nu}(p)&= \frac{1}{8}\sum_{\e=\pm} \int_{(-\pi,\pi]^2} \frac{d^2q}{(2\pi)^2}  \Bigg( e^{i(\epsilon q_\mu-q_\nu-ap_\nu+\frac{1-\epsilon}{2}ap_\mu)} \text{Tr}\Big\{(\gamma_\mu-\epsilon r)\tilde g(q)(\gamma_\nu-r)\tilde g(q+ap) \Big\} \\
& - \text{Tr} \Big\{ (\gamma_\mu-\epsilon r) \tfrac{1}{-i \slashed{q}}(\gamma_\nu-r) \tfrac{1}{-i (\slashed{q}+a\slashed{p})}\Big\}  \Bigg) \chi(|q|)\chi(|q+ap|).
\end{split}
\end{equation}

Now, since for $|p|\le \frac{1}{2}a^{-1}$ we have that $\tilde{g}(\cdot+ap)$ is never singular on the support of $1-\chi(|\cdot+ ap|)$, it follows that $\hat{\Pi}^{\text{(reg,1)}}_{\mu,\nu}$ is actually $\mathscr{C}^{\infty}$ over the ball of radius $\frac{1}{2}a^{-1}$;\footnote{The smoothness is obvious for the contribution to the r.h.s. of Eq. \eqref{eq:49} from the term $1-\chi(|q+ap|)$ in square brackets. For the contribution from $\chi(|q+ap|)\big(1-\chi(|q|)\big)$, one can first extend the integral over $\mbb{R}^2$ and then perform the change of variable $q\mapsto q-ap$, from which the smoothness of the integral follows.} $\big|\hat{\Pi}^{\text{(reg,1)}}_{\mu,\nu}(p)- \hat{\Pi}^{\text{(reg,1)}}_{\mu,\nu}(0)\big|= \mc{O}(|ap|)$. In order to evaluate the continuity at zero of $\hat{\Pi}^{\text{(reg,2)}}_{\mu,\nu}$ we further decompose, for $p'\in\{0,p\}$, $\hat{\Pi}^{\text{(reg,2)}}_{\mu,\nu}(p')=\hat{\Pi}^{\text{(reg,2,1)}}_{\mu,\nu}(p';p)+ \hat{\Pi}^{\text{(reg,2,2)}}_{\mu,\nu}(p';p)$, where the former (resp. the latter) is defined as the r.h.s. of Eq. \eqref{eq:37} with the integral restricted to the ball (resp. the complement of the ball) of radius $\sqrt{a|p|}$. Since the integrand in the r.h.s. of Eq. \eqref{eq:37} can be bounded, up to a constant, by $(|q||q+ap|)^{-1}$, we find that
\[\begin{split}
&\big|\hat{\Pi}^{\text{(reg,2,1)}}_{\mu,\nu}(p;p)- \hat{\Pi}^{\text{(reg,2,1)}}_{\mu,\nu}(0;p) \big|\le \\
& \big|\hat{\Pi}^{\text{(reg,2,1)}}_{\mu,\nu}(p;p)\big|+ \big|\hat{\Pi}^{\text{(reg,2,1)}}_{\mu,\nu}(0;p) \big|\le \int_{|q|\le \sqrt{a|p|}} \frac{d^2q}{(2\pi)^2} \left[\mc{O}\Big(\tfrac{1}{|q+ap|}\Big)+ \mc{O}\Big(\tfrac{1}{|q|}\Big) \right]= \mc{O}\big(\sqrt{a|p|}\big).
\end{split}\]

About $\hat{\Pi}^{\text{(reg,2,2)}}_{\mu,\nu}(p';p)$, letting $F(q,ap)$ be the integrand in the r.h.s. of Eq. \eqref{eq:37}, after writing $F(q,ap)-F(q,0)= \int_0^{1}ds\, \de_s F(q, asp)$, by dimensional considerations one finds that $|F(q,ap)-F(q,0)|\le \mc{O}\Big(\frac{a|p|}{|q|^2} \Big)$, for any $|q|\ge \sqrt{a|p|}$ and $|p|\le \frac{1}{2}a^{-1}$. Therefore:
\[\big|\hat{\Pi}^{\text{(reg,2,2)}}_{\mu,\nu}(p;p)- \hat{\Pi}^{\text{(reg,2,2)}}_{\mu,\nu}(0;p) \big|\le \int_{\sqrt{a|p|}\le |q|\le \sqrt{2}\pi} \frac{d^2q}{(2\pi)^2} \mc{O}\Big(\tfrac{a|p|}{|q|^2} \Big)= \mc{O}\Big(a|p| \log\big(\tfrac{1}{a|p|}\big) \Big)\le \mc{O}\big(\sqrt{a|p|}\big),\]
which concludes the proof of Eq. \eqref{eq:36b}.
$\hfill\square$

\section{The tree expansion}
\label{app_tree}

In this appendix we prove Eq. \eqref{eq_tree4}. It is convenient to introduce the \emph{dimensionless norm} $\|\wt{W}^{(h)|q,q}_{\ud{n};\ud{\dot{n}};\ud{\ddot{n}}}\|_{(h)}:= 2^{-h D_{sc}} \|\wt{W}^{(h)|q,q}_{\ud{n};\ud{\dot{n}};\ud{\ddot{n}}}\|_1^w$. In terms of such norm one readily finds the following partial result. 

\begin{lemma}
\label{lemma_tree2}
There exists $C\ge1$ such that for every $\t\in\mc{T}_{h,N}$ and $\mb{P}\in\mc{P}_{\t}$,
\begin{equation}
\label{eq_tree3}
\|\wt{W}[\t,\mb{P}]\|_{(h_{v_0}-1)}\le  \prod_{v\in V(\t)\setminus\{v_0\}} 2^{(h_v-h_{v'})D_{sc}(P_v)} \prod_{v\in V_e(\t)} C^{|P_v|} \|W^{(h_v-1)}_{P_v}\|_{(h_v-1)},
\end{equation}

with the understanding that $v'$ is the vertex which precedes $v$ on $\t$.
\end{lemma}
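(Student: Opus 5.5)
The plan is to argue by induction on the depth of the tree $\t$, using the recursive formula Eq. \eqref{eq:12}. For a tree whose $v_0$ has only endpoints as successors, the claim amounts to a one-step bound on a single truncated expectation. The inductive step then rests on the same one-step bound, applied with $v_0$ replaced by a generic non-endpoint vertex $v$, together with the bookkeeping of the dimensional factors. So the core is the following single-vertex estimate. Let $v\in V_0(\t)$ have successors $w_1,\dots,w_{s_v}$, and consider the kernel
\[
\frac{1}{s_v!}\int d\ud\eta_{Q_v}\,\mc{E}^T_{(h_v)}\big(\psi_{\tilde P_{w_1}},\dots,\psi_{\tilde P_{w_{s_v}}}\big)\prod_j K_{w_j}.
\]
We want its $\|\cdot\|_1^w$ norm to be bounded by
\[
C^{\sum_j|P_{w_j}|}\,2^{h_v(\dots)}\prod_j\|K_{w_j}\|_1^w .
\]

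To prove this one-step estimate I would use the Battle--Brydges--Federbush/Gawedzki--Kupiainen--Lesniewski representation of $\mc{E}^T_{(h_v)}$. It expresses the truncated expectation as a sum over anchored spanning trees $T_v$ connecting the clusters $w_1,\dots,w_{s_v}$, with $s_v-1$ propagator lines $g^{(h_v)}$, multiplied by an interpolated Gram determinant of the remaining fields. Three ingredients then enter.

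\emph{The Gram--Hadamard bound.} The determinant is bounded by $C^{|Q_v|-2(s_v-1)}2^{\frac{h_v}{2}(|Q_v|-2(s_v-1))}$. This uses a Gram representation of $g^{(h_v)}$ as a scalar product, whose existence follows from the compact support of $\chi_{h}-\chi_{h-1}$ and from the bound of Eq. \eqref{eq:bound:Wilson}.

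\emph{The spanning-tree lines.} Each line of $T_v$ is integrated in weighted $L^1$, giving $\|g^{(h_v)}\|_1^w\le \tilde K2^{-h_v}$ by Eq. \eqref{eq_bound_propagator}.

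\emph{The stretched-exponential weight.} The weight $e^{\frac{\kappa}{2}\sqrt{\d^D}}$ is handled by subadditivity of $\sqrt{\cdot}$ along the tree $T_v$. Since $\kappa\le\kappa_2$, the decay of $g^{(h_v)}$ absorbs the weight at scale $h_v\ge0$.

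The combinatorics are handled in the usual way. The number of anchored trees, divided by $1/s_v!$, together with the choice of the fields, is bounded by $C^{\sum_j|P_{w_j}|}$. Collecting powers of $2^{h_v}$ and comparing with $2^{-h_vD_{sc}(P_v)}\prod_j2^{h_vD_{sc}(P_{w_j})}$, the exponents combine through the identity $D_{sc}(P_v)=\sum_jD_{sc}(P_{w_j})-2(s_v-1)+\frac12|Q_v|$. This identity holds because the $G$, $\dot G$ and $\ddot G$ labels are never contracted and only $\psi$ fields lie in $Q_v$.

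Iterating this estimate down the tree turns the product of scale factors into the telescoping product $\prod_{v\ne v_0}2^{(h_v-h_{v'})D_{sc}(P_v)}$ in dimensionless norms. Each endpoint contributes $C^{|P_v|}\|W^{(h_v-1)}_{P_v}\|_{(h_v-1)}$. Because every field label in $\mb{P}$ is internal at most once, the constants multiply to at most $\prod_{v\in V_e}C^{|P_v|}$, after adjusting $C$ so that the factors $C^{|Q_v|}$ at inner vertices are charged to the endpoint fields they come from.

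The main obstacle is the weighted norm, because kernels at endpoints are nonlocal in the bosonic labels $b$. The diameter distance $\d^D$ of the full kernel must be controlled by a sum of diameters of the endpoint kernels plus the lengths of the spanning-tree lines. This is possible because the whole structure is connected, and $\sqrt{\sum x_i}\le\sum\sqrt{x_i}$ makes the weight factorize. I would check this carefully for the external labels $\ud b,\ud{\dot b},\ud{\ddot b}$, which are only linked through the kernels. The integrals over them are covered by the endpoint $L^1$ norms, with the overall $1/L^2$ absorbed by fixing one root coordinate.
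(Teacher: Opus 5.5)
Your proposal is correct and follows essentially the same route as the paper. Both arguments rest on a one-vertex bound from the BBFK formula and the Gram--Hadamard inequality, weighted-$L^1$ estimates on the anchored spanning-tree lines via subadditivity of $\sqrt{\cdot}$, the count of anchored trees against $1/s_v!$, the identity $D_{sc}(P_v)=\sum_jD_{sc}(P_{w_j})-2(s_v-1)+\tfrac12|Q_v|$, and iteration down the tree. One bookkeeping slip: at one point you state the per-vertex combinatorial constant as $C^{\sum_j|P_{w_j}|}$, which would not iterate to $\prod_{v\in V_e(\tau)}C^{|P_v|}$; it must be $C^{|Q_v|}$, since anchors and determinant involve only internal fields, and your final paragraph already uses this correct form together with $\sum_{v\in V_0(\tau)}|Q_v|\le\sum_{v\in V_e(\tau)}|P_v|$.
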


We postpone the proof of Lemma \ref{lemma_tree2} to the end of this section. Using the bound of Eq.  \eqref{eq_tree3} inside Eq. \eqref{eq_tree2}, and recalling that only trees with $h_{v_0}=h+1$ contribute, we have:\footnote{Note that the operator $\mathscr{S}$ in the r.h.s. of Eq. \eqref{eq_tree2} plays no role in view of an upper bound.}
\begin{equation}
\label{eq_tree5}
\begin{split}
\|\wt{W}^{(h)|q,q}_{\ud{n};\ud{\dot{n}}; \ud{\ddot{n}}}\|_{(h)}&\le  \sum_{\t\in\mc{T}_{h,N}}^* \sum_{\mb{P}\in \mc{P}_{\t}}^{**} \big\|\wt{W}[\t,\mb{P}]\big\|_{(h_{v_0}-1)}\\
&\le 4^{|P_{v_0}|} \sum_{\t\in\mc{T}_{h,N}}^* \sum_{\mb{P}\in \mc{P}_{\t}}^{**} \prod_{v\in V(\t)} 2^{(h_v-h_{v'})D_{sc}(P_v)} \prod_{v\in V_e(\t)} C^{|P_v|} \|W^{(h_v-1)}_{P_v}\|_{(h_v-1)}, \end{split}
\end{equation}

where in order to simplify the expression, we have extended the product $\prod_{v}2^{(h_v-h_{v'})D_{sc}(P_v)}$ also to $v=v_0$, at the cost of a factor $4^{|P_{v_0}|}$, which can be reabsorbed in the constant $C$. Now by construction $D_{sc}(P_v)\le-1$ for every $v\in V(\t)$ (here is why it is important to stop the expansion of the tree as soon as a non-irrelevant kernel, i.e. $D_{sc}\ge0$, is encountered), except eventually for the vertex following the root and for the endpoints associated with non-irrelevant terms, for which, however, the quantity $h_v-h_{v'}$ is fixed to be $1$. Moreover, if $D_{sc}(P_v)\le-1$, $D_{sc}(P_v)\le-\frac{1}{8}|P_v^{\psi}|$ too, with $P_v^{\psi}$ the set of field labels associated with $\psi$ variables. Hence one can write:
\[\prod_{v\in V(\t)} 2^{(h_v-h_{v'})D_{sc}(P_v)}\le 8^{|V_e|+1} \prod_{v\in V(\t)} 2^{-\frac{1+\th}{2}(h_v-h_{v'})} \prod_{v\in V(\t)} 2^{-\frac{1-\th}{16}|P_v^{\psi}|}.\]

The pre-factor $8^{|V_e|+1}$ can be absorbed into the constant $C$ in the r.h.s. of Eq.  \eqref{eq_tree5}, while the factor $\prod_{v\in V(\t)} 2^{-\frac{1-\th}{16}|P_v^{\psi}|}$ can be used to bound the sum over $\mb{P}$ for any fixed configuration $\mb{P}_{V_e}\equiv \{P_v\}_{v\in V_e(\t)}$. All in all \cite[App. A.6]{GM01}:
\begin{equation}
\label{eq_tree6}
\|\wt{W}^{(h)|q,q}_{\ud{n};\ud{\dot{n}}; \ud{\ddot{n}}}\|_{(h)}\le  \sum_{\t\in\mc{T}_{h,N}}^* \sum_{\mb{P}_{V_e}}^{**} \prod_{v\in V(\t)} 2^{-\frac{1+\th}{2}(h_v-h_{v'})} \prod_{v\in V_e(\t)} C'^{|P_v|} \|W^{(h_v-1)}_{P_v}\|_{(h_v-1)}, \end{equation}

with a suitable constant $C'\ge1$. Now the quantity $\|W^{(h_v-1)}_{P_v}\|_{(h_v-1)}$ can be bounded as follows, for every $\e>0$:
\begin{equation}
\label{eq_tree9}
\|W^{(h_v-1)}_{P_v}\|_{(h_v-1)}\le \left\{\begin{array}{cc}
 R\l2^{-(1+\vth)(h_v-1)}\le R\l,    & P_v \sim \psi^2; \\
C_1^q \l^{q-1},     & P_v \sim \psi^{2q} \text{ with }q\ge2;\\
C_1\l^{\frac{n-1}{2}} \l^{\frac{1-\e}{2}(q-1)}, & P_v \sim G^{(n)}\psi^{2q} \text{ with } q\in\{1,n\};\\
C_1\l^{\frac{n-1}{2}} \l^{\frac{1-\e}{2}(q-1)}, & P_v \sim \dot{G}^{(n)}\psi^{2q} \text{ with } q\in\{1,n-1\};\\
C_1\l^{\frac{5}{8}n} \l^{\frac{3}{8}(1-\e)q}, & P_v \sim \ddot{G}^{(n)}\psi^{2n-2}.
\end{array}\right.
\end{equation}

with a suitable $C_1\ge1$, and for $\l$ small enough, say\footnote{The dependence $R^{-4}$ comes by requiring $\|W^{(h)|2,2}_{\eset;\eset;\eset}\|_1^w\le C\l$, with $C$ independent of $R$, cf. the second of Eq. \eqref{eq_IB_1}.} $\l\le (C'_1 R^4)^{-1}$. The convenience for losing a power $\e$ in $\l$ is to get in general a pre-factor $C_1$ instead of $C_1^{|P_v|}$, in the r.h.s. above. The bounds in Eq. \eqref{eq_tree9} can be checked to imply the following estimate for the r.h.s. of Eq. \eqref{eq_tree6}, after performing the sum $\sum_{\mb{P}_{V_e}}^{**}$:
\begin{equation}
\label{eq_tree7}
\begin{split}
\hspace{-8pt}\|\wt{W}^{(h)|q,q}_{\ud{n};\ud{\dot{n}}; \ud{\ddot{n}}}\|_{(h)}&\le C_2^{1+\dim\ud{n}+\dim\ud{\dot{n}}+\dim\ud{\ddot{n}}} \l^{d} \hspace{-3pt}\sum_{\substack{\t\in\tilde{\mc{T}}_{h,N}}} 2^{-\frac{1+\th}{2}(h_v-h_{v'})} \l^{\e'\big(|V_e|- \max\{1, \dim\ud{n}+\dim\ud{\dot{n}}+\dim\ud{\ddot{n}} \} \big)} \times \\
& \times \left\{ \begin{array}{cc}
R     & q=1, |\ud{n}|=|\ud{\dot{n}}|=|\ud{\ddot{n}}|=0 \\
1     & \text{otherwise},
\end{array}\right. 
\end{split}
\end{equation}

for some $C_2\ge1$ and $\e'$ small enough, where $\tilde{\mc{T}}_{h,N}$ is the set of Gallavotti-Nicolò trees with number of endpoints $\ge \max\big\{1,\dim\ud{n}+\dim\ud{\dot{n}}+\dim\ud{\ddot{n}} \big\}$, and such that all the vertices except the endpoints are branching nodes. Note that the parameter $R$ appears in the r.h.s. of Eq. \eqref{eq_tree7} only if we are considering the kernel $W^{(h)|1,1}_{\eset;\eset;\eset}$: for every other kernel, indeed, every factor $R$ can be checked to be always accompanied with some extra factor $\l^{\frac{1}{2}}$, and $R\l^{\frac{1}{2}}\le 1$ for $\l\le R^{-2}$. Now the sum in the r.h.s. of Eq.  \eqref{eq_tree7} can be bounded by
\begin{equation}
\label{eq_tree8}
\begin{split}
&  \sum_{\t\in\tilde{\mc{T}}_{h,N}} 2^{-\th(h_{\text{max}}(\t)-h)} 2^{-\frac{1-\th}{2}(h_v-h_{v'})} \l^{\e'\big(|V_e|- \max\{1, \dim\ud{n}+\dim\ud{\dot{n}}+\dim\ud{\ddot{n}} \} \big)},
\end{split}
\end{equation}

with $h_{\text{max}}(\t)$ the highest scale over all the endpoints of $\t$. Note that in the case the kernel has labels $\ud{\ddot{n}}\ne \eset$, i.e. there is at least one field $\ddot{G}$, we must have $h_{\text{max}}(\t)=N+1$, since by construction there are no endpoints with a field $\ddot{G}$ at lower scales. Finally, the sum in the r.h.s. of Eq. \eqref{eq_tree8} is readily checked to be bounded by $\big(8(2^{\frac{1-\th}{2}}-1)^{-1}\big)^{2\max\big\{1,\dim\ud{n}+\dim\ud{\dot{n}}+\dim\ud{\ddot{n}} \big\}}$ for $\l$ small enough, therefore: 
\[\|\wt{W}^{(h)|q,q}_{\ud{n};\ud{\dot{n}}; \ud{\ddot{n}}}\|_{(h)}\le C_3^{1+\dim\ud{n}+\dim\ud{\dot{n}}+\dim\ud{\ddot{n}}} \l^{d} \times \left\{\begin{array}{cc}
2^{\th(h-N)},     & |\ud{\ddot{n}}|\ge1  \\
R, & q=1, |\ud{n}|=|\ud{\dot{n}}|=|\ud{\ddot{n}}|=0\\
1,     & \text{otherwise},
\end{array}\right. \]

with $C_3= C_2\big(8(2^{\frac{1-\th}{2}}-1)^{-1}\big)^2$ and again $\l$ small enough (say $\l\le (C_4R^4)^{-1}$ for a suitable constant $C_4\ge1$). The claim of Proposition \ref{prop:SDB} follows with $C_0:= (1- 2^{\th-1})^{-1} \max\{C_3,C_4\}$.

\bigskip

In order to complete the proof of Eq. \eqref{eq_tree4}, we need to prove Lemma \ref{lemma_tree2}.

\begin{proof}[Proof of Lemma \ref{lemma_tree2}]
The starting point is Eq. \eqref{eq:12}:
\[\wt{W}[\t,\mb{P}](\ud{\eta};\ud{b};\ud{\dot{b}};\ud{\ddot{b}})= \frac{\s_{\mb{P}}}{s_{v_0}!} \int_{\L_F} d\ud{\eta}_{Q_{v_0}} \mc{E}^T_{(h_{v_0})} \big( \psi_{\tilde{P}_{w_1}},\dots,\psi_{\tilde{P}_{w_{s_{v_0}}}} \big)  \prod_{j=1}^{s_{v_0}} \wt{W}[\t,\mb{P}_{w_j}] \big((\ud{\eta};\ud{b};\ud{\dot{b}};\ud{\ddot{b}})_{P_{w_j}}\big),\]

from which we find: 
\begin{equation}
\label{eq:tree3}
\begin{split}
\big\|\wt{W}[\t,\mb{P}]\big\|_{(h_{v_0}-1)} &\le \frac{2^{-(h_{v_0}-1) D_{sc}(P_{v_0})}}{L^2 s_{v_0}!} \int_{\L_B} d\ud{b} _{v_0} \int_{\L_F} d\ud{\eta}_{v_0} e^{\frac{\kappa}{2}\sqrt{\d^D(\ud{b}_{v_0},\ud{\eta}_{v_0})}}\times\\
& \times\big|\mc{E}^T_{(h_{v_0})} \big( \psi_{\tilde{P}_{w_1}},\dots,\psi_{\tilde{P}_{w_{s_{v_0}}}} \big)\big|  \prod_{j=1}^{s_{v_0}} \big|\wt{W}[\t,\mb{P}_{w_j}] \big((\ud{\eta};\ud{b};\ud{\dot{b}};\ud{\ddot{b}})_{P_{w_j}}\big)\big|,
\end{split}\end{equation}

where $\ud{b}_{v_0}\equiv \ud{b}_{P_{v_0}}$ and $\ud{\eta}_{v_0}= \big(\ud{\eta}_{P_{v_0}},\ud{\eta}_{Q_{v_0}})$. As a standard corollary of the Battle-Brydges-Federbush-Kennedy formula \cite[Eq. (4.43)]{GM01} and the Gram-Hadamard inequality \cite[Sect. A.3.4]{GM01}, we have the following bound:
\begin{equation}
\label{eq:tree4}
\big|\mc{E}^T_{(h_{v_0})} \big( \psi_{\tilde{P}_{w_1}},\dots,\psi_{\tilde{P}_{w_{s_{v_0}}}} \big)\big|\le C_1^{|Q_{v_0}|} 2^{\frac{1}{2}h_{v_0}(\sum_{j=1}^{s_{v_0}}|P_{w_j}^{\psi}|-|P_{v_0}^{\psi}|-2(s_{v_0}-1))} \sum_{T} \prod_{(f,f')\in T} |g^{(h_{v_0})}_{\eta_f,\eta_{f'}}|,
\end{equation}

with a suitable $C_1\ge1$, where $\sum_T$ is the sum over all the spanning trees over the \it{clusters} $\tilde{P}_{w_1},\dots,\tilde{P_{w_s}}$, made of $s_{v_0}-1$ edges, each of them connecting a field label $f\in \tilde{P}_{w_i}$, associated with a Grassmann variable $\psi^-_{\eta_f}$, with a field label $f'\in \tilde{P}_{w_j}$ (with $j\ne i$), associated with $\psi^+_{\eta_{f'}}$.

Plugging the r.h.s. of Eq. \eqref{eq:tree4} into the r.h.s. of Eq. \eqref{eq:tree3}, and using that
\[\sqrt{\d^D(\ud{b}_{v_0},\ud{\eta}_{v_0})}\le \sum_{j=1}^{s_{v_0}} \sqrt{\d^D(\ud{b}_{P_{w_j}},\ud{\eta}_{P_{w_j}})}+ \sum_{(f,f')\in T} \sqrt{\d^D(\eta_f,\eta_{f'})}, \]

we find:
\begin{equation}
\begin{split}
&\big\|\wt{W}[\t,\mb{P}]\big\|_{(h_{v_0}-1)} \le\\
& 2^{-(h_{v_0}-1) D_{sc}(P_{v_0})} 2^{\frac{1}{2}h_{v_0}(\sum_{j=1}^{s_{v_0}}|P_{w_j}^{\psi}|-|P_{v_0}^{\psi}|-2(s_{v_0}-1))} \frac{C_1^{|Q_{v_0}|}}{L^2 s_{v_0}!} \sum_T \int_{\L_B} d\ud{b} _{v_0} \int_{\L_F} d\ud{\eta}_{v_0} \times\\
& \times \prod_{(f,f')\in T} \Big(e^{\frac{\kappa}{2}\sqrt{\d^D(\eta_f,\eta_{f'})}} |g^{(h_{v_0})}_{\eta_f,\eta_{f'}}|\Big) \; \prod_{j=1}^{s_{v_0}} \Bigg( e^{\frac{\kappa}{2} \sqrt{\d^D\big(\ud{b}_{P_{w_j}},\ud{\eta}_{P_{w_j}}\big)}} \big|\wt{W}[\t,\mb{P}_{w_j}] \big((\ud{\eta};\ud{b};\ud{\dot{b}};\ud{\ddot{b}})_{P_{w_j}}\big)\big| \Bigg).
\end{split}\end{equation}

By performing standard \it{tree-stripping estimates} on $T$ \cite[Sect. 6.2]{GM01}, we get:
\begin{equation}
\begin{split}
\big\|\wt{W}[\t,\mb{P}]\big\|_{(h_{v_0}-1)} &\le 2^{-(h_{v_0}-1) D_{sc}(P_{v_0})} 2^{\frac{1}{2}h_{v_0}(\sum_{j=1}^{s_{v_0}}|P_{w_j}^{\psi}|-|P_{v_0}^{\psi}|-2(s_{v_0}-1))} \frac{C_1^{|Q_{v_0}|}}{s_{v_0}!}\times \\
&\times \sum_T  \prod_{(f,f')\in T} \big\|g^{(h_{v_0})}\big\|_1^w \; \prod_{j=1}^{s_{v_0}}  \big\|\wt{W}[\t,\mb{P}_{w_j}]\big\|_1^w.
\end{split}\end{equation}

Now, using the fact that $\|g^{(h)}\|_1^w =\mc{O}(2^{-h})$ (cf. Eq. \eqref{eq_bound_propagator}), that the number of spanning trees $T$ is bounded by $s_{v_0}! e^{2|Q_{v_0}|}$ \cite[Lemma A.5]{GM01}, and recalling the definition of the norm $\|\cdot\|_{(h)}$ at the beginning of this section, we find:
\begin{equation}
\label{eq:tree5}
\begin{split}
\big\|\wt{W}[\t,\mb{P}]\big\|_{(h_{v_0}-1)} &\le 2^{(h_{v_0}-1) \big[ -D_{sc}(P_{v_0})+ \frac{1}{2}\sum_{j=1}^{s_{v_0}}|P_{w_j}^{\psi}|-\frac{1}{2}|P_{v_0}^{\psi}|-2(s_{v_0}-1)\big]} C_2^{|Q_{v_0}|}\times \\
& \times \prod_{j=1}^{s_{v_0}}  2^{(h_{w_j}-1)D_{sc}(P_{w_j})} \big\|\wt{W}[\t,\mb{P}_{w_j}]\big\|_{(h_{w_j}-1)},
\end{split}\end{equation}

with a suitable $C_2\ge1$. Now in the r.h.s. of Eq. \eqref{eq:tree5} we write $h_{w_j}-1= (h_{w_j}- h_{v_0})+ (h_{v_0}-1)$, and using the fact that $\sum_{j=1}^{s_{v_0}} \big( |P_{w_j}^G|+ |P_{w_j}^{\dot{G}}|+ 2|P_{w_j}^{\ddot{G}}| \big)= |P_{v_0}^G|+ |P_{v_0}^{\dot{G}}|+ 2|P_{v_0}^{\ddot{G}}|$ (no source fields can be integrated), we note that
\[
\begin{split}
& -D_{sc}(P_{v_0})+ \tfrac{1}{2}\sum_{j=1}^{s_{v_0}} |P_{w_j}^{\psi}| -\tfrac{1}{2}|P_{v_0}^{\psi}|-2s_{v_0} +2 +\sum_{j=1}^{s_{v_0}} D_{sc}(P_{w_j})=\\
&= -2+\tfrac{1}{2}|P_{v_0}^{\psi}| + |P_{v_0}^G|+ |P_{v_0}^{\dot{G}}| +2|P_{v_0}^{\ddot{G}}| +\tfrac{1}{2}\sum_{j=1}^{s_{v_0}} |P_{w_j}^{\psi}| -\tfrac{1}{2}|P_{v_0}^{\psi}|-2s_{v_0} +2 + \\
&+\sum_{j=1}^{s_{v_0}} \left( 2- \tfrac{1}{2}|P_{w_j}^{\psi}| - |P_{w_j}^G|- |P_{w_j}^{\dot{G}}| -2|P_{w_j}^{\ddot{G}}| \right)=0.
\end{split}
\]

Therefore Eq. \eqref{eq:tree5} becomes
\begin{equation}
\begin{split}
\big\|\wt{W}[\t,\mb{P}]\big\|_{(h_{v_0}-1)} \le  C_2^{|Q_{v_0}|} \prod_{j=1}^{s_{v_0}}  2^{(h_{w_j}-h_{v_0})D_{sc}(P_{w_j})} \big\|\wt{W}[\t,\mb{P}_{w_j}]\big\|_{(h_{w_j}-1)},
\end{split}\end{equation}

which, once iterated over the vertices of $\t$, yields the desired bound in Eq. \eqref{eq_tree3}.
\end{proof}

\section{Symmetries}
\label{app:symmetries}

In this section we present the symmetries of the massive lattice QED$_2$: for definiteness we state them only for model with external sources $G(J),B,\vphi$, obtained right after the integration of the boson field $A$. However, the same symmetries hold for the model which includes the auxiliary external fields $G,\dot{G},\ddot{G}$, Eqs. \eqref{def_potential}, \eqref{def_potential_2}, with the understanding that the transformation for the variables $G^{(n)}_b, \dot{G}^{(n)}_b, \ddot{G}^{(n)}_b$ is the same as the one for $G(J)$ prescribed in the following lemma.

\label{app_bubbleq}
\begin{lemma}
\label{lemma:symm}
The potential $\mc{V}(\psi;G(J)) + (\vphi,\psi)+ (B,O_5)$ and the Gaussian integrations $P^{(h)}$ (cf. Eq. \eqref{eq:2} and comments after Eq. \eqref{eq:v_hbackward}) are separately invariant under the following symmetries, with the understanding that the transformation for $\vphi^{\pm}$ is the same as for $\psi^{\pm}$.

\begin{enumerate}
\item\label{item:symm:1} Global $U(1)$: $(\mc{U}_{\a}\psi)^{\pm}_{x,s}:=  e^{\pm i\a} \psi^{\pm}_{x,s}$, for every $\a\in[0,2\pi)$.

\item\label{item:symm:2} Charge conjugation: 

\begin{equation}
\label{symm:cc}
\left\{\begin{array}{cc}
(\mc{Q}\psi)^+_{x,s}:=  \sum_{s'}\psi^-_{x,s'} (\gamma_1)_{s,s'} \\
(\mc{Q}\psi)^-_{x,s}:= \sum_{s'} (\gamma_1)_{s,s'} \psi^+_{x,s'}\end{array} \right. ,\; \qquad \left\{\begin{array}{cc}
(\mc{Q}G)_{b}(J): = G_{\ov{b}}(J) \\
(\mc{Q}B)_{b}: = B_{\ov{b}},
\end{array} \right. 
\end{equation}

where $\ov b\equiv \ov{(x,\mu,\e)}= (x,\mu,-\e)$.

\item\label{item:symm:3} Axes flip:
\begin{equation}
\label{symm:rot}
\left\{\begin{array}{cc}
(\mc{F}\psi)^+_{x,s}:=  \sum_{s'}\psi^+_{Tx,s'} F^*_{s,s'} \\
(\mc{F}\psi)^-_{x,s}:=  \sum_{s'} F_{s,s'} \psi^-_{Tx,s'}
\end{array} \right., \qquad \; \left\{\begin{array}{cc}
(\mc{F}G)_{b}(J): = G_{Tb}(J) \\
(\mc{F}B)_{b}: = -B_{Tb},
\end{array} \right.  \end{equation}
where $F\equiv \tfrac{1}{\sqrt{2}}(\gamma_0-\gamma_1)\gamma_5\equiv \tfrac{1}{\sqrt{2}}\left(\begin{array}{cc}
    i & 1 \\
    -1 &-i 
\end{array}\right)$, $Tx\equiv (x_1,x_0)$ and $T(x,\mu,\e)\equiv (Tx,1-\mu,\e)$.

\item\label{item:symm:4} Parity:
\begin{equation}
\label{sym:par}
\left\{\begin{array}{cc}
(\mc{P}\psi)^+_{x,s}:=  \sum_{s'}\psi^+_{Px,s'} (\gamma_5\gamma_1)_{s',s} \\
(\mc{P}\psi)^-_{x,s}:=  \sum_{s'} (\gamma_1\gamma_5)_{s,s'} \psi^-_{Px,s'}
\end{array} \right., \qquad  \; \left\{\begin{array}{cc}
(\mc{P}G)_{b}(J): = G_{Pb}(J) \\
(\mc{P}B)_{x,\mu,\e}: = (-1)^{\mu-1} B_{Px,\mu,\e},
\end{array} \right. \end{equation}

where $Px\equiv(x_0,-x_1)$ and $P(x,\mu,\e)\equiv(Px-\delta_{\mu,1}2^{-N}\hat{e}_1, \mu, (-1)^{\mu}\e)$.

\end{enumerate}
\end{lemma}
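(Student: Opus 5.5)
The approach is to check each of the four claimed invariances directly. For the Gaussian measures it suffices to verify that the propagator $g^{(h)}$ is invariant, i.e.\ that the quadratic form $\mc{E}_{(h)}(\psi^-_{x,s}\psi^+_{y,s'})$ is mapped to itself. For the potential I will exploit that $\mc{V}(\psi;G(J))$ in Eq. \eqref{eq:2} depends on $\psi$ only through the bilinears $O_b(\psi)$ and on $J$ only through $G_b(J)$, with kernels $w_{n,m}$ built from $g^A_{b,b'}$ via Eqs. \eqref{eq:4a}-\eqref{eq:4b}. So the claim reduces to two facts: first, each symmetry acts on the bilinears by a relabeling $O_b\mapsto \pm O_{\sigma(b)}$, and likewise on $O_{5;b}$; second, $g^A_{b,b'}$, and hence $w_{n,m}$ and $\mu_{N,\l}$, is covariant under the same relabeling $\sigma$, with the signs cancelling once $G$ and $B$ are transformed as prescribed. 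The source term $(\vphi,\psi)$ is handled identically, since $\vphi$ transforms like $\psi$.

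Global $U(1)$ is immediate because every term is a product of $\psi^+\psi^-$ pairs. For charge conjugation I will compute $O_{x,\mu,+}(\mc{Q}\psi)=\frac12\sum (\gamma_1)\psi^-(\gamma_\mu-r)(\gamma_1)\psi^+$. Reordering the Grassmann variables costs a sign and produces a transpose, so the computation reduces to the identity $\gamma_1(\gamma_\mu-r)^T\gamma_1=-(\gamma_\mu+r)$. This follows from $\gamma_0^T=\gamma_0$ and $\gamma_1^T=-\gamma_1$ together with anticommutation, and it yields $O_{x,\mu,\pm}(\mc{Q}\psi)=O_{x,\mu,\mp}(\psi)$. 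I also need $g^A_{b,b'}=g^A_{\ov b,\ov b'}$. This holds because $g^A_{b,b'}$ carries the factor $\e\e'$ implicitly through $G_{x,\mu,\pm}\approx \pm i\mathsf{e}A$ in the expansion, and I will check this against the structure of $w_{n,0}$. The propagator identity $\gamma_1\,\hat g(-k)^T\gamma_1=\hat g(k)$ comes from the same gamma algebra combined with $s(-k)=-s(k)$ and $M_N(-k)=M_N(k)$.

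Axes flip and parity follow the same pattern. For the flip I check $F^\dagger\gamma_\mu F=\gamma_{1-\mu}$, $F^\dagger F=1$ and $F^\dagger\gamma_5F=-\gamma_5$. The last of these gives the sign on $O_5$, which is compensated by $B\mapsto -B$, while the lattice vectors $\hat e_\mu\leftrightarrow\hat e_{1-\mu}$ swap under $T$. For parity, conjugation by $\gamma_1\gamma_5$ fixes $\gamma_0$ and flips the sign of $\gamma_1$. Reflecting $x_1\mapsto -x_1$ turns a forward hop in direction $1$ into a backward one, which is exactly why the relabeling $Px-\delta_{\mu,1}2^{-N}\hat e_1$ together with $\e\mapsto(-1)^\mu\e$ appears. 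The boson propagator with $\xi=1$ is diagonal in $\mu$ and depends only on $|\s_\mu(k)|^2$, so it is invariant under all these lattice isometries.

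The main obstacle will be the sign bookkeeping in the charge-conjugation and parity cases. This includes the Grassmann reordering, the transposes, and the $\e$-dependence of the boson coupling, which must come out consistent with the stated transformation of $G$ and with the relation $g^A_{b,b'}=-g^A_{b,\ov b'}$ used earlier. Once the covariance is established for single bilinears and for the bare propagators, invariance of $\mc{V}$ and of each $P^{(h)}$ follows termwise, since each $\chi_h(|k|)$ is radial.
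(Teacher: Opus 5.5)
Your proposal is correct and follows essentially the same route as the paper. For each symmetry you determine its action on the bilinears $O_b$ and $O_{5;b}$ through the gamma-matrix identities, you transfer the covariance $g^A_{b,b'}=g^A_{\sigma(b),\sigma(b')}$ (valid since $g^A_{b,b'}=\epsilon\epsilon'\delta_{\mu,\nu}g^A_{\mu,\mu}(x-y)$ at $\xi=1$) to the kernels $w_{n,m}$, and you check that the single-scale propagator is invariant, which by the Wick rule suffices for $P^{(h)}$; the identities you list, such as $\gamma_1\hat g(-k)^T\gamma_1=\hat g(k)$, $F^\dagger\gamma_\mu F=\gamma_{1-\mu}$ and $F^\dagger\gamma_5F=-\gamma_5$, all check out and coincide with those used in the paper.
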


\begin{proof}$\,$

\begin{itemize}
\item[\ref{item:symm:1}.] Global $U(1)$ is straightforward since every field $\psi^+$ or $\vphi^+$ is always accompanied with a field $\psi^-$ or $\vphi^-$.

\item[\ref{item:symm:2}.] We note that $\gamma_1^T\gamma_{\mu}^T\gamma_1=\gamma_{\mu}$ and $\gamma_1^T\gamma_1=-1$, hence, recalling that $O_{x,\mu,+}(\psi)\equiv \tfrac{1}{2}\psi^+_x(\gamma_{\mu}-r)\psi^-_{x+2^{-N}\hat{e}_{\mu}}$ and
$O_{x,\mu,-}(\psi)= -\tfrac{1}{2}\psi^+_{x+2^{-N}\hat{e}_{\mu}}(\gamma_{\mu}+r)\psi^-_x$, we find:
\[\left\{\begin{array}{cc}
O_{x,\mu,+}(\mc{Q}\psi)=  -\tfrac{1}{2}\psi^+_{x+2^{-N}\hat{e}_{\mu}}\gamma_1^T(\gamma_{\mu}^T-r)\gamma_1\psi^-_{x}=- \tfrac{1}{2}\psi^+_{x+2^{-N}\hat{e}_{\mu}}(\gamma_{\mu}+r)\psi^-_{x}= O_{x,\mu,-}(\psi)\\
O_{x,\mu,-}(\mc{Q}\psi)= \tfrac{1}{2}\psi^+_{x}\gamma_1^T(\gamma_{\mu}^T+r)\gamma_1\psi^-_{x+2^{-N}\hat{e}_{\mu}}= \tfrac{1}{2}\psi^+_{x}(\gamma_{\mu}-r)\psi^-_{x+2^{-N}\hat{e}_{\mu}} = O_{x,\mu,+}(\psi).
\end{array}\right. \]

Besides, recalling Eqs. \eqref{eq:4a}, \eqref{eq:4b}, and noting that $g^A_{b,b'}=g^A_{\ov{b},\ov{b}'}$, we see that $w_{n,m}(\ud{b},\ud{b}')= w_{n,m}(\ov{\ud{b}},\ov{\ud{b}}')$, from which the invariance of $\mc{V}(\psi;G(J))$ follows. Moreover, since $\gamma_1^T(\gamma_{\mu}\gamma_5)^T\gamma_1= -\gamma_1^T\gamma_5^T \gamma_1 \gamma_1^T\gamma_{\mu}^T\gamma_1= -\gamma_5\gamma_{\mu}= \gamma_{\mu}\gamma_5$, we see that $O_{5;x,\mu,-\e}(\mc{Q}\psi)= O_{5;x,\mu,\e}(\psi)$, which implies the invariance of the term $(B,O_5)$. The invariance of the Grassmann integration $P^{(h)}$ means $\mc{E}_{(h)}(f(\cdot))= \mc{E}_{(h)}(f(\mc{Q}\cdot))$ for every Grassmann polynomial $f$. In force of the Grassmannian Wick rule \cite[Eq. (4.20)]{GM01}, it is sufficient to focus on the case in which $f$ is quadratic, so that the simple expectation reduces to the single-scale propagator.
\[\begin{split}
&\mc{E}_{(h)}\big((\mc{Q}\psi)^-_{x,s} (\mc{Q}\psi)^+_{y,s'} \big)= -\big(\gamma_1g^{(h)}(y-x)\gamma_1^T\big)_{s',s}= \\
&-\int_{\L^*} dk e^{-ik\cdot(y-x)} f_h(|k|)  \Big(\gamma_1 \big(-i\slashed{s}(k)+ m_N+ M_N(k)\big)^{-1}\gamma_1^T\Big)_{s',s} =\\
&-\int_{\L^*} dk e^{-ik\cdot(y-x)} f_h(|k|)   \Big(-i\slashed{s}(k)^T - m_N- M_N(k)\Big)_{s',s}^{-1}=\\
&\int_{\L^*} dk e^{-ik\cdot(x-y)} f_h(|k|)  \Big(-i\slashed{s}(k)+ m_N+ M_N(k)\Big)^{-1}_{s,s'}= \mc{E}_{(h)}(\psi^-_{x,s}\psi^+_{y,s'}),
\end{split}\]

where we introduced $f_h(|k|)\equiv \chi_h(|k|)- \chi_{h-1}(|k|)$ for $0\le h< N$ and $f_N(|k|)\equiv 1-\chi_{N-1}(|k|)$, and we used that $s_{\mu}(-k)=-s_{\mu}(k)$ and $M_N(-k)=M_N(k)$.

\item[\ref{item:symm:3}.] We observe that $F^{\dg}\gamma_{\mu} F= \gamma_{1-\mu}$, with $F^{\dg}$ the Hermitian conjugate of $F$, therefore:
\[\left\{\begin{array}{cc}
O_{x,\mu,+}(\mc{F}\psi)= \tfrac{1}{2}\psi^+_{Tx}(\gamma_{1-\mu}-r) \psi^-_{Tx + 2^{-N}\hat{e}_{1-\mu}} = O_{Tx,1-\mu,+}(\psi)  \\
O_{x,\mu,-}(\mc{F}\psi)=  -\tfrac{1}{2}\psi^+_{Tx+2^{-N}\hat{e}_{1-\mu}}(\gamma_{1-\mu}+r)\psi^-_{Tx}= O_{Tx,1-\mu,-}(\psi).
\end{array}\right. \]

Recalling also that $g^A_{b,b'}= g^A_{Tb,Tb'}$, which implies that $w_{n,m}(\ud{b},\ud{b}')= w_{n,m}(T\ud{b}, T\ud{b}')$, the invariance of $\mc{V}$ follows. For the chiral current we use the fact that $F^{\dg}\gamma_5F=-\gamma_5$, and so $O_{5;Tx,\mu,\e}(\mc{F}\psi)= - O_{5;x,1-\mu,\e}(\psi)$. For the Grassmann integration, we have
\[\begin{split}
&\mc{E}_{(h)}\big( (\mc{F}\psi)^-_{x,s} (\mc{F}\psi)^+_{y,s'} \big)= \Big(F g^{(h)}(T(x-y)) F^{\dg}\Big)_{s,s'}= \\
&\int_{\L^*} dk e^{-iTk\cdot(x-y)} f_h(|k|)  \Big(F \big(-i\slashed{s}(k)+ m_N+ M_N(k)\big)^{-1}F^{\dg}\Big)_{s,s'}=\\
&\int_{\L^*} dk e^{-ik\cdot(x-y)} f_h(|k|)  \Big(F \big(-i\slashed{s}(Tk)+ m_N+ M_N(Tk)\big)F^{\dg}\Big)^{-1}_{s,s'} =\\
&\int_{\L^*} dk e^{-ik\cdot(x-y)} f_h(|k|)  \Big(-i\slashed{s}(k)+ m_N+ M_N(k)\Big)^{-1}= \mc{E}_{(h)}(\psi^-_{x,s}\psi^+_{y,s'}),
\end{split}\]

where we used $s_{\mu}(k)= s_{1-\mu}(Tk)$ and $M_N(k)=M_N(Tk)$.

\item[\ref{item:symm:4}.] We observe that $\gamma_5\gamma_1\gamma_{\mu}\gamma_1\gamma_5= (-1)^{\mu}\gamma_{\mu}$, which implies, by inspection, $O_b(\mc{P}\psi)= O_{Pb}(\psi)$. On the other hand we have $g^A_{b,b'}= g^A_{Pb,Pb'}$, so $w_{n,m}(\ud{b},\ud{b}')= w_{n,m}(P\ud{b},P\ud{b}')$, which implies the invariance of $\mc{V}$ under \it{parity}. Moreover, we note that $(\gamma_5\gamma_1)\gamma_{\mu}\gamma_5 (\gamma_1\gamma_5)= (-1)^{\mu-1}\gamma_{\mu}\gamma_5$, hence $O_{5;Px,\mu,\e}(\mc{P}\psi)= (-1)^{\mu-1}O_{5;x,\mu,\e}(\psi)$. Finally:
\[\begin{split}
&\mc{E}_{(h)}\big((\mc{P}\psi)^-_{x,s} (\mc{P}\psi)^+_{y,s'} \big)= \Big(\gamma_1\gamma_5 g^{(h)}(P(x-y)) \gamma_5\gamma_1\Big)_{s,s'}= \\
&\int_{\L^*} dk e^{-iPk\cdot(x-y)} f_h(|k|)  \Big((\gamma_5\gamma_1)^{-1} \big(-i\slashed{s}(k)+ m_N+ M_N(k)\big)^{-1}(\gamma_1\gamma_5)^{-1}\Big)_{s,s'}=\\
&\int_{\L^*} dk e^{-iPk\cdot(x-y)} f_h(|k|)  \Big((\gamma_1\gamma_5) \big(-i\slashed{s}(Pk)+ m_N+ M_N(Pk)\big)(\gamma_5\gamma_1)\Big)^{-1}_{s,s'}=\\
&\int_{\L^*} dk e^{-ik\cdot(x-y)} f_h(|k|)  \Big(-i\slashed{s}(k)+ m_N+ M_N(k)\Big)^{-1}= \mc{E}_{(h)}(\psi^-_{x,s}\psi^+_{y,s'}),
\end{split}\]

where we used that $(\gamma_{\mu}\gamma_5)^{-1}= \gamma_5\gamma_{\mu}$, $s_{\mu}(Pk)= (-1)^{\mu}s_{\mu}(k)$ and $M_N(Pk)=M_N(k)$.
\end{itemize}
\end{proof}

\section{Overview of the continuum limit}
\label{app:limit}

We briefly present the scheme of the proof for showing the existence of the limit $N\to\infty$ of the kernels of the generating functional in Eq. \eqref{def_functional_UV}. 

\par One would like to show the existence of the $N\to\infty$ limit of the Fourier transforms of the kernels $W^{(u.v.;N)}_{q,q';\tilde{q},\tilde{q}';p;p'}$ of the generating functional, cf.  Eqs. \eqref{def_functional_UV}-\eqref{eq_main_potential}, given  (for instance if $\tilde{q}=\tilde{q}'=p=p'=0$) by
\begin{equation}
\label{eq:38}
\begin{split} \hspace{-7pt}&\hat{W}^{(u.v.;N)}_{q,q;0,0;0;0}\big((\ud{k},\ud{s}),(\ud{k}',\ud{s}')\big):= \frac{1}{L^2} \int_{\L_N^{2q}} d\ud{x} d\ud{x}' e^{i(\ud{k}\cdot\ud{x} +\ud{k}'\cdot\ud{x}')} W^{(u.v.;N)}_{q,q;0,0;0;0}\big(\ud{\eta},\ud{\eta}';\ud{b}), \;\; \ud{k},\ud{k}' \in (\L^*_N)^q
\end{split}
\end{equation}

with $\L_N\equiv \L$ and $\L_N^*\equiv \L^*$ for which we have made the dependence on $N$ explicit in every expression, recall the notations below Eq.  \eqref{eq_V&G}.

It is then enough to show a \emph{Cauchy property}, i.e. (assuming with no loss of generality $h^*_M=0$) that for every $\ud{k},\ud{k}'\in \big(\frac{2\pi}{L}\mbb{Z}^2\big)^q$,
\begin{equation}
\label{eq:39}
\begin{split}
& \big|\hat{W}^{(u.v.;N)}_{q,q;0,0;0;0}\big((\ud{k},\ud{s}),(\ud{k}',\ud{s}')\big)- \hat{W}^{(u.v.;N')}_{q,q;0,0;0;0}\big((\ud{k},\ud{s}),(\ud{k}',\ud{s}')\big)\big|\le  C^{1+q} 
\l^{\max\{1,\frac{2}{5}q\}} 2^{-\th \min\{N,N'\}},
\end{split}\end{equation}

for any $N,N'$ large enough, with a suitable $C>0$ and $\th\in(0,1)$. Notice that the r.h.s of Eq. \eqref{eq:39} is the same as the r.h.s. of Eq. \eqref{eq_50} times a small factor $2^{-\th \min\{N,N'\}}$.

Since any kernel in Fourier space is bounded in absolute value by the $L^1$ norm of its position-space counterpart (cf. Eq. \eqref{def_Lp_norm}), and since for a non-perturbative multiscale analysis we are induced to work in position space, a natural way to derive Eq.  \eqref{eq:39}  
is by proving that the kernels $W^{(u.v.;N)}$ are  \emph{Cauchy sequences} in $L^1$ norm. 
\paragraph{Comparing different lattices: smearing.} 
The first issue is that, in presence of a lattice regularization,
the kernels $W^{(u.v.;N)}$ and $W^{(u.v.;N')}$ are apriori defined on  different lattices whenever, without loss of generality, $N<N'$. In order to give a meaning to $W^{(u.v.;N)}-W^{(u.v.;N')}$, we introduce a mollifier $\mf{d}_N$ given by 

\[\mf{d}_N(x):= \frac{1}{L^2} \int_{\L^*_{\infty}}dk\, e^{-ik\cdot x} \chi(\tfrac{\gamma^{-N}}{2}|k|), \qquad x\in\L_{\infty}:=\mbb{R}^2/(L\mbb{Z}^2),\]

with $\L^*_{\infty}\equiv \frac{2\pi}{L}\mbb{Z}^2$, $\chi$ defined after Eq. \eqref{eq:7} and $1<\gamma\le 2$ (e.g. $\gamma=\sqrt{2}$), and we define a smeared version of the kernels, denoted by  $\mathsf{W}^{(u.v;N)}$,   obtained via convolution with $\mf{d}_N$, namely:  
\[
\left\{ \begin{array}{cc}
\mathsf{W}^{(u.v;N)}(x_1,\dots,x_q)
:= \int_{\L_N^q} dy_1\dots dy_q W^{(u.v.;N)}(y_1,\dots,y_q)\prod_{j=1}^q \mf{d}_N(x_j-y_j)  \\
\mathsf{W}^{(u.v;N')}(x_1,\dots,x_q)
:= \int_{\L_{N'}^q} dy_1\dots dy_q W^{(u.v.;N')}(y_1,\dots,y_q)\prod_{j=1}^q \mf{d}_N(x_j-y_j)  .
\end{array}\right.
\]

Note that
 Eq. \eqref{eq:39} follows for $\max_{j=1}^q \{|k_j|,|k'_j|\}\le \gamma^N,$\footnote{Recall indeed that the cut-off $\chi$ equals 1 on $[0,\frac{1}{2}]$, hence the convolution by $\mf{d}_N$ does not affect the Fourier transform for momenta smaller than $\gamma^N$.} if we show that

\begin{equation}
\label{eq:40}
\big\|\mathsf{W}^{(u.v.);N}_{q,q;0,0;0;0} - \mathsf{W}^{(u.v.);N'}_{q,q;0,0;0;0} \big\|_{L^1(\L_{\infty})} \le C^{1+q} 
\l^{\max\{1,\frac{2}{5}q\}} 2^{-\th N},
\end{equation}

where $\|\, \cdot \, \|_{L^1(\Lambda_\infty)}$  stands for Eq. \eqref{def_Lp_norm} on $\L_{\infty}$. 

\medskip 
\par We stress that, since the smearing procedure cuts off all the modes $|k|\ge 2\gamma^N$ ($\chi=0$ on $[1,\infty)$), the nested use of the smearing by $\mf{d}_N$ within the multiscale scheme would implicitly exclude all the contributions to correlation functions coming from modes at least greater than $2\gamma^N$, and this would be equivalent to saying that the lattice theory is well approximated by its continuum counterpart with a momentum cut-off. The latter property turns out to be actually incorrect, as one readily realizes that the there are three irreducible diagrams, $\hat{\Pi}_{\mu,\nu}^{(0,N]}(p)$, $\hat{\Pi}_{5;\mu,\nu}^{(0,N]}(p)$ and $\mc{T}^{(0,N]}$, defined in Eqs.  \eqref{eq_bubble6} and \eqref{def:tadpole}, which admit non-negligible contributions from momenta of size $\sim 2^N$. In fact it is possible to show that
\begin{equation}
\label{eq:25}
\lim_{N\to\infty}\mc{T}^{(0,N]}= -\int_{(-\pi,\pi]^2} \frac{d^2q}{(2\pi)^2} e^{-i q_0} \frac{i \tilde{s}_0(q) -r \tilde{M}(q)}{|\tilde{s}(q)|^2+ \tilde{M}(q)^2}, 
\end{equation}

where $\tilde{M}(q)\equiv 2r\sum_{\mu} \sin^2(\frac{q_{\mu}}{2})$ and $\tilde{s}_{\mu}(q)\equiv \sin(q_{\mu})$. In the integral in the r.h.s. of Eq. \eqref{eq:25}, $q$ has the interpretation of \virg{rescaled momentum}, namely $q=2^{-N}k\in 2^{-N}\L^*_N$: this shows that in the limit $N\to\infty$ all the contribution comes from momenta $k\in\L^*$ such that $|k|\ge 2^{N(1-\e)}$, for any $\e>0$. Similar considerations apply to the \it{bubble graphs} $\hat{\Pi}^{(0,N]}_{\mu,\nu}(0)$ and $\hat{\Pi}^{(0,N]}_{5;\mu,\nu}(0)$.

\medskip

\paragraph{The Cauchy property.}

In virtue of the previous discussion one realizes that in order to prove Eq. \eqref{eq:40}, a refinement of Theorem \ref{thm:IB} must be taken into account. In particular (see last paragraph for more details), it turns out that each non-irrelevant kernel can be decomposed as
\[W^{(h;N)|q,q}_{\ud{n};\ud{\dot{n}};\ud{\ddot{n}}}=w^{(h;N)|q,q}_{\ud{n};\ud{\dot{n}};\ud{\ddot{n}}} + \mc{R}W^{(h;N)|q,q}_{\ud{n};\ud{\dot{n}};\ud{\ddot{n}}},\]

where $w^{(h;N)|q,q}_{\ud{n};\ud{\dot{n}};\ud{\ddot{n}}}$ is explicit, expressed in terms of the three diagrams above Eq. \eqref{eq:25}, while 
 $\mc{R}W^{(h;N)|q,q}_{\ud{n};\ud{\dot{n}};\ud{\ddot{n}}}$  admits a bound with an extra improvement which is at least $2^{-h}$ smaller than the naive, dimensional one, $2^{h D_{sc}}$ (cf. Eq. \eqref{def_Dsc}).\footnote{For instance, in the first nontrivial case $
 \big\|\mc{R}W^{(h;N)|2,2}_{\eset;\eset;\eset}\big\|_1^w \leq R\lambda^{\frac54}2^{-h}$, which must be compared with Eq. \eqref{eq_IB_1}).}

With this decomposition, fix e.g. $\gamma=\sqrt{2}$ and assume for simplicity $N,N'$ to be multiples of $4$, with $N'> N$; we can divide the range of scales into two different regimes as follows.

\begin{enumerate}
\item $N/4\le h\le N$. In this regime,  all the information about the theory is carried by the dominant part $w^{(h;N)|q,q}_{\ud{n};\ud{\dot{n}};\ud{\ddot{n}}}$ of the marginal kernels, since the  terms coming from $\mc{R}W^{(h;N)|q,q}_{\ud{n};\ud{\dot{n}};\ud{\ddot{n}}}$  contribute with an extra factor $2^{-{h\vth}}\leq 2^{-\frac{N\vth}{4}}$. Then in order to obtain Eq.  \eqref{eq:40}, one uses the splitting of $\mathsf{W}^{(h;N)|q,q}_{\ud{n};\ud{\dot{n}};\ud{\ddot{n}}}$ and the fact that  \begin{equation}
 \big\|\mathsf{w}^{(h;N)|q,q}_{\ud{n};\ud{\dot{n}};\ud{\ddot{n}}}- \mathsf{w}^{(h;N')|q,q}_{\ud{n};\ud{\dot{n}};\ud{\ddot{n}}} \big\|_{L^1(\L_{\infty})}^w=\mc{O}(2^{-N/4}),
 \label{eq:F5}
 \end{equation} 
   which follows from the  explicit convergence of the graphs $\hat{\Pi}^{(h,N]}(0)$ and $\mc{T}^{(h,N]}$ to their continuum limit, namely
\begin{align}
& \hat{\Pi}^{(h,N]}(0)- \lim_{N\to\infty}\hat{\Pi}^{(h,N]}(0)= \mc{O}(2^{-\frac{2}{3}N})+ \mc{O}(2^{h-N}), \nonumber\\
& \mc{T}^{(h,N]}- \lim_{N\to\infty}\mc{T}^{(h,N]}= \mc{O}(2^{-\frac{2}{3}N})+ \mc{O}(2^{h-N}). \nonumber
\end{align}

\item $0\le h\le N/4$. This regime is analyzed via the same inductive structure adopted in Sections \ref{sect_multiscale}, \ref{sect_improved} for the proof of Theorem \ref{thm:IB}, with the difference that now the bounds in Eqs. \eqref{eq_IB_1}-\eqref{eq_IB_6} and Eqs.  \eqref{eq_SDB_1}-\eqref{eq_SDB_3} involve, in their l.h.s., the differences $\mathsf{W}^{(h;N)|q,q}_{\ud{n};\ud{\dot{n}};\ud{\ddot{n}}}- \mathsf{W}^{(h;N')|q,q}_{\ud{n};\ud{\dot{n}};\ud{\ddot{n}}}$ rather than the kernels $W^{(h;N)|q,q}_{\ud{n};\ud{\dot{n}};\ud{\ddot{n}}}$ themselves, and in the r.h.s. an extra smallness factor, say $2^{-N/8}$, is also present. The strategy is again based on a two-fold procedure.

 As a first step one proves the counterpart of Proposition \ref{prop:SDB}, establishing the smallness of the difference $\mathsf{W}^{(h;N)|q,q}_{\ud{n};\ud{\dot{n}};\ud{\ddot{n}}}- \mathsf{W}^{(h;N')|q,q}_{\ud{n};\ud{\dot{n}};\ud{\ddot{n}}}$ for all the kernels, assuming it true for the relevant and marginal ones. This task is carried out by exploiting the tools of the tree expansion (cf. Section \ref{sect_tree_expansion}), combined with the dimensional bounds for the difference of fermionic propagators:
\begin{equation}
\label{eq:29}
\Big(\mf{d}_N*g^{(h)}_N*\mf{d}_N\Big)(x,y)- \Big(\mf{d}_N*g^{(h)}_{N'}*\mf{d}_N\Big)(x,y)= \mc{O}\Big(2^{2h-N} e^{-\kappa_4 \sqrt{2^h\dist{x-y}_L}} \Big)
\end{equation}

for a suitable $\kappa_4>0$. Note that the r.h.s. of Eq. \eqref{eq:29} has a gain factor $2^{h-N}$ w.r.t. the dimensional bound for $g^{(h)}_N$ itself (cf. Eq. \eqref{eq_bound_propagator}), and for, say, $h\le \frac{N}{2}$, such gain is better than $2^{-\frac{N}{2}}$.

Then, as a second step, we are left with proving the smallness of $\mathsf{W}^{(h;N)|q,q}_{\ud{n};\ud{\dot{n}};\ud{\ddot{n}}}- \mathsf{W}^{(h;N')|q,q}_{\ud{n};\ud{\dot{n}};\ud{\ddot{n}}}$ for the non-irrelevant kernels, and this can be done in the same spirit as Section \ref{sect_improved}, starting from the exact identities among the kernels (see e.g. Lemma \ref{lemma_id_self_energy}), and exploiting inductively the bounds from the tree expansion, together with the bound for the difference of boson propagators:
\[\begin{split} &\Big(\mf{d}_N*g^A_N*\mf{d}_N\Big)(x,y)- \Big(\mf{d}_N*g^A_{N'}*\mf{d}_N\Big)(x,y)=\\
& \mc{O}\Big( 2^{-N} e^{-\kappa_4 \sqrt{\dist{x-y}_L}}\Big)+  \mc{O}\Big(\tfrac{2N'-N}{2} e^{-\kappa_4 \sqrt{2^{-\frac{N}{2}}\dist{x-y}_L}}\Big),\end{split}\]
and of fermion propagators (Eq. \eqref{eq:29}).  
\end{enumerate}

\paragraph{Extracting the leading terms: refinement of Theorem \ref{thm:IB}.}
As explained above, in order for the previous strategy above to work, one needs to isolate the contribution from the diagrams $\hat{\Pi}^{(h,N]}_{\mu,\nu}$, $\hat{\Pi}^{(h,N]}_{5;\mu,\nu}$ and $\mc{T}^{(h,N]}$ in the multiscale analysis, and analyze their continuum limit separately. Such graphs typically appear in the kernels $W^{(h)}$ in the form of a geometrical series (corresponding to the sum of all the reducible diagrams), whose sum gives raise to
\begin{align}
&\label{eq:26a} \Upsilon^{(h,N]}(b,b'):= \d_{\mu,\nu}\e\e' \int_{\L_N^*} dk \frac{e^{-ik\cdot(x-y)}}{|\s(k)|^2+ M^2-2\l\big(2\hat{\Pi}^{(h,N]}(0)+\mc{T}^{(h,N]}\big)},
\end{align} 

with the understanding that $b=(x,\mu,\e)$, $b'=(y,\nu,\e')$ and $\hat{\Pi}^{(h,N]}(0)\equiv \hat{\Pi}^{(h,N]}_{0,0}(0)$ (recall that by Lemma \ref{lemma_bubble}, $\hat{\Pi}^{(h,N]}_{\mu,\nu}(0)= \d_{\mu,\nu} \hat{\Pi}^{(h,N]}_{0,0}(0)$).\footnote{As established by Eqs. \eqref{eq:41} and \eqref{eq:42}, if $h=0$ the quantity $-2\l \big(2\hat{\Pi}^{(h,N]}(0)+\mc{T}^{(h,N]}\big)$ appearing in the r.h.s. of Eq. \eqref{eq:imp1} is exactly the lowest order contribution to the correlation function $\hat{\Sigma}^{(u.v.)}_{\mu,\nu}(0)$.} Let us show how to establish the dimensional improvement for a restricted class of kernels, namely $W^{(h;N)|0,0}_{(n,n');\eset;\eset}$. It turns out that the dominant part of these kernels is given by
\begin{equation}
w^{(h;N)|0,0}_{(n,n');\eset;\eset}= (nn')^2 \left( \l \mc{T}^{(h,N]} g^A\right)^{*(n-1)} *\big(\mathds{1}+\l\Omega^{(h,N]}\big)*\Pi^{(h,N]}* \left( \l \mc{T}^{(h,N]} g^A\right)^{*(n'-1)},
\end{equation}

so that
\begin{equation}
\label{eq:imp1}
\big\|g^A*\mc{R}W^{(h;N)|0,0}_{(1,1);\eset;\eset} \big\|_1^w \le R\l 2^{-h}; \;\; \big\|\mc{R}W^{(h;N)|0,0}_{(n,n');\eset;\eset} \big\|_1^w \le \l^{\frac{n+n'-2}{2}} 2^{-h} \; \forall n+n'\ge3.
\end{equation}

The strategy for showing the dimensional improvement for $\mc{R}W^{(h;N)|0,0}_{(n,n');\eset;\eset}$ is qualitatively the same as that discussed in Sections \ref{sect_multiscale}, \ref{sect_improved}. However, while the part related to Proposition \ref{prop:SDB} is proved almost identically, the inductive part (the part of Section \ref{sect_improved}) is way more involved, as we are going to show.

\medskip

As a first step, using the same expansions introduced in Section \ref{sect_improved} (see Fig. \ref{fig_kernel_0200b} for the kernel $W^{(h;N)|0,0}_{(1,1);\eset;\eset}$), one realizes that most of the terms have the right dimensional factor $2^{-h}$ as in the r.h.s of Eq. \eqref{eq:imp1}. Regarding the kernel $W^{(h;N)|0,0}_{(1,1);\eset;\eset}$, the only contributions without such dimensional improvement are the non-interacting bubble $\Pi^{(h,N]}$ and graph $(c')$ of Fig. \ref{fig_kernel_0200b}. Then one proceeds by isolating all the terms of the expansion which do not possess the dimensional improvement $2^{-h}$; in the case of the kernels $W^{(h;N)|0,0}_{(n,n');\eset;\eset}$ we have:
\begin{align}
&\label{eq:27a} W^{(h;N)|0,0}_{(1,1);\eset;\eset}=  \Pi^{(h,N]} + 2 \sum_{n\ge1} \tfrac{1}{n^2}\l \Pi^{(h,N]}*g^A*W^{(h;N)|0,0}_{(n,1);\eset;\eset} + \mc{O}(\l 2^{-h}),\\
&\label{eq:27b}W^{(h;N)|0,0}_{(n,n');\eset;\eset}=  \tfrac{n^2}{(n-1)^2} \l \mc{T}^{(h,N]} g^A* W^{(h;N)|0,0}_{(n-1,n');\eset;\eset} +\mc{O}\big( \l^{\frac{n+n'-1}{2}} 2^{-h} \big), \;\; n\ge2,
\end{align}

where $\mc{O}(\cdot)$ is understood w.r.t. the weighted $L^1$ norm (cf. Eq. \eqref{def_Lp_norm}). Eqs. \eqref{eq:27a},  \eqref{eq:27b} can be regarded as a system of linear equations in the variables $\big\{W^{(h;N)|0,0}_{(n,n');\eset;\eset} \big\}_{n\ge n'\ge1}$. First, by an iterative use of Eq. \eqref{eq:27b}, we find that
\begin{equation}
\label{eq_dom_5}
\begin{split}
&W^{(h;N)|0,0}_{(n,n');\eset;\eset}=\\
&(nn')^2  (\l\mc{T}^{(h,N]} g^A)^{*(n-1)} *W^{(k)|0,0}_{(1,1);\eset;\eset}* (\l\mc{T}^{(h,N]} g^A)^{*(n'-1)} + \mc{O}\big(\l^{\frac{1}{2}} \l^{\frac{n+n'-2}{2}} 2^{-h}\big) .
\end{split}
\end{equation}

Now, combining Eq. \eqref{eq:27a} with Eq. \eqref{eq_dom_5}, and letting $\Gamma^{(h,N]}:= \sum_{n\ge0} (\l \mc{T}^{(h,N]} g^A)^{*n}$, we find that
\begin{equation}
\label{eq_dom_6}
W^{(h;N)|0,0}_{(1,1);\eset;\eset} = \Pi^{(h,N]} + 2 \l \Pi^{(h,N]}* g^A* \Gamma^{(h,N]} * W^{(h;N)|0,0}_{(1,1);\eset;\eset}  +\mc{O}(\l 2^{-h}). 
\end{equation}

Moreover, by convoluting with $g^A$ both sides of Eq. \eqref{eq_dom_6}, writing $g^A*\Pi^{(h,N]}$ as $\hat{\Pi}^{(h,N]}(0)g^A+ \big( g^A*\Pi^{(h,N]}- \hat{\Pi}^{(h,N]}(0)g^A \big)$ and applying Item \ref{it:bubble:3} of Lemma \ref{lemma_bubble}, we further obtain:
\begin{equation}
g^A*\big(W^{(h;N)|0,0}_{(1,1);\eset;\eset} - \Pi^{(h,N]} - 2 \l \hat{\Pi}^{(h,N]}(0) g^A* \Gamma^{(h,N]} * W^{(h;N)|0,0}_{(1,1);\eset;\eset} \big) = \mc{O}(\l 2^{-h}). 
\end{equation}

Finally, using that $\Omega^{(h,N]}=\big(\mathds{1}- 2\l \hat{\Pi}^{(h,N]}(0) g^A*\Gamma^{(h,N]}\big)^{-1}$, as well as $\Omega^{(h,N]}*g^A= g^A*\Omega^{(h,N]}$, and $\|\Omega^{(h,N]}\|_1^w=\mc{O}(1)$, we get the first of Eq. \eqref{eq:imp1}:
\begin{equation}
\label{eq_dom_7}
\Big\|g^A*\big(W^{(h;N)|0,0}_{(1,1);\eset;\eset} - \Omega^{(h,N]}*\Pi^{(h,N]} \big)\Big\|_1^w \equiv \big\|g^A*\mc{R}W^{(h;N)|0,0}_{(1,1);\eset;\eset}\big\|_1^w \le R\l 2^{-h}, 
\end{equation}

for some $R$ large enough. Besides, combining Eqs. \eqref{eq_dom_5}, \eqref{eq_dom_7}, one obtains the second of Eq. \eqref{eq:imp1} concerning the kernels $W^{(k)|0,0}_{(n,n');\eset;\eset}$ with $n+n'\ge3$:
\[\begin{split}
& \big\|\mc{R}W^{(h;N)|0,0}_{(n,n');\eset;\eset}\big\|_1^w\equiv\\
& \big\|W^{(h;N)|0,0}_{(n,n');\eset;\eset}- (nn')^2  (\l\mc{T}^{(h,N]} g^A)^{*(n-1)} *w^{(h;N)|0,0}_{(1,1);\eset;\eset}* (\l\mc{T}^{(h,N]} g^A)^{*(n'-1)}\big\|_1^w\le\\
& \big\|W^{(h;N)|0,0}_{(n,n');\eset;\eset}- (nn')^2  (\l\mc{T}^{(h,N]} g^A)^{*(n-1)} *W^{(h;N)|0,0}_{(1,1);\eset;\eset}* (\l\mc{T}^{(h,N]} g^A)^{*(n'-1)}\big\|_1^w + \\
&+(nn')^2 \big\|(\l\mc{T}^{(h,N]} g^A)^{*(n-1)} *\mc{R}W^{(h;N)|0,0}_{(1,1);\eset;\eset}* (\l\mc{T}^{(h,N]} g^A)^{*(n'-1)} \big\|_1^w\le  \l^{\frac{n+n'-2}{2}} 2^{-h},
\end{split}\]

for $\l$ small enough.

\bigskip

\paragraph{Acknowledgments.} We thank A. Giuliani and M. Porta for useful comments, advices and discussions. We gratefully acknowledge financial support from the MUR, PRIN 2022 project MaIQuFi cod. 20223J85K3. This work has been carried out under the auspices of the GNFM of INdAM.

\clearpage

\bibliographystyle{ieeetr}
\bibliography{bibliography}

\end{document}